\documentclass{amsart}

\usepackage{graphicx}
\usepackage{amsmath,amssymb,amsthm}
\usepackage{tikz}
\usetikzlibrary{arrows}

\usepackage{virginialake}

\usepackage{enumerate}
\usepackage{url}
\usepackage{subcaption}

\usepackage{hyperref} 
\usepackage[capitalize]{cleveref} 
\usepackage{thmtools}

\newcommand{\liff}{\leftrightarrow}
\newcommand{\seqar}{\Rightarrow}

\renewcommand{\epsilon}{\varepsilon}

\renewcommand{\emptyset}{\varnothing}

\makeatletter
\newcommand{\dotop}[1]{{\mathpalette\dotop@{#1}}}
\newcommand{\dotop@}[2]{%
  \vphantom{#2}%
  \ooalign{$\m@th#1\mathop#2$\cr\hidewidth$\m@th#1\cdot$\hidewidth\cr}%
}
\makeatother

\newtheorem{theorem}{Theorem}[section]
\newtheorem{maintheorem}[theorem]{Main Theorem}
\crefname{maintheorem}{Main Theorem}{Main Theorems}
\newtheorem{proposition}[theorem]{Proposition}
\newtheorem{lemma}[theorem]{Lemma}
\newtheorem{corollary}[theorem]{Corollary}

\newtheorem{fact}[theorem]{Fact}
\newtheorem{observation}[theorem]{Observation}

\theoremstyle{definition}
\newtheorem{remark}[theorem]{Remark}
\newtheorem{definition}[theorem]{Definition}

\newtheorem{example}[theorem]{Example}
\newtheorem{convention}[theorem]{Convention}

\newcommand{\Pow}{\mathcal P}
\newcommand{\pow}[1]{\Pow (#1)}
\newcommand{\Var}{\mathsf{Var}}

\newcommand{\Prop}{\mathsf{Pr}}
\newcommand{\war}{\mathsf{Wl}}

\newcommand{\M}{\mathfrak{M}}
\newcommand{\PP}{\mathfrak{P}}
\newcommand{\B}{\mathfrak{B}}
\newcommand{\RRR}{{\mathfrak R}}

\renewcommand{\models}{\vDash}
\newcommand{\modelsM}[1]{\models_{#1}}

\newcommand{\Fmla}{\mathsf{Fm}}
\newcommand{\lFmla}{\ell\mathsf{Fm}}

\newcommand{\FV}{\mathsf{FV}}
\newcommand{\fv}[1]{\FV(#1)}

\newcommand{\limp}{\to}

\renewcommand{\blacksquare}{\Boxblack}
\renewcommand{\blacklozenge}{\Diamondblack}

\newcommand{\R}{R}

\newcommand{\rels}[1]{\mathbf {#1}}
\newcommand{\RR}{{\rels {R}}}

\newcommand{\proves}{\vdash}

\newcommand{\stoup}{\, |\, }

\newcommand{\sequent}{\mathcal S}

\newcommand{\n}[1]{#1^N}
\renewcommand{\gg}[1]{\n{#1}}

\newcommand{\id}{\mathsf{id}}

\newcommand{\lr}[1]{#1_{l}}
\newcommand{\rr}[1]{#1_{r}}

\newcommand{\wk}{\mathsf{w}}

\newcommand{\cntr}{\mathsf{c}}
\newcommand{\cut}{\mathsf{cut}}

\newcommand{\infrul}{\mathsf{r}}

\newcommand{\K}{\mathsf{K}}

\newcommand{\Kt}{\K\mathsf t}
\newcommand{\KtSO}{\Kt 2}

\newcommand{\WK}{\mathsf{WK}}

\newcommand{\iK}{\mathsf{iK}}

\newcommand{\IPL}{\mathsf{IPL}}

\renewcommand{\mp}{\mathsf{mp}}

\newcommand{\CK}{\mathsf{CK}}

\newcommand{\IK}{\mathsf{IK}}

\newcommand{\IKt}{\IK \mathsf{t}}
\newcommand{\IKtSO}{\IKt 2}

\newcommand{\IKtSOdia}{\IKtSO(\Diamond, \blacklozenge)}

\newcommand{\lab}[1]{\ell #1}

\newcommand{\labKt}{\lab\Kt}
\newcommand{\labKtSO}{\lab\KtSO}

\newcommand{\labIKtSO}{\lab\IKtSO}

\newcommand{\multi}[1]{\mathsf{m}#1}
\newcommand{\mlIKtSO}{\multi\labIKtSO}

\newcommand{\diadistlor}{\mathsf N_{\Diamond\lor}}
\newcommand{\diadistbot}{\mathsf N_{\Diamond\bot}}
\newcommand{\diaimpbox}{\mathsf I_{\Diamond \Box}}

\newcommand{\diaimpboxb}{\mathsf I_{\blacklozenge \blacksquare}}

\newcommand{\gen}{\mathsf{gen}}
\newcommand{\necw}{\mathsf{nec}_{\Box}}
\newcommand{\necb}{\mathsf{nec}_{\blacksquare}}

\newcommand{\Kax}{\mathsf K}
\newcommand{\Sax}{\mathsf S}
\newcommand{\Funct}{\mathsf D}
\newcommand{\funct}[1]{\Funct_{#1}}

\newcommand{\functforall}{\funct\forall}
\newcommand{\functwb}{\funct\Box}
\newcommand{\functwd}{\funct\Diamond}
\newcommand{\functbb}{\funct\blacksquare}
\newcommand{\functbd}{\funct\blacklozenge}

\newcommand{\CA}{\mathsf{C}}

\newcommand{\Vacax}{\mathsf{V}}

\newcommand{\Tense}{\mathsf A}
\newcommand{\wbbd}{\Tense_{\Box\blacklozenge}}
\newcommand{\bdwb}{\Tense_{\blacklozenge\Box}}
\newcommand{\bbwd}{\Tense_{\blacksquare\Diamond}}
\newcommand{\wdbb}{\Tense_{\Diamond\blacksquare}}

\newcommand{\States}{\Omega}

\newcommand{\Worlds}{W}

\newcommand{\set}[1]{\mathcal{#1}}
\newcommand{\setW}{{\set {W}}}

\newcommand{\interp}[3]{#3_{#1}^{#2}}
\newcommand{\interpM}[2]{#2_{#1}}

\newcommand{\interpa}[2]{#2^{#1}}

\newcommand{\Baire}{\lFmla^{<\omega}}

\newcommand{\extension}[1]{[#1]}

\newcommand{\prfsrchtree}{\mathfrak{S}}
\newcommand{\possext}[1]{\mathtt{#1}}
\newcommand{\textends}{\geq_{\prfsrchtree}}
\newcommand{\minext}[1]{\lfloor {#1} \rfloor }

\newcommand{\lfm}[3][]{\mathtt{lfm}_{#1}^{#2}(#3)}
\newcommand{\rfm}[3][]{\mathtt{rfm}_{#1}^{#2}(#3)}
\newcommand{\ifm}[3][]{\mathtt{ifm}_{#1}^{#2}(#3)}
\newcommand{\cfm}[3][]{\mathtt{cfm}_{#1}^{#2}(#3)}

\newcommand{\conn}[1]{\overset{#1}{\leftrightsquigarrow}}
\newcommand{\RAtMin}[2]{\RR_{#1\setminus#2}}
\newcommand{\RAtMinMin}[3]{\RR_{#1\setminus#2\setminus#3}}

\newcommand{\Diamondplus}{\mathord{\scalebox{.9}{\raise.1ex\hbox{$\diamondplus$}}}}
\newcommand{\Diamondtimes}{\mathord{\scalebox{.9}{\raise.1ex\hbox{$\diamondtimes$}}}}

\newcommand{\Fl}[2][]{F_{#1}^{\AND} \{ #2 \}}
\newcommand{\Fr}[2][]{F_{#1}^{\IMP} \{ #2 \}}

\newcommand{\spa}{\;|\;}
\newcommand{\LLra}{\Longleftrightarrow}

\newcommand{\W}{\mathcal{W}}
\newcommand{\vP}{v^\prime}
\newcommand{\vPP}{v^{\prime \prime}}
\newcommand{\wP}{w^\prime}
\newcommand{\tBox}{\blacksquare}
\newcommand{\Pred}{\mathfrak{P}}
\newcommand{\Dmnd}{\DIA}
\newcommand{\Ra}{\Rightarrow}
\newcommand{\tDmnd}{\blacklozenge}
\newcommand{\Band}[1]{\underset{#1}{\bigwedge}}
\newcommand{\Bor}[1]{\underset{#1}{\bigvee}}

\newcommand{\AND}{\mathbin{\wedge}}
\newcommand{\OR}{\mathbin{\vee}}
\newcommand{\IMPLIES}{\mathbin{\to}}
\newcommand{\IMP}{\IMPLIES}
\newcommand{\IFF}{ \leftrightarrow }
\newcommand{\DIA}{\Diamond}
\newcommand{\BOX}{\Box}
\newcommand{\DIAB}{\Diamondblack}
\newcommand{\BOXB}{\Boxblack}
\newcommand{\bottom}{\bot}
\newcommand{\BOT}{\bottom}
\newcommand{\TOP}{\top}

\newcommand{\brelsat}[3][\B]{#2 \models_{#1} #3}

\newcommand{\brel}[1][\B]{R_{#1}}
\newcommand{\irel}{\leq}
\newcommand{\irelop}{\geq}
\newcommand{\class}{\setW}
\newcommand{\fall}[1][X]{\forall #1}
\newcommand{\subst}[3][X]{#2[#3/#1]}
\newcommand{\fandempt}[1][F]{#1^{\AND}}
\newcommand{\fand}[2][F]{\fandempt[#1] \{ #2 \}}
\newcommand{\fimp}[2][F]{#1^{\IMP} \{ #2 \}}
\newcommand{\pprove}[2][\IKtSO]{#1 \vdash #2}
\newcommand{\unimod}{\mathfrak{R}}

\newcommand{\fm}[2]{#1 : #2}
\newcommand{\seq}[3]{#1 \stoup #2 \seqarrow #3}
\newcommand{\lseq}[2]{#1 \stoup #2}
\newcommand{\seqarrow}{\Rightarrow}

\newcommand{\symb}{\Boxdot}
\newcommand{\symbdiamond}{\Diamonddot}
\newcommand{\kax}[1]{%
  \ifnum #1=1 \mathsf{F}_{\BOX}%
  \else\ifnum #1=2 \mathsf{F}_{\DIA}%
  \else ? \fi\fi}
\newcommand{\kaxb}[1]{%
  \ifnum #1=1 \mathsf{F}_{\BOXB}%
  \else\ifnum #1=2 \mathsf{F}_{\DIAB}%
  \else ? \fi\fi}
\newcommand{\kaxsymb}[1]{%
  \ifnum #1=1 \mathsf{F}_{\symb}%
  \else\ifnum #1=2 \mathsf{F}_{\symbdiamond}%
  \else ? \fi\fi}

  \newcommand{\pre}[1]{#1^-}

  \newcommand{\brcol}[1]{\B_{#1}}
  \newcommand{\brcoldat}[2]{#2_{\brcol #1}}
  \newcommand{\IH}{\mathit{IH}}

\DeclareSymbolFont{symbolsA}{U}{txsya}{m}{n}
\DeclareMathSymbol{\Box}{\mathord}{symbolsA}{3}
\DeclareMathSymbol{\Boxblack}{\mathord}{symbolsA}{4}
\DeclareMathSymbol{\Boxdot}{\mathord}{symbolsA}{0}
\DeclareMathSymbol{\Boxplus}{\mathord}{symbolsA}{1}
\DeclareMathSymbol{\Boxtimes}{\mathord}{symbolsA}{2}

\DeclareSymbolFont{symbolsC}{U}{txsyc}{m}{n}
\DeclareMathSymbol{\Diamond}{\mathord}{symbolsC}{94}
\DeclareMathSymbol{\Diamondblack}{\mathord}{symbolsC}{95}
\DeclareMathSymbol{\Diamonddot}{\mathord}{symbolsC}{144}

\DeclareFontFamily{U} {MnSymbolC}{}
\DeclareFontShape{U}{MnSymbolC}{m}{n}{
	<-6>  MnSymbolC5
	<6-7>  MnSymbolC6
	<7-8>  MnSymbolC7
	<8-9>  MnSymbolC8
	<9-10> MnSymbolC9
	<10-12> MnSymbolC10
	<12->   MnSymbolC12}{}
\DeclareFontShape{U}{MnSymbolC}{b}{n}{
	<-6>  MnSymbolC-Bold5
	<6-7>  MnSymbolC-Bold6
	<7-8>  MnSymbolC-Bold7
	<8-9>  MnSymbolC-Bold8
	<9-10> MnSymbolC-Bold9
	<10-12> MnSymbolC-Bold10
	<12->   MnSymbolC-Bold12}{}
\DeclareSymbolFont{MnSyC}{U}{MnSymbolC}{m}{n}
\DeclareMathSymbol{\diamondplus}{\mathord}{MnSyC}{124}
\DeclareMathSymbol{\diamondtimes}{\mathord}{MnSyC}{125}

\title[Second-order (inuitionistic) tense logic]{The proof theory and semantics of \\ second-order (intuitionistic) tense logic}
\date{\today}
\author[]{Justus Becker
\and
Anupam Das
\and
Sonia Marin
\and
Paaras Padhiar}

\begin{document}

\begin{abstract}
We develop a second-order extension of intuitionistic modal logic, allowing quantification over propositions, both syntactically and semantically. 
A key feature of second-order logic is its capacity to define positive connectives from the negative fragment. 
Duly we are able to recover the diamond (and its associated theory) using only boxes, as long as we include both forward and backward modalities (`tense' modalities). 

We propose axiomatic, proof theoretic and model theoretic definitions of `second-order intuitionistic tense logic', and ultimately prove that they all coincide. 
In particular we establish completeness of a labelled sequent calculus via a proof search argument,
yielding at the same time a cut-admissibility result.
Our methodology also applies to the classical version of second-order tense logic, which we develop in tandem with the intuitionistic case.

\end{abstract}
\maketitle

\section{Introduction}

\subsection{Background and motivation}

\emph{Second-order} logic extends first-order logic by allowing quantification over predicates.
It has become a standard tool in mathematical and computational logic, including in type theory and programming languages (e.g.\ \cite{Girard1972:thesis,Reynolds1974Towards,Milner1978Theory,parigot1997proofs}), computability and complexity (e.g.\ \cite{Simpson2009Subsystems,Dzhafarov2022Reverse,Buss1986Bounded,Cook2010Logical}), and, more recently, knowledge representation in artificial intelligence (e.g.\ \cite{belardinelli2015epistemic,belardinelli2016semantical,belardinelli2018second-order,blackburn2006handbook}).
The metalogical and proof theoretic foundations of second-order logic have posed significant challenges to logicians over the last century. 
Completeness for \emph{standard} (or \emph{full}) semantics, where properties vary over the full powerset, fails, necessitating the more general \emph{Henkin} semantics~\cite{henkin1950completeness}.
The corresponding theories typically admit the \emph{comprehension} axiom, requiring impredicative techniques for metalogical analysis.
Indeed \emph{cut-admissibility} of second-order logic, known as \emph{Takeuti's conjecture} \cite{sep-proof-theory}, remained an open problem since the '30s, before being resolved in the late '60s by seminal works of Tait \cite{Tait1966Nonconstructive}, Prawitz \cite{Prawitz1968:SOL,Prawitz1968:STT}, Takahashi \cite{Takahashi1967} and Girard \cite{Girard1972:thesis}.

Over \emph{intuitionistic} logic, second-order quantification notably allows the encoding of positive connectives by the negative fragment.
For example $A \lor B$ is logically \emph{equivalent} to $\forall X ((A \limp X) \limp (B\limp X) \limp X) $.
This is in stark contrast to the first-order setting, where connectives are infamously intuitionistically independent. 
In this work we apply the second-order methodology to \emph{modal logic}, allowing us to \emph{recover} a theory of positive modalities ($\Diamond$s) from the negative ($\Box$s) over an intuitionistic base.

Unlike propositional and predicate logic, there is no consensus on what the \emph{intuitionistic} fragment of modal logic is.
While it is natural to admit distribution, $\Box (A\limp B) \limp \Box A \limp \Box B$ and necessitation, $A / \Box A$, these principles say nothing about $\Diamond $ which, recall, cannot be defined in terms of $\Box$. 
For instance one might also want to admit $\Diamond$-distribution, $\Box (A\limp B) \limp \Diamond A \limp \Diamond B$.
This has led to several distinct proposals, e.g.~\cite{bellin2001extended,wijesekera1990constructive,das2023intuitionistic,degroot2025semantical,fischerservi1984axiomatizations,simpson1994proof}.
Notably the choice of $\Diamond$-axioms also affects even the $\Box$-only (i.e.~$\Diamond$-free) theorems  of the logic \cite{grefe1999fischer,das2023intuitionistic}. 
(See \cite{dasmarin:blog22} for a related survey.)

Instead we show that we can \emph{encode} the $\Diamond$ over second-order intuitionistic logic, similarly to $ \lor$ earlier, as long as we admit also a \emph{backwards} box $\blacksquare$, as in \emph{tense} logic~\cite{prior1957time}:
\begin{equation}
\label{eq:diamond-equiv-so-dfn-intro}
    \Diamond A \ \iff \ \forall X (\Box (A \limp \blacksquare X) \limp  X)
\end{equation}
A symmetric equivalence holds for the backwards diamond $\blacklozenge$.
This allows us to \emph{recover} a theory of $\Diamond$ (and $\blacklozenge$), instead of choosing an arbitrary axiomatisation.
The resulting second-order logic $\IKtSO$ conservatively extends Fischer Servi and Simpson's $\IK$ \cite{fischerservi1984axiomatizations,simpson1994proof} and, furthermore, Ewald's $\IKt$ \cite{ewald1986intuitionistic}.

\begin{figure*}[t]
    \centering
    \subcaptionbox{Classical results\label{fig:class-tour}}
    {
 \begin{tikzpicture}
        \node (Kt2) at (-4,3) {$\KtSO\proves A$};
        \node (rel) at (0,3) {$\forall \unimod \, \modelsM \unimod A$}; 
        \node (lKt2) at (0,0) {$\labKtSO\setminus\cut\proves A$};
        \node (lKt2c) at (-4,0) {$\labKtSO\proves A$};
        \draw [->] (Kt2) to node [below=.15cm]
        {\cref{classical-rel-soundness}
        } (rel) ;
        \draw [->] (rel) to node [right] 
        {\cref{cut-free-completeness-classical}
        } (lKt2);
        \draw[left hook->] (lKt2) to (lKt2c);
        \draw[->] (lKt2c) to node [left]
        {\cref{prop:axiomsoundness-classical}\eqref{axiomsoundness-classical}
        } (Kt2);
        \draw[dashed, blue, ->] (rel) edge[bend right=20] node[above] 
        {
        {\cref{mainthm:soundness-completeness-classical}}} (Kt2);
        \draw[dashed, blue, ->] (lKt2c) edge[bend left=20] node [above] 
        {\cref{mainthm:hauptsatz}\eqref{hauptsatz-classical}} (lKt2);
    \end{tikzpicture}
    }
    \subcaptionbox{Intuitionistic results\label{fig:int-tour}}{
    \begin{tikzpicture}
        \node (IKt2) at (-4,3) {$\pprove[\IKtSO]{A}$};
        \node (birel) at (0,3) {$\forall \B \, \modelsM \B A$}; 
        \node (pred) at (4,3) {$\forall \PP \, \modelsM \PP A$}; 
        \node (mlIKt2) at (4,0) {$\pprove[\mlIKtSO\setminus\cut]{A}$};
        \node (lIKt2) at (0,0) {$\pprove[\labIKtSO\setminus\cut]{A}$};
        \node (lIKt2c) at (-4,0) {$\pprove[\labIKtSO]{A}$};
        \draw [->] (IKt2) to node [below=.15cm] 
        {\cref{prop:brelsoundness}
        } (birel) ;
        \draw [right hook->] (birel) to node [below=.15cm]
        {\cref{prop:breltoprel}
        } (pred) ; 
        \draw [->] (pred) to node 
        {\cref{thm:completeness}
        } (mlIKt2);
        \draw[->] (mlIKt2) to node [above=.15cm]
        {\cref{prop:multitosingle}
        } (lIKt2);
        \draw[left hook->] (lIKt2) to (lIKt2c);
        \draw[->] (lIKt2c) to node 
        {\cref{prop:axiomsoundness}\eqref{axiomsoundness-intuitionistic}
        } (IKt2);
        \draw[dashed, blue, ->] (birel) edge[bend right=20] node[above] {\cref{mainthm:soundness-completeness-intuitionistic}} (IKt2);
        \draw[dashed, blue, ->] (lIKt2c) edge[bend left=20] node[above] {\cref{mainthm:hauptsatz}\eqref{hauptsatz-intuitionistic}} (lIKt2);
    \end{tikzpicture}
    }
    \caption{The `grand tours' of this work.}
    \label{fig:tours}
\end{figure*}
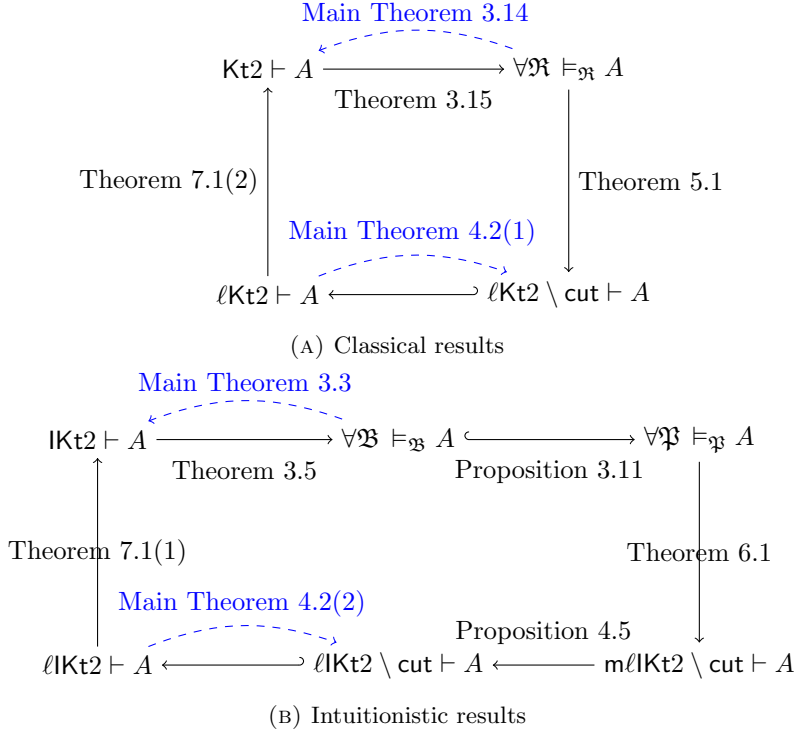

\subsection{Contributions}

In this work we develop axiomatic, semantic and proof theoretic foundations for second-order tense logic, over both classical and intuitionistic bases.
Our main results are: 
\begin{enumerate}[(i)]
    \item soundness and completeness of our systems for corresponding (bi)relational semantics; and,
    \item cut-admissibility for associated \emph{labelled} sequent calculi. 
\end{enumerate}
These results are recovered from the `grand tours' visualised in \cref{fig:tours}, where they are indicated in blue, dashed.
Our main completeness arguments, \cref{cut-free-completeness-classical,thm:completeness}, are proved via \emph{proof search}, notably adapting Sch\"utte's notion of \emph{semivaluation} \cite{Schutte1960:semivaluations}
to the intuitionistic setting in order to overcome the hurdle of impredicativity (cf.~\cite{Tait1966Nonconstructive,Prawitz1968:STT,prawitz1970some}).
The main technicalities behind the converse direction lie in the translation from labelled proofs to axiomatic proofs, \cref{prop:axiomsoundness-classical,prop:axiomsoundness}, driven by a tree-like representation of labelled sequents in tense logic (see, e.g.,~\cite{ciabattoni2021display}).


\subsection{Related works}

Second-order extensions of (classical) modal logic have been proposed since the '70s, cf.~\cite{bull1969modal,fine1970propositional,kaplan1970s5}.
Many aspects have since been investigated, including expressivity\cite{ten2006expressivity,kaminski1996expressive}, applications to \emph{interpolation} theory~\cite{fitting2002interpolation,bilkova2007uniform}, and to the meta-theory of provability logics~\cite{artemov1993propositional}. 
However most of these works focus on full semantics, and so are not suitable for proof theoretic investigations.

More recently, second-order modal logics have been proposed as a \emph{specification language} for knowledge representation in artificial intelligence~\cite{belardinelli2015epistemic,belardinelli2016semantical,belardinelli2018second-order}.
This is more relevant to our approach as they interpret the second-order language over Henkin structures rather than full structures, although they only treat quantifier-free (or \emph{predicative}) comprehension.
Notably in \cite{belardinelli2018second-order} the authors conclude that, for second-order modal logic ``\emph{to be adopted as a specification language in artificial intelligence and knowledge representation, appropriate theoretical results and formal tools need to be developed}''.
Our work may be viewed as a contribution in this direction, developing the metalogical and proof theoretic foundations therein.

\subsection{Structure of the paper}
Our main results are visualised in \cref{fig:tours}.
In \cref{sec:axiomatisation}
we introduce the axiomatisations $\KtSO$ and $\IKtSO$ of classical and intuitionistic second-order tense logic, respectively.
Both include a full \emph{comprehension} axiom, and duly prove \cref{eq:diamond-equiv-so-dfn-intro}.
In \cref{sec:semantics} we define relational semantics for $\KtSO$, and two extensions by a partial order for $\IKtSO$, following Simpson's methodology~\cite{simpson1994proof}.
The main results of this section are \emph{soundness} of the axiomatisations for their semantics, \cref{classical-rel-soundness,prop:brelsoundness,prop:breltoprel}.

In \cref{sec:prooftheory} we present \emph{labelled} sequent calculi $\labKtSO$ and $\labIKtSO$, again inspired by Simpson \cite{simpson1994proof}, in particular including a left-$\forall$ rule implementing full comprehension. We also present an intermediate \emph{multi-succedent} version $\mlIKtSO$, \`a la Maehara \cite{maehara1954}, more suitable for completeness-via-proof-search arguments (cf., e.g., \cite{prawitz1970some} and \cite[Section~15]{Takeuti1987:pt-book}), that is ultimately conservative over $\labIKtSO$, \cref{prop:multitosingle}.
Our main proof search arguments are presented in \cref{sec:completeness-classical} (classical) and \cref{sec:completeness} (intuitionistic), yielding cut-free completeness of $\labKtSO$, \cref{cut-free-completeness-classical}, and $\mlIKtSO$, \cref{thm:completeness}.
The intuitionistic case, in particular, exhibits novel intricacies, combining ideas from analogous constructions for simple type theory \cite{Prawitz1968:STT} and first-order intuitionistic logic \cite[Section~15]{Takeuti1987:pt-book}.

In \cref{sec:labelsoundness} we present a translation from labelled systems $\labKtSO$ and $\labIKtSO$ to axiomatic systems $\KtSO$ and $\IKtSO$, \cref{prop:axiomsoundness-classical,prop:axiomsoundness}, respectively, completing the cycle of implications, viz.~\cref{fig:class-tour,fig:int-tour}.
Finally we conclude the paper with some additional perspectives in \cref{sec:further}, in particular relating $\KtSO$ and $\IKtSO$ by a \emph{negative} translation, and make some concluding remarks in \cref{sec:conclusions}.

\section{A second-order extension of tense logic}
\label{sec:axiomatisation}

In this section we present the syntax of second-order tense logic and provide axiomatisations of its intuitionistic and classical theory.
One of the main points here is to show how diamonds may actually be defined in terms of the other connectives, even intuitionistically, allowing us to \emph{recover} its theory, instead of providing arbitrary axioms therein.

We point the reader to, e.g., \cite[Chapter~12]{sorensen2006lectures} for some useful background on second-order (intuitionistic) propositional logic.

\subsection{Syntax of second-order tense logic}
Let us fix a set $\Prop$ of \textbf{propositional symbols}, written $P,Q,R$ etc., and a (disjoint) set $\Var$ of \textbf{(formula/second-order) variables}, written $X,Y,Z$ etc.
We shall work with second-order tense \textbf{formulas}, given by:
\[
A,B,C,\dots \quad ::= \quad P \in \Prop\  \mid\  
X\in \Var \ \mid \ A \limp B 
\ \mid \ 
\Box A \ \mid \  \blacksquare A
\ \mid \ \forall X A
\]
Write $\Fmla$ for the set of all formulas.
We shall frequently omit external brackets of formulas and internal brackets of long implications, understanding them as rightmost bracketed. I.e.\ $A\limp B\limp C = (A \limp (B\limp C))$ and so on.

The set of \textbf{free variables} of a formula $A$, written $\fv A$, is defined as expected:
\[
\begin{array}{r@{\ := \ }l}
	\fv P & \emptyset \\
	\fv X & \{X\} \\
	\fv {A\limp B} & \fv A \cup \fv B 
\end{array}
\qquad
\begin{array}{r@{\ := \ }l}
	\fv {\Box A} & \fv A \\
	\fv {\blacksquare A} & \fv A \\
	\fv {\forall X A} & \fv A \setminus \{X\}
\end{array}
\]
Note that propositional symbols are \emph{not} variables. 
If $\fv A = \emptyset$ then $A$ is \textbf{closed}
(or a \textbf{sentence}). 
Otherwise it is \textbf{open}.

\begin{remark}
	[Propositional symbols vs variables]
	We could have worked without propositional symbols at all, using only variables.
	However we take the current formulation so that we may safely deal with only closed formulas in the systems and semantics we present. 
	This choice allows us to avoid the need for explicit \emph{environments} when interpreting syntax, lightening the notation therein.
\end{remark}

It is well known that other propositional connectives and quantifiers can be \emph{defined} from our minimal syntax via \emph{impredicative} encodings.
In particular, over pure second-order intuitionistic logic, we have the following equivalences:\footnote{In all cases, the variable $X$ should be chosen not occurring free in $A$ and $B$. Since we shall typically only deal with closed formulas, we shall gloss over this technicality throughout.}
\begin{equation}
	\label{eq:so-dfns-of-nonmodal-connectives}
	\begin{array}{r@{\ \iff \ }l}
		\BOT & \forall X X \\
		A \lor B & \forall X ((A \limp X) \limp (B\limp X) \limp X)  \\
		A\land B & \forall X ( (A \limp B \limp X) \limp X) \\
		\exists Y A & \forall X (\forall Y(A \limp X) \limp X)
	\end{array}
\end{equation}
Note that these equivalences hold even \emph{intuitionistically}, in stark contrast to (first-order) intuitionistic logic: there all the propositional connectives are independent.
See, e.g., \cite[Section~12.4]{sorensen2006lectures} or \cite[Section~11.3]{girard1989proofs-and-types} for a more detailed account.
In the same vein, we will be able to \emph{define} diamond modalities in the second-order setting by appropriate equivalences:
\begin{equation}
	\label{eq:diamond-definitions}
	\begin{array}{r@{\ \iff \ }l}
		\Diamond A & \forall X (\Box (A \limp \blacksquare X) \limp  X) \\
		\blacklozenge A & \forall X ( \blacksquare ( A \limp \Box X) \limp  X)
	\end{array}
\end{equation}

For now this is all rather informal, for the sake of motivation, as we have not yet given any meaning to our formulas. To justify these equivalences, we shall now turn to axiomatisations over our syntax, before presenting semantics in the next section.

\subsection{A minimal axiomatisation}
We shall present only a minimal axiomatisation of formulas in this work.
There are two reasons for this: 
\begin{enumerate}[(i)]
	\item\label{item:minimal-since-diamond-dfn} we want to justify the equivalences from \cref{eq:diamond-definitions} in the most general way, for \emph{any} extension of our axiomatisation; and,
	\item\label{item:minimal-since-robust} we will later argue for the robustness of the minimal axiomatisation.
\end{enumerate}
Towards~\cref{item:minimal-since-diamond-dfn}, let us temporarily expand the language of formulas by unary operators $\Diamond$ and $\blacklozenge$.
We shall later drop these once we demonstrate that they are unnecessary.
We consider a minimal axiomatisation extending second-order intuitionistic propositional logic $\IPL 2$ only by normality of modalities, and adjunction of the pairs $(\Box, \blacklozenge)$ and $(\blacksquare , \Diamond)$:

\begin{definition}
	[Axiomatisation with diamonds]
	\label{dfn:ikt2}
	$\IKtSOdia $ is the logic axiomatised by:
	\begin{enumerate}
		\item\label{ipl2-axioms-rules} All of second-order intuitionistic propositional logic $\IPL2$, i.e.\ the axioms and rules:\footnote{Note that the propositional axioms and rules we give are rather those of \emph{minimal} logic, without $\bot$. However in the second-order setting, given the definability of the latter, the difference between `minimal' and `intuitionistic' disappears.}
		\[
		\begin{array}{r@{\ :\quad }l}
			\Kax & A \limp B \limp A \\
			\Sax & (A \limp B \limp C) \limp (A \limp B) \limp A \limp C \\
			\functforall &  \forall X (A \limp B) \limp \forall X A \limp \forall X B  \\
			\Vacax &   A \limp \forall X A \text{ (when $(X \notin \fv A)$)} \\
			\noalign{\smallskip}
			\CA &   \forall X A \limp A[C/X]
		\end{array}
		\qquad
		\begin{array}{c}
			\vliinf{\mp}{}{B}{A\limp B}A
			\\
			\noalign{\smallskip}\qquad \quad \vlinf{\gen}{\text{$P$ fresh}}{\forall X A}{A[P/X]}
		\end{array}
		\]
		where `$P$ fresh' means that the propositional symbol $P$ does not occur in the conclusion of the rule.

		\item\label{modal-axioms-rules-wb} Normality of white and black modalities, i.e.~the distributivity axioms and necessitation rules:
		\[
		\begin{array}{r@{\ : \quad }l}
			\functwb & \Box (A \limp B) \limp \Box A \limp \Box B \\
			\functwd & \Box (A \limp B) \limp \Diamond A \limp \Diamond B \\
			\noalign{\smallskip}
			\functbb & \blacksquare (A\limp B) \limp \blacksquare A \limp \blacksquare B \\
			\functbd & \blacksquare (A\limp B) \limp \blacklozenge A \limp \blacklozenge B
		\end{array}
		\qquad
		\begin{array}{c}
			\vlinf{\necw}{}{\Box A}{A} \\
			\noalign{\smallskip}
			\vlinf{\necb}{}{\blacksquare A}{A}
		\end{array}
		\]

		\item\label{tense-axioms} Adjunction of $(\Box, \blacklozenge)$ and $(\blacksquare,\Diamond)$:
		\[
		\begin{array}{r@{\ : \quad}l}
			\bdwb & \blacklozenge\Box A \limp A \\
			\wbbd & A \limp \Box \blacklozenge A 
		\end{array}
		\qquad
		\begin{array}{r@{\ : \quad}l}
			\wdbb & \Diamond \blacksquare A \limp A \\
			\bbwd & A \limp \blacksquare \Diamond A 
		\end{array}
		\]
	\end{enumerate}
\end{definition}

\begin{remark}
	[Full comprehension]
	Note that the choice of $C$ in the \emph{comprehension} axiom $\CA$ is unrestricted:
	the formula $A[C/X]$ may be more complex than $\forall X A$.
	For instance we can even set $C = \forall X A$.
	This apparent circularity (known as \emph{impredicativity}) complicates, e.g., the semantics of second-order logic, as we shall see in \cref{sec:semantics}.
\end{remark}

\begin{remark}
	[Other logical operators]
	As already mentioned, if we formulated $\IPL2$ with native operators $\bot, \lor, \land, \exists$, then the equivalences in \cref{eq:so-dfns-of-nonmodal-connectives} are all already provable.
	For example $\forall X X \limp \bot$ is just an instance of $\CA$ and $\forall X ((A\limp X) \limp (B\limp X) \limp X) \limp A\lor B$ is derivable by setting $C:= A\lor B$ in $\CA$.
	We thus henceforth freely use those logical connectives in what follows, recasting those equivalences as \emph{definitions}.
	We also write $\lnot A := A \limp \bot$ and $A \liff B := (A\limp B) \land (B\limp A)$.
\end{remark}

\begin{remark}
	[Minimality]
	\label{rem:minimality}
	Over $\IPL$ the axioms and rules for white (or black) modalities from \cref{modal-axioms-rules-wb} determine what is known as \emph{constructive modal logic} $\CK$~\cite{bellin2001extended}. 
	This is the smallest intuitionistic version of modal logic (with $\Box$ and $\Diamond$) usually considered, with other common ones obtained by adding further principles of (classical) modal logic, in particular among:
	\begin{equation}
		\label{eq:other-IK-axioms}
		\begin{array}{l@{\ : \quad}l}
			\diadistlor &     \Diamond (A_0\lor A_1) \limp \Diamond A_0 \lor \Diamond A_1 \\
			\diadistbot &    \Diamond \bot \limp \bot \\
			\diaimpbox &    (\Diamond A \limp \Box B) \limp \Box (A\limp B)
		\end{array}
	\end{equation}
	For instance $\WK := \CK + \diadistbot$ was studied in~\cite{wijesekera1990constructive}, $\WK + \diadistlor$ was studied in~\cite{das2023intuitionistic}, $\IK^N := \CK + \diadistbot + \diaimpbox$ was studied in~\cite{degroot2025semantical}, and $\IK$ is the extension of $\CK$ by all the axioms above~\cite{fischerservi1984axiomatizations,simpson1994proof}. 
	The smallest intuitionistic modal logic without $\Diamond$, $\iK$, defined as the extension of $\IPL$ by $\functwb$ and $\necw$, is conservatively extended by $\CK$.
	In this sense \cref{modal-axioms-rules-wb} is a minimal commitment in terms of extending the underlying modal logics.
	
	On the other hand, the tense axioms in \cref{tense-axioms} state only that $(\Box, \blacklozenge)$ and $(\blacksquare,\Diamond)$ are adjoint pairs, assuming no further relationship.
	These are standard axioms in presentations of tense logic~\cite{blackburn2006handbook},
	and so again~\cref{tense-axioms} is a minimal commitment in terms of relating the white and black modalities.
\end{remark}

\begin{example}
	[$\Box$ distributes over $\forall$]
	\label{ex:box-dist-forall}
	Let us see a simple example of $\IKtSO (\Diamond, \blacklozenge)$ reasoning in action, not least so we can explain how we present axiomatic proofs:
	\[
	\begin{array}{ll}
		\forall X A \limp A[P/X] & \text{by $\CA$, setting $X=P$} \\
		\Box \forall X A \limp \Box A[P/X] & \text{by $\necw$ and $\functwb$} \\
		\Box \forall X A \limp \forall X \Box A & \text{by $\gen, \functforall, \Vacax$}
	\end{array}
	\]
	Note that we leave routine $\IPL$ reasoning here mostly implicit, rather focussing on the modal and quantifier axioms necessary at each step. 
	To expand out the final step a little, consider the following gadget, where $X\notin \fv A$:
	\[
	\begin{array}{ll}
		A \limp B & \\
		\forall X (A\limp B) & \text{by $\gen$} \\
		\forall X A \limp \forall X B & \text{by $\functforall$} \\
		A \limp \forall X B & \text{by $\Vacax$}
	\end{array}
	\]    
	We shall continue to write axiomatic proofs in this fashion henceforth, without additional explanation.
	
	Note that we can also recover a proof of $\blacksquare\forall X A \limp \forall X \blacksquare A$, by simply exchanging white and black modalities in the proof above.
	We shall also continue to use this observation throughout, now just alluding to `symmetry'. 
\end{example}

Before showing further examples involving diamonds, 
let us first justify \cref{eq:diamond-definitions}, as promised:
\begin{theorem}
	\label{dia-iff-so-dfn}
	$\IKtSOdia$ (and so all its extensions) proves:
	\begin{enumerate}
		\item $\Diamond A \liff \forall X (\Box (A \limp \blacksquare X) \limp  X) $
		\item $     \blacklozenge A \liff \forall X ( \blacksquare ( A \limp \Box X) \limp  X)$
	\end{enumerate}
\end{theorem}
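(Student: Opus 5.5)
By symmetry (exchanging white and black modalities), it suffices to prove item~1. I prove the two directions separately, with $X$ chosen fresh.

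\emph{Left-to-right.} I show $\Diamond A \limp \Box(A\limp\blacksquare P)\limp P$ for a fresh propositional symbol $P$. Then $\gen$, $\functforall$ and $\Vacax$, as in the gadget of \cref{ex:box-dist-forall}, turn this into $\Diamond A \limp \forall X(\Box(A\limp \blacksquare X)\limp X)$. The argument runs as follows:
\[
\begin{array}{ll}
\Box(A\limp\blacksquare P)\limp \Diamond A \limp \Diamond\blacksquare P & \text{by $\functwd$} \\
\Diamond\blacksquare P \limp P & \text{by $\wdbb$} \\
\Diamond A \limp \Box(A\limp\blacksquare P)\limp P & \text{by $\IPL$ reasoning}
\end{array}
\]
This direction is routine.

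\emph{Right-to-left.} Instantiate $\CA$ with $C := \Diamond A$, which gives
\[
\forall X(\Box(A\limp\blacksquare X)\limp X)\limp \Box(A\limp\blacksquare\Diamond A)\limp \Diamond A .
\]
It then suffices to prove $\Box(A\limp\blacksquare\Diamond A)$ outright. This follows from $\bbwd$, which gives $A\limp\blacksquare\Diamond A$, followed by $\necw$. Discharging the middle premise by $\mp$ and $\IPL$ reasoning yields $\forall X(\Box(A\limp\blacksquare X)\limp X)\limp\Diamond A$.

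Item~2 is obtained by the mirror argument. The left-to-right direction uses $\functbd$ and $\bdwb$, and the right-to-left direction uses $\CA$ with $C:=\blacklozenge A$ together with $\wbbd$ and $\necb$.

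The only step needing care is the right-to-left direction. There, full comprehension is used with the diamond formula itself as witness, and one must recognise that the unit $\bbwd$ of the adjunction supplies exactly the needed premise. Otherwise the proof consists of bookkeeping the freshness side condition in $\gen$.
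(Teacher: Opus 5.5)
Your proposal is correct and follows essentially the same route as the paper. The forward direction uses $\functwd$ and $\wdbb$ at a fresh $P$, followed by the $\gen$/$\functforall$/$\Vacax$ gadget. The backward direction instantiates $\CA$ with $C := \Diamond A$ and discharges the premise using $\bbwd$ and $\necw$, and item 2 follows by symmetry.
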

\begin{proof}
	By symmetry, we need only prove the first item, for which we give each direction separately:
	\[
	\begin{array}{ll}
		\Diamond \blacksquare P \limp P & \text{by $\wdbb$} \\
		\Diamond A \limp (\Diamond A \limp \Diamond \blacksquare P) \limp P & \text{by $\IPL$ reasoning} \\
		\Diamond A \limp \Box (A \limp \blacksquare P) \limp P & \text{by $\functwd$}\\
		\Diamond A \limp \forall X (\Box (A\limp \blacksquare X) \limp X) & \text{by $\gen, \functforall, \Vacax$}
		\\
		\noalign{\medskip}
		A \limp \blacksquare\Diamond A & \text{by $\bbwd$}\\
		\Box (A \limp \blacksquare\Diamond A) & \text{by $\necw$} \\
		(\Box (A \limp \blacksquare\Diamond A) \limp \Diamond A) \limp \Diamond A & \text{by $\IPL$ reasoning} \\
		\forall X (\Box (A \limp \blacksquare X) \limp X) \limp \Diamond A & \text{by $\CA$, setting $X = \Diamond A$} \qedhere
	\end{array}
	\]
\end{proof}

As we indicated earlier, we may now dispense with the native diamonds, under the equivalences we have just proved:

\begin{definition}
	$\IKtSO$ is obtained from $\IKtSOdia$ by setting:
	\[
	\begin{array}{r@{\ := \ }l}
		\Diamond A    & \forall X (\Box (A \limp \blacksquare X) \limp X)  \\
		\blacklozenge A   & \forall X (\blacksquare (A \limp \Box X) \limp X)
	\end{array}
	\]
\end{definition}
All further references to $\Diamond $ and $\blacklozenge$ are bound by the definitions displayed above.

\begin{example}
	[Redundancy]
	Note that, under the definitions of $\Diamond,\blacklozenge$ above, some of the axioms we gave in $\IKtSOdia$ become redundant, in the sense that they are already derivable from the others. 
	In particular this is the case for  half of the adjunction axioms, $\wdbb$ and $\bdwb$, and the distribution axioms for diamonds, $\functwd$ and $\functbd$:
	\[
	\begin{array}{ll}
		\blacksquare A \limp \blacksquare A & \text{by $\IPL$ reasoning} \\
		\Box (\blacksquare A \limp \blacksquare A) & \text{by $\necw$} \\
		(\Box (\blacksquare A \limp \blacksquare A) \limp A) \limp A & \text{by $\IPL$ reasoning}\\
		\Diamond \blacksquare A \limp A & \text{by $\CA$ and definition of $\Diamond$}
	\end{array}
	\]
	\[
	\begin{array}{ll}
		(A \limp B) \limp (B\limp \blacksquare P) \limp A \limp \blacksquare P & \text{by $\IPL$ reasoning}\\
		\Box (A\limp B) \limp \Box (B \limp \blacksquare P) \limp \Box (A \limp \blacksquare P) & \text{by $\necw$ and $\functwb$s} \\
		\Box (A\limp B) \limp ((\Box A \limp \blacksquare P) \limp P) \limp \Box (B\limp \blacksquare P) \limp P & \text{by $\IPL$ reasoning} \\
		\Box (A\limp B) \limp \Diamond A \limp \Box (B \limp \blacksquare P) \limp P & \text{by $\CA$ and definition of $\Diamond$} \\
		\Box (A\limp B) \limp \Diamond A \limp \Diamond B & \text{by $\gen, \functforall, \Vacax$ and definition of $\Diamond$}
	\end{array}
	\]
	Again, derivability of $\functbd$ and $\bdwb$ follow by symmetry.

\end{example}

\begin{example}
	[$\forall$ distributes over $\Box$]
	\label{ex:forall-distributes-over-box}\label{ex:Barcan}
	Referring to \cref{ex:box-dist-forall}, we also have the converse principle:\footnote{The reader familiar with predicate modal logics will notice the similarity to the so-called `Barcan' formulas.}
	\[
	\begin{array}{ll}
		\forall X \Box A \limp \Box A[P/X] & \text{by $\CA$, setting $X=P$}\\
		\blacksquare (\forall X \Box A \limp \Box A[P/X]) & \text{by $\necb$} \\
		(\blacksquare (\forall X \Box A \limp \Box A[P/X]) \limp A[P/X]) \limp A[P/X] & \text{by $\IPL$ reasoning} \\
		\blacklozenge \forall X \Box A \limp A[P/X] & \text{by $\CA$, setting $X = A[P/X]$ and definition of $\Diamond$} \\
		\blacklozenge \forall X \Box A \limp \forall X A & \text{by $\gen,\functforall,\Vacax$} \\
		\Box \blacklozenge \forall X \Box A \limp \Box \forall X A & \text{by $\necw$ and $\functwb$} \\
		\forall X \Box A \limp \Box \forall X A & \text{by $\wbbd$}
	\end{array} 
	\]
	The two directions together comprise a sort of infinitary version of the usual distribution of $\Box$s over $\land$: $\Box (A\land B) \liff \Box A \land \Box B$.
	
	Again, by symmetry, we also have $\blacksquare \forall X A \liff \forall X \blacksquare A$.

\end{example}

\subsection{On the underlying (first-order) modal and tense logics}
\label{sec:underlying-modal-tense-logics}
Now that we have addressed \cref{item:minimal-since-diamond-dfn}, the general correctness of our definition of diamonds, let us move to \cref{item:minimal-since-robust}, the robustness of the minimal axiomatisation $\IKtSO$ we have presented.

As we mentioned earlier in \cref{rem:minimality}, intuitionistic modal logics found in the literature may contain further axioms among \cref{eq:other-IK-axioms}.
It turns out that these are all derivable within the current presentation:

\[
\begin{array}{ll}
	\bot \limp \blacksquare \bot & \text{by $\IPL$ reasoning} \\
	\Box (\bot \limp \blacksquare \bot) & \text{by $\necw$} \\
	(\Box (\bot \limp \blacksquare \bot) \limp \bot )\limp \bot & \text{by $\IPL$ reasoning} \\
	\Diamond \bot \limp \bot & \text{by $\CA$ and definition of $\Diamond$}
	\\
	\noalign{\medskip}
	\Diamond A_i \limp \Diamond A_0 \lor \Diamond A_1 & \text{for $i=0,1$, by $\IPL$ reasoning} \\
	\blacksquare \Diamond A_i \limp \blacksquare (\Diamond A_0 \lor \Diamond A_1) & \text{by $\necb$ and $\functbb$} \\
	A_i  \limp \blacksquare (\Diamond A_0 \lor \Diamond A_1) & \text{by $\bbwd$} \\
	A\lor B \limp \blacksquare (\Diamond A_0 \lor \Diamond A_1) & \text{by $\IPL$ reasoning} \\
	\Box (A\lor B \limp \blacksquare (\Diamond A_0 \lor \Diamond A_1)) & \text{by $\necw$} \\
	(\Box (A\lor B \limp \blacksquare (\Diamond A_0 \lor \Diamond A_1)) \limp \Diamond A \lor \Diamond B) \limp \Diamond A \lor \Diamond B & \text{by $\IPL$ reasoning} \\
	\Diamond (A\lor B) \limp \Diamond A \lor \Diamond B & \text{by $\CA$, setting $X=\Diamond A \lor \Diamond B$, and definition of $\Diamond$}
	\\
	\noalign{\medskip}
	(\blacksquare \Diamond A \limp \blacklozenge \Box B) \limp A \limp B & \text{by $\bbwd$, $\bdwb$ and $\IPL$ reasoning} \\
	\blacklozenge (\Diamond A \limp \Box B) \limp A \limp B & \text{by $\functbb$, $\functbd$ and $\necb$} \\
	\Box \blacklozenge (\Diamond A \limp \Box B) \limp \Box (A\limp B) & \text{by $\necw$ and $\functwb$} \\
	(\Diamond A \limp \Box B) \limp \Box (A\limp B) & \text{by $\wbbd$}
\end{array}
\]
To explain a little the step justified `by $\functbb$, $\functbd$ and $\necb$', note that $\blacklozenge (C\limp D) \limp \blacksquare C \limp \blacklozenge D$ is indeed readily derivable using those axioms and rule.

So $\IKtSO$ contains the intuitionistic modal logic $\IK$ of Fischer Servi and Simpson \cite{FischerServi1977,simpson1994proof}.
By symmetry, it also contains all the black versions of the principles in \cref{eq:other-IK-axioms} too.
Altogether, we now have that $\IKtSO$ furthermore contains Ewald's $\IKt$ \cite{ewald1986intuitionistic}, justifying our chosen nomenclature.\footnote{Ewald includes a couple other axioms too, namely $\Box (A\land B) \liff \Box A \land \Box B$ and $\Diamond (A\limp B) \limp \Box A \limp \Diamond B$ (and their black analogues). Both are routinely derivable in even $\CK$ (and its black analogue, respectively).}
	
Let us point out that the derivation of $\IKt$ in $\IKtSO$ does not really rely on the availability of second-order reasoning. 
A closer inspection of the arguments reveals that we need only the following (first-order) principles:
\begin{itemize}
	\item $\Diamond A \limp \Box (A \limp \blacksquare C) \limp C$
	\item $\blacklozenge A \limp \blacksquare (A \limp \Box C) \limp C$
	\item $C \limp \Box (\blacklozenge C \limp A) \limp \Box A$
	\item $C \limp \blacksquare (\Diamond C \limp A) \limp \blacksquare A$
\end{itemize}
These are derivable already from the modal axioms and rules, \cref{modal-axioms-rules-wb}, and the tense axioms, \cref{tense-axioms}, under $\IPL$ similarly to the proof of \cref{dia-iff-so-dfn}.
That such a minimal axiomatisation already generates all of $\IKt$ seems to be a folklore fact in the (intuitionistic) tense community, e.g.\ as stated in \cite[Remark~2.3]{lin2022negneg-tense}. 
We have provided the derivations above nonetheless as we have not been able to find them explicitly in the literature.

\begin{proposition} \label{prop:IK-thm}
	$\IKt$ proves the following:
	\begin{enumerate}
		\item $(\Box A \wedge \Dmnd B) \to \Dmnd (A \wedge B)$.
		\item $(\tBox A \wedge \tDmnd B) \to \tDmnd (A \wedge B)$
		\item $(\DIA A \IMP \BOX B) \IMP \BOX (A \IMP B)$
		\item $(\DIAB A \IMP \BOXB B) \IMP \BOXB (A \IMP B)$
		\item $\DIA (A \IMP B) \IMP (\DIA A \IMP \BOX B)$
		\item $\DIAB (A \IMP B) \IMP (\DIAB A \IMP \BOXB B)$
	\end{enumerate}
\end{proposition}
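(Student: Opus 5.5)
By the symmetry between white and black modalities, it suffices to prove items (1), (3) and (5); the even-numbered items follow by exchanging $\Box,\Diamond$ with $\blacksquare,\blacklozenge$ throughout. We work in $\IKt$, i.e.\ $\IPL$ together with the normality axioms $\functwb,\functwd,\functbb,\functbd$, necessitation, the four adjunction axioms, and the derived principles $\diaimpbox$, $\diadistlor$ and $\diadistbot$ established above. Item (3) is exactly $\diaimpbox$, which was derived just before the statement from $\bbwd$, $\bdwb$, $\functbb$, $\functbd$, $\necb$, $\necw$, $\functwb$ and $\wbbd$. So only (1) and (5) need work.

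For (1) the plan is a standard adjunction argument. Start with the $\IPL$ tautology $B \limp A \limp (A\land B)$. Apply $\bbwd$ to $A\land B$ to get $A\land B \limp \blacksquare\Diamond(A\land B)$. Composing gives $B \limp A \limp \blacksquare\Diamond(A\land B)$. Now push $\Box$ inside by $\necw$ and $\functwb$, obtaining
\[\Box B' \limp \Box A \limp \Box\blacksquare\Diamond(A\land B),\]
where $B'$ abbreviates the formula $A\limp\blacksquare\Diamond(A\land B)$. That is slightly off, so instead we use the cleaner route. From the same tautology, necessitation and $\functwb$ give $\Box A \limp \Box(B \limp A\land B)$. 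Then $\functwd$ gives $\Box(B\limp A\land B)\limp \Diamond B \limp \Diamond(A\land B)$. Chaining these yields $\Box A \limp \Diamond B \limp \Diamond(A\land B)$, which is (1) after currying. No tense axioms are needed here; $\CK$ suffices.

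For (5) we want $\Diamond(A\limp B)\limp \Diamond A \limp \Box B$. The idea is to exploit the Fischer Servi-style interaction available via the tense axioms. First we show
\[(A\limp B)\limp A \limp \blacksquare\Diamond B\]
by $\IPL$ and $\bbwd$ applied to $B$. We then aim for $\Diamond(A\limp B) \limp \Diamond A \limp \Box B$ by applying $\wbbd$ in the form $C\limp\Box\blacklozenge C$ with $C := \Diamond(A\limp B)\land\Diamond A$. This reduces the task to proving $\blacklozenge(\Diamond(A\limp B)\land \Diamond A)\limp B$, to be pushed under $\Box$ by $\necw$ and $\functwb$.

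That reduced goal is the main obstacle: $\blacklozenge$ of a conjunction of two diamonds does not obviously collapse. What I would try first is the contrapositive-free alternative of using (3) itself. We have $\Diamond A \limp \Box B$ directly if we can derive it from $\Diamond(A\limp B)$. For this, use the black analogue of (1), i.e.\ item (2), together with $\wdbb$, $\bbwd$ and $\bdwb$. Concretely, we would derive $\Diamond(A\limp B)\limp \Box(A\limp B)$-type information along the converse of the adjunctions, via the chain $\Diamond X \limp \Box Y$ iff $X\limp\blacksquare\Box Y$-style transposition. If that fails, fall back on the general lemma that $\Diamond C \limp \Box D$ follows from $C \limp \blacksquare \Box D$. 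This lemma holds by $\functwd$ plus $\wdbb$, and would reduce (5) to an $\IPL$-plus-adjunction manipulation of $\blacksquare\Box$, to be checked step by step.
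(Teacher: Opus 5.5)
You have not proved item (5), and by symmetry (6), and no proof is possible: read literally, $\Diamond(A\limp B)\limp\Diamond A\limp\Box B$ is not valid. The rest is fine: the white/black symmetry reduction is legitimate, (3) is exactly the already-derived $\diaimpbox$, and your second route for (1) is correct in $\CK$. There is one slip in (1). From the tautology you wrote, $B\limp A\limp A\land B$, the rules $\necw$ and $\functwb$ give $\Box B\limp\Box(A\limp A\land B)$. To get $\Box A\limp\Box(B\limp A\land B)$ you need $A\limp B\limp A\land B$ instead.

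For (5), what you give is not a derivation but a list of strategies, each abandoned or left ``to be checked step by step''. None can be completed, because the formula fails even in classical $\mathsf K$, so it is not a theorem of $\IKt$. Take a classical Kripke model (order the identity, hence a legitimate birelational model) with:
\begin{itemize}
\item worlds $w,v_1,v_2$ and accessibility $wRv_1$, $wRv_2$;
\item $A$ true only at $v_2$, and $B$ true nowhere.
\end{itemize}
Then $v_1\models A\limp B$ gives $w\models\Diamond(A\limp B)$, and $v_2$ witnesses $w\models\Diamond A$. But $v_1$ refutes $\Box B$ at $w$. With the full powerset as its domain of sets this model is comprehensive, so the formula also fails under the paper's second-order encoding of $\Diamond$.

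In particular your reduced goal $\blacklozenge(\Diamond(A\limp B)\land\Diamond A)\limp B$ is unprovable as well. Your transposition rule is itself sound: from $C\limp\blacksquare\Box D$ infer $\Diamond C\limp\Box D$, via $\necw$, $\functwd$ and $\wdbb$. But it cannot rescue an invalid target. A quick semantic check before hunting for a derivation would have caught this.

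The printed items (5) and (6) are evidently typos. What the paper actually uses, in \cref{lem:GenFSDiaBox}, is $\Diamond(A\limp B)\limp\Box A\limp\Diamond B$, which is Ewald's axiom from the footnote in \cref{sec:underlying-modal-tense-logics}. Dually it needs $\blacklozenge(A\limp B)\limp\blacksquare A\limp\blacklozenge B$. This version has a $\CK$ proof exactly parallel to your proof of (1):
\begin{itemize}
\item Start from the tautology $A\limp(A\limp B)\limp B$.
\item Applying $\necw$ and $\functwb$ gives $\Box A\limp\Box((A\limp B)\limp B)$.
\item $\functwd$ gives $\Box((A\limp B)\limp B)\limp\Diamond(A\limp B)\limp\Diamond B$.
\item Chain the two implications and permute the antecedents.
\end{itemize}
No tense axioms are needed.
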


\subsection{Classical theory}
Finally, let us conclude this section by giving a classical version of second-order tense logic. 
This is obtained, as expected, by adding double negation elimination:\footnote{We could have made other equivalent choices, e.g.\ by adding Peirce's law $((A\limp B)\limp A) \limp A$.}

\begin{definition}
	$\KtSO := \IKtSO + \lnot \lnot A \limp A$. 
\end{definition}

We shall address the relationship between our intuitionistic and classical theories later in \cref{sec:further}.

\section{Model theory: (bi)relational structures}\label{sec:semantics}

In this section we introduce a standard form of semantics for intuitionistic modal logics, exhibiting two relations: the partial order from Kripke structures for intuitionistic logic, and an `accessibility' relation from relational structures for modal and tense logic.
The most interesting aspect herein is the accommodation of \emph{sets} (or \emph{predicates}), i.e.\ the domain over which variables $X,Y$ etc.\ vary.

We also consider a special subset of birelational structures, what we call \emph{predicate} structures, and their specialisation to classical models.

\subsection{Two-sorted birelational semantics under full comprehension}

The birelational semantics of (first-order) intuitionistic tense formulas will be defined as usual, following the intuitionistic modal and tense traditions \cite{FischerServi1977,ewald1986intuitionistic,PlotkinStirling1986,simpson1994proof}.
To account for second-order quantifiers, we must further include a domain of sets over which variables vary:

\begin{definition}
[Birelational structures]
\label{dfn:birel-structures}
A \textbf{(two-sorted) (birelational) structure} $\B$ includes the following data:
\begin{itemize}
    \item A set $\Worlds$ of \textbf{worlds} of $\B$.
    \item A partial order $\leq $ on $\Worlds$.
    \item A class $\setW \subseteq \pow \Worlds$ of \textbf{sets} (or \textbf{predicates}) that are upwards-closed, i.e.\ if $V \in \setW$ and $v \leq v' $ then $ Vv \implies Vv'$.
    \item An interpretation $\interp \B {} P \in \setW$ for each $P \in \Prop$.
    \item An \textbf{accessibility relation} $\interp \B {} R \subseteq \Worlds \times \Worlds$.
\end{itemize}
We furthermore require in $\B$ that 
$\interp \B {} R$ is a \emph{bisimulation} on $\leq$, i.e.:
    \begin{itemize}
        \item $\forall v,w,w' \in W \,  (v\interpM \B Rw\leq w' \implies \exists v' \geq v\, v'\interpM \B Rw')$.
        \item $\forall v,v',w \in W\,  (v'\geq v \interpM \B R w \implies \exists w' \geq w\, v'\interpM \B Rw'$.
    \end{itemize}

Now, let us temporarily expand the language of formulas by including each $V \in \setW$ as a propositional symbol, setting $\interpM \B V = V$.
The judgement $v \modelsM \B A $, for $v\in W$, is defined by induction on the size of a closed formula $A$:
\begin{itemize}
    \item $v\modelsM \B P$ if $ \interpM\B P v $.\footnote{Note that this clause accounts for the new propositional symbols $V \in \setW$ too.}
    \item $v \modelsM \B  A \limp B$ if, whenever $v \leq v'$ and $v' \modelsM\B  A$, we have $v' \modelsM \B  B$.
    \item $v\modelsM \B \Box A$ if, whenever $v \leq v' $ and $ v \interpM\B R w'$, we have $w' \modelsM \B A$.
    \item $v \modelsM \B \blacksquare A$ if, whenever $v \leq v' $ and $u' \interpM \B R v' $, we have $u' \modelsM \B A$.
    \item $v \modelsM \B \forall X A$ if, whenever $v\leq v'$ and $V \in \setW$, we have $v' \modelsM \B A[V/X]$.
\end{itemize}
We write simply $\modelsM \B A$ if $w \modelsM \B A$ for every $w \in \Worlds$.

We say that $\M$ is \textbf{comprehensive} if, for each closed formula $C$ (of the expanded language), it has a set $\extension C  = \{w \in W \ | \ w \modelsM \B C\}$ in $\setW$.
A \textbf{birelational model} is a comprehensive birelational structure.
\end{definition}

Let us point out that `comprehensive structures' have several alternative names in the literature, including `full structures', `complete structures', `principal structures' or even just `structures' (where `pre-structures' are not necessarily comprehensive). 
We prefer the present terminology as it is less ambiguous (e.g.\ full semantics, complete axiomatisation,...) and is suggestive of the role this property plays in modelling the comprehension axiom, $\CA$.

\begin{remark}
    [Full vs Henkin]
    A naive domain of sets is simply the full powerset $\pow W$. 
    Such a structure is comprehensive by default, since it has \emph{every} possible set, it includes in particular the extensions $\extension C$.
    This is often referred to as the \emph{full} or \emph{standard} semantics of second-order logic.\footnote{Traditionally the  nomenclature `second-order' would be reserved for only such semantics, while our framework is perhaps more correctly dubbed `two-sorted first-order'. We shall refrain from rehashing this discussion here but refer the reader to REF for a comprehensive explanation. The current terminology `second-order' has become standard in computational logic. }
    However such a semantics for second-order logic admits no complete proof systems, as its validities are not even analytical, let alone recursively enumerable.
    It is more typical in proof theoretic investigations to admit the \emph{Henkin} structures that we have presented here, treating `second-order' as simply another sort.
    A useful discussion of this distinction and source of further references is available in \cite{sep-logic-second-higher-order}, in particular Sections 5 and 9.
    
    On the other hand, we cannot admit arbitrary domains of sets if $\IKtSO$ is to be sound for our models. The condition of comprehensivity is required to ensure that structures model comprehension, $\CA$.  
    Note the awkwardness here: the class of sets $\W$ must be specified outright, but whether it is comprehensive or not depends on the resulting notion of entailment. 
    For this reason typical defininitions of comprehensive structures are \emph{impredicative}.
\end{remark}

Henceforth let us reserve the metavariable $\B$ to vary over birelational models.
One of the main results of this work is that our axiomatic system $\IKtSO$ and our birelational semantics above actually induce the same logic:

\begin{maintheorem}
    [Soundness and completeness]
    \label{mainthm:soundness-completeness-intuitionistic}
    $\IKtSO \proves A \iff \forall \B \, \modelsM \B A$.
\end{maintheorem}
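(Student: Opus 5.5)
We will not prove this directly. Instead we follow the `grand tour' of \cref{fig:int-tour}, which passes through the labelled calculi and a more general class of predicate structures. The soundness direction ($\Rightarrow$) is a routine induction on axiomatic proofs in $\IKtSO$.

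For soundness, we first check that each axiom is valid in every birelational model and each rule preserves validity. Two preliminary facts are needed:
\begin{enumerate}[(a)]
\item \emph{persistence}: if $v\modelsM\B A$ and $v\leq v'$ then $v'\modelsM\B A$, proved by induction on $A$ using upward closure of sets in $\setW$;
\item a \emph{substitution lemma}: $v\modelsM\B A[C/X]$ iff $v\modelsM\B A[\extension C/X]$, using that $\extension C$ is upward closed.
\end{enumerate}
With these, $\CA$ is sound precisely because comprehensivity places $\extension C\in\setW$. For $\gen$, a countermodel to $\forall X A$ at some $V\in\setW$ becomes a countermodel to $A[P/X]$ by reinterpreting the fresh symbol $P$ as $V$; this remains a model since $P$ does not occur in the conclusion. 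The modal axioms $\functwb,\functbb$, and the necessitation rules are standard. The tense axioms $\wbbd$ and $\bbwd$ (the only ones left after the redundancy observations) use the bisimulation conditions on $R$ to move along $\leq$ and $R$. The propositional axioms are as in Kripke semantics. This is \cref{prop:brelsoundness}.

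For completeness ($\Leftarrow$), assume $\modelsM\B A$ for every birelational model $\B$. We then chain the following steps:
\begin{enumerate}[(i)]
\item Every predicate structure $\PP$ yields a birelational model, so $\forall\PP\,\modelsM\PP A$ (\cref{prop:breltoprel}).
\item Cut-free completeness of the multi-succedent labelled calculus gives $\mlIKtSO\setminus\cut\proves A$ (\cref{thm:completeness}).
\item Conservativity gives $\labIKtSO\setminus\cut\proves A$ (\cref{prop:multitosingle}), and hence $\labIKtSO\proves A$.
\item The translation of labelled proofs into axiomatic ones gives $\IKtSO\proves A$ (\cref{prop:axiomsoundness}).
\end{enumerate}

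The main obstacle is step (ii). We would attempt a failed proof search for $A$ and build a fair, possibly infinite, search branch. From the labels and relational atoms on that branch we read off worlds, $R$, and $\leq$. The sets must then be chosen so that the structure is comprehensive without a circular definition. Following Sch\"utte's semivaluations, we would define $\setW$ as the upward-closed sets that are compatible with the branch's left and right formulas, and verify by induction on formulas that the extension of every formula, including impredicative instances, is such a set. Step (iv) is the other technical hurdle: it needs a tree-shaped reading of labelled sequents as formulas built from $\Box$ and $\blacksquare$, and each rule must be checked as derivable in $\IKtSO$.
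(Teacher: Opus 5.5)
Your proposal is correct and follows the paper's own route. Soundness is by induction on $\IKtSO$-derivations using persistence (\cref{prop:brelsoundness}); your shortcut of verifying only $\wbbd$ and $\bbwd$, since the remaining diamond axioms are derivable, is legitimate. Completeness goes via exactly the grand tour \cref{prop:breltoprel}, \cref{thm:completeness}, \cref{prop:multitosingle}, \cref{prop:axiomsoundness}.

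One caveat about your sketch of step (ii): intuitionistically a single fair branch does not suffice, because the right rules of $\mlIKtSO$ discard the rest of the succedent. The paper therefore builds a proof search \emph{tree} over possibly infinite limit sequents, with one child for each succedent formula. It takes $\leq$ to be the prefix order on the nodes, the modal worlds to be the labels, and a separate accessibility relation at each node.
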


The proof of the $\impliedby$ direction, completeness,  will be broken up into several steps, in fact factoring through a further \emph{proof theoretic} presentation of the logic from \cref{sec:prooftheory}.
We shall turn to this soon, but for the remainder of this subsection let us establish the $\implies$ direction, soundness.

First we need a standard intermediate result:

\begin{lemma}
    [Monotonicity]
    \label{monotonicity}
    If $v\leq v'$ and $v \models_\B A $ then $ v' \models_\B A$.
\end{lemma}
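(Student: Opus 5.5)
We prove the statement by induction on the size of the closed formula $A$, the same measure used in \cref{dfn:birel-structures} to define $v \modelsM \B A$. This measure counts the propositional symbols $V \in \setW$ of the expanded language as atomic. That is what makes the induction well-founded in the $\forall$ case, since $A[V/X]$ is then no larger than $A$. We fix $v \leq v'$ with $v \modelsM \B A$ and aim to show $v' \modelsM \B A$.

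The atomic case is immediate. If $A = P$ (which includes the case $P = V \in \setW$), then $\interpM \B P \in \setW$ by definition of a structure. Every member of $\setW$ is upwards-closed, so $\interpM \B P v$ gives $\interpM \B P v'$. This is the only case that uses a structural property of $\B$.

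All the remaining clauses, for $A \limp B$, $\Box A$, $\blacksquare A$ and $\forall X A$, are quantified over $\leq$-successors of the evaluation world. Each therefore follows from transitivity of $\leq$ alone, with no appeal to the induction hypothesis. For instance, suppose $v \modelsM \B \Box A$, and take $v'' \geq v'$ with $v'' \interpM \B R w$. Then $v'' \geq v$ by transitivity, so $w \modelsM \B A$. The cases for $\limp$ and $\blacksquare$ are identical in form. For $\forall X A$, given $v'' \geq v'$ and $V \in \setW$, transitivity gives $v'' \geq v$, and the clause at $v$ yields $v'' \modelsM \B A[V/X]$.

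Strictly speaking, then, the ``induction'' is only a case analysis, since the hypothesis is never invoked. Two points need care rather than creating any real obstacle. First, the modal clauses must be read as quantifying over $R$-successors of the $\leq$-successor $v''$ (respectively $R$-predecessors of $v''$ for $\blacksquare$), rather than of $v$ itself. Second, the expanded language must be handled correctly so that the atomic case covers the sets $V \in \setW$. The bisimulation conditions on $\interpM \B R$ and comprehensivity are not needed here. They will matter later for soundness of the modal axioms and of $\CA$.
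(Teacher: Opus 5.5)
Your proof is correct and is essentially the paper's: upward-closure of $\interpM \B P \in \setW$ settles the atomic case, and transitivity of $\leq$ alone settles the $\limp$, $\Box$, $\blacksquare$ and $\forall$ cases; the paper's proof likewise never actually uses the induction hypothesis. Your reading of the $\Box$ clause as ranging over $\interpM \B R$-successors of the $\leq$-successor $v''$ is the one the paper's proof uses (the printed definition's $v \interpM \B R w'$ is evidently a slip for $v' \interpM \B R w'$).
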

\begin{proof}
     By induction on the structure of $A$. Assume $v \leq v^\prime$ and $v \vDash_\B A$.

\begin{itemize}
    \item  $A = P \in \Prop$: $P_\B v$ for $P_\B \in \W$. Thus, as $P_\B$ is upwards closed wrt.~$\leq$, $P_\B v^\prime$ and $v^\prime \vDash_\B P$.
\item
    $A = B \to C$: Consider arbitrary $v^{\prime \prime}$ with $v^\prime \leq v^{\prime \prime}$ and $v^{\prime \prime} \vDash_\B B$. By transitivity of $\leq$, $v \leq v^{\prime \prime}$ and therefore $v^{\prime \prime} \vDash C$. By arbitrariness of $v^{\prime \prime}$: $v^\prime \vDash_\B B \to C$.
\item
    $A= \Box B$: Consider arbitrary $\vPP$ and $\wP$ with $\vP \leq \vPP$ and $\vPP R_\B \wP$. Again by transitivity of $\leq$: $v \leq \vPP$, thus $\wP \vDash_\B B$ and $\vP \vDash_\B \Box B$.
\item
    $A= \tBox B$: Consider arbitrary $\vPP$ and $\wP$ with $\vP \leq \vPP$ and $\wP R_\B \vPP$. As before, $v \leq \vPP$, thus $\wP \vDash_\B B$ and $\vP \vDash_\B \tBox B$.
\item
    $A= \forall X B$: Consider arbitrary $\vPP$ with $\vP \leq \vPP$ and some $V \in \W$. As before, we have $v \leq \vPP$ and therefore $\vPP \vDash_\B B[V/X]$. This gives us $\vP \vDash_\B \forall X B$. \qedhere
\end{itemize}
   
\end{proof}

\begin{theorem}
    [Soundness]\label{prop:brelsoundness}
    $\IKtSO \proves A \implies \forall \B\, \modelsM\B A$.
\end{theorem}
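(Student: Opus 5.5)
We argue by induction on the length of an $\IKtSO$ derivation, proving the stronger claim that every derivable formula $A$ is valid in every birelational model $\B$, i.e.\ $w \modelsM \B A$ for all worlds $w$. Derivations may contain formulas with fresh propositional symbols (from $\gen$), so the claim must quantify over \emph{all} models, including those differing only in the interpretation of one symbol $P$. Since $\Diamond$ and $\blacklozenge$ are defined connectives in $\IKtSO$, only the axioms $\Kax,\Sax,\functforall,\Vacax,\CA,\functwb,\functbb,\wbbd,\bbwd$ and the rules $\mp,\gen,\necw,\necb$ need checking. Throughout we use \cref{monotonicity} and reflexivity and transitivity of $\leq$.

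The propositional axioms $\Kax,\Sax$ and the rule $\mp$ are checked exactly as for intuitionistic Kripke semantics, with \cref{monotonicity} used for $\Kax$. For $\functwb$ and $\functbb$, the $\leq$-quantification in the clauses for $\Box,\blacksquare$ and transitivity suffice. For $\necw$ and $\necb$, validity at every world transfers trivially. For $\functforall$, we unfold the clauses, using that $(A\limp B)[V/X] = A[V/X]\limp B[V/X]$. For $\Vacax$, when $X\notin\fv A$ we have $A[V/X]=A$, so the claim is monotonicity.

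The two genuinely second-order cases are $\CA$ and $\gen$. For $\CA$, suppose $v' \modelsM\B \forall X A$ with $v\le v'$. By comprehensivity the set $\extension C$ lies in $\setW$, so $v'\modelsM \B A[\extension C/X]$. We then need a substitution lemma: $u\modelsM\B A[\extension C/X] \iff u\modelsM\B A[C/X]$ for all $u$. We prove this by induction on $A$, where the base case $X$ is exactly the definition of $\extension C$ and the remaining cases follow by the clauses, using the inductive hypothesis at all worlds simultaneously. Because $C$ may be more complex than $\forall X A$, the induction must be on the structure of $A$ with $C$ held fixed, not on size. This is legitimate because the judgement is defined for closed formulas of the expanded language, and $A[C/X]$ and $A[\extension C/X]$ are both well-formed.

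For $\gen$, suppose $A[P/X]$ is valid in all models, with $P$ fresh. Given $\B$, $v\le v'$ and $V\in\setW$, we form $\B'$ identical to $\B$ except that $\interp{\B'}{}{P}=V$. We must check that $\B'$ is still comprehensive. This should hold because extensions in $\B'$ coincide with extensions in $\B$ of formulas in which $P$ is replaced by the symbol $V$. We establish this through the same kind of substitution lemma, $u\modelsM{\B'} D \iff u\modelsM \B D[V/P]$, by induction on $D$. Then $v'\modelsM{\B'}A[P/X]$, and since $P$ does not occur in $A$, this gives $v'\modelsM\B A[V/X]$.

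The adjunction axioms are where the bisimulation conditions enter. For $\wbbd$, suppose $v\modelsM\B A$ and take $v\le v'$, $v'\interpM\B R w$, $w\le w'$, $u'\interpM \B R w'$. We want $u'\modelsM \B A$, but $u'$ is not obviously related to $v$, which looks like a problem. However, $\blacklozenge$ is now defined second-order, so we never unfold a diamond clause. Instead we verify $w\modelsM\B \blacklozenge A = \forall Y(\blacksquare(A\limp\Box Y)\limp Y)$ directly. Given $w\le w'$, $V$, $w''\ge w'$ with $w''\modelsM\B\blacksquare(A\limp \Box V)$, we use the first bisimulation condition to obtain $v''\ge v'$ with $v''\interpM\B R w''$. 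Then $v''\modelsM \B A$ by monotonicity, hence $v''\modelsM\B\Box V$, hence $w''\in V$. The case $\bbwd$ is symmetric, using the other condition. I expect the main obstacle to be the two substitution lemmas, in particular making the comprehensivity of the modified model $\B'$ in the $\gen$ case rigorous.
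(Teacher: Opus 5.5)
Your proposal is correct and follows the paper's route: induction on derivations, routine checks for the $\IPL2$ and normal-modal axioms and rules, and a direct verification of $\wbbd$ (symmetrically $\bbwd$) by unfolding the second-order definition of $\blacklozenge$ and invoking the first (resp.\ second) bisimulation condition, which is exactly the paper's argument. The differences are minor. You make $\CA$ and $\gen$ explicit, including the model modification that $\gen$ needs because it uses fresh propositional symbols, whereas the paper just cites standard $\IPL2$ soundness. You also skip $\functwd,\functbd,\wdbb,\bdwb$ rather than checking them semantically as the paper does. That omission is justified by the paper's Redundancy example, which derives these four axioms from the remaining ones, and not merely by the fact that $\Diamond,\blacklozenge$ are defined, so you should cite that example explicitly.
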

\begin{proof}
    We proceed by induction on $\IKtSO \proves A$.

The axioms and rules of $\IPL2$, namely are standard, following from soundness of $\IPL 2 $ for comprehensive intuitionistic structures (see, e.g., \cite[Section~11.1]{sorensen2006lectures}). In particular notice that their verification, being modality-free, does not involve the accessibility relation $R_\B$.

The modal axioms $\functwb$, $\functbb$ and rules $\necw,\necb$ are also standard,
        following from the soundness of $\IK$ (equivalently its black variant) for birelational structures \cite[Theorem~4]{fischerservi1984axiomatizations}. 
        In particular notice that their verification, being quantifier-free, does not involve the class $\set W$ of sets.

It remains to verify the axioms involving $\Diamond, \blacklozenge$, as they are coded by second-order formulas.

    \begin{itemize}

        \item $\BOX(A \IMP B) \IMP (\DIA A \IMP \DIA B)$.
        Let $w_1 \irelop w$ with $\brelsat{w_1}{\BOX(A \IMP B)}$.
        Let $w_2 \irelop w_1$ with $\brelsat{w_2}{\DIA A = \fall (\BOX(A \IMP \BOXB X) \IMP X)}$ with $X$ fresh for $A$. 
        We show $\brelsat{w_2}{\DIA B = \fall (\BOX(B \IMP \BOXB X) \IMP X)}$ for $X$ fresh for $B$.
        Let $w_3 \irelop w_2$ and $V \in \class$.
        We need to show $\brelsat{w_3}{\BOX(B \IMP \BOXB V) \IMP V}$.
        Let $w_4 \irelop w_3$ with $\brelsat{w_4}{\BOX(B \IMP \BOXB V)}$ and so we are left to show that $\brelsat{w_4}{V}$.
        For $w_4 \irel w_5 \brel v_5$, $\brelsat{v_5}{B \IMP \BOXB V}$.
        Now, as $\brelsat{w_1}{\BOX(A \IMP B)}$ and $w_1 \irel w_5 \brel v_5$, $\brelsat{v_5}{A \IMP B}$.
        Through standard reasoning, we have $\brelsat{v_5}{A \IMP \BOXB V}$ and so $\brelsat{w_4}{\BOX(A \IMP \BOXB V)}$.
        As $\brelsat{w_2}{\DIA A = \fall (\BOX(A \IMP \BOXB X) \IMP X)}$ and $w_4 \irelop w_2$, $\brelsat{w_4}{\BOX(A \IMP \BOXB V) \IMP V}$ and so it follows that $\brelsat{w_4}{V}$.

        \item $A \IMP \BOX \DIAB A$.
        Let $w_1 \irelop w$ with $\brelsat{v_1}{A}$.
        We show for $w_2 \irelop w_1$ and $w_2 \brel v_2$ that $\brelsat{v_2}{\DIAB A = \fall (\BOXB (A \IMP \BOX X) \IMP X)}$ for $X$ fresh for $A$.
        Let $v_3 \irelop v_2$ and $V \in \class$.
        We show $\brelsat{v_3}{\BOXB (A \IMP \BOX V) \IMP V}$.
        Let $v_4 \irelop v_3$ with $\brelsat{v_4}{\BOXB (A \IMP \BOX V)}$.
        As $w_2 \brel v_2 \irel v_4$, there exists $w_4 \irelop w_2$ with $w_4 \brel v_4$.
        By Lemma~\ref{monotonicity}, $\brelsat{w_4}{A}$, and as $\brelsat{v_4}{\BOXB (A \IMP \BOX V)}$, $\brelsat{w_4}{A \IMP \BOX V}$.
        Through standard reasoning this means that $\brelsat{w_4}{\BOX V}$, and as $w_4 \brel v_4$, we must have $\brelsat{v_4}{V}$.

        \item $\DIA \BOXB A \IMP A$.
        Let $w_1 \irelop w$ with $\brelsat{w_1}{\DIA \BOXB A = \fall (\BOX(\BOXB A \IMP \BOXB X) \IMP X)}$ for $X$ fresh for $\BOXB A$.
        Then by definition as $A \in \class$, $\brelsat{w_1}{\BOX(\BOXB A \IMP \BOXB A) \IMP A}$.
        For all worlds $v$, we must have $\brelsat{v}{\BOXB A \IMP \BOXB A}$, so therefore $\brelsat{w_1}{\BOX(\BOXB A \IMP \BOXB A)}$.
        So it follows that $\brelsat{w_1}{A}$.
    \end{itemize}
    The remaining axioms $\functbd, \bbwd,\bdwb$ follow by a symmetric argument.
\end{proof}

\subsection{Two-sorted predicate semantics under full comprehension}

We shall also consider intuitionistic predicate structures, in which every world is a classical modal model. 
We introduce this additional semantics for two reasons: (i) in order to further test the robustness of the logic $\IKtSO$; and, (ii) since we shall use these for our ultimate countermodel construction in \cref{sec:completeness}.

\begin{definition}
    [Predicate structures]
    \label{dfn:pred-structures}
A \textbf{(two-sorted) predicate structure} $\PP$ includes the following data:

\begin{itemize}
    \item A set $\States$ of \textbf{intuitionistic worlds} (or \textbf{states}).
    \item A partial order $\leq$ on $\States$.
    \smallskip
    \item A set $W  $ of \textbf{(modal) worlds}. 
    \item A set $\setW \supseteq \Prop$ of \textbf{(modal) sets} (or \textbf{(modal) predicates}).

    \smallskip

    \item An interpretation $\interpa a P \subseteq \Worlds$ for each $P \in \setW$ and $a \in \States$. We require monotonicity: $a \leq b \implies \interpa a P \subseteq \interpa  b P$.
    \item An \textbf{accessibility relation} $\interpa a R \subseteq  \Worlds \times \Worlds$ at $a$.
    We require monotonicity: $a \leq b \implies \interpa a R \subseteq \interpa b R$.
\end{itemize}

Let us temporarily expand the language of formulas by including each $P \in \setW$ as a propositional symbol.
The judgement $ a , v \modelsM \PP  A$, for $a \in \States , v \in \Worlds$, is defined by induction on the size of a formula $A$ as follows:
\begin{itemize}
    \item $  a , v \modelsM \PP P$ if $v \in \interpa a P$.
    \item $ a , v \modelsM \PP A \limp B$ if, whenever $a \leq b$ and $b, v \modelsM \PP A$, we have $b, v \modelsM \PP B$.
    \item $ a, v \modelsM \PP \Box A$ if, whenever $a \leq b$ and $v \interpa b R w$, we have $ b, w \modelsM \PP A$.
    \item $ a , v \modelsM \PP \blacksquare A$ if, whenever $a \leq b$ and $u \interpa b R v$, we have $ b , u \modelsM\PP A$.
    \item $ a , v \modelsM \PP \forall X A$ if, whenever $a \leq b$ and $P\in \setW$, we have $ b , v \modelsM\PP A[P/X]$.
\end{itemize}
We write simply $\modelsM \PP A $ if $a,v \modelsM \PP A$ for every $a \in \States, v \in \Worlds$.

We say that $\M$ is \textbf{comprehensive} if, for each formula $C $ (of the expanded language), it has a set $\extension C \in \setW$ with $\interpa a {\extension C} = \{w \in W \ | \ a,w \modelsM \PP C\}$.
A \textbf{predicate model} is a comprehensive predicate structure.
\end{definition}

Henceforth let us reserve the metavariable $\PP$ to vary over predicate models.
In fact, predicate structures can be construed as particular birelational structures. Formally:
\begin{definition}
    [Birelational collapse]
    Given a predicate structure $\PP$, as presented in \cref{dfn:pred-structures}, we define a birelational structure $\brcol\PP$ by:
    \begin{itemize}
        \item The set of worlds is $\brcoldat \PP W := \States \times W$.
        \item The partial order $\brcoldat \PP \leq$ is given by $(a,v) \brcoldat \PP \leq (b,w)$ if $a\leq b$ and $v=w$.
        \item The class of sets $\brcoldat \PP{\set W}$ includes each $\brcoldat \PP V := \{(a,v) \in \brcoldat \PP W \, | \, v \in V^a\}$, for $V \in \set W$.
        \item The interpretation of propositional symbols $\brcoldat \PP P$ is given by $\{(a,v)\in \brcoldat \PP W \, |\, v \in P^a\}$.
        \item The accessibility relation $\brcoldat \PP R$ is given by $(a,v)R_{\brcol\PP} (b,w)$ if $a=b$ and $vR^a w$.
    \end{itemize}
\end{definition}

We better show that $\brcol \PP$ satisfies the appropriate conditions of \cref{dfn:birel-structures}:
\begin{proposition}
    ${\brcol\PP}$ is a well-defined birelational structure.
\end{proposition}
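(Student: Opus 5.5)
We check each clause of \cref{dfn:birel-structures} directly for $\brcol\PP$. The work lies in two places: the upward closure of the sets, and the two bisimulation conditions. Neither needs the comprehensiveness of $\PP$, since the proposition concerns structures, not models.

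First, $\brcoldat\PP\leq$ is a partial order. It is the product of the partial order $\leq$ on $\States$ with the identity relation on $W$, so reflexivity, transitivity and antisymmetry are inherited componentwise.

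Next, every $\brcoldat\PP V$ is upwards closed. Suppose $(a,v)\in\brcoldat\PP V$ and $(a,v)\brcoldat\PP\leq(b,w)$. Then $w=v$ and $a\leq b$, so $v\in V^a\subseteq V^b$ by monotonicity of interpretations in $\PP$. Hence $(b,v)\in\brcoldat\PP V$.

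Propositional symbols are interpreted inside $\brcoldat\PP{\set W}$. This holds because $\Prop\subseteq\setW$, and the definition gives $\brcoldat\PP P$ exactly as the set $\brcoldat\PP V$ with $V=P$.

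The remaining clause is that $\brcoldat\PP R$ is a bisimulation on $\brcoldat\PP\leq$. The key fact is that the order only moves the state coordinate, while $R$ only moves the world coordinate at a fixed state. So each witness is obtained by pushing the accessibility edge up to the larger state, using monotonicity $a\leq b\implies R^a\subseteq R^b$.

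\begin{itemize}
\item For the first condition, suppose $(a,v)\brcoldat\PP R(a,w)\brcoldat\PP\leq(b,w)$. Then $vR^aw$ and $a\leq b$, so $vR^bw$. We take $(b,v)$: it satisfies $(a,v)\brcoldat\PP\leq(b,v)$ and $(b,v)\brcoldat\PP R(b,w)$.
\item For the second condition, suppose $(b,v)\brcoldat\PP\geq(a,v)\brcoldat\PP R(a,w)$. Then $a\leq b$ and $vR^aw$, so $vR^bw$. We take $(b,w)$: it satisfies $(a,w)\brcoldat\PP\leq(b,w)$ and $(b,v)\brcoldat\PP R(b,w)$.
\end{itemize}

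None of these steps presents a real obstacle. The only care needed is to check that the witnesses have the right shape, that is, the order is identity on worlds and $R$ is identity on states, which the choices above respect.
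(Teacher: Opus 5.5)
Your proof is correct and follows the paper's argument essentially verbatim. It shows upward closure of each $\brcoldat\PP V$ via monotonicity of $V^-$, and it verifies both bisimulation conditions with the same witnesses $(b,v)$ and $(b,w)$, obtained from monotonicity of $R^-$. Your extra checks, that $\brcoldat\PP\leq$ is a partial order and that each $\brcoldat\PP P$ lies in $\brcoldat\PP{\set W}$ (using $\Prop\subseteq\setW$), are routine points the paper leaves implicit.
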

\begin{proof}
First we show that each $V \in \set W_{\brcol \PP}$ is upwards closed.
Suppose $(a,v) \brcoldat \PP \leq (b,v)$, WLoG, i.e.\ $a\leq b$. We have:
\[
\begin{array}{r@{ \ \implies \ }ll}
     (a,v) \in \brcoldat \PP V & v \in V^a & \text{by definition of $\brcoldat \PP V$} \\
        & v \in V^b &  \text{by monotonicity of $V^-$} \\
        & (b,v) \in \brcoldat \PP V & \text{by definition of $\brcoldat \PP V$}
\end{array}
\]

Now let us show that $\brcoldat \PP R$ is a bisimulation on $\brcoldat \PP\leq $:
\begin{itemize}
    \item Suppose $(a,v) \brcoldat \PP R (a,w) \brcoldat \PP \leq (b,w)$, WLoG, so $vR^a w$ and $a\leq b$. Then:
    \begin{itemize}
        \item $(a,v) \brcoldat \PP \leq (b,v)$ by definition of $\brcoldat \PP \leq$; and,
        \item $vR^b w $ by monotonicity of $R^-$, so $(b,v) \brcoldat \PP R (b,w)$ by definition of $\brcoldat \PP R$.
    \end{itemize}
    So we can choose $(b,v)$ as the required world.
    \item Suppose $(b,v) \brcoldat \PP \geq (a,v) \brcoldat \PP R (a,w)$, WLoG, so $b\geq a$ and $vR^a w $. Then:
    \begin{itemize}
        \item $(b,w) \brcoldat \PP\geq (a,w)$ by definition of $\brcoldat \PP \leq$; and,
        \item $vR^b w$ by monotonicity of $R^-$, so $(b,v) \brcoldat \PP R (b,w)$ by definition of $\brcoldat \PP R$.
    \end{itemize}
    So we can choose $(b,w)$ as the required world. \qedhere
\end{itemize}
\end{proof}

We can exhibit a rather strong equivalence between the theories of $\PP$ and $\brcol \PP$:

\begin{proposition}
[Equivalence]
\label{pred-str-equiv-brcol}
Let $A(\vec X)$ be a formula, all free variables displayed, and $\vec V \in \set W$ with $|\vec V| = |\vec X|$.
    Then $ (a,v) \modelsM {\brcol \PP} A(\vec V) \iff a,v \modelsM \PP A(\brcoldat \PP {\vec V}) $.
\end{proposition}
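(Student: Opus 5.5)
The plan is to prove the equivalence by induction on the size (number of connectives and quantifiers) of $A(\vec X)$. The statement is quantified over \emph{all} states $a$, all worlds $v$, and all parameter tuples $\vec V \in \set W$ of matching length. This generality is what makes the $\forall$ case go through, because substituting a set for a bound variable leaves the size unchanged while adding one parameter. Throughout, a set $V \in \set W$ occurring as a propositional symbol is read in $\brcol\PP$ via its collapse $\brcoldat \PP V$, and in $\PP$ via $V^a$. Two structural facts about the collapse drive every case:
\begin{itemize}
\item $(a,v) \brcoldat\PP\leq (b,w)$ iff $a\leq b$ and $w = v$;
\item $(b,v) \brcoldat \PP R (c,w)$ iff $b = c$ and $v R^b w$.
\end{itemize}
Comprehensivity plays no role, so the argument holds for arbitrary predicate structures.

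The base cases are immediate from the definitions. For $A = P \in \Prop$ we have $(a,v)\in \brcoldat\PP P$ iff $v\in P^a$ iff $a,v\modelsM\PP P$. For $A = X_i$, the substituted atom is $V_i$, and $(a,v) \in \brcoldat\PP{V_i}$ iff $v \in V_i^a$, which is exactly the clause for $a,v\modelsM\PP V_i$.

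For $A = B\limp C$, the first fact means the worlds above $(a,v)$ are exactly the pairs $(b,v)$ with $b\geq a$. The clause for $(a,v)\modelsM{\brcol\PP} B\limp C$ therefore says: for all $b\geq a$, if $(b,v)\modelsM{\brcol\PP} B(\vec V)$ then $(b,v)\modelsM{\brcol\PP} C(\vec V)$. Applying the induction hypothesis to $B$ and $C$ at $(b,v)$ turns this into the predicate clause. For $A=\Box B$, the two facts together say that $(a,v)\brcoldat\PP\leq (b,v')\brcoldat\PP R (c,w)$ holds iff $v'=v$, $c=b\geq a$ and $vR^b w$. So the birelational clause quantifies over precisely the same pairs $(b,w)$ as the predicate clause, and the induction hypothesis at $(b,w)$ closes the case. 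The case $\blacksquare B$ is symmetric, using $u R^b v$ in place of $v R^b w$.

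The main point of care is $A = \forall Y B(\vec X, Y)$. By definition, the sets of $\brcol\PP$ are exactly the collapses $\brcoldat\PP U$ for $U \in \set W$. Hence $(a,v)\modelsM{\brcol\PP}\forall Y B(\vec V,Y)$ holds iff for all $b\geq a$ and all $U\in\set W$ we have $(b,v)\modelsM{\brcol\PP} B(\vec V, U)$. The formula $B(\vec X,Y)$ is strictly smaller than $A$, so the induction hypothesis applies with the extended tuple $(\vec V,U)$. This yields $b,v\modelsM\PP B(\vec V, U)$ for all $b \geq a$ and $U\in\set W$, which is the predicate clause for $\forall$. The only subtlety is that one must induct on the open formula $B(\vec X,Y)$ with the parameter tuple generalised, rather than on the closed instance $B(\vec V,U)$. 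It is also worth observing that distinct $U$ might have equal collapses; this is harmless, since both sides quantify over the same index set $\set W$.
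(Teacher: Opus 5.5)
Your proposal is correct and follows essentially the same route as the paper. Both use structural induction on $A(\vec X)$ with the parameter tuple generalised, unfolding $\brcoldat\PP\leq$ and $\brcoldat\PP R$ in the $\limp$, $\Box$ and $\blacksquare$ cases, and using in the $\forall$ case that the sets of $\brcol\PP$ are exactly the collapses $\brcoldat\PP U$ for $U\in\set W$. Your convention of reading a set $V\in\set W$ through its collapse in $\brcol\PP$ is also the sensible way to parse the statement: the printed version puts the collapse on the other side, and your reading matches how the paper's own base case actually uses it.
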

\begin{proof}
    By induction on the structure of $A(\vec X)$:
    \begin{itemize}
    \item If $A(\vec X) = X$ and $\vec V = V$ then:
    \[
    \begin{array}{r@{ \ \iff \ }ll}
         (a,v) \modelsM {\brcol \PP} \brcoldat \PP V & v \in V^a & \text{by definition of $\modelsM{\brcol \PP}$} \\
            & a,v \modelsM\PP V & \text{by definition of $\modelsM\PP$}
    \end{array}
    \]
    
        \item $A= P$. $(a,v) \vDash_{\brcol\PP} P$ $\LLra$ $(a,v) \in P_{\brcol\PP}$ $\LLra$ $v \in P^a$ $\LLra$ $a,v \vDash_\Pred P$.

    \item $A(\vec X)= B(\vec X) \to C(\vec X)$: Assume $(a,v) \vDash_{\brcol\PP} B(\vec V) \to C(\vec V)$ and consider any $b \geq a$ with $b,v \vDash_\Pred B(\brcoldat \PP {\vec V})$. By inductive hypothesis $(b,v) \vDash_{\brcol\PP} B(\vec V)$ and by $(a,v) \leq_{\brcol\PP} (b,v)$ we have $(b,v) \vDash_{\brcol\PP} C(\vec V)$. Again by inductive hypothesis we get $b,v \vDash_\Pred C(\brcoldat \PP {\vec V})$, which shows $a,v \vDash_\Pred B(\brcoldat \PP {\vec V}) \to C(\brcoldat \PP {\vec V})$.
    For the converse assume $a,v \vDash_\Pred B(\brcoldat \PP {\vec V}) \to C(\brcoldat \PP {\vec V})$ and consider any $(b,v) \geq_{\brcol\PP} (a,v)$ with $(b,v) \vDash_{\brcol\PP} B(\vec V)$. By inductive hypothesis $b,v \vDash_\PP B(\brcoldat \PP {\vec V})$ and by $a \leq b$ we have $b,v \vDash_\Pred C(\brcoldat \PP {\vec V})$. Again by inductive hypothesis we get $(b,v) \vDash_{\brcol\PP} C(\vec V)$, which shows $(a,v) \vDash_{\brcol\PP} B(\vec V) \to C(\vec V)$.

    \item $A=\Box B(\vec X)$: $(a,v) \vDash_{\brcol\PP} \Box B(\vec V)$ $\LLra$ $(b,w) \vDash_{\brcol\PP} B(\vec V)$ for any $(a,v) \leq_{\brcol\PP} (b,v) R_{\brcol\PP} (b,w)$ $\overset{\text{I.H.}}{\LLra}$ $b,w \vDash_\Pred B(\brcoldat \PP {\vec V})$ for any $a \leq b$ and $v R^b w$ $\LLra$ $a,v \vDash_\Pred \Box B(\brcoldat \PP {\vec V})$

    \item $A= \tBox B(\vec X)$: $(a,v) \vDash_{\brcol\PP} \tBox B(\vec V)$ $\LLra$ $(b,w) \vDash_{\brcol\PP} B(\vec V)$ for any $(a,v) \leq_{\brcol\PP} (b,v)$ and $(b,w) R_{\brcol\PP} (b,v)$ $\overset{\text{I.H.}}{\LLra}$ $b,w \vDash_\Pred B(\brcoldat \PP {\vec V})$ for any $a \leq b$ and $w R^b v$ $\LLra$ $a,v \vDash_\Pred \tBox B(\brcoldat \PP {\vec V})$

    \item $A = \forall X B(\vec X, X)$ : $(a,v) \vDash_{\brcol\PP} \forall X B(\vec V, X)$ $\LLra$ $(b,v) \vDash_{\brcol\PP} B(\vec V, V)$ for any $(b,v) \geq_{\brcol\PP} (a,v)$ and $V \in \W_{\brcol\PP}$ $\overset{\text{I.H.}}{\LLra}$
    $b,v \vDash_\Pred B(\brcoldat \PP {\vec V}, \brcoldat \PP V)$ for any $a \leq b$ and $\brcoldat \PP V \in \W$ $\LLra$ $a,v \vDash_\Pred \forall X B(\brcoldat \PP {\vec V}, X)$. 
    \qedhere
    \end{itemize}
\end{proof}

We can now recover a couple further useful results from this identification:

\begin{corollary}
    [Comprehensivity]
    \label{brcol-is-comp}
    If $\PP$ is comprehensive then so is $\brcol \PP$.
\end{corollary}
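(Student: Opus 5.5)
We need to show that for every closed formula $C$ of the expanded language of $\brcol\PP$ (i.e.\ with symbols $V' \in \brcoldat\PP{\set W}$), the extension $\{(a,v) \mid (a,v)\modelsM{\brcol\PP} C\}$ belongs to $\brcoldat\PP{\set W}$. The plan is to reduce to comprehensivity of $\PP$ via \cref{pred-str-equiv-brcol}.

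First, every set of $\brcol\PP$ is, by definition, of the form $\brcoldat\PP V$ for some $V\in\set W$. So a closed formula $C$ of the expanded language of $\brcol\PP$ can be written as $A(\vec V)$, where $A(\vec X)$ is a formula of the base language with all free variables displayed. For each $\brcoldat\PP V_i$ occurring in $C$ we choose a witness $V_i\in\set W$; propositional symbols $P$ are handled directly, since $\brcoldat\PP P$ agrees with the interpretation of $P$ and $\Prop\subseteq\set W$. By \cref{pred-str-equiv-brcol}, $(a,v)\modelsM{\brcol\PP} A(\vec V)$ holds iff $a,v\modelsM\PP A(\brcoldat\PP{\vec V})$, where the right-hand side is read as the $\PP$-formula with the sets $V_i$ substituted. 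This identification of $\brcoldat\PP V$ with $V$ on the $\PP$ side is exactly how the base case $A=X$ of that proposition reads.

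Next, $A(\vec V)$ is a closed formula of the expanded language of $\PP$, so comprehensivity of $\PP$ gives $D:=\extension{A(\vec V)}\in\set W$ with $D^a=\{w \mid a,w\modelsM\PP A(\vec V)\}$. Then
\[
\brcoldat\PP D=\{(a,v)\mid v\in D^a\}=\{(a,v)\mid a,v\modelsM\PP A(\vec V)\}=\{(a,v)\mid (a,v)\modelsM{\brcol\PP} C\},
\]
and this set lies in $\brcoldat\PP{\set W}$, as required.

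The only delicate point is bookkeeping. A symbol $\brcoldat\PP V$ may have several preimages $V$, and the translation must be consistent. This is harmless: any choice works, because \cref{pred-str-equiv-brcol} holds for every tuple $\vec V$. Its proof, specifically the $\forall$ case, already ranges over all $V\in\set W$, so quantifiers in $C$ are dealt with inside that proposition rather than here.
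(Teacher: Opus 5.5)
Your proposal is correct and follows the paper's argument. Take $\extension C \in \set W$ from comprehensivity of $\PP$, collapse it to $\brcoldat\PP{\extension C}$, and identify this with the $\brcol\PP$-extension of $C$ via \cref{pred-str-equiv-brcol}; your extra bookkeeping (choosing preimages $V_i$ for the set symbols of $\brcol\PP$ in $C$, with $\brcoldat\PP{V}$ on the $\brcol\PP$ side and $V$ on the $\PP$ side as in that proposition's base case) makes explicit what the paper glosses over rather than changing the route.
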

\begin{proof}
    Suppose $\B$ is comprehensive, and consider the sets $\extension C$ with $v \in \extension C^a \iff a,v \modelsM \PP C$. We have as required:
    \[
    \begin{array}{r@{\ = \ }ll}
    \brcoldat \PP {\extension C} & \{ (a,v) \in \brcoldat \PP W \, | \, v \in \extension C^a\} & \text{by definition of $\brcoldat \PP -$} \\
        & \{(a,v) \in \brcoldat \PP W \, |\, a,v \modelsM\PP C \} & \text{by definition of $\extension C$} \\
        & \{ (a,v) \in \brcoldat \PP W \, | \, (a,v) \modelsM {\brcol \PP} C \} & \text{by \cref{pred-str-equiv-brcol}} \qedhere
    \end{array}
    \]
\end{proof}

Putting the previous \cref{pred-str-equiv-brcol,brcol-is-comp} together we have:

\begin{proposition}
\label{prop:breltoprel}
    $\forall \B\, \modelsM\B A \implies \forall \PP\, \modelsM\PP A$.
\end{proposition}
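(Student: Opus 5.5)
We argue by contraposition, or equivalently directly: assume $A$ is valid in every birelational model $\B$, and let $\PP$ be an arbitrary predicate model. The idea is to pass to the birelational collapse $\brcol\PP$, which is already known to be a well-defined birelational structure and, by \cref{brcol-is-comp}, is comprehensive because $\PP$ is. Hence $\brcol\PP$ is a birelational model, and the hypothesis gives $\modelsM{\brcol\PP} A$, i.e.\ $(a,v)\modelsM{\brcol\PP} A$ for every $a\in\States$ and $v\in W$.

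Next we transfer this back to $\PP$ via \cref{pred-str-equiv-brcol}. Formulas in the statement are closed, in the original language, so we apply that proposition with $\vec X$ empty, obtaining $(a,v)\modelsM{\brcol\PP} A \iff a,v\modelsM\PP A$ for all $a,v$. Combining the two displays yields $a,v\modelsM\PP A$ for every $a\in\States$, $v\in W$, that is $\modelsM\PP A$. Since $\PP$ was arbitrary, the claim follows.

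The only point requiring care is making sure the propositional symbols $P\in\Prop$ are interpreted compatibly. In the collapse, $\brcoldat\PP P$ is defined as $\{(a,v)\mid v\in P^a\}$, matching the base case $A=P$ in the proof of \cref{pred-str-equiv-brcol}. We also need that $\brcoldat\PP P$ lies in the class of sets of $\brcol\PP$, as \cref{dfn:birel-structures} requires; this holds since $\setW\supseteq\Prop$ in $\PP$, so $\brcoldat\PP P$ is among the sets $\brcoldat\PP V$. Beyond this bookkeeping, essentially all the work has already been done in \cref{pred-str-equiv-brcol,brcol-is-comp}, so we expect no real obstacle.
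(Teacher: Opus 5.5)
Your proof is correct and is exactly the paper's argument: the paper obtains this proposition by combining \cref{pred-str-equiv-brcol} (with no free variables) and \cref{brcol-is-comp}, applied to the birelational collapse $\brcol\PP$. You also check that $\brcoldat\PP P$ belongs to the collapsed class of sets because $\setW \supseteq \Prop$; this is correct bookkeeping that the paper leaves implicit.
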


Finally it will be useful in our proof search argument in \cref{sec:completeness} to also inherit:

\begin{lemma}
[Monotonicity]
If $a\leq b$ and $a,v \modelsM{\PP} A$ then $b,v \modelsM {\PP} A$.
\end{lemma}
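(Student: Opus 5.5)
There are two natural routes: a direct induction on $A$ mirroring \cref{monotonicity}, or transfer through the birelational collapse. I would take the transfer route, since it reuses results already established.

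First fix a closed formula $A$ of the expanded language, whose symbols may include elements of $\setW$. Regard it as $A'(\vec V)$, where $A'(\vec X)$ has its free variables displayed and replaces exactly the occurrences of the non-$\Prop$ symbols $\vec V \in \setW$. Then \cref{pred-str-equiv-brcol} gives
\[
a,v \modelsM \PP A'(\vec V) \iff (a,v) \modelsM{\brcol\PP} A'(\brcoldat\PP{\vec V}).
\]
The displayed direction is the one I need. It matches the convention that the $\brcol\PP$-sets are $\brcoldat \PP V$, and that on the predicate side the symbol $V$ is interpreted by $V^a$; I will align this bookkeeping with the proposition as stated.

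Now suppose $a\leq b$ and $a,v\modelsM\PP A$. Then $(a,v)\modelsM{\brcol\PP} A'(\brcoldat\PP{\vec V})$. By definition of the collapse order, $(a,v) \brcoldat\PP\leq (b,v)$. The collapse is a well-defined birelational structure, and \cref{monotonicity} holds for arbitrary birelational structures: its proof uses only upward closure of sets and transitivity of $\leq$, never comprehensivity. Hence $(b,v)\modelsM{\brcol\PP} A'(\brcoldat\PP{\vec V})$, and applying the equivalence again yields $b,v\modelsM\PP A$.

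As a safeguard, the direct induction is equally routine and avoids the bookkeeping entirely. For atoms $P\in\setW$, use the monotonicity requirement $P^a\subseteq P^b$. The cases $\limp$, $\Box$, $\blacksquare$ and $\forall$ all quantify over $b'\geq b$, so they follow from transitivity of $\leq$ exactly as in \cref{monotonicity}. Monotonicity of $R^-$ is not even needed, since the clauses already quantify over future states.

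The only real obstacle is in the transfer route: checking that the equivalence proposition applies to formulas containing expanded-language set symbols. If that bookkeeping is awkward, I would simply present the direct induction.
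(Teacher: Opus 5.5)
Your transfer argument is correct and is exactly the paper's proof: pass to $\brcol\PP$ via \cref{pred-str-equiv-brcol}, use $(a,v)\brcoldat\PP\leq(b,v)$ together with \cref{monotonicity}, and pull back along the same equivalence. You orient the equivalence with $\brcoldat\PP{\vec V}$ on the birelational side, which is the direction its proof actually establishes. Your side remarks are also sound: comprehensivity is not needed, and the direct induction works.
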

\begin{proof}
    Given $a\leq b$, we have:
    \[
    \begin{array}{r@{\ \implies \ }ll}
    a,v \modelsM \PP A & (a,v) \modelsM {\brcol\PP} A & \text{by \cref{pred-str-equiv-brcol}} \\
        & (b,v) \modelsM {\brcol \PP} A & \text{since $(a,v) \brcoldat\PP\leq (b,v)$ and \cref{monotonicity} } \\
        & b,v \modelsM\PP A & \text{by \cref{pred-str-equiv-brcol}} \qedhere
    \end{array}
    \]
\end{proof}

\subsection{Two-sorted (uni)relational semantics under full comprehension}
Finally let us adapt the semantics we have presented to the classical setting.
We can view 
classical structures as special cases of either birelational or predicate structure where $\leq$ is trivial, i.e.:
\begin{itemize}
    \item a birelational structure where all points are incomparable; or,
    \item a predicate model with only one intuitionistic world.
\end{itemize}

For the sake of completeness, let us give a self-contained definition:

\begin{definition}
    [Classical structures]
    A \textbf{(uni)relational structure} $\mathfrak R$ includes the following data:
    \begin{itemize}
        \item A set $W$ of \textbf{worlds}.
        \item A class $\set W \subseteq \pow W$ of \textbf{sets} (or \textbf{predicates}).
        \item An interpretation $P_{\mathfrak R} \in \set W$ for each $P \in \Prop$.
        \item An \textbf{accessibility relation} $R \subseteq W \times W$.
    \end{itemize}
    Now, let us temporarily expand the language of formulas by including each $V\in \setW$ as a propositional symbol, setting $\interp{}\RRR V := V$. 
    The judgement $v \modelsM \RRR A$, for $v \in W$, is defined by induction on the size of a closed formula $A$:
    \begin{itemize}
        \item $v \modelsM \RRR P$ if $\interp  \RRR {} P v$.
        \item $v \modelsM \RRR A\limp B$ if, whenever  $v \modelsM \RRR A $, we have $ v\modelsM \RRR B$.
        \item $v \modelsM \RRR \Box A$ if, whenever $vRw$, we have $w \modelsM \RRR A$.
        \item $v \modelsM \RRR \blacksquare A$ if, whenever $uRv$, we have $u \modelsM \RRR A$.
        \item $v \modelsM \RRR \forall X A$ if, whenever $V \in\setW$, we have $v\modelsM \RRR A [V/X]$.
    \end{itemize}
    We write simply $\modelsM\RRR A$ if $v\modelsM\RRR A$ for every $v\in W$.

    We say that $\RRR$ is \textbf{comprehensive} if, for each formula $C$ (of the expanded language), it has a predicate $\extension C \in \W $ with ${\extension C} = \{w \in W \ | \ w \modelsM \RRR C\}$.
    A \textbf{(uni)relational model} is a comprehensive relational structure.
\end{definition}

Henceforth, let us reserve the metavariable $\RRR$ to vary over relational models.
Our main metalogical result for classical second-order tense logic, analogous to \cref{mainthm:soundness-completeness-intuitionistic} in the intuitionistic case, is:

\begin{maintheorem}
[Soundness and Completeness]
\label{mainthm:soundness-completeness-classical}
    $\KtSO \proves A \iff \forall \RRR \, \modelsM \RRR A$.
\end{maintheorem}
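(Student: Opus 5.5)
Following the classical grand tour in \cref{fig:class-tour}, I would prove this main theorem by closing a cycle of implications rather than by a direct Henkin construction. The cycle is
\[
\KtSO\proves A \implies \forall \RRR\,\modelsM\RRR A \implies \labKtSO\setminus\cut \proves A \implies \labKtSO\proves A \implies \KtSO\proves A .
\]
The first arrow is soundness (\cref{classical-rel-soundness}). The second is cut-free completeness via proof search (\cref{cut-free-completeness-classical}). The third is trivial, since a cut-free proof is in particular a proof. The fourth is the translation of labelled proofs into axiomatic ones (\cref{prop:axiomsoundness-classical}\eqref{axiomsoundness-classical}). Together these give both directions of the stated equivalence.

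For soundness, I would reduce to the intuitionistic case as far as possible. A relational model $\RRR$ is the special case of a birelational model with $\leq$ the identity. In that case the bisimulation conditions hold trivially, comprehensivity coincides, and the clauses for $\limp,\Box,\blacksquare,\forall$ collapse to the classical ones. Hence \cref{prop:brelsoundness} already validates every axiom and rule of $\IKtSO$ in $\RRR$, including the coded diamond axioms. It remains to check $\lnot\lnot A\limp A$. Here $\bot=\forall X X$ is false at every world, because the empty set $\extension{\bot}$ lies in $\setW$ by comprehensivity. So $\lnot$ is interpreted as classical complement, and double negation elimination holds pointwise.

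For completeness, I would run a Schütte-style proof search in $\labKtSO$ without cut, applying rules fairly to an unprovable labelled sequent. This yields a (possibly infinite) branch whose labels and relational atoms give $W$ and $R$, with $P$ true at $x$ iff $x:P$ occurs on the left. The main obstacle is impredicativity. The left-$\forall$ rule instantiates with arbitrary formulas, so a naive truth lemma cannot be proved by induction on formula size. I would first try to address this with a semivaluation: define the branch data as a partial, consistent valuation, then take as $\setW$ the sets $\{x \mid x:C \text{ on the left}\}$ for closed $C$, or rather their completions. I would then show by induction that left formulas are true and right formulas false, using the comprehension instances supplied by fairness to make $\RRR$ comprehensive. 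Finally, the translation into $\KtSO$ would use the tree shape of tense labelled sequents, read as a nested formula with $\Box$ and $\blacksquare$ along edges, and simulate each rule axiomatically.
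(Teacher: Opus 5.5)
Your decomposition is exactly the paper's grand tour (\cref{fig:class-tour}), so citing the four component results does prove the theorem. The trouble is in your sketches of two of those components.

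First, in the soundness sketch, the claim that $\bot=\forall X X$ is false at every world is circular, and in fact false. Comprehensivity gives $\extension{\bot}\in\setW$, but $\extension{\bot}=\bigcap\setW$, and nothing forces this to be empty. For instance, take any frame with $\setW=\{W\}$ and $P_{\RRR}=W$ for all $P\in\Prop$. Then every formula is true at every world, so the structure is comprehensive and $\bot$ holds everywhere; in particular $\lnot$ is not complement. Double negation elimination survives, but only via the case split the paper uses. If $v\modelsM{\RRR}\lnot\lnot A$, then either $v\not\modelsM{\RRR}\lnot A$, whence $v\modelsM{\RRR}A$, or $v\modelsM{\RRR}\bot$. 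In the latter case, instantiating $X$ by $\extension{A}\in\setW$ gives $v\in\extension A$, i.e.\ $v\modelsM{\RRR}A$.

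Second, and more seriously, your completeness sketch leaves open the step that carries the proof, and the mechanism you name for comprehensivity is not the right one. Suppose $\setW$ consists of the minimal extensions $\minext C=\{x\mid x:C\in\Gamma_\omega\}$. Then a truth lemma does go through, but comprehensivity fails. From $\cdot\stoup\cdot\seqar v:P$ no rule ever fires, so every $\minext C$ is empty and $\setW=\{\emptyset\}$, whereas $\extension{P\limp P}=\war$. Fairness cannot repair this: it supplies the left instances $x:A[C/X]$ used in the $\lr\forall$ case of the truth lemma, but says nothing about $\extension C\in\setW$. Fixing any single completion of the partial valuation runs into the same problem.

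The missing idea is Prawitz's. Let $\setW$ contain \emph{every} possible extension, i.e.\ every $V\subseteq\war$ that, for some $C$, contains $\minext C$ and avoids $\{x\mid x:C\in\Delta_\omega\}$. Then prove compatibility by induction on the open formula $A(\vec X)$, uniformly over all possible extensions $\vec V$ of $\vec C$: $x:A(\vec C)\in\Gamma_\omega$ implies $x\modelsM{\RRR}A(\vec V)$, and $x:A(\vec C)\in\Delta_\omega$ implies $x\not\modelsM{\RRR}A(\vec V)$. Inducting on $A(\vec X)$ rather than on its instances is what sidesteps impredicativity. The left $\forall$ case uses that each member of $\setW$ is a possible extension of some $C$; the right case uses $\minext P\in\setW$ for the eigenvariable. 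Comprehensivity then follows a posteriori: the lemma with no free variables says $\extension C$ is itself a possible extension of $C$, hence already in $\setW$.
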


Like the intuitionistic case completeness, the $\impliedby$ direction, is factored through a proof theoretic presentation of the logic from \cref{sec:prooftheory}.
Given how we have defined relational models, we can factor soundness, the $\implies$ direction, into (a) already established soundness of $\IKtSO$ for birelational structures; and (b) verification of the additional axiom $\lnot \lnot A \limp A$:

\begin{theorem}
    [Soundness]
    \label{classical-rel-soundness}
    $\KtSO \proves A \implies \forall \RRR \, \modelsM\RRR A$.
\end{theorem}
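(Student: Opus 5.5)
The plan is to realise each relational model $\RRR$ as a birelational model with trivial order, then invoke \cref{prop:brelsoundness} for everything except double negation elimination, and check that axiom by hand. Concretely, given $\RRR$ with worlds $W$, sets $\setW$, interpretation $P_\RRR$ and relation $R$, let $\B_\RRR$ have the same data and take $\leq$ to be equality on $W$.

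The first step is to confirm that $\B_\RRR$ is a well-defined birelational structure. Every subset of $W$ is upwards closed for the discrete order. The two bisimulation conditions hold trivially: $vRw\leq w'$ forces $w'=w$, so we may take $v'=v$, and symmetrically for the other condition.

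The second step is to show, by induction on closed formulas $A$ of the expanded language, that
\[ v\modelsM{\B_\RRR} A \iff v \modelsM \RRR A. \]
Each clause of the birelational satisfaction relation quantifies over $v'\geq v$. Under the discrete order this quantification collapses to $v'=v$, and what remains is exactly the corresponding unirelational clause for $\limp$, $\Box$, $\blacksquare$ and $\forall$. Since the sets $\setW$ are shared, the extensions $\extension C$ coincide in the two structures. Hence comprehensiveness of $\RRR$ transfers to $\B_\RRR$, so $\B_\RRR$ is a birelational model.

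Next I would argue by induction on the $\KtSO$-derivation, i.e.\ an $\IKtSO$-derivation that may also use $\lnot\lnot A\limp A$ as an axiom.
\begin{itemize}
\item For an axiom or rule of $\IKtSO$, soundness for $\RRR$ follows from \cref{prop:brelsoundness} applied to $\B_\RRR$ together with the equivalence just established. For the rules $\mp$, $\gen$, $\necw$ and $\necb$, I would redo the case in the style of the earlier proof, now relative to $\RRR$. This is the same local check, because validity is preserved by each rule in any model.
\item For $\lnot\lnot A\limp A$, I would unfold $\bot=\forall X X$. By comprehensiveness $\emptyset=\extension{\forall X X}\in\setW$; indeed $v\modelsM\RRR\forall XX$ would give $v\in V$ for all $V\in\setW$, in particular $v\in\extension{\forall XX}$, and one checks that no world satisfies it. Granting this, $v\modelsM\RRR\lnot B$ iff $v\not\modelsM\RRR B$, so $v\modelsM\RRR\lnot\lnot A$ implies $v\modelsM\RRR A$.
\end{itemize}

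The main obstacle is the emptiness of $\extension{\forall X X}$. Since $\setW$ is only a Henkin class, I would argue it via comprehension for a different formula: take $C:=P\limp P\limp\forall XX$, or more simply the set $\extension{\lnot\forall XX}$. If some world $v$ satisfied $\forall XX$, then $v$ would lie in every $V\in\setW$, including $\extension{\forall XX\limp Q}$ for any formula $Q$. Picking $Q$ so that $v$ fails $\forall XX\limp Q$ yields a contradiction. The natural candidate is $Q=\lnot\forall XX$ itself, and I would check that this choice does produce the contradiction. An alternative is to note that $\neg\neg A\to A$ reduces classically to the two-valuedness of $\modelsM\RRR$ on $\limp$, which may make the argument go through directly without needing $\bot$ to be empty.
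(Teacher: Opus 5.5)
Your reduction of the $\IKtSO$ part to \cref{prop:brelsoundness}, via the discrete order, is exactly the paper's route, and your checks there are fine. The gap is in the double negation case. Your claim $\extension{\forall X X}=\emptyset$ is false in general, so the step ``$v\modelsM\RRR\lnot B$ iff $v\not\modelsM\RRR B$'' cannot be established.

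Here is a counterexample. Take a single world $v$, with $R=\emptyset$, $\setW=\{\{v\}\}$ and $P_\RRR=\{v\}$. By induction, every closed formula of the expanded language holds at $v$: atoms hold, $\Box$ and $\blacksquare$ hold vacuously, and $\limp$ and $\forall$ follow from the induction hypothesis. Hence every $\extension C=\{v\}\in\setW$, so the model is comprehensive, and yet $v\modelsM\RRR\forall XX$.

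For the same reason, your proposed contradiction can never arise. Once $v\modelsM\RRR\forall XX$, instantiating at $\extension Q$ gives $v\modelsM\RRR Q$ for every $Q$. In particular $v\modelsM\RRR\forall XX\limp Q$ and $v\modelsM\RRR\lnot\forall XX$ both hold, so no choice of $Q$ refutes anything.

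The missing idea is to exploit a world satisfying $\bot$ rather than rule it out, using comprehension at $A$ itself. Suppose $v\modelsM\RRR\lnot\lnot A$. Then either $v\not\modelsM\RRR\lnot A$, or $v\modelsM\RRR\forall XX$.
\begin{itemize}
\item In the first case, the classical clause for $\limp$ gives $v\modelsM\RRR A$.
\item In the second case, instantiate $X$ at $\extension A\in\setW$. This gives $v\in\extension A$, i.e.\ $v\modelsM\RRR A$.
\end{itemize}
This is precisely the paper's argument.

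Your fallback via two-valuedness covers only the first case; comprehensiveness is essential for the second. To see this, take $W=\{v,w\}$ with $vRw$, $\setW=\{\{v\},W\}$ and $P_\RRR=W$. Then $v$ satisfies $\forall XX$, and hence $\lnot\lnot\Box\bot$, but not $\Box\bot$. This structure is not comprehensive, since $\extension{\Box\bot}=\{w\}\notin\setW$.

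A smaller point: $\gen$ does not preserve validity within a fixed model. For example, take one world $v$, $\setW=\{\emptyset,\{v\}\}$ and $P_\RRR=\{v\}$: then $P$ is valid but $\forall XX$ is not. So your induction invariant must be validity in all models. The $\gen$ case then goes through by reinterpreting the fresh $P$ as an arbitrary $V\in\setW$, which preserves comprehensiveness.
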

\begin{proof}
[Proof sketch]
A relational model can be expanded into a birelational model by simply setting $\leq $ to be equality on worlds.
Thus all the axioms and rules of $\IKtSO$ are already sound for relational models, by \cref{prop:brelsoundness}.
It remains to verify the double negation axiom, $\lnot \lnot A \limp A$.
Suppose $v\modelsM\RRR \lnot \lnot A$, then either $v\not\modelsM\RRR A$ or $v\modelsM \RRR \bot$. We consider each case separately:
\[
\begin{array}{r@{\ \implies \ }ll}
v\not\modelsM\RRR \lnot A & v\modelsM\RRR A \text{ and } v\not\modelsM \RRR \bot & \text{by definition of $\modelsM\RRR$ and $\lnot$} \\
    & v\modelsM \RRR A \\
\noalign{\smallskip}
v \modelsM\RRR \bot & v\modelsM\RRR \extension A & \text{since $\bot = \forall X X $ and by definition of $\modelsM \RRR$} \\
& v \in  \extension A & \text{by definition of $\modelsM\RRR$} \\
    & v\modelsM\RRR A & \text{by definition of $\extension A$} \qedhere
\end{array}
\]
\end{proof}

\section{Proof theory: labelled systems}\label{sec:prooftheory}
We shall now turn to a proof theoretic presentation of second-order tense logic.
As well as for self contained interest this will, as already mentioned, serve to factor our earlier stated axiomatic completeness results.
Unlike the previous two sections, we shall first present the classical system, before recovering the intuitionistic versions via appropriate constraints.
The only reason for this is brevity, allowing us to define all systems we consider without repeating rules.

\subsection{Labelled sequent calculi}
 \emph{Labelled deductive systems} were proposed by Gabbay~\cite{gabbay1991labelled} as a uniform proof-theoretic framework for a wide range of logics.
The idea has been applied to modal logic by using the strength of its (uni)relational semantics~\cite{russo1996modal} in order to resolve the difficulty of designing proof systems for modal logic using standard Gentzen sequents.
They reason about formulas labelled by the world in which they are evaluated, while keeping track of a `control' constraining the accessibility relation between worlds.
They were fully developed for intuitionistic modal logic by Simpson~\cite{simpson1994proof}, before being widely applied to e.g.~classical modal logic with Horn~\cite{vigano2000labelled} or coherent~\cite{negri2005proof} extensions and beyond~\cite{negri2014proof},
justification logic~\cite{ghari2017labeled}, non-normal modal logics~\cite{dalmonte2018non}, conditional logics~\cite{girlando2021uniform}, 
first-order modal logic~\cite{orlandelli2021labelled}, etc.

Let us now fix a set $\war$ of \textbf{world symbols}, written $u,v,w$ etc.
A \textbf{relational atom} is an expression $v\R w$, where $v,w \in \war$.
A \textbf{labelled formula} is an expression $v : A$ where $v\in \war$ and $A$ is a formula.
Write $\lFmla$ for the set of labelled formulas.

A \textbf{(labelled) sequent} is an expression $\rels R \stoup \Gamma \seqar \Delta$ where $\rels R$ is a set of relational atoms, called the \textbf{relational context}, and $\Gamma$ and $ \Delta $ are multisets of labelled formulas, called the \textbf{(LHS) cedent} and \textbf{(RHS) cedent}, respectively.
`$\stoup$' and `$\seqar$' here are just syntactic delimiters.
Informally, we may read sequents as ``if all of the LHS holds, then some of the RHS is true''. 
This intuition is developed more formally later in \cref{sec:labelsoundness}.

\begin{figure}
\emph{Identity and cut:}
\medskip
\[
\begin{array}{cc}
\vlinf{\id}{}{\rels R \stoup  v : A \seqar v : A}{}
& 
\vliinf{\cut}{}{\rels R \stoup \Gamma, \Gamma' \seqar \Delta, \Delta'}{\rels R \stoup \Gamma \seqar \Delta, v : A}{\rels R \stoup \Gamma' , v : A \seqar \Delta'}
\end{array}
\]

\medskip
{\emph{Strutural rules:}}
\medskip
\[
\begin{array}{cc}
\vlinf{\lr\wk}{}{\rels R \stoup \Gamma , v : A \seqar \Delta}{\rels R \stoup \Gamma \seqar \Delta}
& 
\vlinf{\rr \wk}{}{\rels R \stoup \Gamma \seqar \Delta, v:A}{\rels R \stoup \Gamma \seqar \Delta}
\\
\noalign{\medskip}
\vlinf{\lr\cntr}{}{\rels R \stoup \Gamma, v : A \seqar \Delta}{\rels R \stoup \Gamma, v : A, v : A \seqar \Delta}
&
\vlinf{\rr \cntr}{}{\rels R \stoup \Gamma \seqar \Delta, v:A}{\rels R \stoup \Gamma \seqar \Delta, v:A , v:A}
\end{array}
\]

\medskip
\emph{Logical rules:}
\medskip
\[
\begin{array}{cc}
\vliinf{\lr \limp}{}{\rels R \stoup \Gamma, \Gamma', v:A \limp B \seqar \Delta, \Delta'}{\rels R \stoup \Gamma \seqar \Delta, v : A}{\rels R \stoup \Gamma' , v :B \seqar \Delta'}
& 
\vlinf{\rr \limp}{}{\rels R \stoup \Gamma \seqar \Delta, v : A \limp B}{\rels R \stoup \Gamma, v:A \seqar \Delta, v:B}
\\
\noalign{\medskip}
\vlinf{\lr \forall}{}{\rels R \stoup \Gamma, v : \forall X A\seqar \Delta}{\rels R \stoup \Gamma, v : A [C/X] \seqar \Delta}
&
\vlinf{\rr \forall}{\text{\footnotesize $P$ fresh}}{\rels R\stoup \Gamma \seqar \Delta, v: \forall X A }{\rels R\stoup \Gamma \seqar \Delta, v : A[P/X]}
\\
\noalign{\medskip}
\vlinf{\lr \Box}{}{\rels R , v \R w \stoup \Gamma, v : \Box A \seqar \Delta }{\rels R , v \R w \stoup \Gamma, w : A \seqar \Delta}
&
\vlinf{\rr \Box}{\text{\footnotesize $w$ fresh}}{\rels R \stoup \Gamma \seqar \Delta , v : \Box A}{\rels R , v \R w \stoup \Gamma \seqar \Delta, w : A }
\\
\noalign{\medskip}
\vlinf{\lr \blacksquare}{}{\rels R , u \R v \stoup \Gamma , v : \blacksquare A \seqar \Delta }{\rels R , u \R v  \stoup \Gamma, u : A\seqar \Delta }
&
\vlinf{\rr \blacksquare}{\text{\footnotesize $u$ fresh}}{\rels R \stoup \Gamma \seqar \Delta, v : \blacksquare A}{\rels R , u \R v \stoup \Gamma \seqar \Delta, u : A}
\end{array}
\]
    \caption{Rules of the labelled system $\labKtSO$ (with cut). Here a symbol is \emph{fresh} if it does not occur in the lower sequent.}
    \label{fig:lab-sys-KtSO}
\end{figure}

\begin{definition}
[Sequent calculi]
    The system $\labKtSO$ is given by the rules in \cref{fig:lab-sys-KtSO}.
    We also define two \emph{intuitionistic} restrictions of this system:
    \begin{itemize}
        \item $\labIKtSO$ is the restriction of $\labKtSO$ to sequents with singleton RHS. (In particular, there can be no right structural steps, $\rr \wk$ and $\rr \cntr$).
        \item $\mlIKtSO$ is the restriction of $\labKtSO$ where each right logical step has singleton RHS in its premiss (i.e.\ $\Delta = \emptyset$).
    \end{itemize}
    
The lower sequent of any inference step is the \textbf{conclusion}, and any upper sequents are \textbf{premisses}.
\textbf{Proofs} and \textbf{derivations} in a system are defined as usual.
We write $\mathsf L \proves \rels R \stoup \Gamma \seqar \Delta$ if the calculus $\mathsf L$ proves the $\rels R \stoup \Gamma \seqar \Delta$. 
We write simply $\mathsf L \proves A$ if $\mathsf L \proves \cdot \stoup \cdot \seqar v:A$.
\end{definition}

$\labKtSO$ is nothing more than the extension of the labelled calculus for tense logic~\cite{boretti2010finitization,ciabattoni2021display} by the usual rules for second-order quantifiers (see, e.g., \cite[Section~3.A.1]{Girard1987:pt-log-comp}, \cite[Section~5.1]{sep-proof-theory} or \cite[Definition~15.3]{Takeuti1987:pt-book}).
The singleton RHS restriction defining $\labIKtSO$ is standard for intuitionistic sequent calculi, with calculi for intuitionistic modal and tense logics obtained in the same way~\cite{simpson1994proof,strassburger2013cut,lyon_nested_2025}.
Finally $\mlIKtSO$ is a somewhat intermediate calculus, in the spirit of Maehara \cite{maehara1954,kuznets2019maehara}.
The reason we introduce it is that it is necessary for our proof search argument later, in \cref{sec:completeness}.
As we shall soon see, over the language we consider, there is no material difference between our two intuitionistic systems, cf.~\cref{prop:multitosingle}.

First let us state our main proof theoretic results:
\begin{maintheorem}
    [Hauptsatz]
    \label{mainthm:hauptsatz}
    We have the following:
    \begin{enumerate}
    \item\label{hauptsatz-classical} $\labKtSO \proves A \implies \labKtSO \setminus \cut \proves A$.
        \item\label{hauptsatz-intuitionistic} $\labIKtSO \proves A \implies \labIKtSO \setminus \cut \proves A$.
    \end{enumerate}
\end{maintheorem}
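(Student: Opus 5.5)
The plan is not to eliminate cuts syntactically. Second-order cut elimination with full comprehension in $\lr\forall$ would require Girard-style reducibility candidates. Instead I will go semantically around the `grand tours' of \cref{fig:tours}, so that the Hauptsatz falls out as a corollary of soundness of the cut-ful calculi and cut-free completeness.

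For the classical item \eqref{hauptsatz-classical}, suppose $\labKtSO \proves A$. First I apply the translation of labelled proofs into axiomatic proofs, \cref{prop:axiomsoundness-classical}\eqref{axiomsoundness-classical}, to obtain $\KtSO \proves A$. Then soundness, \cref{classical-rel-soundness}, gives $\modelsM\RRR A$ for every relational model $\RRR$. Finally, cut-free completeness via proof search, \cref{cut-free-completeness-classical}, yields $\labKtSO\setminus\cut \proves A$.

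For the intuitionistic item \eqref{hauptsatz-intuitionistic}, suppose $\labIKtSO \proves A$. By \cref{prop:axiomsoundness}\eqref{axiomsoundness-intuitionistic} we get $\IKtSO \proves A$, and by \cref{prop:brelsoundness} $A$ is valid in every birelational model $\B$. \cref{prop:breltoprel} then transfers validity to every predicate model $\PP$. The completeness theorem \cref{thm:completeness} gives $\mlIKtSO\setminus\cut \proves A$. Lastly, \cref{prop:multitosingle} converts this multi-succedent cut-free proof into a single-succedent one, and I need to check that this conversion does not introduce cuts, so that we land in $\labIKtSO\setminus\cut$.

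The main obstacles lie in the ingredients rather than in the chaining.

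The first is the translation from labelled proofs to axiomatic ones. A labelled sequent whose relational context is tree-shaped must be read as a single formula: one roots the tree at some label and pushes subtrees through $\Box$, $\blacksquare$, $\Diamond$ and $\blacklozenge$ according to the direction of each edge. Each rule, including $\cut$ and $\lr\forall$ with arbitrary comprehension instances, must then be shown to preserve provability of this reading in $\IKtSO$. The paper's derivations of $\IKt$ principles and of \cref{prop:IK-thm} are the tools here. One must also handle contexts that are not tree-shaped, presumably by showing that provable conclusions may be assumed tree-like, since the endsequent has empty context and fresh-label rules only ever create trees.

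The second, and hardest, is cut-free completeness in the presence of impredicative comprehension. A failed proof-search branch naturally gives only a Schütte semivaluation, not a comprehensive model. The plan there is to close the set-domain under the extensions of all formulas, treating $\lr\forall$ instances fairly in the search tree. For the intuitionistic case one builds a predicate model whose states are the multi-succedent sequents along the search.

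Given those earlier results, the Hauptsatz itself requires only composition of implications, together with a check that \cref{prop:multitosingle} is cut-preserving.
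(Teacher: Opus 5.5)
Your proposal is correct and is exactly the paper's argument: the Hauptsatz comes from composing the implications of the grand tours in \cref{fig:class-tour,fig:int-tour}, namely axiomatic soundness of the labelled calculi, (bi)relational soundness, the birelational-to-predicate transfer (intuitionistic case) and cut-free completeness via proof search. The cut-preservation you flag for \cref{prop:multitosingle} is already covered by its ``(respectively)'' formulation: in the proof of \cref{lem:multi-to-single} a cut appears in the output only where the input already had one, and the $\lr\limp$ case uses only left weakening.
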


The admissibility of cut is a key desideratum in sequent calculus proof theory.
In particular it renders the system more amenable to \emph{proof search}, reducing non-determinism therein.
As for our soundness and completness results, \cref{mainthm:soundness-completeness-intuitionistic,mainthm:soundness-completeness-classical}, this result is obtained by the `grand tours' of \cref{fig:class-tour,fig:int-tour}.

\subsection{Interlude: false positives}

    Usual labelled calculi for $\IK$, e.g.\ from \cite[Section~7.2]{simpson1994proof}, include the rule,
    \begin{equation}
        \label{eq:native-bot-rule}
    \vlinf{}{}{\rels R \, |\, \Gamma , v: \bot \seqar w:A }{}
    \end{equation}
    where $\rels R$ (as well as $\Gamma $) may be \emph{arbitrary}. 
    Under our second-order definition of falsity, $\bot := \forall X X$, the above is derivable in $\labIKtSO$ when $v $ and $w$ are connected by some (undirected) path in $\rels R$.
    Formally, let us say that $v$ and $w$ are \textbf{connected} in $\rels R$ if there is a sequence $v = v_0, \dots, v_{n} = w$ where, for each $i<n$, either $v_i R v_{i+1} \in \rels R$ or $v_{i+1} R v_i \in \rels R$.
    We have:

\begin{lemma}\label{lem:simbotrule}
    $\labIKtSO\proves \seq{\rels R}{\Gamma, \fm{v}{\BOT}}{\fm{w}{A}}$ whenever $v$ and $w $ are connected in $\rels R$.
\end{lemma}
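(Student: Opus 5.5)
We argue by induction on the length $n$ of a connecting path $v = v_0, \dots, v_n = w$ in $\rels R$. The key idea is to instantiate $\bot = \forall X X$ via $\lr\forall$ with a formula $C$ that \emph{transports} the falsity along the path to $w$, using boxes pointing in the right directions, and then to unwind these boxes with $\lr\Box$ and $\lr\blacksquare$, using the relational atoms of the path in the reverse order.

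Concretely, for each $i \le n$ we define a formula $C_i$ by recursion from the end of the path: set $C_n := A$, and for $i<n$ set $C_i := \Box C_{i+1}$ if $v_i \R v_{i+1} \in \rels R$, and $C_i := \blacksquare C_{i+1}$ otherwise (so that $v_{i+1}\R v_i \in \rels R$). We then prove, by downward induction on $i$, the auxiliary claim
\[
\labIKtSO \proves \rels R \stoup \Gamma, v_i : C_i \seqar w : A .
\]
The base case $i=n$ is $\id$ followed by left weakenings $\lr\wk$ to add $\Gamma$. For the step, if $C_i = \Box C_{i+1}$ we have $v_i\R v_{i+1} \in \rels R$, so one application of $\lr\Box$ (with relational atom $v_i \R v_{i+1}$) reduces the goal to $\rels R \stoup \Gamma, v_{i+1} : C_{i+1} \seqar w:A$, which holds by the inductive hypothesis. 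The case $C_i = \blacksquare C_{i+1}$ is symmetric, using $\lr\blacksquare$ with the atom $v_{i+1}\R v_i$. Since $\rels R$ is a set and the rules leave the relational context unchanged, the fact that the atoms already lie in $\rels R$ is exactly what these rules require.

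Finally, taking $i=0$ gives $\rels R\stoup \Gamma, v : C_0 \seqar w : A$. Since $X$ does not occur free in $C_0$, we have $X[C_0/X] = C_0$, so one application of $\lr\forall$ with comprehension instance $C_0$ yields $\rels R \stoup \Gamma, v : \forall X X \seqar w:A$, which is the required sequent because $\bot = \forall X X$. Every step keeps a singleton RHS, so the derivation lies in $\labIKtSO$, and indeed it is cut-free.

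There is no real obstacle here. The only point requiring care is orienting each modality correctly: $\Box$ is used when the path follows $\R$ forwards and $\blacksquare$ when it goes backwards. This is where the tense modalities are essential, and it also explains why connectivity is needed: without a path, no single instance $C$ could move the formula from label $v$ to label $w$.
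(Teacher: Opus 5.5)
Your proof is correct, but it is organised differently from the paper's. The paper inducts on the path length with the lemma itself as induction hypothesis. At each step it instantiates $v:\forall X X$ with $\Box\forall X X$ (or $\blacksquare\forall X X$ for a backward edge) and applies $\lr\Box$ (resp.\ $\lr\blacksquare$) along the first edge to obtain $v_1:\forall X X$. It then recurses, instantiating with $A$ only once the label reaches $w$. So the paper pushes $\bot$ itself along the path one edge at a time, using $n+1$ comprehension steps whose instances do not depend on $A$ except the last.

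You instead make a single $\lr\forall$ step with a composite, path-dependent instance: $C_0$ is $A$ prefixed by $n$ modalities, each $\Box$ or $\blacksquare$ according to the edge direction. You then unwind it by $n$ modal left steps, which is why you need the auxiliary claim proved by downward induction on $i$. Your version gives a shorter derivation with exactly one comprehension step and makes the role of the two tense modalities very explicit. The paper's version is more local: each instance is small, and the induction hypothesis is just the statement itself.

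One small point: your remark that $X$ is not free in $C_0$ is unnecessary, since $X[C_0/X]=C_0$ holds by definition of substitution.
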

\begin{proof}
We proceed by induction on the length of a path $v=v_0,\dots, v_n=w$ connecting $v $ and $w$ in $\rels R$.
\begin{itemize}
    \item If $n=0$, so $v=w$, we have:
    $$
        \vlderivation
        {
            \vlin{\lr \forall}{}
            {
                \seq{\rels R}{\Gamma, \fm{w}{ \fall X}}{\fm{w}{A}}
            }
            {
                \vlin{\id}{}
                {
                    \seq{\rels R}{\Gamma, \fm{w}{A}}{\fm{w}{A}}
                }
                {
                    \vlhy{}
                }
            }
        }
    $$
    \item Otherwise either $vRv_1$ or $v_1 R v$. We handle the two cases respectively by,
    \[
    \vlderivation{
    \vlin{\lr \forall}{}{\rels R, vRv_1 \stoup \Gamma , v:\forall X X \seqar w:A}{
    \vlin{\lr\Box}{}{\rels R, vRv_1 \stoup \Gamma, v:\Box \forall X X \seqar w:A}{
    \vliq{\IH}{}{\rels R , vRv_1 \stoup \Gamma, v_1 : \forall X X \seqar w:A}{\vlhy{}}
    }
    }
    }
    \qquad
     \vlderivation{
    \vlin{\lr \forall}{}{\rels R, v_1Rv \stoup \Gamma , v:\forall X X \seqar w:A}{
    \vlin{\lr\Box}{}{\rels R, v_1Rv \stoup \Gamma, v:\blacksquare \forall X X \seqar w:A}{
    \vliq{\IH}{}{\rels R ,v_1Rv \stoup \Gamma, v_1 : \forall X X \seqar w:A}{\vlhy{}}
    }
    }
    }
    \]
    where derivations marked $\IH$ are obtained by the inductive hypothesis. \qedhere
\end{itemize}

\end{proof}

On the other hand, clearly the instance $\cdot \stoup v: \forall X X \seqar w:P$ of \cref{eq:native-bot-rule} has no cut-free proof in $\labIKtSO$ (and so no proof at all, cf.~\cref{mainthm:hauptsatz}).
As it happens the generality of \cref{eq:native-bot-rule} turns out to be inconsequential: in any $\labKtSO$ proof of a formula, or even a labelled sequent with connected relational context, all relational contexts appearing in the proof remain connected, by inspection of the rules of \cref{fig:lab-sys-KtSO}.
This is exemplary of a more general phenomenon in second-order logic with a negative basis: while we may be able to correctly define the positive connectives, we do not necessarily recover all of their proof theoretic behaviour (see, e.g., \cite[Section~6.2]{TroelstraSchwichtenberg2000} for a pertinent related discussion).\footnote{Note that our encoding of falsity captures rather the \emph{additive} false (which is positive), rather than multiplicative, in the sense of linear logic (see, e.g., \cite[Section~9.5]{miller2025}). It is probably more accurate to call it `$0$' instead accordingly, but this seems to be uncommon notation in the literature on (second-order) intuitionistic logic. }

For a more topical example, let us now turn to the diamond modalities.
We can simulate the usual labelled $\rr \Diamond$ rule for $\IK$ (see \cite[Section~7.2]{simpson1994proof}) using the encoding of \cref{eq:diamond-definitions} as follows:
\begin{equation}
    \label{eq:derivation-of-dia-right}
    \vlderivation{
\vlin{\rr \forall, \rr \limp}{}{\rels R, v\R w \stoup \Gamma \seqar v : \forall X (\Box (A \limp \blacksquare X) \limp X)}{
\vlin{\lr \Box}{}{\rels R, v\R w \stoup \Gamma, v : \Box (A \limp \blacksquare X) \seqar v: X}{
\vliin{\lr\limp}{}{\rels R, v\R w \stoup \Gamma , w : A \limp \blacksquare X \seqar v : X}{
    \vlhy{\rels R \stoup \Gamma \seqar w : A}
}{
    \vlin{\lr \blacksquare}{}{v\R w \stoup w : \blacksquare A \seqar v : A}{
    \vlin{\id}{}{\ \stoup v: A \seqar v: A}{\vlhy{}}
    }
}
}
}
}
\end{equation}
Similarly for $\blacklozenge$.
However, notwithstanding \cref{dia-iff-so-dfn}, it does not seem possible to carry out such a local interpretation of the left rules:
\begin{equation}
    \label{eq:dia-left-and-bdia-left}
    \vlinf{\lr \Diamond}{\text{$v$ fresh}}{\rels R \stoup \Gamma, u:\Diamond A \seqar w:B}{\rels R, uRv \stoup \Gamma , v:A \seqar w:B}
\qquad
\vlinf{\lr \blacklozenge}{\text{$u$ fresh}}{\rels R \stoup \Gamma, v:\blacklozenge A \seqar w:B}{\rels R, uRv \stoup \Gamma , u:A \seqar w:B}
\end{equation}

\subsection{Relating the two intuitionistic calculi}
We mentioned earlier that there is no material difference between our two intuitionistic calculi, in terms of the logic they define.
Let us now make this formal:

\begin{lemma}
\label{lem:multi-to-single}
    If $\mlIKtSO (\setminus \cut) \proves \rels R \stoup \Gamma \seqar \Delta$ then there is some $w:A \in \Delta $ s.t.~$\labIKtSO (\setminus \cut) \proves \rels R \stoup \Gamma \seqar w:A$ (respectively).
\end{lemma}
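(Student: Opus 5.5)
We argue by induction on the height of a proof $\pi$ of $\rels R \stoup \Gamma \seqar \Delta$ in $\mlIKtSO$ (with or without $\cut$). The claim to maintain is exactly the statement: from $\pi$ we extract some $w:A \in \Delta$ together with a $\labIKtSO$ proof of $\rels R \stoup \Gamma \seqar w:A$ that contains no new cuts. We then go through the rules by cases, using only single-succedent rules and left weakening in the constructed proofs.

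\emph{Axioms and right logical rules.} For $\id$ we have $\Delta = \{v:A\}$, which is already single-succedent. For the right logical rules $\rr\limp,\rr\forall,\rr\Box,\rr\blacksquare$, the definition of $\mlIKtSO$ forces the side context in the premiss to be empty. Hence the premiss has a singleton RHS, and so does the conclusion. Applying the inductive hypothesis to the premiss must return its unique RHS formula, and we reapply the same rule; the freshness conditions are unaffected.

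\emph{Left rules that do not split the RHS.} These are $\lr\wk$, $\lr\cntr$, $\lr\forall$, $\lr\Box$ and $\lr\blacksquare$. We apply the inductive hypothesis to the premiss, obtaining some $w:A\in\Delta$, and reapply the rule. For $\rr\wk$ with conclusion RHS $\Delta, v:A$, we simply take the formula given by the inductive hypothesis from $\Delta$. For $\rr\cntr$ the inductive hypothesis returns a formula of $\Delta, v:A, v:A$, which occurs in $\Delta, v:A$, and the same proof works.

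\emph{Binary rules.} These are the main point. Take $\lr\limp$ with premisses $\rels R\stoup\Gamma\seqar\Delta,v:A$ and $\rels R\stoup\Gamma',v:B\seqar\Delta'$. From the first premiss the inductive hypothesis yields some formula of $\Delta, v:A$, and we split on which one it is.
\begin{itemize}
\item If it is some $w:C\in\Delta$, we have $\rels R\stoup\Gamma\seqar w:C$. Left weakening by $\Gamma'$ and $v:A\limp B$ then gives $\rels R\stoup\Gamma,\Gamma',v:A\limp B\seqar w:C$, and $w:C\in\Delta,\Delta'$.
\item If it is $v:A$, we apply the inductive hypothesis to the second premiss to obtain $w:C\in\Delta'$ with a proof of $\rels R\stoup\Gamma',v:B\seqar w:C$. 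A single-succedent $\lr\limp$ step then gives the conclusion with RHS $w:C$.
\end{itemize}
$\cut$ is handled identically, with the cut formula $v:A$ in place of the implication. In its first case no cut is used at all, so the cut-free variant is preserved, as it is throughout the whole translation.

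The only delicate point is that in the $\lr\limp$/$\cut$ case we must choose the branch according to the output of the inductive hypothesis on the left premiss. Beyond that, the argument needs only the observation that right logical steps in $\mlIKtSO$ already have singleton RHS, and that left weakening is available in $\labIKtSO$.
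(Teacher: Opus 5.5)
Your proof is correct and follows the paper's route. Like the paper, you argue by induction on the $\mlIKtSO$ proof with a case analysis on the last rule. The only substantive cases are $\lr\limp$ and $\cut$: there you branch on which formula the inductive hypothesis returns for the left premiss, using left weakening in the first branch. Your handling of the right logical rules is slightly more careful than the paper's ``immediate'', since you explicitly apply the inductive hypothesis to the singleton-RHS premiss before reapplying the rule.
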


\begin{proof}
    By induction on the proof of $\rels R \stoup \Gamma \seqar \Delta$ in $\mlIKtSO$ and a case analysis on the last rule applied in it:
    \begin{itemize}
        \item $\id$ and right logical rules: 
        immediate as they have only one formula on the RHS.

        \item Right weakening: 
        $\vlinf{\rr \wk}{}{\rels R \stoup \Gamma \seqar \Delta, v:A}{\rels R \stoup \Gamma \seqar \Delta}$ 

        \noindent
        By inductive hypothesis, there is a $w:C$ in $\Delta$ such that $\labIKtSO \proves \rels R \stoup \Gamma \seqar w:C$; which is all we need.

        \item Right contraction: 
        $\vlinf{\rr \cntr}{}{\rels R \stoup \Gamma \seqar \Delta, v:A}{\rels R \stoup \Gamma \seqar \Delta, v:A, v:A}$

        \noindent
        By inductive hypothesis, there either is a $w:C$ in $\Delta$ such that $\labIKtSO \proves \rels R \stoup \Gamma \seqar w:C$ or $\labIKtSO \proves \rels R \stoup \Gamma \seqar v:A$; which is all we need.

        \item Non-branching left rules: 
        $\vlinf{}{}{\rels R \stoup \Gamma \seqar \Delta}{\rels R \stoup \Gamma' \seqar \Delta}$

        \noindent
        By inductive hypothesis there is $w:A\in\Delta$ s.t.~$\labIKtSO \proves \rels R \stoup \Gamma' \seqar w:A$, by applying the same rule, 
        $\labIKtSO \proves \rels R \stoup \Gamma \seqar w:A$.

        \item Implication left rule: $\vliinf{\lr \limp}{}{\rels R \stoup \Gamma, \Gamma', v:A \limp B \seqar \Delta, \Delta'}{\rels R \stoup \Gamma \seqar \Delta, v : A}{\rels R \stoup \Gamma' , v :B \seqar \Delta'}$

        \noindent
        By inductive hypothesis, either there is $w:C\in\Delta$ s.t.~$\labIKtSO \proves \rels R \stoup \Gamma \seqar w:C$, in which case, by weakening $\labIKtSO \proves \rels R \stoup \Gamma, \Gamma', v:A\limp B \seqar w:C$, 
        or $\labIKtSO \proves \rels R \stoup \Gamma \seqar v:A$ and there is $w:C\in\Delta'$ s.t.~$\labIKtSO \proves \rels R \stoup \Gamma',v:B \seqar w:C$, in which case, by $\lr\limp$, $\labIKtSO \proves \rels R \stoup \Gamma, \Gamma', v:A\limp B \seqar w:C$.

        \item $\cut$:
        $\vliinf{\cut}{}{\rels R \stoup \Gamma, \Gamma' \seqar \Delta, \Delta'}{\rels R \stoup \Gamma \seqar \Delta, v : A}{\rels R \stoup \Gamma' , v : A \seqar \Delta'}$

        \noindent
        By inductive hypothesis, either there is $w:C\in\Delta$ s.t.~$\labIKtSO \proves \rels R \stoup \Gamma \seqar w:C$, in which case, by weakening $\labIKtSO \proves \rels R \stoup \Gamma, \Gamma' \seqar w:C$, 
        or $\labIKtSO \proves \rels R \stoup \Gamma \seqar v:A$ and there is $w:C\in\Delta'$ s.t.~$\labIKtSO \proves \rels R \stoup \Gamma'\seqar w:C$, in which case, by $\cut$, $\labIKtSO \proves \rels R \stoup \Gamma, \Gamma' \seqar w:C$. \qedhere
    \end{itemize}
\end{proof}

From here we have immediately: 
\begin{proposition}
\label{prop:multitosingle}
    $\mlIKtSO (\setminus \cut) \proves A \implies \labIKtSO (\setminus \cut) \proves A$ (respectively).
\end{proposition}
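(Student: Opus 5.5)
This is an immediate corollary of \cref{lem:multi-to-single}, applied to the particular sequent whose RHS is a singleton. By definition, $\mlIKtSO (\setminus\cut) \proves A$ means that $\mlIKtSO (\setminus\cut) \proves \cdot \stoup \cdot \seqar v:A$ for some world symbol $v$. Here the relational context and the LHS cedent are empty, and $\Delta = \{v:A\}$.

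Applying \cref{lem:multi-to-single} to this derivation gives some $w:B \in \Delta$ with $\labIKtSO (\setminus\cut) \proves \cdot \stoup \cdot \seqar w:B$. Since $\Delta$ is the singleton $\{v:A\}$, necessarily $w:B = v:A$. This is exactly $\labIKtSO (\setminus\cut) \proves A$.

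The cut-free and with-cut versions run in parallel, because the lemma is stated ``respectively''. In its proof, the only case that introduces a $\cut$ in the $\labIKtSO$ derivation is the case where the $\mlIKtSO$ derivation itself ends in $\cut$. So a cut-free input yields a cut-free output.

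There is no real obstacle here; all the work was done in \cref{lem:multi-to-single}. The one point to check is the convention that $\mathsf L \proves A$ is read with a fixed (arbitrary) label $v$ on the RHS and an empty relational context. This matches the form of the sequent to which the lemma is applied, so no relabelling is needed.
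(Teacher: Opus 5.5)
Your proposal is correct and matches the paper, which likewise obtains the proposition immediately from \cref{lem:multi-to-single} applied to $\cdot \stoup \cdot \seqar v:A$. The singleton RHS forces the witness to be $v:A$ itself, and the cut-free variant follows because the lemma is stated for both systems respectively.
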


 \begin{remark}
 \label{rem:multi-to-single-surprising}
     The result above may seem surprising at first glance, as it does not typically hold for even (first-order) intuitionistic propositional logic (without modalities), in the presence of positive connectives. 
     For instance there is a multi-succedent intuitionistic proof of $A_0\lor A_1 \seqar A_0, A_1$, but clearly no $A_0 \lor A_1 \seqar A_i$ is provable. 
     We avoid this issue as we do not work with native positive connectives, including $\lor$, only their impredicative encodings from \cref{eq:so-dfns-of-nonmodal-connectives}.
    %
     Similarly note, in \cref{lem:multi-to-single} above, there is no requirement for $\Delta$ to be nonempty: it is simply not possible to prove sequents with empty RHS in $\mlIKtSO$, by analysis of its rules.
 \end{remark}

\subsection{Further examples}

We conclude this section with a few more examples of labelled proofs.
First, recalling the axiomatic derivation in \cref{sec:underlying-modal-tense-logics}, here is a labelled proof of $\diaimpbox$:

$$
    \vlderivation
    {
    \vlid{=}{}{\cdot \stoup \cdot \seqar w: (\Diamond A \limp \Box B) \limp \Box (A\limp B)}{
        \vlin{\rr \IMP}{}
        {
            \seq{\cdot}{\cdot}{\fm{w}{(\fall (\BOX(A \IMP \BOXB X) \IMP X) \IMP \BOX B) \IMP \BOX (A \IMP B)}}
        }
        {
            \vlin{\rr \BOX}{}
            {
                \seq{\cdot}{\fm{w}{\fall (\BOX(A \IMP \BOXB X) \IMP X) \IMP \BOX B}}{\fm{w}{\BOX (A \IMP B)}}
            }
            {
                \vlin{\rr \IMP}{}
                {
                    \seq{wRv}{\fm{w}{\fall (\BOX(A \IMP \BOXB X) \IMP X) \IMP \BOX B}}{\fm{v}{A \IMP B}}
                }
                {
                    \vliin{\lr \IMP}{}
                    {
                        \seq{wRv}{\fm{w}{\fall (\BOX(A \IMP \BOXB X) \IMP X) \IMP \BOX B}, \fm{v}{A}}{\fm{v}{B}}
                    }
                    {
                        \vlin{\rr \forall}{}
                        {
                            \seq{wRv}{\fm{v}{A}}{\fm{w}{\fall (\BOX(A \IMP \BOXB X) \IMP X)}}
                        }
                        {
                            \vlin{\rr \IMP}{}
                            {
                                \seq{wRv}{\fm{v}{A}}{\fm{w}{\BOX(A \IMP \BOXB P) \IMP P}}
                            }
                            {
                                \vlin{\lr \BOX}{}
                                {
                                    \seq{wRv}{\fm{v}{A}, \fm{w}{\BOX(A \IMP \BOXB P)}}{\fm{w}{P}}
                                }
                                {
                                    \vliin{\lr \IMP}{}
                                    {
                                        \seq{wRv}{\fm{v}{A}, \fm{v}{A \IMP \BOXB P}}{\fm{w}{P}}
                                    }
                                    {
                                        \vlin{\id}{}
                                        {
                                            \seq{wRv}{\fm{v}{A}}{\fm{v}{A}}
                                        }
                                        {
                                            \vlhy{}
                                        }
                                    }
                                    {
                                        \vlin{\lr \BOXB}{}
                                        {
                                            \seq{wRv}{\fm{v}{\BOXB P}}{\fm{w}{P}}
                                        }
                                        {
                                            \vlin{\id}{}
                                            {
                                                \seq{wRv}{\fm{w}{P}}{\fm{w}{P}}
                                            }
                                            {
                                                \vlhy{}
                                            }
                                        }
                                    }
                                }
                            }
                        }
                    }
                    {
                        \vlin{\lr \BOX}{}
                        {
                            \seq{wRv}{\fm{w}{\BOX B}}{\fm{v}{B}}
                        }
                        {
                            \vlin{\id}{}
                            {
                                \seq{wRv}{\fm{v}{B}}{\fm{v}{B}}
                            }
                            {
                                \vlhy{}
                            }
                        }
                    }
                }
            }
        }
    }
}
$$

Now, recalling \cref{ex:forall-distributes-over-box}, here is a labelled proof of 
 $\fall \BOX A \IMP \BOX \fall A$:
$$
    \vlderivation
    {
        \vlin{\rr \IMP}{}
        {
            \seq{\cdot}{\cdot}{\fm{w}{\fall \BOX A \IMP \BOX \fall A}}
        }
        {
            \vlin{\rr \BOX}{}
            {
                \seq{\cdot}{\fm{w}{\fall \BOX A}}{\fm{w}{\BOX \fall A}}
            }
            {
                \vlin{\rr \forall}{}
                {
                    \seq{wRv}{\fm{w}{\fall \BOX A}}{\fm{v}{\fall A}}
                }
                {
                    \vlin{\lr \forall}{}
                    {
                        \seq{wRv}{\fm{w}{\fall \BOX A}}{\fm{v}{\subst{A}{P}}}
                    }
                    {
                        \vlin{\lr \BOX}{}
                        {
                            \seq{wRv}{\fm{w}{\BOX \subst{A}{P}}}{\fm{v}{\subst{A}{P}}}
                        }
                        {
                            \vlin{\id}{}
                            {
                                \seq{wRv}{\fm{v}{\subst{A}{P}}}{\fm{v}{\subst{A}{P}}}
                            }
                            {
                                \vlhy{}
                            }
                        }
                    }
                }
            }
        }
    }
$$

Let us point out that even $\blacksquare$-free theorems might need formulas including $\blacksquare$ in their cut-free proofs, due to the $\lr\forall$ steps involved,
exemplifying the \emph{non-analyticity} of second-order logic:
\begin{equation}
    \label{eq:lab-prf-neg-box}
    \vlderivation{
\vlin{\rr \Box}{}{\cdot \stoup v:\lnot \lnot \Box \bot \seqar v:\Box \bot}{
\vliin{\lr \limp}{}{vRw \stoup v:\lnot \lnot \Box \bot \seqar w:\bot}{
    \vlin{\rr\limp}{}{vRw \stoup \cdot \seqar v:\lnot \Box \bot}{
    \vlin{\rr\forall}{}{vRw \stoup v: \Box \bot \seqar v:\forall X X }{
    \vlin{\lr \Box}{}{vRw \stoup v:\Box \bot \seqar v: P}{
    \vlin{\lr \forall}{}{vRw \stoup w:\forall X X \seqar v:P}{
    \vlin{\lr\blacksquare}{}{vRw \stoup w:\blacksquare P \seqar v:P}{
    \vlin{\id}{}{vRw \stoup v:P \seqar v:P}{\vlhy{}}
    }
    }
    }
    }
    }
}{
    \vlin{\lr \forall}{}{vRw \stoup v:\forall X X \seqar w:\bot}{
    \vlin{\lr\Box}{}{vRw \stoup v:\Box \bot \seqar w:\bot}{
    \vlin{\id}{}{vRw \stoup w:\bot \seqar w:\bot}{\vlhy{}}
    }
    }
}
}
}
\end{equation}

Previous examples were intuitionistic proofs, of $\labIKtSO$.
The following are examples of classical proofs.

$$\vlderivation
{
    \vlin{\rr \BOX}{}
    {
        \seq{\cdot}{\fm{v}{\BOX \BOXB A}}{\fm{v}{A}, \fm{v}{\BOX \BOT}}
    }
    {
        \vlin{\lr \BOX}{}
        {
            \seq{vRw}{\fm{v}{\BOX \BOXB A}}{\fm{v}{A}, \fm{w}{\BOT}}
        }
        {
            \vlin{\lr \BOXB}{}
            {
                \seq{vRw}{\fm{w}{\BOXB A}}{\fm{v}{A}, \fm{w}{\BOT}}
            }
            {
                \vlin{\id}{}
                {
                    \seq{vRw}{\fm{v}{A}}{\fm{v}{A}, \fm{w}{\BOT}}
                }
                {
                    \vlhy{}
                }
            }
        }
    }
}
\qquad
\vlderivation
{
    \vlin{\rr \BOXB}{}
    {
        \seq{\cdot}{\fm{v}{\BOXB \BOX A}}{\fm{v}{A}, \fm{v}{\BOXB \BOT}}
    }
    {
        \vlin{\lr \BOXB}{}
        {
            \seq{vRw}{\fm{v}{\BOXB \BOX A}}{\fm{v}{A}, \fm{w}{\BOT}}
        }
        {
            \vlin{\lr \BOX}{}
            {
                \seq{vRw}{\fm{w}{\BOX A}}{\fm{v}{A}, \fm{w}{\BOT}}
            }
            {
                \vlin{\id}{}
                {
                    \seq{vRw}{\fm{v}{A}}{\fm{v}{A}, \fm{w}{\BOT}}
                }
                {
                    \vlhy{}
                }
            }
        }
    }
}$$
Using the two proofs above, we can obtain the following:
$$
\vlderivation
{
    \vlin{\lr \forall}{}
    {
        \seq{\cdot}{\fm{v}{\fall((\BOX \BOXB A \IMP X ) \IMP (\BOXB \BOX A \IMP X) \IMP X)}}{\fm{v}{\BOX \BOT}, \fm{v}{A}, \fm{v}{\BOXB \BOT}}
    }
    {
                \vliin{\lr \IMP}{}
                {
                    \seq{\cdot}{\fm{v}{(\BOX \BOXB A \IMP A ) \IMP (\BOXB \BOX A \IMP A) \IMP A}}{\fm{v}{\BOX\BOT}, \fm{v}{A}, \fm{v}{\BOXB\BOT}}
                }
                {
                    \vlin{\rr \IMP}{}
                    {
                        \seq{}{\cdot}{\fm{v}{\BOX \BOXB A \IMP A},\fm{v}{\BOX\BOT}}
                    }
                    {
                        \vlin{}{}
                        {
                            \seq{\cdot}{\fm{v}{\BOX \BOXB A}}{\fm{v}{ A},\fm{v}{\BOX\BOT}}
                        }
                        {\vlhy{\text{proof above}}}
                    }
                }
                {
                    \vliin{\lr \IMP}{}
                    {
                        \seq{\cdot}{\fm{v}{(\BOXB \BOX A \IMP A) \IMP A}}{\fm{v}{A}, \fm{v}{\BOXB\BOT}}
                    }
                    {
                        \vlin{\rr \IMP}{}
                        {
                            \seq{\cdot}{\cdot}{\fm{v}{\BOXB \BOX A \IMP A}, \fm{v}{\BOXB\BOT}}
                        }
                        {
                            \vlin{}{}
                            {
                                \seq{\cdot}{\fm{v}{\BOXB \BOX A}}{\fm{v}{A}, \fm{v}{\BOXB\BOT}}
                            }
                            {
                                \vlhy{\text{proof above}}
                            }
                        }
                    }
                    {
                        \vlin{\id}{}
                        {
                            \seq{\cdot}{\fm{v}{A}}{\fm{v}{A}}
                        }
                        {
                            \vlhy{}
                        }
                    }
                }
            }
        }
$$

As a final example, let us see a classical $\labKtSO$ proof of a non-constructive principle, $\forall X(A\lor B) \limp A \lor \forall X B$ for $X \notin \fv A$.
To save space, we shall omit ``$\cdot \stoup$'' at the beginning of each LHS as well as all labels for formulas (they are all the same).

\[
\vlderivation{
\vlin{\rr \limp}{}{           \seqar    \forall X (A\lor B) \limp A \lor \forall X B}{
\vlid{=}{}{           \forall X (A\lor B) \seqar    A \lor \forall X B}{
\vlin{\rr\forall}{}{           \forall X (A\lor B) \seqar     \forall Y ((A \limp Y) \limp (\forall X B \limp Y) \limp Y)}{
\vlin{2\rr\limp}{}{           \forall X (A\lor B) \seqar w : (A\limp P) \limp (\forall X B \limp P) \limp P}{
\vlin{\rr\cntr}{}{           \forall X (A\lor B) ,    A\limp P ,    \forall X B \limp P \seqar    P}{
\vliin{\lr \limp}{}{           \forall X (A\lor B) ,    A\limp P ,    \forall X B \limp P \seqar    P,   P}{
    \vlin{\rr\forall}{}{           \forall X (A\lor B) ,    A\limp P  \seqar    \forall X B,    P}{
    \vlin{\lr \forall}{}{           \forall X (A\lor B) ,    A\limp P  \seqar    B[Q/X],    P}{
    \vliin{\lr \limp}{}{            A\lor B[Q/X] ,    A\limp P  \seqar    B[Q/X],    P}{
        \vlid{=}{}{           A \lor B[Q/X] \seqar    A,   B[Q/X] }{
        \vlin{\lr \forall}{}{           \forall Y ((A \limp Y) \limp (B[Q/X] \limp Y) \limp Y \seqar    A,    B[Q/X]}{
        \vliiin{2\lr\limp}{}{   (A\limp A) \limp (B\limp A) \limp A \seqar    A,    B}{
            \vlin{\rr\limp}{}{            \seqar    A\limp A}{
            \vlin{\id}{}{           A \seqar    A}{\vlhy{}}
            }
        }{
            \vlin{\rr\limp}{}{           \seqar    B\limp A,    B}{
            \vlin{\rr\wk}{}{           B \seqar    A,    B}{
            \vlin{\id}{}{          B \seqar    B}{\vlhy{}}
            }
            }
        }{
            \vlin{\id}{}{          A \seqar    A}{\vlhy{}}
        }
        }
        }
    }{
        \vlin{\id}{}{           P \seqar    P}{\vlhy{}}
    }
    }
    }
}{
    \vlin{\id}{}{           P \seqar    P}{\vlhy{}}
}
}
}
}
}
}
}
\]

\section{Completeness via proof search: the classical case}
\label{sec:completeness-classical}

One of the main technical contributions of the present work is the cut-free completeness of our sequent calculi via proof search.
For second (and higher) order (classical) predicate logic cut-admissibility remained an open problem since the late '30s, what came to be known as \emph{Takeuti's conjecture} (see, e.g., \cite[Section~5.1]{sep-proof-theory}). 
In the late '60s, the problem was finally resolved positively by Prawitz \cite{Prawitz1968:SOL} and Tait \cite{Tait1966Nonconstructive} for second order logic, and extended to simple type theory (i.e.\ higher order logic) by Prawitz \cite{Prawitz1968:STT} and Takahashi \cite{Takahashi1967}.
Both exploited Sch\"utte's technique of \emph{partial valuations} \cite{Schutte1960:semivaluations}, the key result being that they may be appropriately extended into \emph{total valuations}. See, e.g., \cite[Chapter~IV]{Schutte1977:pt-book} or \cite[Section~21]{Takeuti1987:pt-book} for textbook presentations of the methodology for simple type theory.

This is the starting point for our argument, from which we must employ a number of adaptations.
We shall avoid working explicitly with the notion of partial valuation, rather only demonstrating the properties we need, but making due remarks to the classical methodology throughout.
In the intuitionistic setting our argument is rather more involved, with little prior related work.
For this reason we shall start with the classical case, in this section, showcasing the main ideas of the methodology before addressing the further technicalities necessary for the intuitionistic case in the next section.
Thus the main result of this section is:

\begin{theorem}
    [Classical cut-free completeness]
    \label{cut-free-completeness-classical}
$\forall \RRR\, \modelsM\RRR A \implies \labKtSO \setminus \cut \proves A$.
\end{theorem}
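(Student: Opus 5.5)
We argue contrapositively: assuming $\labKtSO \setminus \cut \not\proves A$, we build a comprehensive relational model $\RRR$ with $\not\modelsM\RRR A$. The plan is a Schütte-style proof search. We fix an enumeration of all closed formulas (over $\Prop$ extended by countably many fresh propositional symbols) and of all world symbols. We then build a fair, possibly infinite, cut-free search tree from $\cdot \stoup \cdot \seqar v:A$. The tree applies the rules of \cref{fig:lab-sys-KtSO} bottom-up in a cumulative form, so that principal formulas are kept via contraction. Fairness should ensure that every right formula is eventually decomposed with fresh eigensymbols. It should also ensure that every left formula $u:\Box B$ (resp.\ $u:\blacksquare B$) meets every relational atom $uRw$ (resp.\ $wRu$) that ever appears, and that every left $u:\forall X B$ is instantiated with \emph{every} closed formula $C$ in the enumeration. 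Since $\labKtSO\setminus\cut\not\proves A$, the tree is not a proof, so by König's lemma it has a branch $\beta$ that never reaches an identity. From $\beta$ we collect $\rels R^\beta$, $\Gamma^\beta$ and $\Delta^\beta$. No labelled formula $w:B$ lies in both $\Gamma^\beta$ and $\Delta^\beta$, since otherwise weakening would close the branch at an initial sequent. Moreover $\beta$ satisfies the usual saturation (Hintikka) closure conditions.

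Next we define a \emph{semivaluation} on $\beta$. The worlds are the labels occurring in $\beta$, and the accessibility relation is $\rels R^\beta$. Labelled formulas in $\Gamma^\beta$ are assigned true and those in $\Delta^\beta$ false. Saturation gives Schütte's conditions, for example: if $w:\forall X B \in \Gamma^\beta$ then $w:B[C/X]\in\Gamma^\beta$ for every $C$; if $w:\forall XB\in\Delta^\beta$ then $w:B[P/X]\in\Delta^\beta$ for some $P$; and similarly for $\limp$, $\Box$, $\blacksquare$. The difficulty is that this partial valuation does not directly yield a model. Predicates must interpret $\forall$ over a domain closed under definability, and because of impredicativity we cannot just define truth by induction on formulas and then take $\setW$ to be the extensions.

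The key step, and the main obstacle, is extending this partial information to a total comprehensive model. I would follow Prawitz/Takahashi and avoid the standard naive approach. The naive approach would take as sets the "extensions" $\{w \mid w:C \in \Gamma^\beta\}$ of closed formulas $C$, with truth given by membership in $\Gamma^\beta$. It fails because $\Gamma^\beta$ and $\Delta^\beta$ need not cover everything. Instead I would use \emph{reducibility-candidate}-like sets.
\begin{enumerate}[(i)]
\item Let $\setW$ consist of all $V\subseteq W$ with $\{w \mid w:P\in\Gamma^\beta\}$-like compatibility, namely all $V$ such that $V \supseteq$ the positive part and $V$ is disjoint from the negative part of some closed formula $C$. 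Concretely, for each $C$ we take the set of candidates $\mathcal{S}(C)=\{V \mid w:C\in\Gamma^\beta \Rightarrow w\in V,\ w:C\in\Delta^\beta\Rightarrow w\notin V\}$.
\item Define truth in the structure with $\setW := \bigcup_C \mathcal S(C)$ and $P_\RRR := \{w \mid w:P\in\Gamma^\beta\}$.
\item Prove, by induction on $B(\vec X)$, the main lemma: for all $V_i\in\mathcal S(C_i)$, if $w:B[\vec C/\vec X]\in\Gamma^\beta$ then $w\models B[\vec V/\vec X]$, and if it is in $\Delta^\beta$ then $w\not\models B[\vec V/\vec X]$.
\end{enumerate}
The quantifier cases use the instantiation clause for every $C$ on the left and the eigenvariable clause on the right, noting that $P_\RRR\in\mathcal S(P)$. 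Comprehensivity then follows: the extension of any closed $B(\vec V)$ lies in $\mathcal S(B[\vec C/\vec X])$ by the lemma itself, so it belongs to $\setW$.

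Finally, since $v:A\in\Delta^\beta$, the main lemma yields $v\not\modelsM\RRR A$ in a comprehensive relational model, contradicting the hypothesis. I expect the delicate points to be the following. First, the right formulation of the sets $\mathcal S(C)$, so that both the inductive lemma and comprehensivity go through despite impredicativity. Second, the bookkeeping of fairness for $\lr\forall$, which must range over infinitely many instances, including instances with fresh symbols introduced later along the branch.
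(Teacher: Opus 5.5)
Your proposal is correct and follows the paper's argument essentially step for step. Your candidate sets $\mathcal S(C)$ are exactly the paper's possible extensions (Prawitz's possible values), your $P_\RRR$ is the minimal extension $\minext P$, and your main lemma, proved by induction on the template $B(\vec X)$, is the paper's compatibility lemma, from which comprehensivity follows by the same impredicative step. The only differences are minor: you extract an open branch from a fair search tree via K\"onig's lemma where the paper builds one branch by always picking an unprovable premiss of its invertible cumulative rules, and your comprehension step handles set parameters $\vec V$ explicitly where the paper only spells out the parameter-free case.
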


As expected we shall approach this via contraposition, extracting a countermodel from failed proof search.
As already mentioned, our proof exploits ideas about partial valuations originating from \cite{Schutte1960:semivaluations}; our method for extending them to total valuations adapts Prawitz' concept of \emph{possible values} from his work on simple type theory \cite{Prawitz1968:STT}.\footnote{Prawitz used a different method via transfinite recursion in his original work on second order logic \cite{Prawitz1968:SOL}, but we find the possible values approach cleaner.}

\subsection{Setting up proof search}
\label{sec:setting-up-prf-srch}
First and foremost, we shall think of building proofs bottom-up, from the conclusion towards initial sequents. We shall describe the proof search process with this view in mind.

\subsubsection{Terminology for identifying formula occurrences}
    We shall use standard terminology about relationships between labelled formula occurrences in proofs and inference steps.
    In particular the \textbf{principal} formula of a logical step is the distinguished labelled formula occurrence in the lower sequent, as typeset in \cref{fig:lab-sys-KtSO}. 
    \textbf{Auxiliary} formulas are any distinguished labelled formula occurrences in the upper sequent(s).
    A good account for this and related terminology can be found in \cite[Section~1.2.3]{handbook-of-pt}.

\subsubsection{Cedents-as-sets}
In proof search, it is useful to construe cedents as \emph{sets} rather than multisets of labelled formulas.
Of course this makes no difference to usual provability, in the presence of structural rules. 
In practice, if we have two (or more) occurrences of the same labelled formula in a cedent, one can be safely \textbf{weakened} (i.e.\ deleted by applying $\lr \wk$ or $ \rr \wk$ steps). 
In case an additional copy is required, we always \textbf{contract} (i.e.\ duplicate by applying $\lr \cntr$ or $\rr\cntr$ steps) principal formulas of logical steps.
Concretely the steps of our proof search algorithm will be composed of the following \textbf{macro rules}:

\begin{equation}
\hspace{-5em}
\label{eq:mon-rules}
    \begin{array}{cc}
\vliinf{\lr \limp}{}{\rels R \stoup \Gamma, v: A\limp B \seqar \Delta}{\rels R \stoup \Gamma, v: A\limp B \seqar \Delta, v: A}{\rels R \stoup \Gamma,v:A\limp B , v: B \seqar \Delta}
&
\vlinf{\rr\limp }{}{\rels R \stoup \Gamma \seqar \Delta, v:A\limp B}{\rels R\stoup \Gamma , v:A \seqar \Delta, v:A\limp B, v:B}
\\
\noalign{\smallskip}
\vlinf{\lr \forall}{}{\rels R \stoup \Gamma , v : \forall X A \seqar \Delta}{\rels R \stoup \Gamma , v : \forall X A , v : A [C/X] \seqar \Delta}
&
\vlinf{\rr \forall}{\text{\footnotesize $P$ fresh}}{\rels R \stoup \Gamma \seqar \Delta, v:\forall X A}{\rels R \stoup \Gamma \seqar \Delta, v:\forall X A , v: A[P/X] }
\\
\noalign{\smallskip}
\vlinf{\lr \Box}{}{\rels R , uRv \stoup \Gamma , u: \Box A \seqar \Delta}{\rels R , uRv \stoup \Gamma , u : \Box A , v : A \seqar \Delta }
&
\vlinf{\rr\Box}{\text{\footnotesize $w$ fresh}}{\rels R \stoup \Gamma \seqar \Delta, v:\Box A}{\rels R , vRw \stoup \Gamma \seqar \Delta , v:\Box A , w:A}
\\
\noalign{\smallskip}
\vlinf{\lr \blacksquare}{}{\rels R , uRv \stoup \Gamma, v : \blacksquare A \seqar \Delta}{\rels R , uRv \stoup \Gamma , v : \blacksquare A , u : A \seqar \Delta}
&
\vlinf{\rr\blacksquare}{\text{$u$ fresh}}{\rels R \stoup \Gamma \seqar \Delta, v:\blacksquare A}{\rels R , uRv \stoup \Gamma \seqar \Delta, v:\blacksquare A , u :A}
\end{array}
\end{equation}
As mentioned, these macro rules are readily derivable from the ones of $\labKtSO$, using the structural rules.
Let us point out already that the left rules above are even derivable in $\mlIKtSO$ (but not the right ones).
The only other rules we (implicitly) apply during proof search are weakenings, $\lr \wk, \rr \wk$, to delete extra copies of a labelled formula as already mentioned.
We shall omit mention of such bookeeping henceforth.

Notice that these rules are \textbf{monotone}: bottom-up, sequents only get bigger.
Let us point out two important consequences of this:
\begin{itemize}
    \item  These rules are also invertible: provability of the conclusion \emph{implies} provability of the premisses (by weakening).
    \item Any infinite branch of macro steps above has a well-defined (possibly infinite) limit by taking unions of relational contexts, unions of LHSs, and unions of RHSs.
\end{itemize}

\subsubsection{Enumerating steps by activity}
Now, fixing a conclusion sequent, note that the steps in \cref{eq:mon-rules} are determined by the choice of principal and auxiliary labelled formulas, as well as which side the principal formula occurs on.
Let us call these data the \textbf{activity} of an inference step.
As activities are determined by only finite data, they are only countably many so may be enumerated by $\omega$.
The idea of our proof search algorithm will be to apply each of these steps in turn, so that every possible labelled formula on either side of a sequent is eventually principal. 
To ensure we do not miss any possible steps, we shall employ an enumeration with sufficient redundancy:

\begin{convention}
[Adequate enumeration]
\label{adeq-enum-activ}
    We assume an enumeration $(\alpha_i)_{i<\omega}$ of activities in which every possible activity occurs infinitely often.
\end{convention}

\begin{remark}
    [Activity vs inference step]
    Note that distinct inference steps may have the same activity, as their contexts may be different, but once a concluding sequent is fixed the activity determines at most one inference step.
    It may also determine no correct inference step at all, e.g.\ if the corresponding principal formula does not occur in the sequent, or only appears on the wrong side.
\end{remark}

\subsection{The proof search branch}
\label{sec:prf-srch-brnch}
Let us set up some notation for convenience.
We shall write $\sequent , \sequent'$ etc.\ to vary over labelled sequents.
If $\sequent = \rels R \stoup \Gamma \seqar \Delta$ and $\sequent' = \rels R' \stoup \Gamma' \seqar \Delta'$, we write simply $\sequent \subseteq \sequent'$ if $\rels R \subseteq \rels R'$, $\Gamma \subseteq \Gamma'$ and $\Delta \subseteq \Delta'$.
For a set of sequents $\{\sequent_i = \rels R_i \stoup \Gamma_i \seqar \Delta_i \}_{i \in I}$, 
we write simply $\bigcup\limits_{i \in I} \sequent_i$ for the sequent $\bigcup\limits_{i\in I}\rels R_i \stoup \bigcup\limits_{i \in I}\Gamma_i \seqar \bigcup\limits_{i\in I} \Delta_i$.
We shall typically reserve this union operation for when we take limits of chains of sequents under $\subseteq$.

For the remainder of this section let us fix a sequent $\sequent_0 = \rels R_0 \stoup \Gamma_0 \seqar \Delta_0$ that is unprovable in $\labKtSO$, i.e.\ $\labKtSO \not \proves \sequent_0$.

\begin{definition}
[Proof search branch]
\label{prf-srch-brnch}
We extend $\sequent_0$ to a sequence $\prfsrchtree = (\sequent_i)_{i<\omega}$ of unprovable sequents defined as follows:
    \begin{itemize}
        \item If no inference step has conclusion $\sequent_i$ and activity $\alpha_i$ then just set $\sequent_{i+1} := \sequent_i$ (so $\sequent_{i+1}$ remains unprovable).
        \item 
        Otherwise let $\infrul_i$ be the inference step with conclusion $\sequent_i$ and activity $\alpha_i$.
        \item By assumption that $\sequent_i$ is unprovable, some premiss of $\infrul_i$ must be unprovable. 
        We set $\sequent_{i+1}$ to be some unprovable premiss.\footnote{It does not matter which, but for concreteness, we may take the leftmost if there is a choice.}
    \end{itemize}
\end{definition}

As mentioned earlier, since the macro rules of \cref{eq:mon-rules} are monotone, bottom-up, we indeed have a chain $\sequent_0 \subseteq \sequent_1 \subseteq \cdots$. 
So we set $\sequent_\omega = \rels R_\omega \stoup \Gamma_\omega \seqar \Delta_\omega $ to be the (infinite) limit of this sequence, i.e.\ $ \bigcup\limits_{i<\omega} \sequent_i$.

Importantly we have:
\begin{proposition}
    \label{LHS-RHS-disjoint-prf-search}
    $\Gamma_\omega \cap \Delta_\omega = \emptyset$.
\end{proposition}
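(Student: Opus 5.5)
We argue by contradiction. Suppose some labelled formula $v:A$ lies in both $\Gamma_\omega$ and $\Delta_\omega$. Since $\Gamma_\omega = \bigcup_{i<\omega}\Gamma_i$ and $\Delta_\omega = \bigcup_{i<\omega}\Delta_i$, there are indices $j,k$ with $v:A \in \Gamma_j$ and $v:A \in \Delta_k$. The first step is to use the chain property $\sequent_0 \subseteq \sequent_1 \subseteq \cdots$, which holds because the macro rules of \cref{eq:mon-rules} are monotone bottom-up. Setting $n := \max(j,k)$, this gives $v:A \in \Gamma_n \cap \Delta_n$.

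Write $\sequent_n = \rels R_n \stoup \Gamma_n \seqar \Delta_n$. The second step is to show this sequent is provable in $\labKtSO$. Start from the identity axiom $\rels R_n \stoup v:A \seqar v:A$. Note that $\id$ allows an arbitrary relational context $\rels R$, so this is an instance. Then apply $\lr\wk$ and $\rr\wk$ repeatedly to add the finitely many remaining labelled formulas of $\Gamma_n \setminus\{v:A\}$ and $\Delta_n\setminus\{v:A\}$. If cedents are read as multisets, any further copies are handled the same way, consistent with the cedents-as-sets convention. This yields a proof of $\sequent_n$. We should check that each $\sequent_i$ is a finite sequent, so that the weakening derivation is finite. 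This holds by induction: $\sequent_0$ is finite, and each macro step adds only finitely many formulas and relational atoms.

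The final step is to compare this with \cref{prf-srch-brnch}. By construction every $\sequent_i$ is unprovable in $\labKtSO$: $\sequent_0$ is unprovable by assumption, and the inductive step either copies $\sequent_i$ or picks an unprovable premiss, which must exist since otherwise the step $\infrul_i$ would prove $\sequent_i$. This contradicts the provability of $\sequent_n$, so $\Gamma_\omega\cap\Delta_\omega=\emptyset$.

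There is no real obstacle here. The only point needing care is the reduction from the infinite limit to a single finite stage $n$, which uses monotonicity of the chain.
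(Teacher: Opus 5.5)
Your proof is correct and follows essentially the same route as the paper: use monotonicity of the chain to find a single finite stage $\sequent_n$ where $\Gamma_n$ and $\Delta_n$ share a labelled formula. That stage is then derivable by $\id$ plus weakenings, contradicting the unprovability built into the proof search branch. Your extra check that each $\sequent_i$ is finite is a harmless elaboration that the paper leaves implicit.
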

\begin{proof}
    If $\Gamma_\omega $ and $\Delta_\omega$ intersect, then so does some $\Gamma_i$ and $\Delta_i$, by monotonicity. 
    This would mean that $\sequent_i$ is derivable by an $\id $ step (and weakenings), contradicting its unprovability.
\end{proof}

\subsection{Towards a countermodel: a pre-structure from proof search}
\label{sec:pre-structure-classical}
The proof search branch gives rise to a \emph{pre-structure}, that is a structure lacking a domain of sets (and thus also an interpretation of propositional symbols), that underlies our eventual countermodel for $\sequent_0$.
Let us define this now:

\begin{definition}
    [Pre-structure]
    \label{pre-structure}
    We define $\pre \RRR$ by the following data:
    \begin{itemize}
        \item The set of worlds is just $\war$.
        \item We define the accessibility relation to just be $\rels R_\omega$, i.e.\ $v\rels R_\omega w$ if $vRw \in \rels R_\omega$.
    \end{itemize}
\end{definition}

When $\sequent_0$ contains only first-order formulas (i.e.\ without quantifiers), the pre-structure above easily induces a bona fide countermodel (see, e.g., \cite[Theorem~8.17]{Takeuti1987:pt-book} for a similar exposition for first-order intuitionistic predicate logic).
The key difficulty for expanding $\pre \RRR$ in our second-order setting is to identify an appropriate domain of predicates/sets.

\subsection{Extracting a partial valuation}
\label{sec:partial-val-classical}
As in other countermodel constructions, the idea is to find a structure that forces all of $\Gamma_\omega$ true and all of $\Delta_\omega $ false.
This desideratum can be used to constrain what predicates may be,
but the issue is that this information is incomplete: the truth values of some formulas are not determined by $\Gamma_\omega, \Delta_\omega$.
Attempting to fix them one way or another may lead to inconsistencies, in particular due to contravariance of $\limp$.

Sch\"utte dubbed such an assignment a \emph{semivaluation} in \cite{Schutte1960:semivaluations}, or \emph{partial valuation} in \cite{Schutte1977:pt-book}.\footnote{Beware that what Sch\"utte calls `partial valuation' in \cite{Schutte1960:semivaluations} is slightly different, comprising a sort of expansion of a semivaluation so that it is closed under semantic clauses. We keep to the current terminology as it seems more suggestive and should be unlikely to cause confusion.}
This concept was further expanded into a bona fide 3-valued semantics of cut-free proofs by Girard \cite{Girard1987:pt-log-comp}.
We stop short of working explicitly with partial valuations, for the sake of reducing the technical development, rather simply using predicates for truth and falsity induced by the LHS $\Gamma_\omega$ and RHS $\Delta_\omega$, respectively, of the proof search branch $\prfsrchtree$.

Our construction of the proof search branch is designed to guarantee the following crucial property, corresponding to Sch\"utte's definition of semivaluation \cite[Definition~6.1]{Schutte1960:semivaluations} (later called partial valuation in \cite[Section 11.2]{Schutte1977:pt-book}):

\begin{proposition}
    [Partial valuation]
    \label{partial-val-props-classical}
    We have the following:
    \begin{enumerate}
        \item $v:A\limp B \in \Gamma_\omega \implies (v:A \in \Delta_\omega \text{ or } v: B \in \Gamma_\omega)$.
        \item $v:A\limp B \in \Delta_\omega \implies (v: A \in \Gamma_\omega \text{ and } v:B \in \Delta_\omega)$
        \item $v:\Box A \in \Gamma_\omega \implies \forall w \in \war \,  (v \rels R_\omega w \implies w:A \in \Gamma_\omega)$
        \item $v:\Box A \in \Delta_\omega \implies \exists w \in \war \,  (v \rels R_\omega w \text{ and } w:A \in \Delta_\omega)$
        \item $v:\blacksquare A \in \Gamma_\omega \implies \forall u \in \war \,  (u \rels R_\omega v \implies u:A \in \Gamma_\omega)$
        \item $v:\blacksquare A \in \Delta_\omega \implies \exists u \in \war \, (u\rels R_\omega v\text{ and } u:A \in \Delta_\omega)$
        \item $v:\forall X A \in \Gamma_\omega \implies \forall C \in \Fmla \, v: A[C/X] \in \Gamma_\omega$.
        \item $v:\forall X A \in \Delta_\omega \implies \exists C \in \Fmla \, v:A[C/X] \in \Delta_\omega$.
    \end{enumerate}
\end{proposition}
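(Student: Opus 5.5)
The plan is to prove each of the eight clauses by one and the same argument. Membership in $\Gamma_\omega$ or $\Delta_\omega$ is witnessed at some finite stage. The adequate enumeration (\cref{adeq-enum-activ}) then supplies a later stage at which the relevant macro step from \cref{eq:mon-rules} is applied. Finally, monotonicity of the chain $\sequent_0 \subseteq \sequent_1 \subseteq \cdots$ carries the resulting auxiliary formula into the limit.

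Concretely, suppose $v:\Box A \in \Delta_\omega$. Then $v:\Box A \in \Delta_i$ for some $i$, and since the chain is monotone it stays in every $\Delta_j$ with $j \geq i$. The activity ``$\rr\Box$ with principal $v:\Box A$'' occurs infinitely often in the enumeration, so there is some $j \geq i$ at which it is $\alpha_j$. At that stage the step $\infrul_j$ exists, since its principal formula is present on the correct side. This right rule has a single premiss, so $\sequent_{j+1}$ contains $vRw$ in $\rels R_{j+1}$ and $w:A$ in $\Delta_{j+1}$ for a fresh $w$. Both persist to $\sequent_\omega$. The $\rr\blacksquare$ and $\rr\limp$ clauses work identically, as does $\rr\forall$, where the witness is $C := P$ for the fresh symbol $P$. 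The one-premiss left rules $\lr\Box$, $\lr\blacksquare$ and $\lr\forall$ need an extra point: the activity must fix the auxiliary data as well. Here the activity records the auxiliary labelled formula, such as $w:A$ or $v:A[C/X]$, and hence determines the target world $w$ or the comprehension instance $C$. So take $w$ with $v \rels R_\omega w$ (respectively any $C \in \Fmla$). Both $vRw$ and $v:\Box A$ are present from some stage $i$ onward. The activity $(\lr\Box, v:\Box A, w:A)$ recurs after $i$, and at that stage the step applies and adds $w:A$ to the LHS.

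For the two-premiss rule $\lr\limp$ we cannot choose which premiss becomes $\sequent_{j+1}$. That does not matter: we only need the disjunction. One premiss adds $v:A$ to the RHS and the other adds $v:B$ to the LHS, so whichever unprovable premiss is selected gives one disjunct. Note also that in $\rr\limp$ the single premiss adds both $v:A$ to $\Gamma$ and $v:B$ to $\Delta$, which gives the conjunction.

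The main obstacle is a bookkeeping subtlety about what an activity is. The enumeration must range over a countable set that is fixed in advance, independent of the branch. This works because world symbols, propositional symbols and formulas are countable. Fresh symbols cause no trouble: in a right rule the fresh $w$ or $P$ is not part of the activity's choice but is determined canonically, for instance the least symbol not yet used. For left rules every auxiliary world and every $C$ is named by finite data, so each relevant activity genuinely recurs after any given stage. With this settled, each clause is the straightforward ``eventually applied, then persists'' argument above.
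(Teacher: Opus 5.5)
Your proposal is correct and follows the paper's own argument: each membership is witnessed at a finite stage, the adequate enumeration guarantees that the corresponding macro step is applied at some later stage, and monotonicity of the chain carries the auxiliary formulas into $\sequent_\omega$. Your extra remarks make explicit details that the paper's one-paragraph proof leaves implicit: for $\lr\limp$ only a disjunction is needed, so it does not matter which premiss is selected, and in right rules the fresh symbols should be chosen canonically rather than fixed by the activity.
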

\begin{proof}
    If $v:A \in \Gamma_\omega$ then $v:A \in \Gamma_i$ for some $i$.
    Now, by the definition of the proof search branch and the enumeration of activities we take, cf.~\cref{adeq-enum-activ}, any arbitrary activity $\alpha$ with $v:A$ principal will be applied at some stage $j>i$.
    The properties above simply exhaust all the possibilities of activity for a given principal formula, and possibilities for extension of the proof search branch from $i$.
\end{proof}

Returning to Sch\"utte's partial valuations,
we can think of $v:A \in \Gamma_\omega$ as indicating that $A$ is true at world $v$, with respect to $\pre \RRR$ (or an appropriate expansion of it by predicates).
Symmetrically $v:A \in \Delta_\omega$ indicates that $A$ is false at world $v$.

\subsection{A compatible countermodel via possible values}
To define an appropriate domain of sets, Prawitz' idea in \cite{Prawitz1968:STT} was to simply take \emph{all} possibilities consistent with the proof search branch, by way of so-called \emph{possible values}. 
We shall follow a similar idea here, though again we avoid explicitly defining possible values.
Recall that we have already fixed an unprovable sequent $\sequent_0$ and extended it to the proof search branch $\prfsrchtree = (\sequent_i)_{i<\omega}$.

\begin{definition}
    [Possible extensions]
    \label{dfn:poss-ext-classical}
    A \textbf{possible extension} of a formula $C$ (with respect to $\prfsrchtree$) is a set $\possext C \subseteq \war$ s.t.:
    \begin{itemize}
        \item $v:C \in \Gamma_\omega \implies v \in \possext C$; and,
        \item $v:C \in \Delta_\omega \implies v \notin \possext C$.
    \end{itemize}
    Write $\possext C \textends C$ if $\possext C$ is a possible extension of $C$.
\end{definition}

It is obvious but pertinent to state that every formula $C$ admits a \textbf{minimal (possible) extension} $\minext C := \{v \in \war \ | \ v:C \in \Gamma^\omega \}$.
A unirelational structure may now be obtained from our pre-structure by allowing all possible extensions as sets:

\begin{definition}
[Countermodel]
    We expand $\pre \RRR$ into a structure $\RRR$ by including the following missing data:
    \begin{itemize}
        \item The domain of sets $\set W \subseteq \pow \war$ includes all possible extensions (of all formulas).
        \item We set $\interp { \RRR} {} P := \minext P$ for each $P \in \Prop$.\footnote{In fact the interpretation of propositional symbols is inconsequential, among possible extensions, but we choose the minmal one for concreteness.}
    \end{itemize}
\end{definition}

The key technical result we need about this structure relates the evaluation of formulas over possible extensions to the desideratum that $\Gamma_\omega$ be true and $\Delta_\omega $ be false.
Such compatibility with the partial valuation induced by proof search is what will allow us to show that $\RRR$ is, in fact, comprehensive.

\begin{lemma}
[Compatibility]
\label{compatibility}
    For formulas $A(\vec X)$ (all free variables among $\vec X = X_1, \dots , X_n$) and $\vec C = C_1, \dots , C_n$ we have:
    \begin{enumerate}
        \item $v:A(\vec C) \in \Gamma_\omega \implies v \modelsM\RRR A(\vec {\possext { C}})$ whenever $\vec {\possext C} \textends \vec C$.
        \item $v:A(\vec C) \in \Delta_\omega \implies v\not\modelsM \RRR A(\vec {\possext C})$ whenever $\vec {\possext C} \textends \vec C$.
    \end{enumerate}
    where we write $\vec {\possext C} \textends \vec C$ for $\possext C_1 \textends C_1 , \dots, \possext C_n \textends C_n $.
\end{lemma}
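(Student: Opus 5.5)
We prove both items simultaneously by induction on the structure of $A(\vec X)$ (equivalently on the number of connectives of $A$), for all choices of $\vec C$ and $\vec{\possext C} \textends \vec C$ at once. The essential point is that the induction is on $A$, not on $A(\vec C)$. Comprehension-instances $A[C/X]$ may be larger than $\forall X A$, but they are handled by absorbing $C$ into the substitution vector rather than into the formula being measured. This is the standard Prawitz/Sch\"utte trick to defeat impredicativity.

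The base cases are as follows.
\begin{itemize}
\item If $A = X_j$, then $A(\vec C) = C_j$ and $A(\vec{\possext C}) = \possext C_j$. Both items are exactly the two clauses of \cref{dfn:poss-ext-classical}, since $v \modelsM\RRR \possext C_j$ iff $v \in \possext C_j$.
\item If $A = P$, then $v:P \in \Gamma_\omega$ gives $v \in \minext P = \interp\RRR{}P$. Conversely, $v:P\in\Delta_\omega$ gives $v\notin\minext P$ by \cref{LHS-RHS-disjoint-prf-search}.
\end{itemize}

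The propositional and modal cases follow directly from \cref{partial-val-props-classical}.
\begin{itemize}
\item For $A = B\limp B'$ in $\Gamma_\omega$: by clause (1), either $v:B(\vec C)\in\Delta_\omega$, so by IH(2) $B(\vec{\possext C})$ fails at $v$; or $v:B'(\vec C)\in\Gamma_\omega$, so by IH(1) $B'(\vec{\possext C})$ holds. In both cases the classical implication holds at $v$.
\item For $A = B\limp B'$ in $\Delta_\omega$: clause (2) together with IH(1) and IH(2) gives that the antecedent holds and the consequent fails.
\item For $\Box$ and $\blacksquare$: clauses (3)--(6) together with the IH at the successor/predecessor worlds suffice, since the accessibility relation of $\RRR$ is literally $\rels R_\omega$.
\end{itemize}

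The quantifier case is the main step. Let $A = \forall Y B(\vec X, Y)$.
\begin{itemize}
\item Suppose $v:\forall Y B(\vec C,Y) \in \Gamma_\omega$. To show $v\modelsM\RRR \forall Y B(\vec{\possext C},Y)$, take any $V\in\set W$. By definition of $\set W$, $V$ is a possible extension of \emph{some} formula $D$, i.e.\ $V \textends D$. By clause (7), $v:B(\vec C, D)\in\Gamma_\omega$. Apply IH(1) to the strictly smaller formula $B(\vec X,Y)$ with substitution vector $(\vec C, D)$ and possible extensions $(\vec{\possext C}, V)$; this yields $v\modelsM\RRR B(\vec{\possext C}, V)$, as required.
\item Suppose $v:\forall Y B(\vec C,Y)\in\Delta_\omega$. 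Clause (8) provides $D$ with $v:B(\vec C,D)\in\Delta_\omega$. Take $V := \minext D$, which is a possible extension of $D$ by \cref{LHS-RHS-disjoint-prf-search}, hence lies in $\set W$. IH(2) then gives $v\not\modelsM\RRR B(\vec{\possext C}, V)$, so the universal fails.
\end{itemize}

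I expect the main obstacle to be exactly this quantifier case. It requires $\set W$ to consist precisely of possible extensions, so that every set over which $\forall$ ranges is tied to a syntactic witness. It also requires the IH to be stated uniformly over all $\vec C$ and all compatible $\vec{\possext C}$, so that it applies to the new pair $(D,V)$. A secondary bookkeeping point is that substituting the sets $\possext C_j$ (as new propositional symbols of the expanded language) commutes with the syntactic decomposition of $A$; this is routine given that the $C_j$ are closed.
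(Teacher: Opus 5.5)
Your proof is correct and follows the paper's argument: induction on $A(\vec X)$, uniformly over all $\vec C$ and compatible $\vec{\possext C}$, using \cref{partial-val-props-classical} in each case. In the positive $\forall$ case you use that every $V \in \set W$ is a possible extension of some formula, and in the negative case you take $\minext D$ as the witness. You also treat the base case $A = P$ explicitly via \cref{LHS-RHS-disjoint-prf-search}, which the paper leaves implicit.
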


Before proving this, let us point out an immediate consequence.
Write $\modelsM\RRR \rels R\stoup \Gamma \seqar \Delta$ if $\rels R \subseteq \rels R_\omega$ and, for each $v:A \in \Gamma $ (or $v:A \in \Delta$) we have $v\modelsM\RRR A$ (or $v\not\modelsM\RRR A$, respectively). 
As a special case of \cref{compatibility} above we have:
\begin{proposition}
\label{R-is-countermodel}
$\not\modelsM\RRR \sequent_0$.
\end{proposition}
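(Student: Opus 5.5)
The statement to prove is \cref{R-is-countermodel}, $\not\modelsM\RRR \sequent_0$. I would obtain it as the closed-formula special case of \cref{compatibility}. Recall that $\modelsM\RRR \rels R_0 \stoup \Gamma_0 \seqar \Delta_0$ requires three things: $\rels R_0 \subseteq \rels R_\omega$, $v \modelsM\RRR A$ for every $v:A\in\Gamma_0$, and $v\not\modelsM\RRR A$ for every $v:A \in \Delta_0$. The plan is to show that $\RRR$ satisfies exactly this configuration. That is, $\RRR$ \emph{refutes} $\sequent_0$ in the sense of forcing its LHS and falsifying its RHS, so it is a countermodel to $\sequent_0$.

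The first step is the relational context. Since $\sequent_0 \subseteq \sequent_\omega$ by monotonicity of the macro rules, we have $\rels R_0 \subseteq \rels R_\omega$. By \cref{pre-structure}, $\rels R_\omega$ is precisely the accessibility relation of $\RRR$.

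The second step handles the cedents. Every formula in $\sequent_0$ is closed, so each $v:A \in \Gamma_0$ can be read as $A(\vec X)$ with $\vec X$ empty. The side condition $\vec{\possext C} \textends \vec C$ is then vacuous, and $A(\vec{\possext C}) = A$. Because $\Gamma_0 \subseteq \Gamma_\omega$, clause (1) of \cref{compatibility} gives $v\modelsM\RRR A$. Symmetrically, $\Delta_0\subseteq\Delta_\omega$ and clause (2) give $v \not\modelsM\RRR A$ for each $v:A\in\Delta_0$.

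The only point needing care is that the sequent judgement for $\RRR$ is meaningful as used here. In particular, propositional symbols $P$ occurring in $\sequent_0$ are interpreted by $\minext P$, which is itself a possible extension and so lies in $\setW$. This keeps the evaluation of closed formulas well defined without substituting for any variables. I expect no real obstacle beyond this: all the substance lies in \cref{compatibility}, and the present statement is just its instantiation at $n=0$ together with the inclusion $\sequent_0\subseteq\sequent_\omega$.
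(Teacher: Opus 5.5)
Your argument is correct and matches the paper's: the proposition is just the closed-formula ($n=0$) instance of \cref{compatibility}, combined with the inclusion $\sequent_0 \subseteq \sequent_\omega$ (including $\rels R_0 \subseteq \rels R_\omega$). One small remark: under the paper's literal definition of $\modelsM\RRR \rels R \stoup \Gamma \seqar \Delta$, what you establish is $\modelsM\RRR \sequent_0$, so the statement $\not\modelsM\RRR \sequent_0$ has to be read, as you implicitly do, as ``$\RRR$ forces $\Gamma_0$ and falsifies $\Delta_0$''; this is a notational slip in the paper, not a gap in your proof.
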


\begin{proof}
[Proof of \cref{compatibility}]
    By induction on $A(\vec X)$:
    \begin{itemize}
        \item Suppose $A(\vec X) = X$, and fix $C$ and $\possext C \textends C$. We have:
        \[
        \begin{array}{r@{\ \implies \ }ll}
        v:C \in \Gamma_\omega & v \in \possext C & \text{since $\possext C\textends C$} \\
            & v \modelsM\RRR \possext C & \text{by definition of $\modelsM\RRR$} \\
        \noalign{\smallskip}
        v:C \in \Delta_\omega & v \notin \possext C & \text{since $\possext C \textends C$} \\
            & v \not\modelsM\RRR \possext C & \text{by definition of $\modelsM\RRR$}
        \end{array}
        \]
        \item Suppose $A(\vec X) = A_0(\vec X) \limp A_1 (\vec X) $, and fix $\vec C$ and $\vec {\possext C} \textends \vec C$. We have:
        \[
        \begin{array}{r@{\ \implies \ }ll}
            v:A(\vec C) \in \Gamma_\omega & v:A_0(\vec C) \in \Delta_\omega \text{ or } v:A_1 (\vec C) \in \Gamma_\omega & \text{by \cref{partial-val-props-classical}} \\
                & v \not\modelsM \RRR A_0(\vec{ \possext C}) \text{ or } v \modelsM\RRR A_1(\vec {\possext C}) & \text{by IH} \\
                & v \modelsM\RRR A(\vec {\possext C}) & \text{by definition of $\modelsM\RRR$} \\
            \noalign{\smallskip}
            v:A(\vec C) \in \Delta_\omega & v:A_0(\vec C) \in \Gamma_\omega \text{ and } v:A_1(\vec C) \in \Delta_\omega & \text{by \cref{partial-val-props-classical}} \\
                & v\modelsM\RRR A_0 (\vec {\possext C}) \text{ and } v:\not\modelsM\RRR A_1(\vec {\possext C}) & \text{by IH} \\
                & v\not\modelsM \RRR A(\vec {\possext C}) & \text{by definition of $\modelsM\RRR$}
        \end{array}
        \]
        \item Suppose $A(\vec X) = \Box A'(\vec X)$, and fix $\vec C$ and $\vec {\possext C} \textends \vec C$. We have:
        \[
        \begin{array}{r@{\ \implies \ }ll}
        v:A(\vec C) \in \Gamma_\omega & \forall w \in \war \, ( v \rels R_\omega w \implies w:A'(\vec C) \in \Gamma_\omega  ) & \text{by \cref{partial-val-props-classical}} \\
            & \forall w \in \war\,  ( v\rels R_\omega w\implies w \modelsM\RRR A'(\vec {\possext C}) )& \text{by IH} \\
            & v \modelsM\RRR A(\vec {\possext C}) & \text{by definition of $\modelsM\RRR$}
        \\
        \noalign{\smallskip}
        v:A(\vec C) \in \Delta_\omega & \exists w \in \war\, (v \rels R_\omega w \text{ and } w:A'(\vec C) \in \Delta_\omega) & \text{by \cref{partial-val-props-classical}} \\
            & \exists w \in \war\, (v\rels R_\omega w \text{ and } w\not \modelsM\RRR A'(\vec {\possext C})) & \text{by IH} \\
            & v \not\modelsM \RRR A(\vec {\possext C}) & \text{by definition of $\modelsM \RRR$}
        \end{array}
        \]
        \item The case when $A(\vec X) = \blacksquare A'(\vec X)$ is similar to the one above.
        \item Suppose $A(\vec X) = \forall X A'(X , \vec X)$, and fix $\vec C$ and $\vec {\possext C} \textends \vec C$. We have:
        \[
        \begin{array}{r@{\ \implies \ }ll}
        v:A(\vec C) \in \Gamma_\omega & \forall C \in \Fmla \, v:A'(C,\vec C) \in \Gamma_\omega & \text{by \cref{partial-val-props-classical}} \\
            & \forall C \in \Fmla \, \forall \possext C \textends C \, v \modelsM \RRR A'(\possext C, \vec {\possext C}) & \text{by IH} \\
            & v \modelsM \RRR A(\vec {\possext C}) & \text{by definition of $\modelsM \RRR$} \\
        \noalign{\smallskip}
        v:A(\vec C) \in \Delta_\omega & \exists C \in \Fmla\, v:A'(C,\vec C) \in \Delta_\omega & \text{by \cref{partial-val-props-classical}} \\
            & \exists C\in \Fmla \, v\not\modelsM \RRR A'(\minext C , \vec {\possext C} ) & \text{by IH}\\
            & v\not\modelsM \RRR A(\vec {\possext C}) & \text{by definition of $\modelsM\RRR$} \qedhere
        \end{array}
        \]
    \end{itemize}
\end{proof}

\subsection{Putting it all together: comprehensivity via a total valuation}
In light of \cref{R-is-countermodel}, for our completeness result it remains to show that $\RRR$ is comprehensive.
Since $\RRR$ is already defined, we can do this in a somewhat backwards way:
\begin{definition}
    [Interpreting comprehension]
    Define $\extension C := \{ v  \in \war \, |\, v\modelsM \RRR C\}$.
\end{definition}

Now, \cref{compatibility} has another useful consequence:

\begin{proposition}
    $\extension C $ is a possible extension of $C$, for all formulas $C$.
\end{proposition}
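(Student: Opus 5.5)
This follows directly from \cref{compatibility}, applied to $C$ itself regarded as a formula with no substituted variables (i.e.\ $n=0$, or equivalently $A(X) := X$ is not what we want; rather we take $A := C$ with empty $\vec X$). One subtlety: $C$ here ranges over formulas of the expanded language, which may mention possible extensions as propositional symbols. I will first treat closed formulas $C$ of the original language, where the plan is immediate, and then address the expanded language.

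For a closed formula $C$ of the original language, we must check the two clauses of \cref{dfn:poss-ext-classical}. If $v:C\in\Gamma_\omega$ then, by \cref{compatibility}(1) with the empty vector of substitutions (vacuously $\vec{\possext C}\textends\vec C$), we get $v\modelsM\RRR C$, i.e.\ $v\in\extension C$. If $v:C\in\Delta_\omega$ then \cref{compatibility}(2) gives $v\not\modelsM\RRR C$, i.e.\ $v\notin\extension C$. Hence $\extension C\textends C$, and in particular $\extension C\in\set W$.

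For formulas of the expanded language, any such formula has the form $A(\vec{\possext C})$ for some formula $A(\vec X)$ of the original language and sets $\possext C_i\in\set W$, each a possible extension of some original formula $C_i$. I will claim that $\extension{A(\vec{\possext C})}$ is a possible extension of the original formula $A(\vec C)$: if $v:A(\vec C)\in\Gamma_\omega$ then \cref{compatibility}(1) yields $v\modelsM\RRR A(\vec{\possext C})$, and dually for $\Delta_\omega$. So $\extension{A(\vec{\possext C})}\in\set W$, which is exactly what comprehensivity needs.

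The only real obstacle is this bookkeeping about the expanded language: the sets in $\set W$ are not syntactic formulas, so one must observe that each such set is, by definition of $\set W$, a possible extension of \emph{some} original formula, and that substituting these witnesses back yields an original formula to which \cref{compatibility} applies. Everything else is a direct unfolding of definitions.
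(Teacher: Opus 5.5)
Your argument is correct and matches the paper's proof: instantiate \cref{compatibility} with $A := C$ and empty $\vec X$ and $\vec C$, and the two clauses of \cref{dfn:poss-ext-classical} for $\extension C$ are then exactly parts (1) and (2) of that lemma. Your extra treatment of expanded-language formulas is also correct: writing such a formula as $A(\vec{\possext C})$ with witnesses $\vec{\possext C}\textends\vec C$ shows that $\extension{A(\vec{\possext C})}$ is a possible extension of $A(\vec C)$, which makes explicit a step the paper leaves implicit when it passes to \cref{R-is-comprehensive}.
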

\begin{proof}
    Simply set $A(\vec X) = C$ and $\vec C = \emptyset = \vec X$ in \cref{compatibility}.
\end{proof}

\begin{corollary}
\label{R-is-comprehensive}
    $\RRR$ is comprehensive, and so is a relational model.
\end{corollary}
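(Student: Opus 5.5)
We need to show that for every closed formula $C$ of the expanded language, the set $\extension C = \{v \in \war \mid v \modelsM\RRR C\}$ belongs to $\set W$. Since $\set W$ was defined to contain every possible extension of every formula, it suffices to show that $\extension C$ is a possible extension of some formula.

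The immediately preceding proposition does this for formulas of the original language $\Fmla$. The remaining issue is that the expanded language also treats each set $V \in \set W$ as a propositional symbol, so $C$ may mention such parameters. I would handle this by writing such a $C$ as $C = A(\vec V)$, where $A(\vec X)$ is a formula with all set-parameters abstracted to variables $\vec X$. Each $V_i \in \set W$ is, by construction, a possible extension $\possext{C_i}$ of some $C_i \in \Fmla$. So $C = A(\vec{\possext C})$ with $\vec{\possext C} \textends \vec C$.

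Applying \cref{compatibility} to $A(\vec X)$ and $\vec C$ then gives two facts. First, if $v : A(\vec C) \in \Gamma_\omega$ then $v \in \extension{A(\vec{\possext C})}$. Second, if $v : A(\vec C) \in \Delta_\omega$ then $v \notin \extension{A(\vec{\possext C})}$. Hence $\extension C \textends A(\vec C)$, so $\extension C \in \set W$. This gives comprehensivity.

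The only subtle step is this parameter bookkeeping. Propositional symbols $P \in \Prop$ are interpreted by $\minext P$, which is itself a possible extension of $P$, so they need no special treatment. It remains to invoke the definition: a comprehensive relational structure is a relational model.
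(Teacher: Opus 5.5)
Your proof is correct and follows the paper's route: \cref{compatibility} shows that $\extension C$ is a possible extension, hence a member of $\set W$. The only difference favours you: the paper instantiates \cref{compatibility} with $\vec X = \emptyset$, which literally covers only parameter-free $C$, whereas your decomposition $C = A(\vec{\possext C})$ with $\vec{\possext C} \textends \vec C$ also handles the set-parameters of the expanded language, and comprehensivity is defined over that expanded language.
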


In the terminology of Sch\"utte, $\extension -$ corresponds to a \emph{total valuation} of formulas \cite[Definition~7.2]{Schutte1960:semivaluations}.
Note the crucial use of \emph{impredicativity} here: we are only able to define $\extension C$ in terms of a structure, $\RRR$, that includes all possible extensions, including $\extension C$ itself.

We have now established the main result of this section:

\begin{proof}
    [Proof of \cref{cut-free-completeness-classical}]
    By contraposition. Set $\sequent_0 := \cdot \stoup \cdot \seqar v:A$ throughout this section and conclude by \cref{R-is-comprehensive,R-is-countermodel}.
\end{proof}

\section{Completeness via proof search: the intuitionistic case}\label{sec:completeness}
We now turn to the intuitionistic case for cut-free completeness.
As we have presented the classical case in detail, we shall focus in this section on how the intuitionistic case deviates from the classical one.
Let us summarise the key points:
\begin{itemize}
    \item Not all rules of $\labIKtSO$ or $\mlIKtSO$ can be made invertible.
    For the former this is because we lack right structural rules and so, e.g., $\lr \limp$ is not invertible. 
    For the latter this is because the right logical rules are not invertible.
    \item As a result, instead of a proof search branch, we will construct a proof search \emph{tree}. This is unsurprising given the shape of intuitionistic structures, which are partially ordered.
    \item We will need to deal with limit sequents at each branching point of the countermodel construction, so our proof system must now work with genuinely \emph{infinite sequents}.
\end{itemize}

    The obtention of an appropriate proof search tree from which we can extract countermodels is facilitated by working directly with the calculus $\mlIKtSO$, instead of $\labIKtSO$.
    Cut-free completeness results for multi-succedent calculi have also appeared in \cite[pp.~52-59]{Takeuti1987:pt-book} for first-order intuitionistic predicate logic over Kripke models, and in \cite{prawitz1970some} for second-order intuitionistic predicate logic over Beth models.
    Note that, in our setting we furthermore recover cut-free completeness of the single-succedent calculus $\labIKtSO$ thanks to \cref{prop:multitosingle}, even over all birelational models thanks to \cref{prop:breltoprel}. 
 In summary, the main results of this section are:

\begin{theorem}
    [Intuitionistic cut-free completeness]\label{thm:completeness}
    $\forall \PP \, \modelsM\PP A \implies \mlIKtSO ~\setminus~\cut~\proves~A$.
\end{theorem}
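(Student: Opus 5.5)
We argue by contraposition, adapting the classical construction of \cref{sec:completeness-classical}. Fix an unprovable (in $\mlIKtSO\setminus\cut$) sequent $\sequent_\emptyset = \cdot\stoup\cdot\seqar v:A$. We grow a \emph{proof search tree} rather than a branch. Its nodes $\sequent_\sigma$ are possibly infinite sequents, which we only require to be \emph{finitely unprovable}: no finite subsequent is provable. Along each node we run the classical saturation process, but only with the invertible macro rules. These are the left macro rules of \cref{eq:mon-rules}, which are derivable in $\mlIKtSO$, as already noted. They are applied according to an adequate enumeration of activities as in \cref{adeq-enum-activ}, always choosing an unprovable premiss, and we take the union at stage $\omega$. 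The result is a \emph{saturated} limit sequent $\rels R_\sigma\stoup\Gamma_\sigma\seqar\Delta_\sigma$. It is left-closed under the analogues of the first, third, fifth and seventh clauses of \cref{partial-val-props-classical}, and the implication clause holds in the disjunctive form ``$v:A'\in\Delta_\sigma$ or $v:B\in\Gamma_\sigma$'', using the multi-succedent $\lr\limp$.

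For right-hand formulas $v:B\in\Delta_\sigma$ that are non-atomic (an implication, $\Box$, $\blacksquare$ or $\forall$), we create a child node. Consider the corresponding non-invertible right rule of $\mlIKtSO$, whose premiss must have singleton RHS. For instance, for $v:\Box B'$ the child is $\rels R_\sigma,vRw\stoup\Gamma_\sigma\seqar w:B'$ with $w$ fresh. This child is finitely unprovable: otherwise a finite part, together with the right rule and $\rr\wk$, would prove a finite part of $\sequent_\sigma$. We then saturate the child in turn. The freshness requirement forces us to reserve infinitely many unused world symbols and propositional symbols at each node. A standard trick is to index fresh symbols by tree positions. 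Children extend their parents, so $\Gamma$ and $\rels R$ grow along the tree order.

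The predicate structure $\PP$ is then defined as follows. The states $\States$ are the tree nodes, ordered by extension. The modal worlds are $\war$. We put $R^\sigma := \rels R_\sigma$; this is monotone. A \emph{possible extension} of $C$ is a monotone family $(\possext C^\sigma)_\sigma$ with $v:C\in\Gamma_\sigma\Rightarrow v\in\possext C^\sigma$ and $v:C\in\Delta_\sigma\Rightarrow v\notin\possext C^\sigma$. The family $\minext C^\sigma := \{v : v:C\in\Gamma_\sigma\}$ is one, and it is monotone since left cedents grow. The sets $\setW$ are all possible extensions of all formulas, and $P^\sigma := \minext P^\sigma$.

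The core is the intuitionistic version of \cref{compatibility}: for $\vec{\possext C}\textends\vec C$, we have $v:A(\vec C)\in\Gamma_\sigma\Rightarrow \sigma,v\modelsM\PP A(\vec{\possext C})$ and $v:A(\vec C)\in\Delta_\sigma\Rightarrow\sigma,v\not\modelsM\PP A(\vec{\possext C})$. This goes by induction on $A$. The left cases use saturation at every $\tau\geq\sigma$, and we must check that saturation at $\tau$ also treats formulas inherited from $\sigma$. The right cases use the child witnesses. For example, if $v:A_0\limp A_1\in\Delta_\sigma$, the child $\tau$ has $v:A_0\in\Gamma_\tau$ and $v:A_1\in\Delta_\tau$. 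For $\forall$ on the right, the child contains $v:A'(P)$ with $P$ fresh and we use $\minext P$. From compatibility, $\extension C^\sigma:=\{v:\sigma,v\modelsM\PP C\}$ is itself a possible extension: it is monotone by the monotonicity lemma for predicate structures. Hence $\PP$ is comprehensive and refutes $v:A$ at the root, as required.

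The main obstacle I expect is the careful formulation of the tree construction. Saturation must be compatible with persistence: a formula $v:A\limp B$ on the left of $\sigma$ must still be decomposed at every later node $\tau$, even though the right formula produced at $\tau$ is new. Fresh symbols must remain available after infinitely many steps. Finite unprovability must be preserved when passing to children and limits, which relies on the multi-succedent form of $\lr\limp$ combined with weakening. A secondary subtlety concerns the right-hand $\forall$ case. For $\neg(\sigma,v\models\forall X A')$ we need a witness above $\sigma$ refuting $A'$ at some set in $\setW$. The child's fresh $P$ gives this, provided $\minext P$ is a possible extension of $P$, which holds trivially.
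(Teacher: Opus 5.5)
Your overall route is the paper's: saturate the left cedent with the left macro rules, branch on each right formula of the limit, take tree nodes as states and all possible extensions as $\setW$, prove compatibility by induction, and read off comprehension from $\extension C$. The step that does not work as you describe it is the choice of eigenvariable for a $\rr\forall$ child.

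\textbf{Why reserving symbols fails.} The $\forall$-left case of compatibility quantifies over all $V\in\setW$, that is, over possible extensions of arbitrary formulas. So saturation must put $v:A'(C,\vec C)$ into $\Gamma_\tau$ for \emph{every} $C\in\Fmla$. Hence, as soon as some $v:\forall X A$ with $X$ free in $A$ is on the left, every $P\in\Prop$ occurs in $\Gamma_\sigma$. For example, take the child $\cdot\stoup v:\forall X(X\limp X)\seqar v:\forall X X$ of the unprovable $\cdot\stoup\cdot\seqar v:\forall X(X\limp X)\limp\forall X X$: its limit contains $v:P\limp P$ for all $P$. So no $P$ is fresh for $\sequent_\sigma$, and nothing is left to reserve. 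If you instead withhold reserved symbols from instantiation, the $\forall$-left case breaks. For $V\textends C$ with $C$ mentioning a reserved symbol, there is no instance in $\Gamma_\tau$ to apply the induction hypothesis to. Yet such $V$ must lie in $\setW$: you need $\minext P$ for the $\forall$-right case, and $\setW$ is the same at every state. The paper's own write-up passes over this point too, applying $\rr\forall$ to the infinite limit ``up to renaming of fresh symbols''. World symbols are harmless, since each $\sequent_\sigma$ mentions only finitely many.

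\textbf{The repair.} Demand freshness only for a finite core, and show that saturation is conservative.
\begin{itemize}
\item Let $\Theta_\sigma$ consist of the root's left cedent plus the formulas $v:A'$ introduced by $\rr\limp$ steps along the path to $\sigma$. Note that $\rels R_\sigma$ is finite anyway.
\item Claim: if $\rels R_\sigma\stoup\Gamma'\seqar w:C$ is provable for some finite $\Gamma'\subseteq\Gamma_\sigma$, then so is $\rels R_\sigma\stoup\Theta_\sigma\seqar w:C$. Prove this by induction along the construction, carrying extra side formulas and relational atoms across parent--child boundaries.
\item In that induction, formulas added by $\lr\forall$, $\lr\Box$ or $\lr\blacksquare$ are removed by applying the same rule.
\item A formula $v:B'$ added as the right premiss of $\lr\limp$ on $v:A'\limp B'$ was chosen only because the left premiss was finitely provable. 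By \cref{lem:multi-to-single} and unprovability of the conclusion, $v:A'$ itself is then provable from a finite part of the left cedent, and $\lr\limp$ discharges $v:B'$.
\item Now take $P$ fresh for $\rels R_\sigma\stoup\Theta_\sigma\seqar v:\forall Y B$. A finite proof of the child yields one of $\rels R_\sigma\stoup\Theta_\sigma\seqar v:B[P/Y]$. By $\rr\forall$ this proves a finite part of $\sequent_\sigma$, a contradiction.
\end{itemize}
Nothing else in your argument changes. The $\forall$-right case of compatibility only uses $v:B[P/Y]\in\Delta_\tau$ and $\minext P\textends P$, not freshness of $P$.
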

\begin{corollary}
    $\forall \B \, \modelsM\B A \implies \labIKtSO ~\setminus~\cut~\proves~A$.
\end{corollary}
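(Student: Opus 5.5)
The corollary follows by chaining results already established. Suppose $\forall \B\, \modelsM\B A$. By \cref{prop:breltoprel}, validity over all birelational models entails validity over all predicate models, so $\forall \PP\, \modelsM\PP A$. By \cref{thm:completeness} we then get $\mlIKtSO\setminus\cut \proves A$, that is, a cut-free $\mlIKtSO$ proof of $\cdot \stoup \cdot \seqar v:A$. Finally, \cref{prop:multitosingle} in its cut-free form gives $\labIKtSO\setminus\cut\proves A$.

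A little care is needed at the last step. \cref{lem:multi-to-single} produces \emph{some} labelled formula $w:B$ in the RHS with $\labIKtSO\setminus\cut \proves \cdot\stoup\cdot\seqar w:B$. Here the RHS is the singleton $v:A$, so that formula must be $v:A$ itself, which is exactly what $\labIKtSO\setminus\cut\proves A$ means.

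We should also confirm that the translation in \cref{lem:multi-to-single} introduces no cuts when the input proof is cut-free. Inspecting its cases, the only place a $\cut$ is introduced is the $\cut$ case itself. The other cases use only weakening and the same logical rule, so the cut-free reading holds.

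This argument has essentially no obstacle, since all the real work lies in \cref{thm:completeness}. The only point to check is that the notions of validity match across \cref{prop:breltoprel} and \cref{thm:completeness}. Both quantify over comprehensive structures of the respective kind, that is, over models. \cref{brcol-is-comp} guarantees that the collapse of a predicate model is a birelational model, and this is what \cref{prop:breltoprel} relies on.
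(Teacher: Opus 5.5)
Your proposal is correct and follows the paper's route exactly, namely the intuitionistic grand tour: pass from birelational to predicate validity by \cref{prop:breltoprel}, obtain a cut-free $\mlIKtSO$ proof by \cref{thm:completeness}, and finish with the cut-free form of \cref{prop:multitosingle}. Your two checks are sound and make explicit what the paper leaves implicit. First, the singleton RHS forces the formula given by \cref{lem:multi-to-single} to be $v:A$ itself. Second, that translation introduces a $\cut$ only in the $\cut$ case.
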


The remainder of this section is structured as the previous one.

\subsection{Setting up proof search}
We now construe $\mlIKtSO$ over (possibly) infinite sequents $\rels R \stoup \Gamma \seqar \Delta$, where some/all of $\rels R, \Gamma, \Delta$ may be (countably) infinite.
We say that such a sequent is \textbf{provable} in, or a \textbf{theorem} of, $\mlIKtSO$ if there are finite $\rels R'\subseteq \rels R$, $\Gamma' \subseteq \Gamma$ and $\Delta' \subseteq \Delta$ such that $\rels R' \stoup \Gamma' \seqar \Delta'$ is provable in $\mlIKtSO$, in the usual sense.
Note that infinitude of sequents does not affect the fact that inference steps on them are still determined by activity of finite data, cf.~\cref{adeq-enum-activ}.

All the discussion of \cref{sec:setting-up-prf-srch} still applies when building proofs of $\mlIKtSO$, however we shall consider only the left macro rules of \cref{eq:mon-rules}, as the right ones are not derivable in $\mlIKtSO$.
We thus restrict our enumeration of activities from \cref{adeq-enum-activ} to ones with left principal formulas.
Our proof search algorithm will now alternate between two phases: the \emph{LHS phase}, applying left rules, and the \emph{RHS phase}, applying right rules.
We will again recast proof search as a countermodel construction, building a predicate structure where the LHS phase determines interpretations \emph{within} an intuitionistic world and the RHS phase determines a tree structure inducing the corresponding partial order.

\smallskip

A bit more formally, the \textbf{LHS phase} initiates at some (possibly infinite) sequent $\sequent_0$ unprovable in $\mlIKtSO$ and is defined just like the proof search branch in \cref{prf-srch-brnch}, producing a branch $(\sequent_i)_{i<\omega}$ of unprovable sequents. 
That is, at the $i$\textsuperscript{th} stage it applies the step with activity $\alpha_i$ and conclusion $\sequent_i$ (if it exists), setting $\sequent_{i+1}$ to be some unprovable premiss (or $\sequent_i$, respectively).
Just like in \cref{sec:prf-srch-brnch} this phase is monotone: $i\leq j \implies \sequent_i \subseteq \sequent_j$. Thus the steps applied are also invertible, by weakening.
Writing again $\sequent_\omega = \bigcup\limits_{i<\omega}\sequent_i$, note that, if $\sequent_\omega$ were provable then so also would be $\sequent_i$ for some $i<\omega$, by definition of provability of infinite sequents, so we have:

\begin{observation}
\label{limit-stages-remain-unprovable}
    If $\mlIKtSO\not \proves \sequent_0$ and $\sequent_\omega$ is the limit of the LHS phase from $\sequent_0$, then also $\mlIKtSO \not \proves \sequent_\omega$.
\end{observation}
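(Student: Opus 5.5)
We argue by contraposition, directly from the definition of provability for infinite sequents together with the monotonicity of the LHS phase. Suppose for contradiction that $\mlIKtSO \proves \sequent_\omega$, where $\sequent_\omega = \rels R_\omega \stoup \Gamma_\omega \seqar \Delta_\omega = \bigcup_{i<\omega}\sequent_i$. By definition there are finite $\rels R' \subseteq \rels R_\omega$, $\Gamma' \subseteq \Gamma_\omega$ and $\Delta' \subseteq \Delta_\omega$ such that $\rels R' \stoup \Gamma' \seqar \Delta'$ has a (finite) proof in $\mlIKtSO$ in the usual sense.

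The key step is to locate this finite data at a single finite stage. Each relational atom of $\rels R'$ and each labelled formula of $\Gamma'$, $\Delta'$ lies in $\sequent_i$ for some $i$, by definition of the union. Since there are only finitely many such items, we can take $n$ to be the maximum of these indices. Because the LHS phase is monotone, i.e.\ $i\leq j \implies \sequent_i\subseteq\sequent_j$, all of $\rels R'$, $\Gamma'$, $\Delta'$ are then contained in $\sequent_n$.

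It follows that $\sequent_n$ has finite subsets witnessing its provability, namely $\rels R'$, $\Gamma'$, $\Delta'$ themselves. So $\sequent_n$ is provable according to the definition of provability for (possibly infinite) sequents. If $\sequent_n$ is finite, one may equivalently say that the finite proof, followed by left and right weakenings and by padding the relational context (which no rule forbids, since contexts are arbitrary), proves $\sequent_n$.

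This contradicts the invariant of the LHS phase. By induction on $i$, every $\sequent_i$ is unprovable: $\sequent_0$ is unprovable by assumption, and $\sequent_{i+1}$ is either $\sequent_i$ or a chosen unprovable premiss of a step with conclusion $\sequent_i$. The only point requiring any care is that existence of such an unprovable premiss uses soundness of the rules for the infinitary notion of provability. This holds because any finite proofs of all premisses use finitely many formulas, and the finite rule instance can then be applied to a finite subsequent of the conclusion, after weakening. This is routine, so there is no real obstacle; the whole argument is essentially the compactness built into the definition.
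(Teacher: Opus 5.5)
Your proof is correct and is essentially the paper's argument: by definition a proof of $\sequent_\omega$ is witnessed by a finite subsequent, which by monotonicity already lies inside some $\sequent_n$, contradicting the unprovability of $\sequent_n$ that the construction maintains. Your extra check of that invariant is fine, but padding relational contexts throughout a proof is not quite free: the freshness conditions of $\rr\Box$ and $\rr\blacksquare$ can be violated, so you first need to rename eigenlabels, which is routine.
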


The \textbf{RHS phase} initiates at a (possibly infinite) sequent $\sequent = \rels R \stoup \Gamma \seqar \Delta$ unprovable in $\mlIKtSO$ and simply applies, bottom-up, a single right logical step of $\mlIKtSO$.
Such a step is determined by the choice of a principal formula in $\Delta$ (up to renaming of fresh symbols).
Again, the (only) premiss must remain unprovable.
Note that, by inspection of the right logical steps, the RHS phase always ends at a sequent $\sequent' = \rels R' \stoup \Gamma' \seqar \Delta'$ where $\Delta'$ is a \emph{singleton}.
Note that the RHS phase is \emph{not} monotone for the RHS, but remains monotone for the LHS, i.e.\ we still have
 $\rels R' \supseteq \rels R$ and $\Gamma'\supseteq \Gamma$ but not, in general, $\Delta' \supseteq \Delta$.
 Thus it is not, in general, invertible.

\subsection{The proof search tree}
As already mentioned, branching in our proof search tree will occur in the RHS phase. 
Since the RHS phase is determined by a choice of principal labelled formula in the RHS, we shall name the nodes of our tree structure accordingly.
Write $\sigma, \tau,$ etc.\ to vary over $\Baire$. 
We shall write $::$ for concatenation of finite sequences.
Write $\sqsubseteq$ for the prefix order on $\Baire$, i.e.\ $\sigma \sqsubseteq \tau$ if $\tau $ can be written as $ \sigma :: \sigma'$.  
Of course, $\sqsubseteq$ is indeed a partial order.

For the remainder of this section we fix a (possibly infinite) sequent $\sequent^\epsilon_0 = \rels R^\epsilon_0 \stoup \Gamma^\epsilon_0 \seqar \Delta^\epsilon_0$  that is unprovable in $\mlIKtSO$.

\begin{definition}
    [Proof search tree]
    \label{dfn:prf-srch-tree}
    The \textbf{proof search tree} $\mathfrak S $ consists of a tree $ T \subseteq \Baire$ and sequents $\{\sequent^\sigma_i = \rels R^\sigma \stoup \Gamma^\sigma_i \seqar \Delta^\sigma_i\}_{\sigma \in T, i<\omega}$
defined as follows:
\begin{itemize}
    \item For any sequent $\sequent^\sigma_0$, apply the LHS phase to construct a chain $\sequent^\sigma_0 \subseteq \sequent^\sigma_1 \subseteq \cdots$ of unprovable sequents. 
    Define $\sequent^\sigma = \rels R^\sigma \stoup \Gamma^\sigma \seqar \Delta^\sigma $ to be the limit $\sequent^\sigma_\omega$ of this chain (which again must be unprovable, cf.~\cref{limit-stages-remain-unprovable}).
    \item For each formula $v:A \in \Delta^\sigma$, $\sigma$ has a child $\sigma ::(v:A)$ in $T$.
    We set $\sequent^{\sigma :: (w:A)}_0$ to be the (unique) premisse of the (unique, up to renaming of fresh symbols) inference step with conclusion $\sequent^\sigma$ and principal formula $v:A$.
\end{itemize}    
\end{definition}

Once again, since $\mathfrak S$ has been constructed to include only unprovable sequents,  like \cref{LHS-RHS-disjoint-prf-search} in the classical case we importantly have:

\begin{proposition}
\label{LHS-RHS-disjoint-prf-search-intuitionistic}
   $\Gamma^\sigma \cap \Delta^\sigma = \emptyset$, for all $\sigma \in T$.
\end{proposition}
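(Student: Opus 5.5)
The plan is to mirror the proof of \cref{LHS-RHS-disjoint-prf-search} and argue by contradiction. Fix $\sigma \in T$ and suppose some labelled formula $v:A$ lies in both $\Gamma^\sigma$ and $\Delta^\sigma$. The aim is to show that $\sequent^\sigma$ is then provable in $\mlIKtSO$. This contradicts the construction in \cref{dfn:prf-srch-tree}, where every $\sequent^\sigma$ is the limit of an LHS phase started from an unprovable sequent, and so is itself unprovable by \cref{limit-stages-remain-unprovable}.

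The first thing to check is that every sequent in the tree really is unprovable. I would do this by induction on the length of $\sigma$. The root $\sequent^\epsilon_0$ is unprovable by assumption. If $\sequent^\sigma_0$ is unprovable, then each $\sequent^\sigma_i$ is unprovable: the LHS phase only ever passes to an unprovable premiss, or leaves the sequent unchanged. Hence $\sequent^\sigma$ is unprovable by \cref{limit-stages-remain-unprovable}. For a child $\sigma::(v:A)$, the sequent $\sequent^{\sigma::(v:A)}_0$ is the unique premiss of a right logical step whose conclusion is the unprovable sequent $\sequent^\sigma$. If that premiss were provable, applying the step would make $\sequent^\sigma$ provable. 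For infinite sequents we first take the finite witnessing subsequent and then add the principal formula, which is finite data. So the premiss is unprovable, and the induction goes through.

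Next, suppose $v:A \in \Gamma^\sigma \cap \Delta^\sigma$. Take the finite sequent $\rels R' \stoup v:A \seqar v:A$ with $\rels R' = \emptyset$ (or any finite subset of $\rels R^\sigma$). It is an instance of $\id$, which is a rule of $\mlIKtSO$, with empty relational context. By the definition of provability for possibly infinite sequents, namely the existence of a provable finite subsequent, $\sequent^\sigma$ is therefore provable. This is the desired contradiction.

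I expect no real obstacle here. The only points needing care are the two just handled: that $\id$ has an arbitrary (here empty) relational context, and that finiteness is respected when lifting unprovability from the conclusion of a right step to its premiss.
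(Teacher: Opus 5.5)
Your proof is correct and follows the paper's argument. As in the classical case, every sequent in the proof search tree is unprovable by construction, so any $v:A \in \Gamma^\sigma \cap \Delta^\sigma$ gives the provable finite subsequent $\cdot \stoup v:A \seqar v:A$, an $\mathsf{id}$ instance, contradicting the unprovability of $\sequent^\sigma$. The paper builds unprovability of every $\sequent^\sigma$ into the tree's definition, and your induction on $\sigma$ makes that step explicit.
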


\subsection{Towards a countermodel: a pre-structure from proof search}
Just like in \cref{sec:pre-structure-classical}, the proof search tree $\mathfrak S$ we constructed gives rise to a `pre-structure', i.e.\ a predicate structure lacking a domain of sets (and thus also an interpretation of propositional symbols):

\begin{definition}
    [Pre-structure]
We define the predicate structure $\pre \PP$ by:
\begin{itemize}
    \item The set of intuitionistic worlds is just $T\subseteq \Baire$.
    \item The partial order is just the restriction of $\sqsubseteq $ to $T$.
    \item The set of modal worlds is just $\war$.
    \item The accessibility relation at $\sigma \in T$ is just $\rels R^\sigma$, i.e.\ $v\rels R^\sigma w$ if $vRw \in \rels R^\sigma$.
\end{itemize}
\end{definition}

Note that $\sigma \sqsubseteq \tau \implies \rels R^\sigma \subseteq \rels R^\tau$, by monotonicity in the LHS and RHS phases.

\subsection{Extracting a partial valuation}
This part of the argument is similar to the classical case, in \cref{sec:partial-val-classical}, instead establishing bespoke local properties compatible with our predicate semantics:
\begin{proposition}
    [Partial valuation]
    \label{t-induces-partial-val}
    We have the following:
    \begin{enumerate}
        \item $v:A\limp B \in \Gamma^\sigma \implies \forall \tau \sqsupseteq \sigma\, ( v : A \in \Delta^\tau \text{ or } v : B \in \Gamma^\tau )$
        \item $v: A \limp B \in \Delta^\sigma \implies \exists \tau \sqsupseteq \sigma \, (v:A \in \Gamma^\tau \text{ and } v:B \in \Delta^\tau )$
        \item $v: \Box A \in \Gamma^\sigma \implies \forall \tau \sqsubseteq \sigma\,  \forall w \in \war\, ( v  \rels R^\tau w \implies w:A \in \Gamma^\tau) $
        \item $v: \Box A \in \Delta^\sigma \implies \exists \tau \sqsupseteq \sigma \, \exists w \in \war\, (v \rels R^\tau w \text{ and } w:A \in \Delta^\tau)$
        \item $v: \blacksquare A \in \Gamma^\sigma \implies \forall\tau \sqsupseteq \sigma \, \forall u \in \war\, (u \rels R^\tau v \implies u:A \in \Gamma^\tau)$
        \item $v: \blacksquare A \in \Delta^\sigma \implies \exists \tau\sqsupseteq \sigma \, \exists u \in \war \, (u \rels R^\tau v \text{ and } u:A \in \Delta^\tau) $
        \item $v: \forall X A \in \Gamma^\sigma \implies \forall \tau \sqsupseteq \sigma\,  \forall C \in \Fmla\ v: A[C/X] \in \Gamma^\tau$
        \item $v:\forall X A \in \Delta^\sigma \implies \exists \tau \sqsupseteq \sigma \, \exists C \in \Fmla\ v: A[C/X] \in \Delta^\tau$
    \end{enumerate}
\end{proposition}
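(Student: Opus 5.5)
The plan is to prove each item separately. The left items (1, 3, 5, 7) will follow from persistence of the LHS up the tree together with fairness of the LHS phase. The right items (2, 4, 6, 8) will follow from the choice of children in \cref{dfn:prf-srch-tree}. I read item~(3) as quantifying over $\tau \sqsupseteq \sigma$, like the other universal items; with $\tau \sqsubseteq \sigma$ it would not be the intended property, so I take $\sqsubseteq$ there to be a typo.

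The first step is a persistence lemma: if $\sigma \sqsubseteq \tau$ then $\rels R^\sigma \subseteq \rels R^\tau$ and $\Gamma^\sigma \subseteq \Gamma^\tau$. I will prove it by induction on the length of $\tau$ beyond $\sigma$. The LHS phase is monotone on $\rels R$ and $\Gamma$, so $\sequent^\sigma_0 \subseteq \sequent^\sigma$. A single RHS step from $\sequent^\sigma$ produces $\sequent^{\sigma::(v:A)}_0$ whose relational context and LHS contain those of $\sequent^\sigma$, by inspection of $\rr\limp$, $\rr\Box$, $\rr\blacksquare$ and $\rr\forall$. I will also note that the left macro rules never change the relational context, so $\rels R^\tau$ is already present at stage $0$ of the LHS phase at $\tau$.

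For the left items, fix $\tau \sqsupseteq \sigma$. By persistence the principal labelled formula, say $v:A\limp B$, lies in $\Gamma^\tau$, hence in some finite stage $\Gamma^\tau_i$. For the modal cases the relevant atom, $vRw$ or $uRv$, lies in $\rels R^\tau$. By \cref{adeq-enum-activ}, restricted to left activities, the activity with this principal formula and the relevant auxiliary datum occurs at some stage $j \geq i$ of the LHS phase at $\tau$. The auxiliary datum is the atom $vRw$ (resp.\ $uRv$) in the modal cases and the witness $C\in\Fmla$ in the $\forall$ case. At that stage the macro step applies, and the chosen unprovable premiss adds the auxiliary formula to $\Gamma^\tau_{j+1}$ or, only for $\lr\limp$, adds $v:A$ to $\Delta^\tau_{j+1}$. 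Within the LHS phase $\Delta$ only grows, so the added formula survives into the limit $\sequent^\tau$. This gives items (1), (3), (5) and (7); item (7) holds for every $C$, since each $C$ yields its own activity.

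For the right items, suppose for instance $v:A\limp B \in \Delta^\sigma$. By \cref{dfn:prf-srch-tree}, $\tau := \sigma :: (v:A\limp B) \in T$ and $\sequent^\tau_0 = \rels R^\sigma \stoup \Gamma^\sigma, v:A \seqar v:B$. Monotonicity of the LHS phase then gives $v:A \in \Gamma^\tau$ and $v:B \in \Delta^\tau$. The $\Box$, $\blacksquare$ and $\forall$ cases are identical, with witnesses the fresh world $w$ (resp.\ $u$), where $vRw \in \rels R^\tau$ (resp.\ $uRv$), and the fresh propositional symbol $P$ taken as $C$.

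The main obstacle is the fairness bookkeeping in the left items. The enumeration of activities must range over auxiliary data, namely worlds and comprehension instances $C$, not yet present at the start of the LHS phase at $\tau$, and it must still hit each relevant activity after the principal formula and relational atom have appeared. This is ensured by each activity occurring infinitely often and by everything relevant appearing at a finite stage. I would make this explicit by checking that the activity set is countable and independent of the node.
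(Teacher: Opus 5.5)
Your proof is correct and follows the paper's argument. The left items use persistence of $\rels R$ and $\Gamma$ up the tree together with fairness of the LHS phase at each $\tau \sqsupseteq \sigma$, and the right items use the designated child $\sigma :: (v:A)$ with its fresh world or eigenvariable as witness. The paper's own proof also reads item (3) with $\tau \sqsupseteq \sigma$, so your typo reading is confirmed, and your explicit re-application of fairness at $\tau$ is exactly what the paper's terse reference to ``the corresponding $\lr\limp$ step'' intends.
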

\begin{proof}
We consider each case separately:
    \begin{enumerate}
        \item Let $\tau \sqsupseteq \sigma$. By monotonicity we have $v:A\limp B \in \Gamma^\tau$, and so either $v:A \in \Delta^\tau$ or $v:B \in \Gamma^\tau$, depending on which direction $\Gamma^\sigma $ takes at the corresponding $\lr \limp $ step.
        \item If $v:A\limp B \in \Delta^\sigma$ then we can just set $\tau = \sigma :: (v:A\limp B)$.
        \item Let $\tau \sqsupseteq \sigma $ and $v \rels R^\tau w$. By monotonicity we have $v:\Box A \in \Gamma^\tau$, and so also $w:A \in \Gamma^\tau$ by the corresponding $\lr \Box $ step.
        \item If $v:\Box A \in \Delta^\sigma$, we can just set $\tau = \sigma ::(v:\Box A)$ and $w$ the fresh world variable of the corresponding $\rr \Box $ step.
        \item (Similar to (3)).
        \item (Similar to (4)).
        \item Let $\tau \sqsupseteq \sigma $ and $C \in \Fmla$. By monotonicity we have $v:\forall X A \in \Gamma^\tau$, and so also $v:A[C/X] \in \Gamma^\tau$ by the corresponding $\lr \forall $ step.
        \item If $v:\forall X A \in \Delta^\sigma$, we can just set $\tau = \sigma :: (v:\forall X A)$ and $C$ the propositional eigenvariable of the corresponding $\rr \forall $ step. \qedhere
    \end{enumerate}
\end{proof}

\subsection{A compatible countermodel via possible values}
We continue to adapt the machinery of \cref{sec:completeness-classical} to the intuitionistic setting.
Recall that we have already fixed an unprovable sequent $\sequent^\epsilon_0$ and extended it to the proof search tree $\mathfrak S $, cf.~\cref{dfn:prf-srch-tree}, in particular including the limit sequents $ (\sequent^\sigma)_{\sigma \in T}$.
We duly adapt \cref{dfn:poss-ext-classical} to the intuitionistic setting:

\begin{definition}
    [Possible extensions]
    \label{def:possible-extensions}
    A \textbf{possible extension} of a formula $C$ (wrt $\mathfrak S$) is a family $\possext C = \{\possext  C^\sigma\subseteq \war \}_{\sigma \in T}$ such that:
    \begin{itemize}
        \item $\sigma \sqsubseteq \tau \implies \possext C^\sigma \subseteq \possext C^\tau$; and,
        \item $v:C \in \Gamma^\sigma \implies v \in \possext C^\sigma$; and,
        \item $v:C \in \Delta^\sigma \implies v \notin \possext C^\sigma$.
    \end{itemize}
    Let us write $\possext C \geq_{\mathfrak S} C$ if $\possext C$ is a possible extension of $C$, wrt $\mathfrak S$.
\end{definition}

Note that every formula $C$ still admits a \textbf{minimal (possible) extension} $\minext C$ with $\minext C^\sigma := \{v \in \war \ | \ v:C \in \Gamma^\sigma\}$.
Note that monoticity wrt $\sqsubseteq$ is inherited from monotonicity of LHSs during proof search: if $\sigma \sqsubseteq\tau $ then $ \Gamma^\sigma \subseteq \Gamma^\tau$, and so $\minext C^\sigma \subseteq \minext C^\tau$, for any formula $C$.
Again we obtain a predicate structure whose sets include all possible extensions:

\begin{definition}
    [Countermodel]
    We expand $\pre \PP$ into a structure $\PP$ by including the following missing data:
    \begin{itemize}
        \item The class $\set W$ of modal predicates include all possible extensions all formulas. 
        For each extension $\possext C$, the interpretation $\possext  C^\sigma$ is as defined in \cref{def:possible-extensions}.
        \item We identify each $P \in \Prop$ with the possible extension $\minext P$.
    \end{itemize}
\end{definition}

Once again, we need a key compatibility result:

\begin{lemma}
[Compatibility]
\label{sat-compatible-tree}
    For formulas $A(\vec X)$ and $\vec C$ we have:
    \begin{enumerate}
        \item $v : A(\vec C) \in \Gamma^\sigma \implies \sigma, v \modelsM \PP A(\vec {\possext C})$ whenever $\vec {\possext C} \textends \vec C$.
        \item $ v : A(\vec C) \in \Delta^\sigma \implies \sigma, v \not\modelsM \PP A(\vec{\possext C})$ whenever $\vec {\possext C} \textends \vec C$.
    \end{enumerate}
\end{lemma}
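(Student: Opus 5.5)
We prove both items simultaneously by induction on the structure of $A(\vec X)$, for all $\sigma \in T$, $v \in \war$, all $\vec C$ and all $\vec{\possext C} \textends \vec C$ at once. The argument mirrors the proof of \cref{compatibility}. We now use the local properties of \cref{t-induces-partial-val} instead of \cref{partial-val-props-classical}, and the predicate-structure clauses of \cref{dfn:pred-structures} instead of the unirelational ones. Throughout we use two monotonicity facts. First, $\sigma \sqsubseteq \tau$ implies $\Gamma^\sigma \subseteq \Gamma^\tau$ and $\rels R^\sigma \subseteq \rels R^\tau$. Second, possible extensions are monotone along $\sqsubseteq$ by \cref{def:possible-extensions}.

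\textbf{Base cases.} For $A(\vec X) = X$ the claim is immediate from the second and third conditions in \cref{def:possible-extensions}, exactly as in the classical case. For $A = P \in \Prop$ we have $v:P \in \Gamma^\sigma$ iff $v \in \minext P^\sigma$, so item (1) holds. For item (2), \cref{LHS-RHS-disjoint-prf-search-intuitionistic} gives $v:P \notin \Gamma^\sigma$ whenever $v:P \in \Delta^\sigma$, hence $v \notin \minext P^\sigma$.

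\textbf{LHS (universal) cases.} Suppose $v : A_0(\vec C) \limp A_1(\vec C) \in \Gamma^\sigma$, and take $\tau \sqsupseteq \sigma$ with $\tau, v \modelsM\PP A_0(\vec{\possext C})$. By \cref{t-induces-partial-val}(1), either $v:A_0(\vec C) \in \Delta^\tau$ or $v:A_1(\vec C) \in \Gamma^\tau$. The IH at $\tau$ rules out the first alternative, since it would give $\tau, v \not\modelsM\PP A_0(\vec{\possext C})$. The second alternative then gives $\tau, v \modelsM\PP A_1(\vec{\possext C})$. The cases for $\Box$ and $\blacksquare$ go the same way. Any $\tau \sqsupseteq \sigma$ with $v \rels R^\tau w$ satisfies $w:A'(\vec C) \in \Gamma^\tau$ by item (3) or (5), and the IH applies. (Item (3) is misprinted with $\tau \sqsubseteq \sigma$; its proof shows $\tau \sqsupseteq \sigma$ is meant.) For $\forall X A'$ on the left, take $\tau \sqsupseteq \sigma$ and $\possext D \in \set W$. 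By construction $\possext D$ is a possible extension of some formula $D$, so $\possext D \textends D$. Item (7) gives $v:A'(D, \vec C) \in \Gamma^\tau$, and the IH with the extended vector $(\possext D, \vec{\possext C}) \textends (D, \vec C)$ gives $\tau, v \modelsM\PP A'(\possext D, \vec{\possext C})$. This is where the impredicative design of $\set W$ is used: every set in the domain is tagged by some formula it is a possible extension of.

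\textbf{RHS (existential) cases.} For $\limp$, item (2) gives $\tau \sqsupseteq \sigma$ with $v:A_0(\vec C) \in \Gamma^\tau$ and $v:A_1(\vec C) \in \Delta^\tau$, and the IH at $\tau$ refutes $v : A_0 \limp A_1$ at $\sigma$. The $\Box$ and $\blacksquare$ cases use items (4) and (6) in the same way, producing a $\tau$ and a witness world. For $\forall X A'$, item (8) gives $\tau \sqsupseteq \sigma$ and $C$ (in fact an eigenvariable $P$) with $v:A'(C,\vec C) \in \Delta^\tau$. We instantiate with $\minext C \in \set W$, which is a possible extension of $C$: monotonicity holds, and disjointness follows from \cref{LHS-RHS-disjoint-prf-search-intuitionistic}. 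The IH then gives $\tau, v \not\modelsM\PP A'(\minext C, \vec{\possext C})$, refuting the universal at $\sigma$.

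\textbf{Main obstacle.} We expect the only delicate points to be these. First, the induction must be on $A(\vec X)$ with the substituted formulas $\vec C$ arbitrary, so that the quantifier case can extend $\vec X$ without the formula growing. Second, in the substitution we must consistently identify $A(\vec{\possext C})$ with $A'(\possext D, \vec{\possext C})$; this needs renaming of the bound $X$ away from the $\vec C$, which we gloss as the paper does. Third, we must check that every element of $\set W$, including the propositional symbols identified with $\minext P$, really is a possible extension of some formula, so that the LHS quantifier case can invoke the IH.
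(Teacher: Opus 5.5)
Your proposal is correct and follows essentially the same route as the paper: a simultaneous induction on $A(\vec X)$ over all $\sigma$ and all substitutions $\vec C$ with $\vec{\possext C} \textends \vec C$, using the partial valuation properties together with the induction hypothesis at extensions $\tau \sqsupseteq \sigma$, and instantiating with $\minext C$ in the right $\forall$ case. Your explicit checks fill in points the paper leaves implicit: the $P \in \Prop$ base case via disjointness of $\Gamma^\sigma$ and $\Delta^\sigma$, the fact that every element of $\set W$ is a possible extension of some formula (needed in the left $\forall$ case), and the $\sqsubseteq$/$\sqsupseteq$ typo in item (3).
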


Once again, before proving this, let us point out an immediate consequence.
Write $\sigma \modelsM\PP \rels R \stoup \Gamma \seqar \Delta$ if $\rels R \subseteq \rels R^\sigma$ and, for each $v:A \in \Gamma$ (or $v:A \in \Delta$) we have $\sigma, v \modelsM\PP A$ (or $\sigma,v \not\modelsM\PP A$, respectively). 
As a special case of the \cref{sat-compatible-tree} above we have:
\begin{proposition}
\label{P-is-countermodel}
    $\epsilon \not\modelsM\PP \sequent^\epsilon_0$.
\end{proposition}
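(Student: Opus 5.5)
The statement to prove is \cref{P-is-countermodel}, i.e.\ $\epsilon \not\modelsM\PP \sequent^\epsilon_0$. I would derive it directly from \cref{sat-compatible-tree}, instantiated with no free variables, together with a check of the relational component. This mirrors how \cref{R-is-countermodel} follows from \cref{compatibility} in the classical case.

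First, unfold the definition of $\epsilon \modelsM\PP \rels R \stoup \Gamma \seqar \Delta$. We must show all of the following:
\begin{itemize}
\item $\rels R^\epsilon_0 \subseteq \rels R^\epsilon$;
\item $\epsilon, v \modelsM\PP A$ for each $v:A \in \Gamma^\epsilon_0$;
\item $\epsilon, v\not\modelsM\PP A$ for each $v:A\in\Delta^\epsilon_0$.
\end{itemize}
The relational inclusion holds because the LHS phase starting from $\sequent^\epsilon_0$ is monotone, giving $\sequent^\epsilon_0 \subseteq \sequent^\epsilon_\omega = \sequent^\epsilon$. The accessibility relation of $\PP$ at the root $\epsilon$ is by definition $\rels R^\epsilon$. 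The same monotonicity gives $\Gamma^\epsilon_0 \subseteq \Gamma^\epsilon$ and $\Delta^\epsilon_0 \subseteq \Delta^\epsilon$.

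Second, take any $v:A\in \Gamma^\epsilon_0$. Then $v:A \in \Gamma^\epsilon$. Apply \cref{sat-compatible-tree}(1) with $A(\vec X) := A$, where $\vec X$ and $\vec C$ are empty. The side condition on possible extensions is then vacuous, so we get $\epsilon, v\modelsM\PP A$. Symmetrically, for $v:A\in\Delta^\epsilon_0$ we have $v:A\in\Delta^\epsilon$. Applying \cref{sat-compatible-tree}(2) gives $\epsilon,v\not\modelsM\PP A$.

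One point needs a short justification: the formulas in $\sequent^\epsilon_0$ may contain propositional symbols $P$, which in $\PP$ are identified with $\minext P$. This is consistent with the lemma's semantics, since $\minext P$ is itself a possible extension of $P$. So the closed-formula instance of the lemma is exactly the evaluation in $\PP$. The real content lies in \cref{sat-compatible-tree}, proved by induction on $A(\vec X)$ using \cref{t-induces-partial-val} and monotonicity of possible extensions. Its hardest case is the $\limp$ and $\forall$ clauses on the LHS, which must hold at all $\tau\sqsupseteq\sigma$. Given the lemma, the proposition is immediate, and the only step needing care is confirming that the root limit sequent contains $\sequent^\epsilon_0$.
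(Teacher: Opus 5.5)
Your proposal is correct and follows the paper's route exactly: the paper also obtains \cref{P-is-countermodel} as the special case of \cref{sat-compatible-tree} with $\vec X,\vec C$ empty, applied via the inclusion $\sequent^\epsilon_0\subseteq\sequent^\epsilon$ that you check. One caveat, which does not affect your argument: the three conditions you verify are what the paper's definition literally calls $\epsilon\modelsM\PP\sequent^\epsilon_0$, so the $\not\modelsM\PP$ in the statement has to be read as ``$\PP$ refutes $\sequent^\epsilon_0$ at $\epsilon$''; this notational slip is already in the paper.
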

\begin{proof}
[Proof of \cref{sat-compatible-tree}]
    By induction on $A(\vec X)$:
    \begin{itemize}
        \item Suppose $A(\vec X) = X$, and fix $C$ and $\possext C \textends C$. We have:
        \[
        \begin{array}{r@{\ \implies \ }ll}
        v:C \in \Gamma^\sigma & v \in \possext C^\sigma & 
        \text{since $\possext C \textends C$} 
            \\
            & \sigma, v \modelsM \PP \possext C & 
            \text{by definition of $\modelsM \PP$}
        \\ \noalign{\medskip}
        v:C \in \Delta^\sigma & v \notin \possext C^\sigma & \text{since $\possext C \textends C$} \\
            & \sigma,v \not\modelsM \PP \possext C    & \text{by definition of $\modelsM\PP$}
        \end{array}
        \]
        \item Suppose $A(\vec X) = A_0(\vec X) \limp A_1 (\vec X)$, and fix $\vec C$ and $\vec {\possext C} \textends \vec C$. We have:
        \[
        \begin{array}{r@{\ \implies \ }ll}
        v:A(\vec C) \in \Gamma^\sigma &  \forall \tau \sqsupseteq\sigma   \left[ v:A_0(\vec C) \in \Delta^\tau \text{ or } v:A_1(\vec C) \in \Gamma^\tau \right] & \text{by \cref{t-induces-partial-val}} \\
            & \forall \tau \sqsupseteq \sigma  \left[ \tau , v \not\modelsM \PP A_0(\vec {\possext C}) \text{ or } \tau, v \modelsM \PP A_1 (\vec {\possext C}) \right] & \text{by IH} \\
            & \sigma, v \modelsM \PP A(\vec {\possext C}) & \text{by definition of $\modelsM \PP$}
        \\ 
        \noalign{\medskip}
        v:A(\vec C) \in \Delta^\sigma
         & \exists \tau \sqsupseteq \sigma \left[ v:A_0(\vec C) \in \Gamma^\tau \text{ and } v:A_1(\vec C) \in \Delta^\tau \right] & \text{by \cref{t-induces-partial-val}} \\
            & \exists \tau \sqsupseteq \sigma \left[ \tau, v \modelsM \PP A_0(\vec{\possext C}) \text{ and } \tau, v \not \modelsM \PP A_1 (\vec {\possext C})   \right] & \text{by IH} \\
            & \sigma , v \not\modelsM \PP A(\vec {\possext C}) & \text{by definition of $\modelsM \PP$}
        \end{array}
        \]
        \item Suppose $A(\vec X) = \Box A'(\vec X)$, and fix $\vec C$ and $\vec {\possext C} \textends \vec C$. We have:
        \[
        \begin{array}{r@{\ \implies \ }ll}
        v:A(\vec C) \in \Gamma^\sigma & \forall \tau \sqsupseteq \sigma \, \forall w \in \war \left[ v\rels R^\tau w \implies w:A'(\vec C) \in \Gamma^\tau \right] & \text{by \cref{t-induces-partial-val}} \\
            & \forall \tau \sqsupseteq \sigma \, \forall w \in \war \left[ v \rels R^\tau w \implies \tau , w \modelsM\PP A'(\vec{\possext C}) \right] & \text{by IH} \\
            & \sigma , v \modelsM \PP A(\vec {\possext C}) & \text{by definition of $\modelsM \PP$} 
            \\ 
            \noalign{\medskip}
            v:A(\vec C) \in \Delta^\sigma & \exists \tau \sqsupseteq \sigma \, \exists w \in \war \left[ v\rels R^\tau w \text{ and } w:A'(\vec C) \in \Delta^\tau \right] & \text{by \cref{t-induces-partial-val}} \\
                & \exists \tau \sqsupseteq \sigma \, \exists w \in \war \left[ v\rels R^\tau w \text{ and } \tau, w \not\modelsM \PP A'(\vec{\possext C}) \right] & \text{by IH} \\
                & \sigma , v \not\modelsM \PP A(\vec{\possext C}) & \text{by definition of $\modelsM\PP$}
        \end{array}
        \]
        \item The case when $A(\vec X) = \blacksquare A'(\vec X)$ is similar to the one above.
        \item Suppose $A(\vec X) = \forall X A'(X,\vec X)$, and fix $\vec C $ and $\vec {\possext C} \textends \vec C$. We have:
        \[
        \begin{array}{r@{\ \implies\ }ll}
        v:A(\vec C) \in \Gamma^\sigma & \forall \tau\sqsupseteq \sigma\, \forall C \in \Fmla \ v:A'(C,\vec C) \in \Gamma^\tau & \text{by \cref{t-induces-partial-val}} \\
            & \forall \tau\sqsupseteq \sigma\, \forall C \in \Fmla\, \forall\,  \possext C \textends C \ \tau , v \modelsM\PP A'(\possext C, \vec{\possext C}) & \text{by IH} \\
            & \sigma, v \modelsM\PP A(\vec{\possext C}) & \text{by definition of $\modelsM\PP$}
            \\
            \noalign{\medskip}
        v:A(\vec C) \in \Delta^\sigma & \exists \tau \sqsupseteq \sigma\, \exists C \in \Fmla \ v:A'(C, \vec C) \in \Delta^\tau & \text{by \cref{t-induces-partial-val}} \\
            & \exists \tau \sqsupseteq \sigma\, \exists C \in \Fmla \ \tau , v \not \modelsM\PP A'(\minext C ,\vec{\possext C}) & \text{by IH} \\
            & \sigma, v \not \modelsM\PP A(\vec{\possext C})  & \text{by definition of $\modelsM \PP$}\qedhere
        \end{array}
        \]
    \end{itemize}
\end{proof}

\subsection{Putting it all together: comprehensivity via a total valuation}
Again, in light of \cref{P-is-countermodel}, for our completeness result it remains to show that $\PP$ is comprehensive.
We take the same impredicative approach as the classical setting.

\begin{definition}
    [Interpreting comprehension]
    Define $\extension C := \{\extension C^\sigma\}_{\sigma \in T} $ where:
$$\extension C^\sigma := \{v \in \war \ | \ \sigma, v \modelsM \PP C\}$$
\end{definition}

Now, \cref{sat-compatible-tree} has another useful consequence:
\begin{proposition}
    $\extension C$ is a possible extension of $C$, with respect to $\mathfrak S$.
\end{proposition}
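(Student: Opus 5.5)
We need to check the three conditions of \cref{def:possible-extensions} for the family $\extension C = \{\extension C^\sigma\}_{\sigma\in T}$, mirroring the classical argument. Only one condition is genuinely new: monotonicity along $\sqsubseteq$.

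For the two compatibility conditions, we instantiate \cref{sat-compatible-tree} with $A(\vec X) := C$ and $\vec X = \vec C = \emptyset$. The side condition ``$\vec{\possext C} \textends \vec C$'' is then vacuous, and $A(\vec{\possext C}) = C$. Clause (1) of \cref{sat-compatible-tree} gives $v:C\in\Gamma^\sigma \implies \sigma,v\modelsM\PP C$, that is, $v\in\extension C^\sigma$. Clause (2) gives $v:C\in\Delta^\sigma \implies \sigma,v\not\modelsM\PP C$, that is, $v\notin\extension C^\sigma$.

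For monotonicity, suppose $\sigma\sqsubseteq\tau$ and $v\in\extension C^\sigma$, so $\sigma,v\modelsM\PP C$. The monotonicity lemma for predicate models gives $\tau,v\modelsM\PP C$, hence $v\in\extension C^\tau$.

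The one point needing care is that this lemma is stated for predicate models, whereas $\PP$ has not yet been shown comprehensive; that is exactly what this proposition feeds into. However, the lemma's proof passes through \cref{pred-str-equiv-brcol} and \cref{monotonicity}, and neither uses comprehensivity. \cref{monotonicity} only needs every set to be upwards closed, and in $\brcol\PP$ this follows from the monotonicity clause in the definition of a possible extension. So monotonicity holds for any predicate structure, and in particular for $\PP$. Alternatively, one can check monotonicity directly by induction on $C$, since every clause of $\modelsM\PP$ quantifies over all $\sqsupseteq$-extensions and the propositional symbols are interpreted by monotone families.

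Together these three facts show $\extension C \textends C$. It follows that $\extension C\in\set W$, so $\PP$ is comprehensive.
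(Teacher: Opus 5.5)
Your proof is correct and follows the paper's route: the two membership conditions come from instantiating \cref{sat-compatible-tree} with no free variables, exactly as the paper's one-line proof does. You also verify the monotonicity clause of \cref{def:possible-extensions}, which the paper leaves implicit, and your observation that the monotonicity lemma for predicate structures never uses comprehensivity is exactly what keeps this step from being circular.
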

\begin{proof}
    Simply set $A(\vec X) = C$ and $\vec C = \emptyset$ in \cref{sat-compatible-tree}.
\end{proof}
\begin{corollary}
\label{P-is-comprehensive}
    $\PP$ is comprehensive, and so is a predicate model.
\end{corollary}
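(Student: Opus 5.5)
The statement to prove is \cref{P-is-comprehensive}: $\PP$ is comprehensive. Following the pattern of \cref{R-is-comprehensive} in the classical case, I would show that for every formula $C$ of the expanded language, the family $\extension C$ is itself one of the sets in $\set W$.

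The plan has three steps.

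\textbf{Step 1: expanded formulas are formulas.} In the expanded language the extra propositional symbols are possible extensions $\possext D \in \set W$, where each $P \in \Prop$ is identified with $\minext P$. So any closed expanded formula has the form $A(\vec{\possext D})$ for some ordinary formula $A(\vec X)$ and $\vec{\possext D} \textends \vec D$.

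\textbf{Step 2: $\extension C$ is a possible extension.} I would take the witnessing formula to be $C' := A(\vec D)$ and check the three clauses of \cref{def:possible-extensions} for $\extension C$ as a possible extension of $C'$.
\begin{itemize}
\item Monotonicity: $\sigma \sqsubseteq \tau$ implies $\extension C^\sigma \subseteq \extension C^\tau$. This is immediate from the Monotonicity lemma for predicate structures.
\item The other two clauses say that $v:C' \in \Gamma^\sigma$ forces $\sigma,v \modelsM\PP C$, and $v:C' \in \Delta^\sigma$ forces $\sigma,v\not\modelsM\PP C$. Both are exactly \cref{sat-compatible-tree}, instantiated with $A(\vec X)$ and $\vec C := \vec D$.
\end{itemize}
Hence $\extension C \textends C'$, so $\extension C \in \set W$ by the definition of $\set W$.

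\textbf{Step 3: the comprehension equation.} By definition, $\extension C^\sigma = \{v \mid \sigma,v\modelsM\PP C\}$, which is precisely the condition that comprehensivity demands. Therefore $\PP$ is a comprehensive predicate structure, i.e.\ a predicate model.

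The main obstacle is Step 1. The preceding Proposition only covers $\vec C = \emptyset$, i.e.\ formulas with no expanded symbols, so the parameters $\possext D$ must be absorbed into the vector $\vec{\possext C}$ of \cref{sat-compatible-tree}. This works because that lemma is stated for arbitrary possible extensions of arbitrary formulas. One also needs the identification of $P$ with $\minext P$ to be consistent with $P$'s own occurrences in $\Gamma^\sigma,\Delta^\sigma$; this holds since $\minext P \textends P$ by \cref{LHS-RHS-disjoint-prf-search-intuitionistic}.
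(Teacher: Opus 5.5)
Your proof is correct and follows the paper's route: the paper likewise shows that $\extension C$ is a possible extension by instantiating \cref{sat-compatible-tree} (with $\vec C = \emptyset$), so that $\extension C \in \set W$ and comprehensivity follows. Your extra care fills in details the paper leaves implicit: you check the monotonicity clause via the Monotonicity lemma, and you handle expanded-language formulas by absorbing the parameters $\vec{\possext D}$ into the vector of possible extensions, with witness formula $A(\vec D)$.
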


We have now established the main result of this section:

\begin{proof}
[Proof of \cref{thm:completeness}]
    By contraposition. Set $\sequent^\epsilon_0 := \cdot \stoup \cdot \seqar v:A$ throughout this section and conclude by \cref{P-is-comprehensive,P-is-countermodel}.
\end{proof}

\section{Simulating labelled proofs axiomatically}\label{sec:labelsoundness}
The goal of this section is to establish the soundness of the labelled sequent calculi $\labIKtSO$ and $\labKtSO$, wrt.\ $\IKtSO$ and $\KtSO$ respectively.
To this end we show (i) each sequent can be interpreted as a formula of our syntax; and (ii) each rule can be interpreted as an admissible rule of the axiomatisation.
The main result of this section is:

\begin{theorem} [Axiomatic soundness]
	\label{prop:axiomsoundness}\label{prop:axiomsoundness-classical}
	We have the following:
	\begin{enumerate}
		\item\label{axiomsoundness-intuitionistic}
		If $\labIKtSO \proves A$ then $\IKtSO \proves A$.
		
		\item\label{axiomsoundness-classical}
		If $\labKtSO \proves A$ then $\KtSO \proves A$.
	\end{enumerate}
\end{theorem}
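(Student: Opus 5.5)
The plan is to give each labelled sequent a formula interpretation by reading its relational context as a tree rooted at a distinguished world. I will then show that every rule of $\labIKtSO$ (and of $\labKtSO$) preserves derivability of that interpretation in $\IKtSO$ (respectively $\KtSO$). Since $\Diamond,\blacklozenge,\land,\lor$ are definable, I will freely use them together with the $\IKt$ principles derived in \cref{sec:underlying-modal-tense-logics} and \cref{prop:IK-thm}.

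\emph{Step 1: normalising proofs so that contexts are trees.} In a proof of $\cdot\stoup\cdot\seqar v:A$, relational atoms are only introduced bottom-up by $\rr\Box$ and $\rr\blacksquare$, each attaching a fresh world by a single edge. Hence every relational context occurring in the proof is an undirected tree containing $v$, with directed edges. Labels that do not occur in the tree can still appear, via weakening or a cut formula. I will first show that such stray labels can be renamed to $v$ throughout the subproof above the point where they are introduced. The eigenvariable conditions are unaffected, because only worlds occurring in $\rels R$ are ever touched by the $\Box/\blacksquare$ rules. After this normalisation I may assume every sequent $\rels R\stoup\Gamma\seqar\Delta$ in the proof has $\rels R$ a tree spanning all labels.

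\emph{Step 2: the interpretation.} For a tree $\rels R$ rooted at a world $x$, define the formula
\[
L^x_{\rels R}(\Gamma) := \textstyle\bigwedge\Gamma_x \land \bigwedge_{xRy}\Diamond L^y(\Gamma) \land \bigwedge_{yRx}\blacklozenge L^y(\Gamma),
\]
where $\Gamma_x$ is the set of formulas of $\Gamma$ labelled $x$, the conjuncts range over the children $y$ of $x$, and each $L^y$ is computed in the subtree rooted at $y$. In the intuitionistic case the sequent $\rels R\stoup\Gamma\seqar w:B$ is interpreted as $L^w_{\rels R}(\Gamma)\limp B$, re-rooting the tree at the succedent world. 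In the classical case I will dually define $R^x_{\rels R}(\Delta) := \bigvee\Delta_x\lor\bigvee_{xRy}\Box R^y\lor\bigvee_{yRx}\blacksquare R^y$. I then interpret the sequent as $\lnot L^x\lor R^x$ for a fixed root $x$, or equivalently fold $\Gamma$ into $\Delta$ by negation and interpret everything on the right.

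\emph{Step 3: two structural lemmas.}
\begin{enumerate}[(a)]
\item \emph{Congruence.} If $\IKtSO\proves\varphi\limp\psi$, then replacing $\varphi$ by $\psi$ inside any $L^w$ is derivable. This uses $\functwd$, $\functbd$, $\necw$ and $\necb$.
\item \emph{Re-rooting.} $\IKtSO\proves L^w_{\rels R}(\Gamma)\limp L^w_{\rels R}(\Gamma, v:L^v_{\rels R}(\Gamma))$ for any two nodes $v,w$ of the tree. I prove this by induction along the path from $w$ to $v$. The basic step, for a child $x$ with $wRx$, is
\[
G\land\Diamond L(x) \;\to\; \Diamond(L(x)\land\blacklozenge G).
\]
It follows from $\wbbd$ ($G\limp\Box\blacklozenge G$) and $(\Box A\land\Diamond B)\limp\Diamond(A\land B)$ from \cref{prop:IK-thm}. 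The case $xRw$ is symmetric, using $\bbwd$ and the black version.
\end{enumerate}

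\emph{Step 4: rule-by-rule soundness.}
\begin{itemize}
\item $\id$ and the structural rules are immediate.
\item The right rules $\rr\limp$ and $\rr\forall$ are IPL2 reasoning with $\gen$.
\item $\rr\Box$ with fresh $w$: from $L^w\limp A$, where $L^w=\blacklozenge L^v\land\cdots$ because $w$ is a leaf, I obtain $L^v\limp\Box A$ via $\wbbd$, $\necw$ and $\functwb$. The rule $\rr\blacksquare$ is symmetric.
\item $\lr\forall$ uses $\CA$ inside a context, by Step 3(a).
\item $\lr\Box$ with $vRw$ needs that the context containing $v:\Box A$ entails the same context with $w:A$ added at $w$. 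If $w$ is a child of $v$, this is $\Box A\land\Diamond D\limp\Diamond(A\land D)$. If $v$ is a child of $w$, it is $\blacklozenge(\Box A\land D)\limp A$, from $\bdwb$ and monotonicity. The rule $\lr\blacksquare$ is symmetric.
\item $\lr\limp$ and $\cut$ are exactly where the succedent world changes. From $L^v(\Gamma)\limp A$ and $L^w(\Gamma',v:A)\limp B$, I obtain $L^w(\Gamma,\Gamma')\limp B$ by re-rooting (Step 3(b)) followed by congruence (Step 3(a)).
\end{itemize}
For the classical system the same plan works, with the rules read dually on $R^x$ and double negation used to move formulas across sides. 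Finally, $\cdot\stoup\cdot\seqar v:A$ translates to $\top\limp A$, which gives \cref{prop:axiomsoundness}.

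I expect the main obstacle to be the re-rooting lemma in Step 3(b) together with its use for $\cut$ and $\lr\limp$: these are the only places where the tree must be viewed from two different roots, and getting the nested $\Diamond/\blacklozenge$ bookkeeping right along the path is the delicate part. A secondary concern is making the normalisation in Step 1 fully rigorous.
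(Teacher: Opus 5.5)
\textbf{Part (1).} Your argument for the intuitionistic half is correct, and it takes a genuinely different route from the paper.

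The paper interprets a sequent at \emph{every} label $u$ via $\ifm{u}{\cdot}$: the path from $u$ to the succedent world is read with $\Box/\blacksquare$, and everything off the path with $\Diamond/\blacklozenge$. It then proves all these readings derivable. This forces the context bookkeeping of \cref{lem:fm-ctx-2} and the lemmas \cref{lem:GenFSDiaBox,lem:GenInterDiaBox,lem:GenGen}, namely Fischer Servi-type interactions for $\lr\limp$ and $\cut$ at a common $u$, and the Barcan-type principle of \cref{ex:Barcan} for $\rr\forall$.

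You read a sequent only at its succedent world, which is exactly the paper's $\ifm{w}{\cdot}$ with $u=w$. You then absorb every change of root into one re-rooting lemma. Its inductive step is $G\land\Diamond L^x_{\mathrm{sub}}\limp\Diamond(L^x_{\mathrm{sub}}\land\blacklozenge G)=\Diamond L^x$, followed by the induction hypothesis at $x$ and monotonicity of $\Diamond$; this checks out. So do your cases for:
\begin{itemize}
\item $\lr\limp$ and $\cut$;
\item $\lr\Box$;
\item $\rr\Box$, via $\wbbd$, since the fresh leaf gives $L^w=\blacklozenge L^v$;
\item $\rr\forall$, which needs only $\gen$, $\functforall$ and $\Vacax$, because $L^v(\Gamma)$ is closed.
\end{itemize}
Your route is shorter and uses fewer modal principles. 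The paper's route yields a root-independent soundness statement in the nested-sequent style.

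One repair is needed in Step 1. Renaming a stray label $y$ to $v$ can collide with an eigenworld $v$ chosen by an $\rr\Box$ or $\rr\blacksquare$ step higher up, at a point where $v$ no longer occurs. Eigenworlds must therefore first be renamed apart; the justification you give does not address this. The fix is routine, and the paper only asserts the analogous \cref{lem:tree-like}.

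\textbf{Part (2).} Here there is a genuine gap: the reading $\lnot L^x\lor R^x$ at a fixed root is unsound already on initial sequents.
\begin{itemize}
\item The $\labKtSO$ proof of $\Box P\limp\Box P$ by $\rr\limp$, $\rr\Box$, $\lr\Box$ ends in the $\id$ instance $vRw\stoup w:P\seqar w:P$.
\item At the root $v$ this reads $\lnot\Diamond P\lor\Box P$, i.e.\ $\Diamond P\limp\Box P$.
\item That formula fails in a full-powerset relational model where $v$ sees one $P$-world and one non-$P$-world. So it is not a $\KtSO$ theorem, by \cref{classical-rel-soundness}.
\end{itemize}
The trouble is that $L^x$ and $R^x$ quantify over successors independently.

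The folded reading gives each node $y$ the formula $\bigvee\lnot\Gamma_y\lor\bigvee\Delta_y$ and reads edges as $\Box/\blacksquare$. On the same sequent it gives $\Box(\lnot P\lor P)$. So the two readings are \emph{not} equivalent, and only the folded one is viable. For it, however, no rule has been checked, and extra work is needed:
\begin{itemize}
\item The two-premiss rules need a merging lemma: from $N(S_1,v:A)$ and $N(S_2,v:\lnot B)$ infer $N(S_1\cup S_2,v:\lnot(A\limp B))$, using that $\Box$ and $\blacksquare$ distribute over $\land$.
\item $\lr\Box$ with the root on the $w$-side needs the contrapositive of $\bdwb$.
\end{itemize}

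The cheapest repair is the paper's:
\begin{itemize}
\item Translate a $\labKtSO$ proof of $\RR\stoup\Gamma\seqar\Delta$ into a $\labIKtSO+(\lnot\lnot)$ proof of $\RR\stoup\Gamma,\lnot\Delta\seqar x:\bot$, as in \cref{lem:lIKtSO+negneg}. This uses \cref{lem:simbotrule}, and hence the connectivity from your Step 1.
\item Apply your intuitionistic interpretation, under which $(\lnot\lnot)$ is trivially sound in $\KtSO$.
\item On the endsequent this yields $\lnot A\limp\bot$, hence $A$.
\end{itemize}
Do not import the paper's general classical lemma, which states soundness of $\lfm{u}{\cdot}\limp\rfm{u}{\cdot}$ at arbitrary $u$. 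The same $\id$ instance refutes it at $u=v$, and only its endsequent case is actually needed.
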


\subsection{Formula interpretation}

A \textbf{polytree} is a directed acyclic graph whose underlying undirected graph is a tree. 
As a consequence, it is connected, i.e., there exists exactly one path of undirected edges between any pair of distinct nodes.
A \textbf{labelled polytree sequent} is a labelled sequent $\RR\stoup\Gamma \seqar \Delta$ where $\RR$ encodes a labelled polytree and all the labels occurring in $\Gamma$ and $\Delta$ are connected by $\RR$ (as an undirected graph).

By~\cite[Lemma~5.2]{ciabattoni2021display} any derivation in $\labKt$ of a labelled polytree sequent (and a fortiori of a labelled formula) contains only labelled polytree sequents,
which generalises to our systems:

\begin{lemma} \label{lem:tree-like}
	Any labelled polytree sequent provable in $\labKtSO$ (or $\labIKtSO$) has a derivation in $\labKtSO$ (or $\labIKtSO$ respectively) that contains only labelled polytree sequents.
\end{lemma}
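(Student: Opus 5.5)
We will transform an arbitrary derivation of a labelled polytree sequent into one containing only polytree sequents, by induction on the derivation, working bottom-up. The first step is to check which rules can break the polytree property when read from conclusion to premiss. The only rules that change the relational context are $\rr\Box$ and $\rr\blacksquare$. Each of these adds a single edge $vRw$ (respectively $uRv$) to a \emph{fresh} world, where the other endpoint already occurs in the conclusion. Adding a fresh leaf to a polytree again gives a polytree, and the new label is connected to the old ones. So these rules preserve the property upwards.

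The remaining rules either keep $\RR$ fixed or leave it untouched. For these we only need the side condition that every label occurring in the cedents is connected by $\RR$. The rules $\lr\Box$ and $\lr\blacksquare$ introduce a label $w$ with $vRw\in\RR$ (respectively $uRv\in\RR$), so $w$ is connected. The rules $\lr\forall,\rr\forall,\lr\limp,\rr\limp$, the structural rules and $\id$ introduce no new labels.

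The only problematic rule is $\cut$. Its cut formula $v:A$ may carry a label $v$ that does not occur in the conclusion, or is not connected to it, even though $v$ lies within $\RR$'s domain. There are two ways to handle this, and we commit to the first.

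The first uses the semantic detour already available. By \cref{thm:completeness,prop:multitosingle} and the classical \cref{cut-free-completeness-classical}, we may assume the derivation is cut-free, provided we first know that provable labelled polytree sequents are valid. Validity would require a soundness argument for sequents rather than formulas. To avoid circularity with \cref{prop:axiomsoundness}, we instead argue directly.

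The second, direct, argument goes as follows. Given a derivation, we delete from every sequent all relational atoms and labelled formulas whose labels are not connected, in $\RR$, to the labels of the endsequent. More precisely, we fix the connected component of the endsequent's labels and propagate it upward, so that at each rule the component of the conclusion determines that of the premisses.

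We then verify rule by rule that the pruned figure is still a derivation, possibly with extra weakenings.
\begin{itemize}
\item For $\id$, the two occurrences share a label, so either both are kept or both are deleted. Deleting both only happens in branches we show cannot arise below.
\item For $\cut$ on an unconnected label, both premisses lose the cut formula. We keep, say, the left premiss and weaken to obtain the conclusion; for $\labIKtSO$ we choose the premiss still containing the succedent.
\item Since sequents are polytrees, pruning to a connected component yields a polytree.
\end{itemize}

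The main obstacle is the intuitionistic case, where the succedent must stay nonempty. We need that the unique RHS formula always lies in the kept component. A pruned premiss with empty RHS should be provable only via a left-$\forall$ instantiation, much as in \cref{lem:simbotrule}, and that instantiation must itself sit in the connected component. Making this precise will be the delicate point: we will likely need the invariant that the succedent label is connected to the endsequent's labels, and show this invariant is maintained by the same rule-by-rule analysis.
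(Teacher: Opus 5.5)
Your inspection of the rules is correct, and it is in fact all the paper itself offers: it cites \cite[Lemma~5.2]{ciabattoni2021display} for $\labKt$ and asserts that this generalises. You are also right that $\cut$ is the only rule whose premisses can leave the polytree fragment. However, your handling of $\cut$ in $\labKtSO$ fails as stated. The choice of premiss is not arbitrary, and ``keep, say, the left premiss'' is exactly the wrong choice. Here is a counterexample. Derive $\cdot\stoup u:P\seqar u:P$ by contracting the conclusion of a $\cut$ on $v:Q\limp Q$, with $v$ fresh. Take the left premiss $\cdot\stoup u:P\seqar u:P, v:Q\limp Q$ to come by weakenings from $\rr\limp$ applied to $\id$ on $v:Q$. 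Take the right premiss $\cdot\stoup u:P, v:Q\limp Q\seqar u:P$ to come by $\lr\wk$ from $\id$ on $u:P$. If you prune to the component $\{u\}$ and keep the left premiss, the leaf $\cdot\stoup v:Q\seqar v:Q$ becomes the empty sequent, so the pruned figure is not a derivation. Your assertion that wholly deleted $\id$ leaves ``cannot arise'' is precisely the step you left unproved, and under your choice it is false.

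The repair is the invariant you only gesture at for $\labIKtSO$, stated for arbitrary succedents: every RHS formula of a kept sequent is labelled in the kept component $K$. It holds at the endsequent. It is preserved provided that, at each $\cut$ or $\lr\limp$ whose label $v$ lies outside $K$, you keep the premiss in which the $v$-formula occurs on the \emph{left} (the right premiss) and then weaken. With this rule, every kept right logical step has its principal label in $K$, so $\rr\Box$ and $\rr\blacksquare$ attach their fresh label to $K$. Consequently no relational atom outside $K$ ever appears, so $\lr\Box$ and $\lr\blacksquare$ either stay inside $K$ or cannot occur at all. Every kept $\id$ leaf then has its label in $K$ and survives pruning intact.

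In $\labIKtSO$ the invariant just says that the succedent label lies in $K$, and the same case analysis preserves it immediately. So the intuitionistic case is the easy one. Your worry about pruned sequents with empty RHS, provable by $\lr\forall$ steps as in \cref{lem:simbotrule}, is a red herring: no pruned sequent ever has empty RHS.

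Incidentally, your first route was not circular either. A direct semantic soundness proof for labelled sequents over unirelational and predicate models would suffice. Combine it with the proof-search arguments of \cref{sec:completeness-classical,sec:completeness}, which start from an arbitrary unprovable sequent, and with \cref{lem:multi-to-single} in the intuitionistic case. This yields cut-free, and hence polytree-only, derivations.
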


Henceforth we will work with only labelled polytree sequents.
This is used in~\cite{ciabattoni2021display} to translate labelled sequents into \emph{nested sequents}, which on the other hand can readily be interpreted into the tense language~\cite{gore2011correspondence}.
We define here a direct interpretation of labelled polytree sequents into tense formulas.

We write $u \overset{\RR}{\leftrightsquigarrow} v$ to indicate that $u$ and $v$ are \emph{connected} in the underlying undirected graph encoded by $\RR$.
If $uRv \in \RR$ (or $vRu\in\RR$), we write
$\RAtMin uv$ for the set of atoms $xRy \in \RR$ such that $x$ is connected to $u$ in $\RR\setminus\{uRv\}$ (or $\RR\setminus\{vRu\}$ respectively).

\begin{definition}[Left interpretation]\label{def:left-interpretation}
	For a label $u$ occuring in $\RR$ or $\Gamma$,
	the \textbf{left (formula) interpretation}
	$\lfm{u}{\RR \stoup \Gamma}$ at $u$ is given as:

	$
	\Band{u:B \in \Gamma}B
	\wedge
	\Band{uRv \in \RR} \Diamond \lfm{v}{\RAtMin vu \stoup \Gamma} 
	\wedge 
	\Band{vRu \in \RR} \Diamondblack \lfm{v}{\RAtMin vu \stoup \Gamma}   
	$
\end{definition}
If $\RR = \emptyset$ or $\Gamma = \emptyset$ this is well-defined,
and we set $\lfm{u}{\cdot \stoup \cdot} = \TOP$.
Note that this translation utilises the definition of $\AND$, $\TOP$, $\DIA$ and $\DIAB$ in terms of second-order quantifiers given in~\cref{eq:so-dfns-of-nonmodal-connectives,eq:diamond-definitions}.

\begin{definition}[Right interpretation]\label{def:rigth-interpretation}
	For a label $u$ occurring in $\RR$ or $\Delta$, 
	the \textbf{right (formula) interpretation} 
	$\rfm{u}{\RR \stoup \Delta}$ at $u$ is: 
	
	$
	\Bor{u:B \in \Delta}B
	\OR
	\Bor{uRv \in \RR} \Box \rfm{v}{\RAtMin vu \stoup \Delta} 
	\OR 
	\Bor{vRu \in \RR} \Boxblack \rfm{v}{\RAtMin vu \stoup \Delta}   
	$
\end{definition}

If $\RR = \emptyset$ or $\Delta = \emptyset$ this is well-defined,
and we set $\rfm{u}{\cdot \stoup \cdot} = \BOT$.
Note that this translation utilises the definition of $\OR$ and $\BOT$ in terms of second-order quantifiers given in~\cref{eq:so-dfns-of-nonmodal-connectives}.

\begin{definition}[Classical formula interpretation]\label{def:classical-interpretation}
	For a label $u$ in $\RR$, $\Gamma$ or $\Delta$, 
	the \textbf{(classical) formula interpretation} at $u$
	$\cfm{u}{\RR \stoup \Gamma \seqar \Delta}$ at $u$ is given as:
	$
	\lfm{u}{\RR \stoup \Gamma} \limp \rfm{u}{\RR \stoup \Delta}$
\end{definition}

In the intuitionistic setting, the asymmetry between the LHS and RHS of a sequent means that the formula interpretation requires a more careful analysis of the polytree structure of $\RR$:

\begin{definition}[Intuitionistic formula interpretation]\label{def:intuitionistic-interpretation}
	For a label $u$ occurring in $\RR$ or $\Gamma$, or for $u=w$, 
	the \textbf{(intuitionistic) formula interpretation}
	$\ifm{u}{\RR \stoup\Gamma \seqar w:A}$ at $u$ is defined as:
	\begin{itemize}
		\item $\lfm{w}{\RR \stoup \Gamma} \to A$, if $u=w$.
		
		\item     
		$\lfm{u}{\RAtMin uv \stoup \Gamma} \to \Box \ifm{v}{ \RAtMin vu \stoup \Gamma \seqar w: A}$, if there exists $v$ such that $v \conn{\RR\setminus\{uRv\}} w$ and $uRv \in \RR$.
		
		\item     
		$\lfm{u}{\RAtMin uv \stoup \Gamma} \to \Boxblack \ifm{v}{ \RAtMin vu \stoup \Gamma \seqar w: A}$,
		if there exists $v$ such that $v \conn{\RR\setminus\{vRu\}} w$ and $vRu \in \RR$.
	\end{itemize}

\end{definition}
\noindent
If $\RR = \emptyset$, this is only defined if $u = w$,
e.g.,
$\ifm{w}{\cdot \stoup \seqar w:A} = A$.

\begin{fact}
	A property of the left interpretation:
	\begin{itemize}
		\item $\lfm{u}{\RR\stoup\Gamma,u:B}
		= \lfm{u}{\RR\stoup\Gamma} \AND B$
		\item and if there is $x$ such that $uRx\in\RR$ and $x\conn{\RR}w$
		$\lfm{u}{\RR\stoup\Gamma,w:B}
		= \lfm{u}{\RAtMin ux\stoup\Gamma,w:B} \AND \DIA\lfm{x}{\RAtMin xu\stoup\Gamma,w:B}$
		\item and if there is $x$ such that $xRu\in\RR$ and $x\conn{\RR}w$
		$\lfm{u}{\RR\stoup\Gamma,w:B}
		= \lfm{u}{\RAtMin ux\stoup\Gamma,w:B} \AND \DIAB\lfm{x}{\RAtMin xu\stoup\Gamma,w:B} $        
	\end{itemize}
\end{fact}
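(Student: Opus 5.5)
The plan is to unfold \cref{def:left-interpretation} once at $u$ and compare the conjuncts on each side. Both sides are built from the same finitary conjunctions, so the equalities are to be read up to associativity, commutativity and regrouping of the (second-order encoded) $\AND$. These are provably equivalent in $\IKtSO$ by plain $\IPL2$ reasoning, and that is all the later simulation arguments need. The one structural fact used throughout is the following polytree property of $\RR$. If $uRv \in \RR$ (or $vRu\in\RR$), then deleting that atom splits the underlying undirected tree into exactly two components, one containing $u$ and one containing $v$. Consequently $\RAtMin vu$ and $\RAtMin uv$ partition $\RR$ minus that atom, and no label in the component of $v$ is equal to $u$.

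For the first item, unfold $\lfm{u}{\RR\stoup\Gamma,u:B}$. Its first big conjunction ranges over $\Gamma, u:B$ and so equals the old one with an extra conjunct $B$. In each modal conjunct $\Diamond \lfm{v}{\RAtMin vu\stoup\Gamma,u:B}$ (and its $\DIAB$ analogue), the label $u$ does not occur in $\RAtMin vu$, by the polytree property. Hence, by an inner induction on the size of $\RR$, adding $u:B$ to the cedent does not change any left interpretation computed at labels in that subtree: the only place a labelled formula $x:C$ enters $\lfm{y}{-}$ is when $x$ is reached by the recursion, and the recursion from $v$ stays inside $\RAtMin vu$. So the subformulas coincide and the claim follows.

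For the second item, unfold $\lfm{u}{\RR\stoup\Gamma,w:B}$ and split off the conjunct indexed by the atom $uRx$, which is literally $\Diamond\lfm{x}{\RAtMin xu\stoup\Gamma,w:B}$. It remains to check that the leftover conjuncts are exactly those of $\lfm{u}{\RAtMin ux\stoup\Gamma,w:B}$. For the local formulas $u:C$ this is immediate. Here one should note $w\neq u$, since $w$ lies on $x$'s side (if $w=u$ is allowed, the conjunct $B$ simply appears on both sides). For every other neighbour $y$ of $u$, say with $uRy$ or $yRu$ and $y\neq x$, we must show that $\RAtMin yu$ computed in $\RR$ equals the corresponding set computed in $\RAtMin ux$. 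This is the main (though mild) obstacle. It again follows from the polytree property: the component of $y$ after removing the $u$–$y$ edge does not contain the edge $uRx$. Moreover, the neighbours of $u$ in $\RAtMin ux$ are precisely the neighbours of $u$ in $\RR$ other than $x$. The third item is identical, with $xRu$, $\DIAB$ and the backward clause of \cref{def:left-interpretation} in place of $uRx$ and $\Diamond$.
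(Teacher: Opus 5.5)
Your proposal is correct, and it is the routine unfolding of the definition of $\lfm{u}{-}$ that the paper takes for granted; the paper records this as a Fact and gives no proof. Reading $=$ up to reassociation and reordering of conjuncts, peeling off the conjunct indexed by $uRx$ (resp.\ $xRu$), and using the polytree splitting to see that every other recursive call is unchanged is exactly right. As your own remark on $w=u$ shows, the argument never uses the hypothesis $x\conn{\RR}w$.

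One point is worth stating explicitly: you read $\RAtMin vu$ as the atoms of $v$'s component \emph{after} the connecting atom is deleted. That is clearly the intended meaning, but the paper's clause taken literally would also put the deleted atom $vRu$ into $\RAtMin vu$, since its source $v$ is trivially connected to itself. On that literal reading, your claim that $u$ does not occur in $\RAtMin vu$ would fail for the $\DIAB$-conjuncts.
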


\begin{fact}
	\label{obs:lfm-weaken}\label{obs:weaken}
	If $v$ does not occur in $\RR$:
	\begin{itemize}
		\item $\lfm{u}{\lseq{\rels R}{\Gamma, \fm{v}{A}}} = \lfm{u}{\lseq{\rels R}{\Gamma}}$
		\item $\ifm{u}{\seq{\rels R}{\Gamma, \fm{v}{A}}{w:C}} = \ifm{u}{\seq{\rels R}{\Gamma}{w:C}}$
	\end{itemize}

\end{fact}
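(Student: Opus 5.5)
The plan is to prove both items by induction, following the recursive structure of $\lfm{u}{-}$ and $\ifm{u}{-}$. The basic observation is that these interpretations only ever read formulas whose labels are visited while traversing the polytree $\RR$ outward from $u$. Every label visited in this way is either $u$ itself or occurs in $\RR$. I will read the statement with the implicit side condition that $v$ is distinct from the distinguished labels $u$ (and $w$ in the second item). If $v=u$ with $\RR=\emptyset$, for instance, then $\lfm{u}{\cdot\stoup u:A}=A\neq\TOP$, so the side condition is needed and must be part of the intended reading.

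For the first item I induct on the number of atoms in $\RR$. Unfolding \cref{def:left-interpretation}, $\lfm{u}{\RR\stoup\Gamma, v:A}$ is a conjunction of three parts:
\begin{itemize}
\item the formulas $B$ with $u:B\in\Gamma, v:A$;
\item the conjuncts $\Diamond\lfm{x}{\RAtMin xu\stoup\Gamma,v:A}$ for $uRx\in\RR$;
\item the conjuncts $\Diamondblack\lfm{x}{\RAtMin xu\stoup\Gamma,v:A}$ for $xRu\in\RR$.
\end{itemize}
Since $v\neq u$, the first part is unchanged by adding $v:A$. In the other two parts, $\RAtMin xu\subsetneq\RR$ is strictly smaller. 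Also $v$ does not occur in $\RAtMin xu$, and $v\neq x$ because $x$ occurs in $\RR$. So the inductive hypothesis applies to each recursive call and removes $v:A$. The base case $\RR=\emptyset$ leaves only the formulas labelled $u$, and these are unaffected because $v\neq u$.

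For the second item I induct on the length of the unique undirected path from $u$ to $w$ in $\RR$; equivalently, on the size of $\RR$. In the case $u=w$, we have $\ifm{w}{\RR\stoup\Gamma,v:A\seqar w:C}=\lfm{w}{\RR\stoup\Gamma,v:A}\limp C$, and the first item applies since $v\neq w$. In the other two cases the interpretation has the form $\lfm{u}{\RAtMin ux\stoup\Gamma,v:A}\limp\Box\ifm{x}{\RAtMin xu\stoup\Gamma,v:A\seqar w:C}$, or the analogous form with $\Boxblack$. Here I apply the first item to the antecedent. I apply the inductive hypothesis to the $\ifm{}{}$ subterm, which is legitimate because $\RAtMin xu$ is a strictly smaller relational context not containing $v$, and because $x\neq v$. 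The choice of the neighbour $x$ in \cref{def:intuitionistic-interpretation} depends only on $\RR$ and $w$, not on $\Gamma$, so the same clause is selected on both sides of the equation.

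The argument is purely syntactic, so I expect no real obstacle. The one point needing care is bookkeeping: checking that each recursive call in \cref{def:left-interpretation,def:intuitionistic-interpretation} sits at a label occurring in $\RR$ (hence different from $v$), and that the restricted contexts $\RAtMin xu$ preserve the hypothesis ``$v$ does not occur''. Both follow from $\RAtMin xu\subseteq\RR$.
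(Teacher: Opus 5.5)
Your proof is correct, and it is exactly the routine unfolding of \cref{def:left-interpretation,def:intuitionistic-interpretation} that the paper leaves implicit (it records this as a Fact with no proof). Your added side condition $v\neq u$ (and $v\neq w$) is genuinely needed in the degenerate case where $u$ does not occur in $\RR$, and it holds at every use of the Fact in \cref{lem:fm-ctx-2}, where $v$ always lies in the opposite component from the label being interpreted.
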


\subsection{Formula contexts}

As the formula translation changes due to the position of the succedent formula, we define two kinds of ``formula contexts''. 
These are formulas of a certain shape with one unique atom $\{\}$, called the hole which can be substituted for a formula. The following kind of formula context has the shape of a left-hand-side interpretation.
We call the following a \emph{conjunction context}.
\[
F^\land \{\} ::= \{\} \spa A \land F^\land \{\} \spa \Dmnd F^\land \{\} \spa \tDmnd F^\land \{\}
\]

We write $F^\land \{A \}$ when substituting for the hole in $F^\land \{ \}$ the formula $A$.
We also define the empty substitution $F^\land \{\emptyset \}$ 
as $F^\land \{ \top \}$.
We also write $F^\land$ instead of $F^\land \{ \emptyset \}$.

For the full formula interpretation, we define an \emph{implication context} $F^\to \{\}$:
\[
F^\to \{\} ::= \{\} \spa A \to F^\to \{\} \spa \Box F^\to \{\} \spa \tBox F^\to \{\}
\]
Note that we can translate between these contexts as follows:
\begin{itemize}
\item $F^\land \{\} = \{\}$ iff $F^\to \{\} = \{\}$ 
\item $F^\land \{\} = A \land F^{\land \prime} \{\}$ iff $F^\to \{\} = A \to F^{\to\prime} \{\}$
\item $F^\land \{\} = \Dmnd F^{\land \prime} \{\}$ iff $F^\to \{\} = \Box F^{\to\prime} \{\}$
\item $F^\land \{\} = \tDmnd F^{\land \prime} \{\}$ iff $F^\to \{\} = \tBox F^{\to\prime} \{\}$
\end{itemize}

Substitution for implication contexts $F^\to \{A\}$ is defined by substituting the formula $A$ for the hole (here, we do not need substitution for the empty context).

This allows us to translate these contexts into one another which will be helpful for the rules cut and $\to_l$ which swap the label of the succedent formula.

Here we show that the formula interpretation of a sequent can be written in the form of contexts.

\begin{lemma}\label{lem:fm-ctx}
Let $u$ be a label in the support of~$\rels R$. 
Then, there exists $F$ such that
\begin{enumerate}
	\item $\lfm{u}{\lseq{\rels R}{\Gamma, \fm{v}{A}}} = \fand{A}$; and
	\item $\ifm{u}{\seq{\rels R}{\Gamma}{\fm{v}{A}}} = \fimp{A}$.
\end{enumerate}
\end{lemma}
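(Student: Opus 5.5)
We will prove both items simultaneously by induction on the size of $\rels R$, i.e.\ the number of relational atoms, following the recursive clauses of \cref{def:left-interpretation,def:intuitionistic-interpretation}. Throughout we work with labelled polytree sequents, so for any two labels $u,v$ there is a unique undirected path between them in $\rels R$. We use this path to determine the shape of the context $F$ and the formula substituted into its hole. Importantly, we choose the \emph{same} path-driven $F$ for both items: the conjunction part $\fandempt$ and implication part $F^{\IMP}$ are related by the translation between contexts recorded just before the lemma.

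In the base case $u=v$ we split off the formula $v:A$. For item (1), the Fact on the left interpretation gives $\lfm{u}{\rels R \stoup \Gamma, u:A} = \lfm{u}{\rels R\stoup\Gamma}\AND A$. We take $\fandempt\{\} = \lfm{u}{\rels R \stoup \Gamma} \AND \{\}$, and correspondingly $F^{\IMP}\{\} = \lfm{u}{\rels R\stoup\Gamma}\IMP\{\}$. For item (2), the first clause of \cref{def:intuitionistic-interpretation} gives exactly $\lfm{u}{\rels R \stoup \Gamma} \IMP A = \fimp{A}$. We read the conjunctions modulo the associativity and commutativity implicit in the big conjunction notation, treating $\lfm{}{}$ up to reordering of conjuncts as the paper's notation suggests.

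In the inductive step $u\neq v$, let $x$ be the neighbour of $u$ on the unique path to $v$. Suppose first $uRx\in\rels R$. For item (1), the second bullet of the Fact rewrites the left interpretation as $\lfm{u}{\RAtMin ux \stoup \Gamma, v:A} \AND \DIA \lfm{x}{\RAtMin xu\stoup\Gamma, v:A}$. Since $v$ lies in the $x$-component, \cref{obs:lfm-weaken} lets us drop $v:A$ from the first conjunct. The induction hypothesis applied to the strictly smaller $\RAtMin xu$ (whose support contains $x$, and contains $v$ or has $v=x$) gives $\lfm{x}{\RAtMin xu\stoup\Gamma,v:A} = F'^{\AND}\{A\}$. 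We therefore set $\fandempt\{\} = \lfm{u}{\RAtMin ux\stoup\Gamma}\AND\DIA F'^{\AND}\{\}$. For item (2), the second clause of \cref{def:intuitionistic-interpretation} applies with $v$ in the role of $w$, giving $\lfm{u}{\RAtMin ux\stoup\Gamma}\IMP\Box\ifm{x}{\RAtMin xu\stoup\Gamma\seqar v:A}$. The induction hypothesis rewrites the inner formula as $F'^{\IMP}\{A\}$, so $F^{\IMP}\{\} = \lfm{u}{\RAtMin ux\stoup\Gamma}\IMP\Box F'^{\IMP}\{\}$, which matches the translation of contexts. The case $xRu\in\rels R$ is symmetric, using $\DIAB$ and $\BOXB$.

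The main obstacle is ensuring that the same $F$ serves both items, and that the left interpretations appearing in the $u$-side piece $\RAtMin ux$ coincide in the two computations. The $\Gamma$ in item (1) contains $v:A$ while the one in item (2) does not, so we must check carefully that \cref{obs:lfm-weaken} applies. It applies because $v$ does not occur in $\RAtMin ux$, by the polytree (unique path) property. A second subtlety is the $\RR=\emptyset$ edge case, where $u$ is not in the support of $\rels R$. The hypothesis excludes this unless $u=v$ is treated as in the support, and we handle it by the base case.
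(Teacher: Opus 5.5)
Your proof is correct and follows essentially the same route as the paper's. Both use the base case $u=v$ with $F^{\AND}\{\} = \lfm{u}{\rels R\stoup\Gamma}\AND\{\}$, and an inductive step through the neighbour $x$ of $u$ on the path to $v$ with $F^{\AND}\{\} = \lfm{u}{\RAtMin ux\stoup\Gamma}\AND\Diamond F_1^{\AND}\{\}$ and $F^{\IMP}\{\} = \lfm{u}{\RAtMin ux\stoup\Gamma}\IMP\Box F_1^{\IMP}\{\}$ (or the black versions); your induction on the size of $\rels R$ is just a repackaging of the paper's induction on the path from $u$ to $v$, and your explicit use of \cref{obs:lfm-weaken} for the $u$-side conjunct and your handling of the degenerate case $\RAtMin xu=\emptyset$ via the base case make precise two points the paper leaves implicit.
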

\begin{proof}
We proceed by induction on the path $u\conn{R}v$

For the base case 
we have $u = v$.

$\lfm{u}{\lseq{\rels R}{\Gamma, \fm{u}{A}}} = \lfm{u}{\RR \stoup \Gamma} \AND A$
and 
$\ifm{u}{\seq{\rels R}{\Gamma}{\fm{u}{A}}} = \lfm{u}{\lseq{\rels R}{\Gamma}} \IMP A$
So we can set $\fand{} = \lfm{u}{\RR \stoup \Gamma} \AND \{\}$ for (1), and therefore $\fimp{} = \lfm{u}{\lseq{\rels R}{\Gamma}} \IMP \{ \}$ for (2).

For the inductive case 
there is a label $x$ such that $uRx$ or $xRu$, and $x\conn{R}v$.

$\lfm{u}{\lseq{\rels R}{\Gamma, \fm{v}{A}}} = \lfm{u}{\RAtMin ux \stoup \Gamma} \AND \Diamonddot
\lfm{x}{\lseq{\RAtMin xu}{\Gamma,v:A}}$
and
$\ifm{u}{\seq{\rels R}{\Gamma}{\fm{w}{B}}} = 
\lfm{u}{\lseq{\RAtMin ux}{\Gamma}} 
\IMP 
\symb \ifm{x}{\seq{\RAtMin xu}{\Gamma}{\fm{w}{B}}}$
where $\symbdiamond = \DIA$ and $\symb = \BOX$ if $uRx\in\RR$, and $\symbdiamond = \DIAB$ and $\symb = \BOXB$ if $xRu\in\RR$.

By the inductive hypothesis there is $F_1$ s.t.~$\lfm{x}{\lseq{\RAtMin xu}{\Gamma, \fm{v}{A}}} = \fand[F_1]{A}$
and
$\ifm{x}{\seq{\RAtMin xu}{\Gamma}{\fm{v}{A}}} = 
\fimp[F_1]{A}$

So we can set
$\Fl{} = \lfm{u}{\RAtMin ux \stoup \Gamma} \AND \Diamonddot\fand[F_1]{}$ for (1), and therefore 
$\fimp{} = \lfm{u}{\lseq{\RAtMin ux}{\Gamma}} \IMP \symb \fimp[F_1]{}$ for (2).
\end{proof}

\begin{lemma}\label{lem:fm-ctx-2}
Given a sequent $\seq{\rels R}{\Gamma, \fm{v}{A}}{\fm{w}{B}}$, with label $u$ in the support of~$\rels R$. 
Then, there exist $F_1, F_2, F_3$ such that 
\begin{enumerate}
	\item $\ifm{u}{\seq{\rels R}{\Gamma, \fm{v}{A}}{\fm{w}{B}}} = \fimp[F_1]{\fand[F_2]{A} \IMP \fimp[F_3]{B}}$; and
	
	\item
	$\pprove{\ifm{u}{\seq{\rels R}{\Gamma, \fm{w}{B}}{\fm{v}{A}}} \IFF \fimp[F_1]{\fand[F_3]{B} \IMP \fimp[F_2]{A}}}$.
\end{enumerate}
\end{lemma}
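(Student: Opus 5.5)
The plan is to use the polytree structure of $\RR$. The labels $u$, $v$, $w$ all lie in one undirected tree, so I consider the three paths $u\conn{\RR}v$, $u\conn{\RR}w$ and $v\conn{\RR}w$. In a tree these share a unique \emph{median} node $m$: the point where the paths from $u$ to $v$ and from $u$ to $w$ diverge. The path from $u$ to $m$ is common to both. From $m$ the path to $v$ and the path to $w$ share no edges.

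I will build $F_1$, $F_2$, $F_3$ as follows. First, unfold $\ifm{u}{\seq{\rels R}{\Gamma, \fm{v}{A}}{\fm{w}{B}}}$ along the path from $u$ to $m$, exactly as in the proof of \cref{lem:fm-ctx}, by induction on the length of that path. At each step the clause of \cref{def:intuitionistic-interpretation} applies, because $w$ lies beyond the next node. It produces $\lfm{u'}{\RAtMin{u'}{x}\stoup\Gamma,v:A}\IMP\symb\{\}$. The subtree $\RAtMin{u'}{x}$ contains neither $v$ nor $w$, so by \cref{obs:lfm-weaken} the formula $v:A$ can be dropped there. This yields an implication context $F_1$ that does not depend on $A$ or $B$, together with the identity
\[\ifm{u}{\seq{\rels R}{\Gamma, \fm{v}{A}}{\fm{w}{B}}}=\fimp[F_1]{\ifm{m}{\seq{\RR'}{\Gamma,\fm{v}{A}}{\fm{w}{B}}}},\]
where $\RR'$ is the subtree rooted at $m$.

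Second, at $m$ the clauses give $\lfm{m}{\dots}\IMP\dots$. Split the conjunction $\lfm{m}{\RR'\stoup\Gamma,v:A}$ into two parts. One part is the conjunct(s) pointing towards $v$; by \cref{lem:fm-ctx}(1) it has the form $\fand[F_2]{A}$, and its remaining data are absorbed into $F_2$. The other part is everything else at $m$, excluding the branch towards $w$, which is used by the $\symb$-step. Call this remainder $L$. Applying \cref{lem:fm-ctx}(2) on the branch towards $w$ gives the form $\fimp[F_3]{B}$.

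Third, rearrange the result. Literally it has the shape $(L\AND \fand[F_2]{A})\IMP \fimp[F_3]{B}$. To reach the stated syntactic equality for (1), I fold $L$ into $F_3$ by currying: I choose $F_3$ to begin with $L\IMP$. This gives $\fand[F_2]{A}\IMP(L\IMP\cdots)$. I must check that this currying is a literal identity under a suitable choice of bracketing of $\Band{}$, or else accept that item (1) is only meant up to the chosen presentation. If needed, I will take the order of the conjuncts in $\lfm{}{}$ to put the $v$-branch last, so that the outermost form is exactly $\fand[F_2]{A}\IMP\fimp[F_3]{B}$ with the remainder of $m$ inside $F_3$.

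Fourth, for item (2), run the same decomposition on $\ifm{u}{\seq{\rels R}{\Gamma, \fm{w}{B}}{\fm{v}{A}}}$ with the roles of $v$ and $w$ swapped. The outer context is still $F_1$, since the path from $u$ to $m$ is the same. At $m$ we obtain $\fand[G_3]{B}\IMP\fimp[G_2]{A}$, where $G_3$ and $G_2$ are the conjunction and implication context duals of $F_3$ and $F_2$, translated via the bullet list preceding \cref{lem:fm-ctx}, with $L$ placed appropriately. The exact syntax will differ only by the position of $L$ and by the association of $\AND$. So I will prove the biconditional in $\IKtSO$. Curry/uncurry and commute conjunctions, which is routine $\IPL$. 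Then lift these equivalences through the contexts $F_1$, $\fand[]{}$ and $\fimp[]{}$. For this I need a monotonicity lemma: if $\IKtSO\proves C\IFF D$ then $\IKtSO\proves F\{C\}\IFF F\{D\}$ for both kinds of context. This follows by induction on $F$ using $\necw$, $\necb$, $\functwb$, $\functbb$, $\functwd$ and $\functbd$, the last two being derivable as shown earlier.

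I expect the main obstacle to be the bookkeeping at the median node $m$, including the degenerate cases $m=v$, $m=w$, $m=u$ and $v=w$. The issue is making sure the side data $L$ at $m$ is placed so that (1) holds on the nose, while (2) needs only provable equivalence. I will handle this by fixing conventions for the order of conjuncts and treating the degenerate cases with empty contexts separately.
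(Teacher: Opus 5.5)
Your decomposition is the paper's in different clothing. The paper's induction along the path from $u$ to $w$, splitting on whether $v$ lies on $u$'s side of the next edge, stops exactly at your median node $m$. Your construction of $F_1$ and your handling of (2) are also fine (currying, commuting conjuncts, and lifting the equivalence through $F_1^{\IMP}$ by necessitation and distribution).

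The step that fails is the third. At $m$ the formula is literally $(L\AND D)\IMP E$. Here $D$ is the conjunct leading to $v$ (with $D=A$ if $m=v$). $E$ is $B$ if $m=w$, and otherwise $\Box$ or $\blacksquare$ applied to $\ifm{x}{\RAtMin{x}{m}\stoup\Gamma\seqar w:B}$, for $x$ the neighbour of $m$ towards $w$.

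Currying yields $D\IMP L\IMP E$, which is a \emph{different formula}; recall that $\AND$ is itself a second-order abbreviation. No bracketing or reordering of the conjuncts of $\mathtt{lfm}$ can make the two coincide. Once the formula is an implication with antecedent $L\AND D$, any literal decomposition as $\fand[F_2]{A}\IMP\fimp[F_3]{B}$ forces $L$ into $F_2$. So with $L$ folded into $F_3$ you obtain (1) only up to provable equivalence, which is weaker than what is stated. In effect you have set up the mirror image of the lemma: (2) holds syntactically (up to the order of conjuncts), and (1) only as an equivalence.

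The repair is the paper's choice: apply \cref{lem:fm-ctx}(1) to the whole antecedent $L\AND D$ at $m$, not just to $D$. Your convention of putting the $v$-branch last is exactly what makes this fit the grammar $C\AND F^{\AND}\{\}$. Its effect, however, is to put $L$ into $F_2$:
\begin{itemize}
\item $F_2^{\AND}=L\AND G^{\AND}$, where $D=G^{\AND}\{A\}$ (so $G^{\AND}=\{\}$ if $m=v$);
\item $F_3^{\IMP}$ is just $\Box$ or $\blacksquare$ in front of the context for the $w$-branch (or $\{\}$ if $m=w$).
\end{itemize}
Then (1) is an identity, and the currying moves to (2), where only equivalence is needed. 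There the interpretation at $m$ of $\seq{\RR'}{\Gamma,\fm{w}{B}}{\fm{v}{A}}$ is $(L\AND\fand[F_3]{B})\IMP G^{\IMP}\{A\}$, up to the order of conjuncts. The target is $\fand[F_3]{B}\IMP\fimp[F_2]{A}=\fand[F_3]{B}\IMP L\IMP G^{\IMP}\{A\}$.

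Identifying the consequent with $G^{\IMP}\{A\}$ uses that \cref{lem:fm-ctx} provides a single context serving both $\mathtt{lfm}$ and $\mathtt{ifm}$. Its clause (1) alone is not enough.
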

\begin{proof}   
Proof of (1):
We proceed by induction on the path $u\conn{\RR}w$.
\begin{itemize}
	\item $u = w$:
	$\ifm{w}{\rels R \stoup \Gamma, v:A \seqar w:B}
	= \lfm{w}{\rels R \stoup \Gamma, v:A} \IMP B$
	
	Using~\cref{lem:fm-ctx}(1), there is $F_2$ s.t.~$\lfm{u}{\lseq{\rels R}{\Gamma, \fm{v}{A}}} = \fand[F_2]{A}$.
	So,
	$$\ifm{w}{\rels R \stoup \Gamma, v:A \seqar w:B}
	= \Fl[2]{A} \IMP B$$
	
	\item there is a label $x$ such that $uRx$ or $xRu\in\RR$ and $x \conn{\RR} w$: then, there are two possible cases:
	
	\begin{itemize}
		\item $v\in\RAtMin ux$: then, by~\cref{obs:weaken}
		\begin{multline*}
			\ifm{u}{\seq{\rels R}{\Gamma, \fm{v}{A}}{\fm{w}{B}}}
			\\
			= 
			\lfm{u}{\lseq{\RAtMin ux}{\Gamma, \fm{v}{A}}} 
			\IMP 
			\symb \ifm{x}{\seq{\RAtMin xu}{\Gamma}{\fm{w}{B}}}
		\end{multline*}
		where $\symb = \BOX$ if $uRx \in \RR$ and $\symb = \BOXB$ if $xRu\in\RR$.
		
		Using~\cref{lem:fm-ctx}(1), there is $F_2$ s.t.~$\lfm{u}{\lseq{\RAtMin ux}{\Gamma, \fm{v}{A}}} = \fand[F_2]{A}$.
		
		Using~\cref{lem:fm-ctx}(2), there is $F'_3$ s.t.~$\ifm{w_1}{\seq{\RAtMin xu}{\Gamma}{\fm{w}{B}}} = \fimp[{F'}_3]{B}$.
		Set $\fimp[F_3]{\ } = \symb \fimp[{F'}_3]{\ }$ 
		and we get
		$$
		\ifm{u}{\seq{\rels R}{\Gamma, \fm{v}{A}}{\fm{w}{B}}} = \fand[F_2]{A} \IMP \fimp[F_3]{B}
		$$
		
		\item $v\in\RAtMin xu$: then, by~\cref{obs:lfm-weaken}
		\begin{multline*}
			\ifm{u}{\seq{\rels R}{\Gamma, \fm{v}{A}}{\fm{w}{B}}}
			\\
			=
			\lfm{u}{\lseq{\RAtMin ux}{\Gamma}} 
			\IMP 
			\symb \ifm{x}{\seq{\RAtMin xu}{\Gamma, \fm{v}{A}}{\fm{w}{B}}}
		\end{multline*}
		where $\symb = \BOX$ if $uRx \in \RR$ and $\symb = \BOXB$ if $xRu\in\RR$.
		
		By the inductive hypothesis, there exist $F_1', F_2, F_3$ s.t.
		$$
		\ifm{x}{\seq{\RAtMin xu}{\Gamma, \fm{v}{A}}{\fm{w}{B}}} = \fimp[F_1']{\fand[F_2]{A} \IMP \fimp[F_3]{B}}  $$
		
		Set $\fimp[F_1]{\ } = \lfm{u}{\lseq{\RAtMin ux}{\Gamma}} \IMP \symb \fimp[F_1']{ \ }$ and we get
		$$
		\ifm{u}{\seq{\RR}{\Gamma, \fm{v}{A}}{\fm{w}{B}}} = \fimp[F_1]{\fand[F_2]{A} \IMP \fimp[F_3]{B}}
		$$
	\end{itemize}
	
\end{itemize}

Proof of (2):
We proceed by induction on the path $u\conn{\RR}w$.
\begin{itemize}
	\item $u = w$:

	\begin{itemize}
		\item either $u=v$:
		\begin{flalign*}
			\ifm{u}{\seq{\rels R}{\Gamma, \fm{u}{A}}{\fm{u}{B}}} 
			&= 
			\lfm{u}{\lseq{\RR}{\Gamma, u:A}}
			\IMP B 
			\\
			&=
			(\lfm{u}{\lseq{\RR}{\Gamma}} \AND A) 
			\IMP B
			\\
			\ifm{u}{\RR \stoup \Gamma, u:B \seqar u:A}
			&=
			\lfm{u}{\RR \stoup \Gamma, u:B} \IMP A
			\\
			&=
			(B \AND \lfm{u}{\RR \stoup\Gamma}) \IMP A
			\\
			&
			\IFF
			B \IMP \lfm{u}{\RR \stoup\Gamma} \IMP A
		\end{flalign*}

		\item or there is a label $y$ such that $uRy$ or $yRu$, and $y\conn{\RR}v$
		
		\begin{multline*}
			\ifm{u}{\seq{\rels R}{\Gamma, \fm{v}{A}}{\fm{u}{B}}} 
			= 
			\lfm{u}{\lseq{\RAtMin ux}{\Gamma, \fm{v}{A}}} 
			\IMP B
			\\
			=
			(\lfm{u}{\lseq{\RAtMinMin uxy}{\Gamma}} \AND \Diamondtimes\lfm{y}{\RAtMin yu\stoup\Gamma, v:A}) 
			\IMP B
		\end{multline*}
		\begin{multline*}
			\ifm{u}{\RR \stoup \Gamma, w:B \seqar v:A}
			\\ =
			\lfm{u}{\lseq{\RAtMin uy}{\Gamma, \fm{w}{B}}} 
			\IMP 
			\Boxplus \ifm{y}{\seq{\RAtMin yu}{\Gamma}{\fm{v}{A}}}
			\\
			=
			(\lfm{u}{\lseq{\RAtMinMin uxy}{\Gamma}} \AND B)
			\IMP 
			\Boxplus \ifm{y}{\seq{\RAtMin yu}{\Gamma}{\fm{v}{A}}}
			\\
			\IFF
			B \IMP \lfm{u}{\lseq{\RAtMinMin uxy}{\Gamma}} \IMP
			\Boxplus \ifm{y}{\seq{\RAtMin yu}{\Gamma}{\fm{v}{A}}}
		\end{multline*}
		where $\Boxplus = \BOX$ if $uRy \in \RR$ and $\Boxplus = \BOXB$ if $yRu\in\RR$.
		
	\end{itemize}

	\item there is a label $x$ such that $uRx$ or $xRu\in\RR$ and $x \conn{\RR} w$ and $v\in\RAtMin ux$: 
	
	\begin{itemize}
		\item either $u=v$:
		\begin{multline*}
			\ifm{v}{\seq{\rels R}{\Gamma, \fm{v}{A}}{\fm{w}{B}}} 
			\\= 
			\lfm{v}{\lseq{\RAtMin vx}{\Gamma, v:A}}
			\IMP 
			\symb \ifm{x}{\seq{\RAtMin xv}{\Gamma}{\fm{w}{B}}} 
			\\
			=
			(\lfm{v}{\lseq{\RAtMin vx}{\Gamma}} \AND A) 
			\IMP 
			\symb \ifm{x}{\seq{\RAtMin xv}{\Gamma}{\fm{w}{B}}} 
		\end{multline*}
		\begin{multline*}
			\ifm{v}{\RR \stoup \Gamma, w:B \seqar v:A}
			\\=
			\lfm{v}{\RR \stoup \Gamma, w:B} \IMP A
			\\
			=
			(\Diamonddot\lfm{x}{\RAtMin xv\stoup\Gamma,w:B} \AND \lfm{v}{\RAtMin vx\stoup\Gamma}) \IMP A
			\\
			\IFF
			\Diamonddot\lfm{x}{\RAtMin xv\stoup\Gamma,w:B} \IMP \lfm{v}{\RAtMin vx\stoup\Gamma} \IMP A
		\end{multline*}

		\item or there is a label $y$ such that $uRy$ or $yRu$, and $y\conn{\RR}v$

		\begin{multline*}
			\ifm{u}{\seq{\rels R}{\Gamma, \fm{v}{A}}{\fm{w}{B}}} 
			\\ = 
			\lfm{u}{\lseq{\RAtMin ux}{\Gamma, \fm{v}{A}}} 
			\IMP 
			\symb \ifm{x}{\seq{\RAtMin xu}{\Gamma}{\fm{w}{B}}}
			\\
			=
			(\lfm{u}{\lseq{\RAtMinMin uxy}{\Gamma}} \AND \Diamondtimes\lfm{y}{\RAtMin yu\stoup\Gamma, v:A}) 
			\\
			\IMP 
			\symb \ifm{x}{\seq{\RAtMin xu}{\Gamma}{\fm{w}{B}}}
		\end{multline*}
		\begin{multline*}
			\ifm{u}{\RR \stoup \Gamma, w:B \seqar v:A}
			\\ =
			\lfm{u}{\lseq{\RAtMin uy}{\Gamma, \fm{w}{B}}} 
			\IMP 
			\Boxplus \ifm{y}{\seq{\RAtMin yu}{\Gamma}{\fm{v}{A}}}
			\\
			=
			(\lfm{u}{\lseq{\RAtMinMin uxy}{\Gamma}} \AND \Diamonddot\lfm{x}{\lseq{\RAtMin xu}{\Gamma, \fm{w}{B}}})
			\\
			\IMP 
			\Boxplus \ifm{y}{\seq{\RAtMin yu}{\Gamma}{\fm{v}{A}}}
			\\
			\IFF
			\Diamonddot\lfm{x}{\lseq{\RAtMin xu}{\Gamma, \fm{w}{B}}}
			\\
			\IMP \lfm{u}{\lseq{\RAtMinMin uxy}{\Gamma}} \IMP
			\Boxplus \ifm{y}{\seq{\RAtMin yu}{\Gamma}{\fm{v}{A}}}
		\end{multline*}
		where $\symb = \BOX$ if $uRx \in \RR$ and $\symb = \BOXB$ if $xRu\in\RR$, and
		where $\Boxplus = \BOX$ if $uRy \in \RR$ and $\Boxplus = \BOXB$ if $yRu\in\RR$.
		
	\end{itemize}
	
	\item there is a label $x$ such that $uRx$ or $xRu\in\RR$ and $x \conn{\RR} w$ and $v\in\RAtMin xu$: 
	\begin{multline*}
		\ifm{u}{\seq{\rels R}{\Gamma, \fm{v}{A}}{\fm{w}{B}}} 
		\\=
		\lfm{u}{\lseq{\RAtMin ux}{\Gamma}} 
		\IMP 
		\symb \ifm{x}{\seq{\RAtMin xu}{\Gamma, \fm{v}{A}}{\fm{w}{B}}}
		\\
		=
		\lfm{u}{\lseq{\RAtMin ux}{\Gamma}} \IMP \symb \fimp[F_1']{\fand[F_2]{A} \IMP \fimp[F_3]{B}}
	\end{multline*}
	where $\symb = \BOX$ if $uRx \in \RR$ and $\symb = \BOXB$ if $xRu\in\RR$.

	By inductive hypothesis, 
	$\ifm{x}{\seq{\RAtMin xu}{\Gamma, \fm{w}{B}}{\fm{v}{A}}}$
	is equivalent to
	$\fimp[F_1']{\fand[F_3]{B} \IMP \fimp[F_2]{A}}$, so:
	\begin{multline*}
		\ifm{u}{\RR \stoup \Gamma, w:B \seqar v:A}
		\\ = 
		\lfm{u}{\lseq{\RAtMin ux}{\Gamma}} 
		\IMP 
		\symb \ifm{x}{\seq{\RAtMin xu}{\Gamma, \fm{w}{B}}{\fm{v}{A}}}
		\\
		\IFF
		\lfm{u}{\lseq{\RAtMin ux}{\Gamma}} 
		\IMP 
		\symb \fimp[F_1']{\fand[F_3]{B} \IMP \fimp[F_2]{A}}
	\end{multline*}
	\qedhere
\end{itemize}

\end{proof}

\subsection{Technical lemmas}

The following lemmas capture the behaviour of contexts which we need for proving the soundness of the sequent rules.
Note that we can view most of these lemmas as generalisations of modal rules and axioms such as necessitation or functoriality of modalities.

		\begin{lemma} \label{lem:GenBoxNec}
If $\IKtSO \vdash A$ then $\IKtSO \vdash \fimp{A}$.
\end{lemma}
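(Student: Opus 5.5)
We argue by induction on the structure of the implication context $F^\to\{\}$, following its grammar
\[
F^\to \{\} ::= \{\} \spa B \to F^\to \{\} \spa \Box F^\to \{\} \spa \tBox F^\to \{\}.
\]
We prove, for every implication context $F^\to$ and every formula $A$, that $\IKtSO \vdash A$ implies $\IKtSO \vdash \fimp{A}$. It is convenient to peel off the outermost constructor of $F^\to$, so that the inductive hypothesis is applied to the inner context $F'^\to$. Since the derivability assumption on $A$ is fixed, the choice of peeling from the outside or the inside is immaterial. We peel from the outside: $\fimp{A}$ is $B \to \fimp[F']{A}$, $\Box \fimp[F']{A}$ or $\tBox\fimp[F']{A}$.

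The base case $F^\to\{\} = \{\}$ is trivial, since $\fimp{A} = A$. The three inductive cases are as follows.
\begin{itemize}
\item If $F^\to\{\} = B \to F'^\to\{\}$, the inductive hypothesis gives $\IKtSO \vdash \fimp[F']{A}$. The instance $\fimp[F']{A} \to B \to \fimp[F']{A}$ of axiom $\Kax$ together with $\mp$ then yields $B \to \fimp[F']{A}$.
\item If $F^\to\{\} = \Box F'^\to\{\}$, we apply $\necw$ to the derivation of $\fimp[F']{A}$ provided by the inductive hypothesis.
\item If $F^\to\{\} = \tBox F'^\to\{\}$, we apply $\necb$ in the same way.
\end{itemize}

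No step is a real obstacle; the lemma is simply a generalised necessitation rule. The only point needing care is to check that the syntactic definition of substitution into implication contexts commutes with the constructors, i.e.\ $\fimp{A}$ is literally $B \to \fimp[F']{A}$, and similarly for the modal cases. This holds by definition, and because $F^\to$ contains a unique hole and no binders over it, no variable-capture issues arise. In particular, unlike conjunction contexts, no appeal to the second-order encodings of $\land$, $\Diamond$ or $\blacklozenge$ is needed here.
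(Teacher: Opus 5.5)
Your proof is correct and matches the paper's argument: induction on the structure of $F^\to$, with the implication case handled by $\mp$ against an instance of $\Kax$ and the two modal cases by $\necw$ and $\necb$.
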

\begin{proof}
We proceed by induction over the structure of $\fimp$. 

The base case $\fimp{} = \{\}$ follows immediately.

If $\fimp{A} = B \to \fand[F_1]{A}$:
By inductive hypothesis, $\IKtSO \vdash \fand[F_1]{A}$.
By $\mp$ on the axiom $\fimp[F_1]{A} \IMP B \IMP \fimp[F_1]{A}$: 
$\IKtSO \vdash B \to \fimp[F_1]{A}$.

If $\fimp{A} = \Box \fimp[F_1]{A}$ or $\tBox \fimp[F_1]{A}$:
By inductive hypothesis, $\IKtSO \vdash \fimp[F_1]{A}$.
By $\necw$ or $\necb$, $\IKtSO \vdash \Box\fimp[F_1]{A}$ or $\tBox\fimp[F_1]{A}$ respectively.
\end{proof}

\begin{lemma} \label{lem:GenKBOX}
$\IKtSO \vdash \fimp{A \IMP B} \IMP \fimp{A} \IMP \fimp{B}$
\end{lemma}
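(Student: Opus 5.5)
We prove the statement by induction on the structure of the implication context $F^\to\{\}$, in the same way as \cref{lem:GenBoxNec}. The claim is a generalised distribution axiom, combining the $\Sax$ axiom and the $\functwb$/$\functbb$ axioms along the spine of the context.

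\emph{Base case.} Here $F^\to\{\} = \{\}$. The goal is $(A\IMP B)\IMP A \IMP B$, which is an $\IPL$ tautology.

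\emph{Implication case.} Suppose $F^\to\{\} = C \IMP F_1^\to\{\}$. The inductive hypothesis gives
\[
\IKtSO\proves F_1^\to\{A\IMP B\}\IMP F_1^\to\{A\}\IMP F_1^\to\{B\}.
\]
We must show $(C\IMP F_1^\to\{A\IMP B\})\IMP (C\IMP F_1^\to\{A\})\IMP C\IMP F_1^\to\{B\}$. This follows from the inductive hypothesis by routine $\IPL$ reasoning. Concretely, from a provable $D\IMP E\IMP G$ one derives $(C\IMP D)\IMP(C\IMP E)\IMP C\IMP G$ by $\Kax$, $\Sax$ and $\mp$.

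\emph{Box case.} Suppose $F^\to\{\} = \Box F_1^\to\{\}$. Apply $\necw$ to the inductive hypothesis to get $\Box(F_1^\to\{A\IMP B\}\IMP F_1^\to\{A\}\IMP F_1^\to\{B\})$. Two applications of $\functwb$ and $\mp$ then give
\[
\Box F_1^\to\{A\IMP B\}\IMP \Box(F_1^\to\{A\}\IMP F_1^\to\{B\}).
\]
A further instance of $\functwb$, composed using $\IPL$ (transitivity), yields $\Box F_1^\to\{A\IMP B\}\IMP\Box F_1^\to\{A\}\IMP\Box F_1^\to\{B\}$.

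\emph{Black box case.} The case $\blacksquare F_1^\to\{\}$ is symmetric, using $\necb$ and $\functbb$ in place of $\necw$ and $\functwb$.

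No step is a genuine obstacle. The only care needed is in the box cases, where $\functwb$ must be applied twice (once to the outer implication and once to the inner one) and the results chained together with $\IPL$ reasoning. In particular, neither the second-order machinery nor the diamonds are involved, since implication contexts are built only from $\IMP$, $\Box$ and $\blacksquare$.
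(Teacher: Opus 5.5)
Your proof is correct and takes the same route as the paper: induction on the implication context, with $\IPL$ reasoning in the base and $C\IMP F_1^\to\{\}$ cases, and necessitation followed by distribution ($\functwb$, resp.\ $\functbb$) in the modal cases. One small bookkeeping slip: reaching $\Box F_1^\to\{A\IMP B\}\IMP\Box(F_1^\to\{A\}\IMP F_1^\to\{B\})$ needs only one instance of $\functwb$ plus $\mp$; the second instance is the one you compose with afterwards, as your closing remark correctly says.
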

\begin{proof}
We proceed by induction on the structure of $\fimp{}$.

The base case $\fimp{} = \{ \}$ follows from the fact that $(A \IMP B) \IMP A \IMP B$ is a theorem of $\IPL$.

We have 3 inductive cases:
\begin{itemize}
	\item $\fimp{} = C \IMP \fimp[F_1]{}$.
	By inductive hypothesis $\pprove{\fimp[F_1]{A \IMP B}  \IMP \fimp[F_1]{A} \IMP \fimp[F_1]{B}}$.
	So propositional reasoning gives 
	$\pprove{(C \IMP \fimp[F_1]{A \IMP B}) \IMP (C \IMP \fimp[F_1]{A}) \IMP (C \IMP \fimp[F_1]{B})}$.
	
	\item $\fimp{} = \BOX \fimp[F_1]{}$.
	By the inductive hypothesis, $\pprove{\fimp[F_1]{A \IMP B}  \IMP \fimp[F_1]{A} \IMP \fimp[F_1]{B}}$.
	Apply $\necw$ to get $\pprove{\BOX (\fimp[F_1]{A \IMP B}  \IMP \fimp[F_1]{A} \IMP \fimp[F_1]{B})}$.
	By $\functwb$ and $\mp$ $\pprove{\BOX \fimp[F_1]{A \IMP B} \IMP \BOX \fimp[F_1]{A} \IMP \BOX \fimp[F_1]{B}}$.

	\item $\fimp{} = \BOXB \fimp[F_1]{}$.
	By the inductive hypothesis $\pprove{\fimp[F_1]{A \IMP B}  \IMP \fimp[F_1]{A} \IMP \fimp[F_1]{B}}$.
	Apply $\necb$ to get $\pprove{\BOXB (\fimp[F_1]{A \IMP B}  \IMP \fimp[F_1]{A} \IMP \fimp[F_1]{B})}$.
	By $\functbb$ and $\mp$ $\pprove{\BOXB \fimp[F_1]{A \IMP B} \IMP \BOXB \fimp[F_1]{A} \IMP \BOXB \fimp[F_1]{B}}$.  \qedhere
\end{itemize}
\end{proof}

\begin{lemma} \label{lem:GenKDIA}
$\IKtSO \vdash \fimp{A \IMP B} \IMP \fand{A} \IMP \fand{B}$
\end{lemma}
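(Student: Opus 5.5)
We argue by induction on the structure of the context $F$, where $\fimp{}$ and $\fand{}$ are the corresponding implication and conjunction contexts. Recall that they are built in parallel: $\{\}$ matches $\{\}$, $C \IMP \fimp[F_1]{}$ matches $C \AND \fand[F_1]{}$, $\BOX\fimp[F_1]{}$ matches $\DIA\fand[F_1]{}$, and $\BOXB\fimp[F_1]{}$ matches $\DIAB\fand[F_1]{}$. At each step the inductive hypothesis is
\[
\pprove{\fimp[F_1]{A\IMP B}\IMP \fand[F_1]{A}\IMP\fand[F_1]{B}} .
\]

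\textbf{Base case.} Here $F$ is the empty context, so the goal is $(A \IMP B) \IMP A \IMP B$, which is an $\IPL$ theorem.

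\textbf{Case $\fimp{} = C \IMP \fimp[F_1]{}$.} The goal is
\[
(C\IMP \fimp[F_1]{A\IMP B}) \IMP (C\AND\fand[F_1]{A}) \IMP (C\AND \fand[F_1]{B}).
\]
Assume the antecedents. From $C$ and the first hypothesis we obtain $\fimp[F_1]{A\IMP B}$. The inductive hypothesis then gives $\fand[F_1]{B}$, and we pair this with $C$. All of this is $\IPL$ reasoning, using the second-order encoding of $\AND$; its introduction and elimination principles are derivable in $\IPL2$.

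\textbf{Case $\fimp{} = \BOX\fimp[F_1]{}$.} The goal is
\[
\BOX\fimp[F_1]{A\IMP B}\IMP \DIA\fand[F_1]{A}\IMP\DIA\fand[F_1]{B}.
\]
First apply $\necw$ to the inductive hypothesis and then $\functwb$, which yields
\[
\BOX\fimp[F_1]{A\IMP B}\IMP \BOX(\fand[F_1]{A}\IMP\fand[F_1]{B}).
\]
Then use $\functwd$, namely $\BOX(D\IMP E)\IMP\DIA D\IMP \DIA E$. This principle was shown derivable in $\IKtSO$ under the encoding of $\DIA$ in the Redundancy example, and it is also available through \cref{dia-iff-so-dfn}. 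Chaining these by $\IPL$ gives the goal.

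\textbf{Case $\fimp{} = \BOXB\fimp[F_1]{}$.} This is symmetric to the previous case, using $\necb$, $\functbb$ and $\functbd$.

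I do not expect a real obstacle here. The only point requiring care is to make sure that $\functwd$ and $\functbd$ are genuinely available in $\IKtSO$ for the defined diamonds, and not merely in $\IKtSOdia$. The Redundancy example covers exactly this. Similarly, the conjunction manipulations in the implication case must be justified by the impredicative definition of $\AND$, which holds by the earlier remark on other logical operators.
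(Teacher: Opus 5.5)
Your proposal is correct and follows the paper's proof essentially step for step. Both argue by induction on the paired contexts: the base case is an $\IPL$ tautology, the conjunction case is propositional reasoning, and the $\Box$ (resp.\ $\blacksquare$) case uses $\necw$, $\functwb$, $\functwd$ (resp.\ $\necb$, $\functbb$, $\functbd$); on your side remark, $\functwd$ and $\functbd$ are in fact axioms of $\IKtSO$ once the diamonds are unfolded, quite apart from being derivable as in the Redundancy example.
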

\begin{proof}
We proceed by induction on the structure of $\fand{}$.

The base case $\fand{} = \{\}$ is follows from the fact that $(A \IMP B) \IMP A \IMP B$ is a theorem of IPL.

We have 3 inductive cases:
\begin{itemize}
	\item $\fand{} = C \AND \fand[F_1]{}$ hence $\fimp{} = C \IMP \fimp[F_1]{}$.
	By the inductive hypothesis, 
	$\IKtSO \vdash \fimp[F_1]{A \IMP B} \IMP \fand[F_1]{A} \IMP \fand[F_1]{B}$
	and so propositional reasoning gives 
	$\pprove{(C \IMP \fimp[F_1]{A \IMP B}) \IMP (C \AND \fand[F_1]{A}) \IMP (C \AND \fand[F_1]{B})}$.
	
	\item $\fand{} = \DIA \fand[F_1]{}$ and $\fimp{} = \BOX \fimp[F_1]{}$.
	By the inductive hypothesis, 
	$\IKtSO \vdash \fimp[F_1]{A \IMP B} \IMP \fand[F_1]{A} \IMP \fand[F_1]{B}$.
	Apply $\necw$ to get $\pprove{\BOX (\fimp[F_1]{A \IMP B} \IMP \fand[F_1]{A} \IMP \fand[F_1]{B})}$.
	By $\functwb$, $\functwd$ and $\mp$ $\pprove{\BOX \fimp[F_1]{A \IMP B} \IMP \DIA\fand[F_1]{A} \IMP \DIA\fand[F_1]{B}}$ 

	\item $\fand{} = \DIAB \fand[F_1]{}$ and $\fimp{} = \BOXB \fimp[F_1]{}$.
	By the inductive hypothesis, 
	$\IKtSO \vdash \fimp[F_1]{A \IMP B} \IMP \fand[F_1]{A} \IMP \fand[F_1]{B}$
	Apply $\necb$ to get $\pprove{\BOXB (\fimp[F_1]{A \IMP B} \IMP \fand[F_1]{A} \IMP \fand[F_1]{B})}$.
	By $\functbb$, $\functbd$ and $\mp$ $\pprove{\BOXB \fimp[F_1]{A \IMP B} \IMP \DIAB \fand[F_1]{A} \IMP \DIAB \fand[F_1]{B}}$. \qedhere
	\end{itemize}
\end{proof}

\begin{lemma} \label{lem:GenFSDiaBox}
$\IKtSO \vdash \fand{A \IMP B} \IMP \fimp{A} \IMP \fand{B}$.
\end{lemma}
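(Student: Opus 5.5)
I would prove this by induction on the structure of the context, exactly parallel to \cref{lem:GenKBOX,lem:GenKDIA}. Here $F$ is a single context read either as a conjunction context $\fand{}$ or, via the translation between the two kinds of context, as an implication context $\fimp{}$. In each inductive step I peel off the outermost constructor and use the induction hypothesis for the smaller context $F_1$.

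\emph{Base case.} Here $\fand{} = \fimp{} = \{\}$. The claim is $(A \IMP B) \IMP A \IMP B$, which is an $\IPL$ theorem.

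\emph{Conjunction case.} Here $\fand{} = C \AND \fand[F_1]{}$ and correspondingly $\fimp{} = C \IMP \fimp[F_1]{}$. We must show
\[
(C \AND \fand[F_1]{A \IMP B}) \IMP (C \IMP \fimp[F_1]{A}) \IMP (C \AND \fand[F_1]{B}).
\]
From $C$ we obtain $\fimp[F_1]{A}$. The induction hypothesis then gives $\fand[F_1]{B}$ from $\fand[F_1]{A \IMP B}$ and $\fimp[F_1]{A}$. Pairing this with $C$ finishes the case, all by propositional reasoning (with $\AND$ understood via its second-order encoding, which behaves as usual in $\IPL2$).

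\emph{Modal cases.} Here $\fand{} = \DIA \fand[F_1]{}$ and $\fimp{} = \BOX \fimp[F_1]{}$, or symmetrically $\fand{} = \DIAB \fand[F_1]{}$ and $\fimp{} = \BOXB \fimp[F_1]{}$. The induction hypothesis gives $\fand[F_1]{A\IMP B} \IMP \fimp[F_1]{A} \IMP \fand[F_1]{B}$. Permuting the antecedents gives $\fimp[F_1]{A} \IMP (\fand[F_1]{A\IMP B} \IMP \fand[F_1]{B})$. Applying $\necw$ and $\functwb$ yields
\[
\BOX\fimp[F_1]{A} \IMP \BOX(\fand[F_1]{A\IMP B} \IMP \fand[F_1]{B}).
\]
Then $\functwd$ turns this into $\BOX\fimp[F_1]{A} \IMP \DIA\fand[F_1]{A\IMP B} \IMP \DIA\fand[F_1]{B}$, and permuting the antecedents once more gives the claim. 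The black case is the same argument with $\necb$, $\functbb$ and $\functbd$, by symmetry.

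\emph{Expected obstacle.} Once the antecedents are permuted, everything reduces to the derivable $\Diamond$-distribution $\functwd$ (and $\functbd$), so the modal cases go through as above. The only point needing care is to use the paper's convention that the $\DIA$/$\BOX$ (and $\DIAB$/$\BOXB$) shapes of the two readings of $F$ correspond, so that the modal case splits consistently.
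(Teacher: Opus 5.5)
Your proof is correct. Its skeleton is the same as the paper's: induction on the context, with the base and conjunction cases handled by propositional reasoning.

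The difference is in the modal case.
\begin{itemize}
\item The paper applies $\necw$ to the induction hypothesis unchanged and uses $\functwd$ to obtain $\DIA\fand[F_1]{A\IMP B}\IMP\DIA(\fimp[F_1]{A}\IMP\fand[F_1]{B})$. It then invokes the Ewald-style principle $\DIA(C\IMP D)\IMP\BOX C\IMP\DIA D$ from \cref{prop:IK-thm}. That principle is item (5), whose printed form swaps the $\DIA$ and $\BOX$ in the consequent.
\item You permute the antecedents before necessitating, so that $\functwb$ followed by $\functwd$ finishes the case directly.
\end{itemize}

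Your step is exactly the $\CK$-derivation of that Ewald principle, inlined into the induction. So your modal case uses only $\necw$, $\functwb$, $\functwd$ (and their black analogues) and does not depend on \cref{prop:IK-thm}, which the paper states without proof. The paper's route is slightly more modular, factoring out a reusable modal fact, but rests on that unproved and misprinted auxiliary statement. Yours is self-contained.
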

\begin{proof}
We proceed by induction on the structure of $\fand{}$.

The base case $\fand{} = \{ \}$ follows from the fact that $(A \IMP B) \IMP A \IMP B$ is a theorem of $\IPL$.

We have 3 inductive cases: %
\begin{itemize}
	\item $\fand{} = C \AND \fand[F_1]{}$ hence $\fimp{} = C \IMP \fimp[F_1]{}$
	By the inductive hypothesis, $\pprove{\fand[F_1]{A \IMP B} \IMP \fimp[F_1]{A} \IMP \fand[F_1]{B}}$, 
	and so propositional reasoning gives 
	$\pprove{(C \AND \fand[F_1]{A \IMP B}) \IMP (C \IMP \fimp[F_1]{A}) \IMP (C \AND \fimp[F_1]{B})}$.
	
	\item $\fand{} = \DIA \fand[F_1]{}$ hence $\fimp{} = \BOX \fimp[F_1]{}$.
	By the inductive hypothesis, $\pprove{\fand[F_1]{A \IMP B} \IMP \fimp[F_1]{A } \IMP \fand[F_1]{B}}$.
	Apply $\necw$ to get $\pprove{\BOX (\fand[F_1]{A \IMP B} \IMP \fimp[F_1]{A } \IMP \fand[F_1]{B})}$.
	
	By axioms $\functwd$ and $\mp$, 
	$\pprove{\DIA \fand[F_1]{A \IMP B} \IMP \DIA(\fimp[F_1]{A } \IMP \fand[F_1]{B})}$.
	With~\cref{prop:IK-thm}(5), transitivity of implication gives 
	$\pprove{\DIA \fimp[F_1]{A \IMP B} \IMP \BOX \fimp[F_1]{A} \IMP \DIA \fimp[F_1]{B}}$.

	\item $\fand{} = \DIAB \fand[F_1]{}$ hence $\fimp{} = \BOXB \fimp[F_1]{}$
	By the inductive hypothesis, $\pprove{\fand[F_1]{A \IMP B} \IMP \fimp[F_1]{A } \IMP \fand[F_1]{B}}$.
	Apply $\necb$ to get $\pprove{\BOXB (\fand[F_1]{A \IMP B} \IMP \fimp[F_1]{A } \IMP \fand[F_1]{B})}$.
	By axioms $\functbd$ and $\mp$, 
	$\pprove{\DIAB \fand[F_1]{A \IMP B} \IMP \DIAB(\fimp[F_1]{A } \IMP \fand[F_1]{B})}$.
	With~\cref{prop:IK-thm}(6), transitivity of implication gives 
	$\pprove{\DIAB \fimp[F_1]{A \IMP B} \IMP \BOXB \fimp[F_1]{A} \IMP \DIAB \fimp[F_1]{B}}$. \qedhere
\end{itemize}
\end{proof}

\begin{lemma} \label{lem:GenInterDiaBox}
$\IKtSO \vdash (\fand{A} \IMP \fimp{B}) \IMP \fimp{A \IMP B} $.
\end{lemma}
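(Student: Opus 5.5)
We prove the statement by induction on the structure of the context $\fandempt{}$ (equivalently $\fimp{}$, using the correspondence between conjunction and implication contexts), following the pattern of \cref{lem:GenKBOX,lem:GenKDIA,lem:GenFSDiaBox}.

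\emph{Base case.} If $\fand{} = \{\}$, then $\fimp{} = \{\}$ and the statement reduces to $(A \IMP B) \IMP A \IMP B$, which is an $\IPL$ theorem.

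\emph{Conjunction case.} Suppose $\fand{} = C \AND \fand[F_1]{}$, so $\fimp{} = C \IMP \fimp[F_1]{}$. By the inductive hypothesis we have
$(\fand[F_1]{A} \IMP \fimp[F_1]{B}) \IMP \fimp[F_1]{A \IMP B}$.
Assume $(C \AND \fand[F_1]{A}) \IMP C \IMP \fimp[F_1]{B}$ and $C$. Currying gives $\fand[F_1]{A} \IMP \fimp[F_1]{B}$, so the inductive hypothesis yields $\fimp[F_1]{A\IMP B}$. Discharging $C$ gives $C \IMP \fimp[F_1]{A\IMP B}$. Everything here is routine $\IPL$ reasoning.

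\emph{Forward modal case.} Suppose $\fand{} = \DIA \fand[F_1]{}$, so $\fimp{} = \BOX\fimp[F_1]{}$. This is where the interaction axiom is needed, namely \cref{prop:IK-thm}(3) (that is, $\diaimpbox$):
\[
(\DIA \fand[F_1]{A} \IMP \BOX \fimp[F_1]{B}) \IMP \BOX(\fand[F_1]{A} \IMP \fimp[F_1]{B}).
\]
Next, apply $\necw$ to the inductive hypothesis and distribute with $\functwb$ to obtain
\[
\BOX(\fand[F_1]{A} \IMP \fimp[F_1]{B}) \IMP \BOX\fimp[F_1]{A\IMP B}.
\]
Chaining the two implications gives the claim.

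\emph{Backward modal case.} Suppose $\fand{} = \DIAB\fand[F_1]{}$, so $\fimp{} = \BOXB\fimp[F_1]{}$. This is symmetric to the previous case, using \cref{prop:IK-thm}(4) together with $\necb$ and $\functbb$.

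The main point of care is the modal cases. Naive $\CK$-style reasoning does not suffice to push the implication $\DIA X \IMP \BOX Y$ under the box; this step genuinely requires $\diaimpbox$. That axiom is available here because $\IKtSO$ contains $\IK$, as shown in \cref{sec:underlying-modal-tense-logics}. The remaining work is to check that the hole-substitution conventions align, i.e.\ that the conjunction and implication contexts decompose in lockstep, which follows from the translation between the two kinds of context given above.
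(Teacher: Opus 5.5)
Your proposal is correct and follows essentially the same route as the paper. Both argue by induction on the context: the base and conjunction cases are handled by $\IPL$ reasoning with the inductive hypothesis. In the modal cases, $\necw$ and $\functwb$ (resp.\ $\necb$ and $\functbb$) are applied to the inductive hypothesis and chained with \cref{prop:IK-thm}(3) (resp.\ (4)).
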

\begin{proof}
We proceed by induction on the structure of $\fand{}$.

The base case $\fand{} = \{ \}$ follows from the fact that 
$(A \IMP B) \IMP A \IMP B$ %
is a theorem of $\IPL$.

We have 3 inductive cases:
\begin{itemize}
	\item $\fand{} = C \AND \fand[F_1]{}$ hence $\fimp{} = C \IMP \fimp[F_1]{}$.
	By the inductive hypothesis $\pprove{(\fand[F_1]{A} \IMP \fimp[F_1]{B}) \IMP \fimp[F_1]{A \IMP B}}$ 

	and propositional reasoning gives 
	$\pprove{((C \AND \fand[F_1]{A}) \IMP (C \IMP \fimp[F_1]{B})) \IMP (C \IMP \fimp{A \IMP B})}$.
	
	\item $\fand{} = \DIA \fand[F_1]{}$ and $\fimp{} = \BOX \fimp[F_1]{}$.
	
	By the inductive hypothesis $\pprove{(\fand[F_1]{A} \IMP \fimp[F_1]{B}) \IMP \fimp{A \IMP B}}$.
	Applying $\necw$, $\pprove{\BOX (\fand[F_1]{A} \IMP \fimp[F_1]{B} \IMP \fimp[F_1]{A \IMP B}})$.
	Then, $\functwb$ and $\mp$ gives $\pprove{\BOX (\fand[F_1]{A} \IMP \fimp[F_1]{B}) \IMP \BOX \fimp[F_1]{A \IMP B}}$.
	With~\cref{prop:IK-thm}(3), transitivity of implication gives

	$\pprove{(\DIA \fand[F_1]{A} \IMP \BOX \fimp[F_1]{A}) \IMP \BOX \fimp[F_1]{A \IMP B}}$.

	\item $\fand{} = \DIAB \fand[F_1]{}$ hence $\fimp{} = \BOXB \fand[F_1]{}$.

	By the inductive hypothesis $\pprove{(\fand[F_1]{A} \IMP \fimp[F_1]{B}) \IMP \fimp{A \IMP B}}$.
	Applying $\necb$, $\pprove{\BOXB (\fand[F_1]{A} \IMP \fimp[F_1]{B} \IMP \fimp[F_1]{A \IMP B}})$.
	Then, $\functbb$ and $\mp$ gives $\pprove{\BOXB (\fand[F_1]{A} \IMP \fimp[F_1]{B}) \IMP \BOXB \fimp[F_1]{A \IMP B}}$.
	With~\cref{prop:IK-thm}(4), transitivity of implication gives 
	$\pprove{(\DIAB \fandempt[F_1] \IMP \BOXB \fimp{A}) \IMP \BOXB \fimp[F_1]{A}}$. \qedhere
	%
	%
\end{itemize}
\end{proof}

\begin{lemma} \label{lem:GenGen}
$\IKtSO \vdash \forall X \fimp{A} \to  \fimp{\forall X A}$
whenever $X$ does not occur freely in $\fimp{}$. 
\end{lemma}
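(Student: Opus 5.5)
We prove the statement by induction on the structure of the implication context $\fimp{}$, following the pattern of \cref{lem:GenKBOX}. The idea is that $\forall X$ commutes past each constructor of $\fimp{}$ as long as $X$ is not free in the side formulas. For the box and black-box constructors we use the distribution of $\Box$ (and $\blacksquare$) over $\forall$ from \cref{ex:Barcan}. For the implication constructor we use the standard $\IPL2$ principle moving a quantifier across an antecedent.

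\emph{Base case.} If $\fimp{} = \{\}$ the claim is $\forall X A \IMP \forall X A$, which is trivial.

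\emph{Case $\fimp{} = C \IMP \fimp[F_1]{}$,} with $X \notin \fv C$. We must derive $\forall X (C \IMP \fimp[F_1]{A}) \IMP C \IMP \fimp[F_1]{\forall X A}$. From $\CA$ (instantiating $X$ by a fresh $P$), $\gen$, $\functforall$ and $\Vacax$, exactly as in the gadget of \cref{ex:box-dist-forall}, we obtain
\[
\forall X (C \IMP \fimp[F_1]{A}) \IMP C \IMP \forall X \fimp[F_1]{A}.
\]
Here the hypothesis $X \notin \fv C$ is what allows $C$ to pass through the quantifier via $\Vacax$. Composing with the inductive hypothesis $\forall X \fimp[F_1]{A} \IMP \fimp[F_1]{\forall X A}$ by $\IPL$ reasoning gives the claim.

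\emph{Case $\fimp{} = \Box \fimp[F_1]{}$.} By \cref{ex:Barcan} we have $\forall X \Box \fimp[F_1]{A} \IMP \Box \forall X \fimp[F_1]{A}$. Applying $\necw$ and $\functwb$ to the inductive hypothesis gives $\Box \forall X \fimp[F_1]{A} \IMP \Box \fimp[F_1]{\forall X A}$. Chaining the two implications finishes the case.

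\emph{Case $\fimp{} = \blacksquare \fimp[F_1]{}$.} This is symmetric: use the black version of \cref{ex:Barcan}, namely $\forall X \blacksquare B \IMP \blacksquare \forall X B$, together with $\necb$ and $\functbb$.

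The main point requiring care is the treatment of bound variables. We instantiate $\forall X$ with a fresh propositional symbol $P$ and then regeneralise using $\gen$. This step relies on $P$ not occurring in the conclusion, and on the fact that substituting $P$ for $X$ commutes with the context: $(\fimp{A})[P/X] = \fimp{A[P/X]}$, which holds because $X$ is not free in $\fimp{}$. The modal cases themselves are just the Barcan-style principles already proved in \cref{ex:Barcan}.
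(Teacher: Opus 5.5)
Your proposal is correct and follows the paper's proof essentially step for step: induction on $\fimp{}$, with the implication case handled by $\functforall$ and $\Vacax$ plus the inductive hypothesis, and the $\Box$/$\blacksquare$ cases handled by applying necessitation and distribution to the inductive hypothesis and chaining with the Barcan-style principle of \cref{ex:Barcan}. In the implication case your appeal to $\CA$ and $\gen$ is unnecessary: $\functforall$ and $\Vacax$ combine directly to give $\forall X (C \IMP \fimp[F_1]{A}) \IMP C \IMP \forall X \fimp[F_1]{A}$.
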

\begin{proof}
We proceed by induction on $\fimp{}$.
The base case $\fimp{} = \{\}$ follows immediately.

We have 3 inductive cases:
\begin{itemize}
	\item $\fimp{} = C \to \fimp[F_1]{}$ with $X \notin \fv C$: 
	The inductive hypothesis gives us $\IKtSO \vdash  \forall X \fimp[F_1]{A} \IMP \fimp[F_1]{\forall X A}$, 
	By combining $\functforall$ and $\Vacax$ 
	$\IKtSO \vdash \forall X ( C \to \fimp[F_1]{A} ) \to C \to \forall X \fimp[F_1]{A}$.
	which by transitivity of implication gives 
	$\IKtSO \vdash \forall X ( C \to \fimp[F_1]{A} ) \to C \to  \fimp[F_1]{ \forall X A}$.
	
	\item $\fimp{} = \Box \fimp[F_1]{}$: 
	By inductive hypothesis we have $\IKtSO \vdash \forall X \fimp[F_1]{A} \to \fimp[F_1]{\forall X A} $. 
	Applying $\necw$ and $\functwb$, we have $\IKtSO \vdash \Box \forall X \fimp[F_1]{A} \to \Box \fimp[F_1]{\forall X A}$.
	With~\cref{ex:Barcan}, transitivity of implication gives 
	$\IKtSO \vdash \forall X \Box \fimp[F_1]{ A} \to \Box \fimp[F_1]{\forall X A}$.

	\item $\fimp{} = \tBox \fimp[F_1]{}$: 
	By inductive hypothesis we have $\IKtSO \vdash \forall X \fimp[F_1]{A} \to \fimp[F_1]{\forall X A} $. 
	Applying $\necb$ and $\functbb$, we have $\IKtSO \vdash \tBox \forall X \fimp[F_1]{A} \to \tBox \fimp[F_1]{\forall X A}$.
	With~\cref{ex:forall-distributes-over-box}(4), transitivity of implication gives 
	$\IKtSO \vdash \forall X \tBox \fimp[F_1]{ A} \to \tBox \fimp[F_1]{\forall X A}$. \qedhere
\end{itemize}
\end{proof}

\subsection{Axiomatic soundness for $\labIKtSO$}

To prove~\cref{prop:axiomsoundness}\eqref{axiomsoundness-intuitionistic}, we will establish a stronger statement:

\begin{lemma}[Soundness of interpretation]
\label{lem:sequent-soundness-intuitionistic}
If $\labIKtSO \proves \RR\stoup\Gamma\seqar w:A$ then $\IKtSO \proves \ifm{u}{\RR\stoup\Gamma\seqar w:A}$, for~$u$ occurring in $\rels R$ or~$u=w$.
\end{lemma}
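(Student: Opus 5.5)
The proof goes by induction on the height of a $\labIKtSO$ derivation of $\RR\stoup\Gamma\seqar w:A$. By \cref{lem:tree-like} we may assume every sequent in it is a labelled polytree sequent, so every interpretation $\ifm{u}{\cdot}$ occurring below is defined. The first step is to reduce the statement to a single label. We show that for two labels $u,u'$ of the same polytree sequent, $\IKtSO\proves\ifm{u}{\sequent}$ iff $\IKtSO\proves\ifm{u'}{\sequent}$; it suffices to treat adjacent labels $uRx\in\RR$.

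For this reduction, write the interpretation at $u$ via \cref{lem:fm-ctx} as $\lfm{u}{\RAtMin ux\stoup\Gamma}\IMP\BOX\ifm{x}{\RAtMin xu\stoup\Gamma\seqar w:A}$, in the case where the path to $w$ passes through $x$. To move the stoup to $x$, use the adjunction axioms: $\wbbd$ and $\bdwb$ in the derived form $(C\IMP\BOX D)$ iff $(\DIAB C\IMP D)$ at the level of theorems, together with \cref{prop:IK-thm} to commute the conjuncts $\lfm{}{}$ past boxes. If instead the path to $w$ goes away from $x$, the symmetric case uses $\DIA$ and $\BOXB$. Once this is in place, it suffices for each rule to prove soundness at one convenient label, typically the label of the principal formula.

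The second step is the case analysis on the last rule. For $\id$, take $u=v$, giving $\lfm{v}{\cdot\stoup v:A}\IMP A$, an $\IPL$ tautology. For weakening and contraction, use \cref{lem:fm-ctx-2}(1) to isolate the affected formula as $\fimp[F_1]{\fand[F_2]{B}\IMP\fimp[F_3]{A}}$. Then apply \cref{lem:GenBoxNec}, \cref{lem:GenKBOX} and \cref{lem:GenKDIA} to the propositional implications $B\AND B\IMP B$ and $B\AND C\IMP C$.

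For $\rr\IMP$ at label $w$, the interpretation is literally $\IPL$-equivalent (currying). For $\lr\IMP$, first apply \cref{lem:fm-ctx-2}(2) to the right premiss, which swaps the succedent $w:C$ with the antecedent $v:B$; then combine with the left premiss, whose succedent is $v:A$, using \cref{lem:GenFSDiaBox} and \cref{lem:GenInterDiaBox} to push the implication $A\IMP B$ through the contexts. $\cut$ is handled identically, with the identity formula in place of $A\IMP B$.

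For $\lr\forall$, the axiom $\CA$ pushed into a conjunction context via \cref{lem:GenKDIA} suffices. For $\rr\forall$, apply $\gen$ to the premiss interpretation (with $P$ fresh), then \cref{lem:GenGen} to move $\forall X$ inside the context. For $\rr\BOX$ and $\rr\BOXB$, by the choice of label the premiss interpretation at $v$ is $\lfm{v}{\RR\stoup\Gamma}\IMP\BOX(\top\IMP A)$, where the fresh $w$ is a leaf; this is equivalent to the conclusion's interpretation. For $\lr\BOX$ with $vRw$, work at label $v$: the premiss has $w:A$ inside a $\DIA$-conjunct and the conclusion has $v:\BOX A$ as a conjunct. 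The needed step is $\BOX A\AND\DIA C\IMP\DIA(A\AND C)$ (\cref{prop:IK-thm}(1)), propagated through contexts by \cref{lem:GenKDIA}. $\lr\BOXB$ is symmetric, using \cref{prop:IK-thm}(2).

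The main obstacle I expect is the two-premiss rules $\lr\IMP$ and $\cut$, where the succedent label changes between premisses. Bookkeeping the three contexts from \cref{lem:fm-ctx-2} requires care, especially making sure the side contexts $\Gamma,\Gamma'$ match once weakened into a common sequent. I would first weaken both premisses to the full context $\Gamma,\Gamma'$ using the weakening case already established, and only then combine them.
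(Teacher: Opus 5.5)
Your overall strategy works, but it is organised differently from the paper's. The paper never shows that provability of $\ifm{u}{\sequent}$ is independent of the label $u$. Its \cref{lem:local-soundness} is proved at every label separately, which is why its $\lr\BOX$ case splits into four subcases, two of which invoke $\bdwb$.

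Your label-invariance step is correct. Crossing an edge $uRx$ turns $L\IMP\BOX(M\IMP N)$ into a reordering of $(M\AND\DIAB L)\IMP N$. These are interderivable as theorems by currying plus the rule form of adjunction: $\necb$, $\functbd$, $\bdwb$ in one direction, and $\necw$, $\functwb$, $\wbbd$ in the other (symmetrically for $(\BOXB,\DIA)$). No appeal to \cref{prop:IK-thm} is needed.

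This factors out every use of adjunction once and for all, and you could exploit it further. Work at the succedent label $w$, after your weakening step. Then the premisses of $\lr\IMP$ read $\fimp{A}$ and $\fand{B}\IMP C$ for a single context $F$ from \cref{lem:fm-ctx}. \cref{lem:GenFSDiaBox} alone then yields $\fand{A\IMP B}\IMP C$, so \cref{lem:GenInterDiaBox} (and hence $\diaimpbox$) is not needed there.

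However, your $\lr\BOX$ case misses a configuration. You assert that at label $v$ the premiss has $w:A$ inside a $\DIA$-conjunct. This holds only when the succedent label $x$ lies on $v$'s side of the edge $vRw$.

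Suppose instead $x=w$ or $x$ lies beyond $w$. Then at $v$ the premiss has the form $M\IMP\BOX((A\AND D)\IMP E)$ and the conclusion has the form $(M\AND\BOX A)\IMP\BOX(D\IMP E)$, for suitable $M,D,E$. Here $w:A$ sits in an antecedent under the $\BOX$ of the implication context, so $\BOX A\AND\DIA C\IMP\DIA(A\AND C)$ is of no use.

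Label invariance does not rescue this, because the configuration is fixed by the position of $x$, not by the viewpoint. From labels on $w$'s side you instead face $\DIAB(\BOX A\AND D)$ against $A\AND\DIAB D$.

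The fix is easy: use $\BOX((A\AND D)\IMP E)\IMP\BOX A\IMP\BOX(D\IMP E)$, obtained from $\necw$ and $\functwb$, or use $\bdwb$ if you work at $w$. But the subcase has to be treated, and the same subcase is missing for $\lr\BOXB$.

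A smaller slip concerns contraction. The implication to push through the conjunction context is $B\IMP B\AND B$, since the conclusion's antecedent must imply the premiss's. You wrote $B\AND B\IMP B$, which is the wrong direction.
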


This is proved by induction on the proof of $\RR\stoup\Gamma\seqar w:A$ in $\labIKtSO$, where both the base case and the inductive cases are provided by the following lemma.
Recall that we can assume that any sequent $\RR\stoup\Gamma\seqar w:A$ occurring in a proof is polytree labelled sequents; in particular labels in $\Gamma\cup \{w:A\}$ are connected by $\RR$.

\begin{lemma} [Local soundness]
\label{lem:local-soundness}
Let 
$\vlinf{}{}{\RR\stoup\Gamma\seqar x:A}{\{\RR\stoup\Gamma_i\seqar x_i:A_i\}_{i< n}}$
be a  rule instance of $\labIKtSO$ with $n= 0$, $1$ or $2$,
and let $u\in\RR$ or $u=x$.

If $\IKtSO \vdash \ifm{u}{\RR\stoup\Gamma_i\seqar x_i:A_i}$ for all $i< n$
then $\IKtSO \vdash \ifm{u}{\RR\stoup\Gamma\seqar x:A}$.
\end{lemma}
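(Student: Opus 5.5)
We proceed by case analysis on the rule instance, in each case first using \cref{lem:fm-ctx,lem:fm-ctx-2} to write both the conclusion and premiss interpretations at $u$ in the shape of a common formula context around the \emph{active} part of the sequent. We then derive the conclusion from the premisses using one of the generalised modal principles of \cref{lem:GenBoxNec,lem:GenKBOX,lem:GenKDIA,lem:GenFSDiaBox,lem:GenInterDiaBox,lem:GenGen}. Since the interpretation at $u$ only changes the outer context $F_1$, it is enough to prove an implication between the innermost parts and transport it outward with \cref{lem:GenBoxNec,lem:GenKBOX}. In practice this means we may assume $u$ is a convenient label, for instance the label of the principal formula.

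\textbf{Zero-premiss and structural rules.}
\begin{itemize}
\item For $\id$ ($n=0$), \cref{lem:fm-ctx-2}(1) rewrites $\ifm{u}{\RR \stoup v:A \seqar v:A}$ as $\fimp[F_1]{\fand[F_2]{A}\IMP \fimp[F_2]{A}}$.
\item The inner formula $\fand[F_2]{A}\IMP\fimp[F_2]{A}$ follows from \cref{lem:GenInterDiaBox} applied to the theorem $\fimp[F_2]{A\IMP A}$, which holds by \cref{lem:GenBoxNec}.
\item \cref{lem:GenBoxNec} then closes the outer context.
\item Left weakening and contraction are immediate from the propositional shape $\fand[F_2]{A}\IMP\cdots$ and idempotence of $\AND$. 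Concretely, we use \cref{lem:GenKDIA} with the tautology $A\IMP A\AND A$, or the projection for weakening, pushed through $\fand[F_2]{}$.
\end{itemize}

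\textbf{Logical rules.} We isolate the principal formula $v:C$ via \cref{lem:fm-ctx-2}(1).
\begin{itemize}
\item For $\lr\forall$ the inner step is $\forall X A\IMP A[C/X]$, which is $\CA$ transported through $\fand[F_2]{}$ by \cref{lem:GenKDIA}.
\item For $\lr\Box$ with $vRw\in\RR$, the context $F_2$ locating $w$ factors through $v$ as $\DIA$. The needed step is then $\Box A \AND \DIA\fand{\cdots} \IMP \DIA(A\AND\cdots)$, i.e.\ \cref{prop:IK-thm}(1), and dually (2) for $\lr\blacksquare$.
\item For $\rr\limp$ we move $v:A$ from the left to the antecedent of the hole at $v$; this is currying within $\fimp{}$ via \cref{lem:GenKBOX}.
\item For $\rr\forall$ we apply $\gen$ to the premiss (the fresh symbol is absent from the context) and then \cref{lem:GenGen}.
\item For $\rr\Box$ and $\rr\blacksquare$ the premiss interpretation at $v$ contains $\lfm{v}{\cdots}\IMP\Box(\top\IMP A)$ when the fresh $w$ is a leaf. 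This simplifies to the conclusion by \cref{lem:GenKBOX} and $\top$-elimination, because $w$ contributes only $\Box$ of the succedent and $\top$ on the left (\cref{obs:weaken}).
\end{itemize}

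\textbf{Main obstacle: the label-swapping rules $\lr\limp$ and $\cut$.} Here the succedent label changes between premisses, so the two premisses are interpreted with differently placed holes. We plan to handle this with \cref{lem:fm-ctx-2}(2):
\begin{itemize}
\item Rewrite the second premiss $\RR\stoup\Gamma,v:B\seqar w:D$ as $\fimp[F_1]{\fand[F_2]{B}\IMP\fimp[F_3]{D}}$.
\item Rewrite the first premiss $\RR\stoup\Gamma\seqar v:A$ in the same outer context $F_1$, with the $\Gamma$-part $F_3$ now on the left and $F_2$ carrying $A$. (Weakening $w:D$ to the left costs nothing, since $\Gamma$ already contributes it.)
\item Working inside $F_1$ with \cref{lem:GenKBOX}, we must derive $\fand[F_2]{A\IMP B}\IMP\fimp[F_3]{D}$ from $\fand[F_3]{\cdot}\IMP\fimp[F_2]{A}$ and $\fand[F_2]{B}\IMP\fimp[F_3]{D}$.
\item The key ingredient is \cref{lem:GenFSDiaBox}: from $\fand[F_2]{A\IMP B}$ and $\fimp[F_2]{A}$ we get $\fand[F_2]{B}$.
\end{itemize}
The delicate point is making the shared context $\Gamma$ line up correctly across the two sides, which requires the precise form of \cref{lem:fm-ctx-2}. $\cut$ is the same argument without the implication.
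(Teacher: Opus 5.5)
Your overall architecture (contexts from \cref{lem:fm-ctx,lem:fm-ctx-2}, transport outward with \cref{lem:GenBoxNec,lem:GenKBOX}, \cref{lem:GenKDIA} for antecedent changes, $\gen$ plus \cref{lem:GenGen} for $\rr\forall$) is the paper's. Weakening, contraction, $\lr\forall$, $\rr\limp$, $\rr\forall$ and $\rr\Box$ are fine.

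\textbf{The reduction to a convenient $u$ is not justified.} The lemma must hold for every $u$. Moving $u$ does not merely change an outer context $F_1$. It changes which relational atoms are read as boxes (those on the path from $u$ to the succedent label) rather than diamonds, and whether an atom is read in the white or the black direction, so the inner formulas themselves change. For $\lr\Box$ on $vRw$ there are four configurations, according to whether $u$ and the succedent label lie on the $v$-side or the $w$-side of $vRw$ (cf.\ \cref{lem:sound-boxl}). You treat only one, with both on the $v$-side, via \cref{prop:IK-thm}(1). Even for $u=v$ you miss the case where the succedent lies beyond $w$: then $vRw$ is read as $\BOX$ and the step is $\BOX((A\AND E)\IMP G)\IMP\BOX A\IMP\BOX(E\IMP G)$ by $\functwb$.

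More seriously, when $u$ is on the $w$-side the atom is read from $w$ as $\DIAB$ or $\BOXB$, and the adjunction $\bdwb: \DIAB\BOX A\IMP A$ (resp.\ $\wdbb$ for $\lr\blacksquare$) cannot be avoided. For example, for $vRw\stoup\fm v{\BOX A}\seqar\fm wA$ at $u=w$ the conclusion interpretation is essentially $\DIAB\BOX A\IMP A$, while the premiss interpretation is a tautology. Your plan never invokes this. You can repair it in one of two ways:
\begin{itemize}
\item carry out the full case analysis as the paper does; or
\item first prove that $\IKtSO\proves\ifm u{\sequent}$ iff $\IKtSO\proves\ifm{u'}{\sequent}$ for any two labels of a polytree sequent $\sequent$. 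This is a display-type property, itself resting on $\wbbd,\bdwb,\bbwd,\wdbb$, and only with it is your WLOG legitimate.
\end{itemize}

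\textbf{In $\lr\limp$ and $\cut$, \cref{lem:GenFSDiaBox} alone does not close the argument.} The first premiss yields $\fimp[F_2]{A}$ only under the hypothesis $\fand[F_3]{\TOP}$, and you never discharge it. The missing step is $(\fand[F_3]{\TOP}\IMP\fimp[F_3]{D})\IMP\fimp[F_3]{D}$. It follows from \cref{lem:GenInterDiaBox} with $\TOP$ for $A$, together with \cref{lem:GenBoxNec,lem:GenKBOX}, and this is where the interaction axioms $\diaimpbox$ and $\diaimpboxb$ enter. The hole of $F_3$ must also be filled with $\TOP$, not with $D$ as your parenthetical about weakening $\fm wD$ suggests. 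With $\fand[F_3]{D}$ the discharge would read $(D\IMP D)\IMP D$ for trivial $F_3$, which is not valid.

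\textbf{Your $\id$ case applies \cref{lem:GenInterDiaBox} in the wrong direction.} The converse you need is false in general: $\BOX(A\IMP A)\IMP\DIA A\IMP\BOX A$ is not a theorem. In fact \cref{lem:fm-ctx-2} gives $\fimp[F_1]{(E\AND A)\IMP A}$ here, since both occurrences of $A$ sit at the same label. \cref{lem:GenBoxNec} applied to this tautology suffices.
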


For most rules of $\labIKtSO$, the proof is similar to the one for \emph{nested sequents} for $\IK$~\cite{strassburger2013cut}, but over the current formula interpretation. 
We highlight some of the technicalities of the proof.

The rule $\lr\limp$ requires some attention in the way the context is interpreted into a formula
because the two premisses can have different labels on the RHS.
The RHS label determines how the formula interpretation is defined~(\cref{def:intuitionistic-interpretation}), in particular whether a relational atom $xRy\in\RR$ is read as a $\BOX$ (when $xRy$ belongs to the path from $u$ to $w$) or a $\DIA$ (otherwise).
For this reason, the interaction axioms $\diaimpbox: (\Diamond A \limp \Box B) \limp \Box (A\limp B)$ and $\diaimpboxb: (\Diamondblack A \limp \Boxblack B) \limp \Boxblack (A\limp B)$ are required in the axiomatic proof simulating $\lr\limp$.

The rule $\lr\Box$ is more subtle to handle in the tense case than it usually is in the modal case
because the indicated relational atom $vRw$ in the context can be read forwards or backwards when the formula interpretation is computed.
For this reason, the adjunction axioms $\bdwb: \DIAB \BOX A \IMP A$ and $\wdbb: \DIA \BOXB A \IMP A$ are needed in the axiomatic proof simulating $\lr\Box$.

The rule $\rr\forall$ also displays some interesting interaction with the modalities when read through the formula interpretation. 
In particular the distributivity axiom $\forall X \Box A \limp \Box \forall X A$ (\cref{ex:forall-distributes-over-box}) is required in the axiomatic proof simulating $\rr\forall$.

The following lemmas (\cref{lem:sound-id} to \cref{cor:sound-cut}) provide the proof of \cref{lem:local-soundness} via a case by case analysis of the rules of $\labIKtSO$.

\begin{lemma}[Soundness of $\id$] 
\label{lem:sound-id} 
$\IKtSO \vdash \ifm{u}{\RR \stoup  v : A \seqar v : A}$
\end{lemma}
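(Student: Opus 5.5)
We want $\IKtSO \vdash \ifm{u}{\RR \stoup v:A \seqar v:A}$ for every $u$ occurring in $\RR$ or $u=v$. The plan is to write the interpretation in context form and reduce to an $\IPL$ tautology lifted through the context. By \cref{lem:fm-ctx}(2), applied with $\Gamma=\emptyset$, there is an implication context $F$ with $\ifm{u}{\RR \stoup \cdot \seqar v:A} = \fimp{A}$. Moreover, the argument of \cref{lem:fm-ctx-2}(1), with $v:A$ on the left, gives
\[
\ifm{u}{\RR \stoup v:A \seqar v:A} = \fimp[F_1]{\fand[F_2]{A} \IMP \fimp[F_3]{A}}.
\]
Since the LHS label and the RHS label coincide, the path from $u$ to the succedent label $v$ is entirely absorbed into $F_1$. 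Concretely, following the case analysis of that proof, the induction on the path $u \conn{\RR} v$ always falls into the subcase where $v$ lies beyond the next node $x$. At the base $u=v$ we therefore get $F_2 = \lfm{v}{\RR\stoup\cdot}\AND\{\}$ and $F_3=\{\}$.

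So I will instead argue directly by induction on the path from $u$ to $v$ in $\RR$. In the base case $u=v$, the interpretation is $\lfm{v}{\RR\stoup v:A} \IMP A$, and by the first Fact this equals $(\lfm{v}{\RR\stoup\cdot} \AND A)\IMP A$. This is an $\IPL$ theorem, using the second-order definition of $\AND$ and its provable elimination via $\CA$.

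In the inductive step, $u$ has a neighbour $x$ on the path with $uRx\in\RR$ or $xRu\in\RR$. By \cref{obs:lfm-weaken}, $v:A$ does not contribute to the left part $\lfm{u}{\RAtMin ux\stoup v:A}$, because $v$ is not on $u$'s side. Hence
\[
\ifm{u}{\RR\stoup v:A\seqar v:A} = \lfm{u}{\RAtMin ux \stoup \cdot} \IMP \symb\, \ifm{x}{\RAtMin xu \stoup v:A \seqar v:A},
\]
with $\symb\in\{\BOX,\BOXB\}$ according to the direction of the edge. The inductive hypothesis gives the inner formula, which applies because $\RAtMin xu$ is still a polytree connecting $x$ to $v$. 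We then apply $\necw$ or $\necb$, followed by the $\Kax$ axiom with $\mp$. Equivalently, the whole inductive step is an instance of \cref{lem:GenBoxNec} applied to the context built along the path.

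The only delicate point is bookkeeping: checking that the definition of $\ifm{}{}$ selects the branch toward $v$ and that \cref{obs:lfm-weaken} applies on the side away from $v$. The degenerate case $\RR=\emptyset$ forces $u=v$ and gives $(\TOP\AND A)\IMP A$, which is routine.
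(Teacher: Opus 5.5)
Your proof is correct and is essentially the paper's argument. The paper writes the interpretation as an implication context applied to an $\IPL$ tautology (via \cref{lem:fm-ctx-2}) and closes with \cref{lem:GenBoxNec}; that is exactly the path induction you unfold by hand. At the innermost position you are slightly more careful than the paper: the formula there is $(\lfm{v}{\RR\stoup\cdot}\AND A)\IMP A$ rather than literally $A\IMP A$ as the paper states, but it is a tautology all the same, so the argument goes through unchanged.
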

\begin{proof}
We can write $\ifm{u}{\RR \stoup v:A \seqar v:A} = \Fr{A \IMP A}$ by~\cref{lem:fm-ctx-2}
Clearly, $A \to A$ is a theorem of $\IPL$ and thus of $\IKtSO$. 
By~\cref{lem:GenBoxNec}, $\IKtSO \vdash F^\to \{ A  \to A\}$.
\end{proof}

\begin{lemma}[Soundness of $\lr\wk$]
\label{lem:sound-weak}
If $\IKtSO \vdash \ifm{u}{\RR \stoup \Gamma \seqar w:C}$
then $\IKtSO \vdash \ifm{u}{\RR \stoup \Gamma , v : A \seqar w:C}$
\end{lemma}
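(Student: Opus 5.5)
We argue by cases on where the new label $v$ sits in the sequent. There are two situations, according to whether $v$ occurs in $\RR$.

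If $v$ does not occur in $\RR$, then by \cref{obs:weaken} the formula interpretation is unchanged: $\ifm{u}{\RR \stoup \Gamma , v : A \seqar w:C} = \ifm{u}{\RR \stoup \Gamma \seqar w:C}$. The claim is then immediate. This case also covers $\RR=\emptyset$ with $v\neq w$, where weakening at a disconnected label leaves the interpretation untouched.

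Otherwise $v$ lies in the support of $\RR$ (or $v=w$), so we may apply \cref{lem:fm-ctx-2}(1) to the weakened sequent. This gives contexts $F_1,F_2,F_3$ with
\[
\ifm{u}{\RR \stoup \Gamma , v : A \seqar w:C} = \fimp[F_1]{\fand[F_2]{A} \IMP \fimp[F_3]{C}}.
\]
The construction in that proof isolates the occurrence of $v:A$ inside a conjunction context. Erasing that occurrence, i.e.\ taking the empty substitution $\fand[F_2]{\emptyset}=\fand[F_2]{\top}$, yields a formula of the same shape. The plan is to check by the same path induction that
\[
\ifm{u}{\RR \stoup \Gamma \seqar w:C} = \fimp[F_1]{\fand[F_2]{\top} \IMP \fimp[F_3]{C}},
\]
up to the harmless $\IPL$ equivalence $B\AND\top \IFF B$, and hence an $\IKtSO$-provable equivalence. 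The key point is that the only place $\Gamma$ enters is through $\lfm{}{}$ at the node where $v$ hangs, and there it contributes the conjunct $A$.

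It then suffices to derive $\fand[F_2]{A}\IMP\fand[F_2]{\top}$, because then contravariance gives $(\fand[F_2]{\top}\IMP X)\IMP(\fand[F_2]{A}\IMP X)$ under $\fimp[F_1]{}$. For the first implication:
\begin{itemize}
\item Start from the tautology $A\IMP\top$.
\item Apply \cref{lem:GenBoxNec} to get $\fimp[F_2]{A\IMP\top}$, where $\fimp[F_2]{}$ is the implication context corresponding to $\fand[F_2]{}$.
\item Apply \cref{lem:GenKDIA} to obtain $\fand[F_2]{A}\IMP\fand[F_2]{\top}$.
\end{itemize}
For the transport under $F_1$:
\begin{itemize}
\item By $\IPL$ we have $(\fand[F_2]{\top}\IMP\fimp[F_3]{C})\IMP(\fand[F_2]{A}\IMP\fimp[F_3]{C})$.
\item Apply \cref{lem:GenBoxNec} to place this implication under $\fimp[F_1]{}$.
\item Apply \cref{lem:GenKBOX} and modus ponens with the hypothesis to conclude.
\end{itemize}

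The main obstacle is the bookkeeping identity $\ifm{u}{\RR\stoup\Gamma\seqar w:C}\IFF\fimp[F_1]{\fand[F_2]{\top}\IMP\fimp[F_3]{C}}$. It needs a re-run of the case analysis in \cref{lem:fm-ctx-2}(1), noting at each step that removing $v:A$ only deletes a conjunct in $\lfm{}{}$. One subtlety needs watching: when $v$ is the only formula at a node off the main path, $\lfm{}{}$ becomes $\top$ rather than vanishing, so we get e.g.\ $\DIA\top$ conjuncts. These agree with the empty substitution convention, so the identity should go through syntactically.
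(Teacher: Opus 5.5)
Your proposal is correct and follows essentially the same route as the paper. The paper writes both interpretations via \cref{lem:fm-ctx-2} as $\fimp[F_1]{\fand[F_2]{\TOP}\IMP\fimp[F_3]{C}}$ and $\fimp[F_1]{\fand[F_2]{A}\IMP\fimp[F_3]{C}}$, derives $\fand[F_2]{A}\IMP\fand[F_2]{\TOP}$ from $A\IMP\TOP$ by \cref{lem:GenBoxNec,lem:GenKDIA}, and transports the contravariant implication under $F_1$ by \cref{lem:GenBoxNec,lem:GenKBOX}. The paper takes the identity for the unweakened sequent as literal, via the empty-substitution convention, so your case split via \cref{obs:weaken} and your check up to $B\AND\TOP\IFF B$ are extra bookkeeping rather than a different argument (though that first case should explicitly exclude $v=w$).
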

\begin{proof}
We can write 
$\ifm{u}{\RR \stoup \Gamma \seqar w:C} =
\Fr[1]{ \Fl[2]{\TOP} \IMP \Fr[3]{C}}$ 
and %
$\ifm{u}{\RR \stoup \Gamma , v : A \seqar w:C} =
\Fr[1]{ \Fl[2] { A } \IMP \Fr[3] { C}}$
by~\cref{lem:fm-ctx-2}.

$\IPL \vdash A \IMP \TOP$ so by~\cref{lem:GenBoxNec,lem:GenKDIA} and $\mp$
we have $\IKtSO \vdash \Fl[2]{A} \IMP \Fl[2]{\TOP}$. 

$\IPL \vdash (A \to B) \to (B \to C) \to (A \to C)$ so by substitution
$\IKtSO \vdash ( \Fl[2]{A} \IMP \Fl[2]{\TOP}) \IMP ( \Fl[2]{\TOP} \to \Fr[3]{C}) \IMP (\Fl[2]{ A} \IMP \Fr[3]{B})$. 

By~\cref{lem:GenBoxNec,lem:GenKBOX} and $\mp$ 

$\IKtSO \vdash \Fr[1]{ \Fl[2]{\TOP} \IMP \Fr[3]{C}} \IMP \Fr[1]{ \Fl[2]{A } \IMP \Fr[3]{C}}$. 
\end{proof}

\begin{lemma}[Soundness of $\lr\cntr$]\label{lem:sound-cntr}
If $\IKtSO \vdash \ifm{u}{\RR \stoup \Gamma, v : A, v : A \seqar w:C}$
then $\IKtSO \vdash \ifm{u}{\RR \stoup \Gamma , v : A \seqar w:C}$
\end{lemma}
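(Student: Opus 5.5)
We follow the pattern of \cref{lem:sound-weak}, this time using the context decomposition of \cref{lem:fm-ctx-2}. First we put both formula interpretations in a common shape. Applying \cref{lem:fm-ctx-2}(1) to the sequent $\RR\stoup\Gamma, v:A \seqar w:C$, with $v:A$ as the distinguished left formula, gives contexts $F_1,F_2,F_3$ such that
\[
\ifm{u}{\RR \stoup \Gamma , v : A \seqar w:C} = \Fr[1]{\Fl[2]{A} \IMP \Fr[3]{C}}.
\]
Inspecting the construction in \cref{lem:fm-ctx,lem:fm-ctx-2}, the context $F_2$ is built from the left interpretation of the rest of $\Gamma$ along the path to $v$. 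Adding a second copy of $v:A$ therefore only places an extra conjunct $A$ at the hole, so
\[
\ifm{u}{\RR \stoup \Gamma, v : A, v : A \seqar w:C} = \Fr[1]{\Fl[2]{A \AND A} \IMP \Fr[3]{C}},
\]
up to the association and ordering of conjuncts inside $\lfm{v}{-}$. Justifying this rigorously is the only slightly delicate step. We will handle it by a short induction on the path from $u$ to $v$, mirroring the proof of \cref{lem:fm-ctx}: at $v$ itself we use the first item of the Fact on left interpretations, namely $\lfm{v}{\RR\stoup\Gamma,v:A} = \lfm{v}{\RR\stoup\Gamma}\AND A$, and the inductive steps simply propagate through $\AND$, $\DIA$ and $\DIAB$. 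Any reordering of conjuncts is an $\IPL$-provable equivalence, and it can be transported under the contexts using the same lemmas as below.

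Next we prove the key implication $\Fl[2]{A} \IMP \Fl[2]{A\AND A}$. Since $\IPL \vdash A \IMP A \AND A$, \cref{lem:GenBoxNec} yields $\IKtSO\vdash \Fr[2]{A\IMP A\AND A}$. Then \cref{lem:GenKDIA} together with $\mp$ gives $\IKtSO\vdash \Fl[2]{A}\IMP\Fl[2]{A\AND A}$.

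Finally we compose and push the result under $F_1$. By transitivity of implication in $\IPL$ we get
\[
\IKtSO\vdash(\Fl[2]{A\AND A}\IMP\Fr[3]{C})\IMP(\Fl[2]{A}\IMP\Fr[3]{C}).
\]
Applying \cref{lem:GenBoxNec} and then \cref{lem:GenKBOX} with $\mp$ yields
\[
\IKtSO\vdash\Fr[1]{\Fl[2]{A\AND A}\IMP\Fr[3]{C}}\IMP\Fr[1]{\Fl[2]{A}\IMP\Fr[3]{C}}.
\]
A final $\mp$ with the hypothesis gives the required $\IKtSO\vdash\ifm{u}{\RR\stoup\Gamma, v:A\seqar w:C}$.

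The main obstacle is the first step: confirming that the interpretation of the duplicated sequent really factors through the same contexts $F_1,F_2,F_3$, with only the hole content changing from $A$ to $A\AND A$. Everything after that is a routine use of the generalised necessitation and distribution lemmas, exactly as in \cref{lem:sound-weak}.
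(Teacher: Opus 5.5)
Your proposal is correct and follows the paper's proof essentially step for step. Both arguments use \cref{lem:fm-ctx-2} to write the two interpretations as $\Fr[1]{\Fl[2]{A\AND A}\IMP\Fr[3]{C}}$ and $\Fr[1]{\Fl[2]{A}\IMP\Fr[3]{C}}$, obtain $\Fl[2]{A}\IMP\Fl[2]{A\AND A}$ from $A\IMP A\AND A$ via \cref{lem:GenBoxNec,lem:GenKDIA}, and then compose and push the result under $F_1$ via \cref{lem:GenBoxNec,lem:GenKBOX}. The only difference is that you explicitly justify the shared-context decomposition, including the harmless reassociation of conjuncts, whereas the paper simply asserts it by citing \cref{lem:fm-ctx-2}.
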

\begin{proof}
We can write
$\ifm{u}{\RR \stoup \Gamma, v : A, v : A \seqar w:C} =
\Fr[1] { \Fl[2] {A \land A} \IMP \Fr[3]{B}}$ 
while %
$\ifm{u}{\RR \stoup \Gamma , v : A \seqar w:C} =
F_1^\to \{ F_2^\land \{A\} \to F_3^\to\{B\}\}$
by~\cref{lem:fm-ctx-2}.

As $\IPL \vdash A \to (A \land A)$, 
by~\cref{lem:GenBoxNec,lem:GenKDIA} and $\mp$ 
$\IKtSO \vdash F_2^\land \{A \} \to F_2^\land \{A \land A\}$. 

$\IPL \vdash (A \to B) \to (B \to C) \to (A \to C)$ 
hence also 
$\IKtSO \vdash (\Fl[2]{A} \IMP \Fl[2]{A \land A}) \IMP ( \Fl[2]{A \land A} \IMP \Fr[3]{C}) \IMP (\Fl[2]{A} \IMP \Fr[3]{C})$. 

By~\cref{lem:GenBoxNec,lem:GenKBOX} and $\mp$ we have 

$\IKtSO \vdash  \Fr[1] {  \Fl[2] {A \land A}  \IMP \Fr[2]{B} } \IMP \Fr[1] { \Fl[2] {A} \IMP \Fr[3]{B}}$. 
\end{proof}

\begin{lemma}[Soundness of $\rr\limp$]\label{lem:sound-impr}
If $\IKtSO \vdash \ifm{u}{\RR \stoup \Gamma, v:A \seqar v:B}$
then $\IKtSO \vdash \ifm{u}{\RR \stoup \Gamma \seqar v : A \limp B}$
\end{lemma}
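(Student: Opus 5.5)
I plan to follow the pattern of the soundness proofs for $\lr\wk$ and $\lr\cntr$: rewrite both formula interpretations using the context decomposition of \cref{lem:fm-ctx-2}, and transfer the premiss implication through the contexts with \cref{lem:GenBoxNec,lem:GenKBOX}. The point is that the premiss and the conclusion have the same relational context $\RR$ and the same RHS label $v$. So the path from $u$ to $v$ in $\RR$, which fixes the implication-context shape of the interpretation, is identical in both.

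First I apply \cref{lem:fm-ctx-2}(1) to the premiss, treating $v:A$ as the distinguished left formula and $v:B$ as the succedent. This gives
\[
\ifm{u}{\RR\stoup\Gamma,v:A\seqar v:B} = \Fr[1]{\Fl[2]{A}\IMP\Fr[3]{B}}.
\]
Because the distinguished left formula and the succedent share the label $v$, the path $u\conn{\RR}v$ is the same for both. I expect to check from the construction in \cref{lem:fm-ctx-2} that $\Fl[2]{}$ and $\Fr[3]{}$ then arise from the same underlying context, namely $\lfm{v}{\RR'\stoup\Gamma}\AND\{\}$ and $\lfm{v}{\RR'\stoup\Gamma}\IMP\{\}$ at the endpoint $v$, where $\RR'$ is the part of $\RR$ hanging off $v$. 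The $F_3$ part is trivial here and the $\Box$/$\Boxblack$ steps are absorbed into $F_1$. Concretely, unfolding \cref{def:intuitionistic-interpretation} along the path gives
\[
\ifm{u}{\RR\stoup\Gamma,v:A\seqar v:B} = \Fr[1]{(L\AND A)\IMP B},
\qquad
\ifm{u}{\RR\stoup\Gamma\seqar v:A\limp B} = \Fr[1]{L\IMP(A\IMP B)},
\]
where $L=\lfm{v}{\RR\stoup\Gamma}$ and $\Fr[1]{}$ collects the prefixes $\lfm{x}{\dots}\IMP\symb$ along the path from $u$ to $v$. This uses the first clause of the Fact about the left interpretation: $\lfm{v}{\RR\stoup\Gamma,v:A}=\lfm{v}{\RR\stoup\Gamma}\AND A$. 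At the intermediate nodes, \cref{obs:lfm-weaken} ensures that adding $v:A$ does not change their local left interpretations, because $v$ lies on the far side of the path edge.

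Next, $\IPL$ proves the currying tautology $((L\AND A)\IMP B)\IMP(L\IMP A\IMP B)$, with $\AND$ read via its second-order encoding. Applying \cref{lem:GenBoxNec} gives $\Fr[1]{((L\AND A)\IMP B)\IMP(L\IMP A\IMP B)}$. By \cref{lem:GenKBOX} and $\mp$ this becomes
\[
\Fr[1]{(L\AND A)\IMP B}\IMP\Fr[1]{L\IMP A\IMP B},
\]
and a final $\mp$ with the hypothesis gives the conclusion.

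The main obstacle is the bookkeeping in the first step. I must confirm that the two interpretations share exactly the same outer context $\Fr[1]{}$, and that $v:A$ contributes only at the endpoint $v$ and nowhere along the path from $u$. I would handle this by induction on the path $u\conn{\RR}v$, mirroring the proof of \cref{lem:fm-ctx}. In the base case $u=v$ both interpretations are definitional. In the inductive step the prefix $\lfm{u}{\RAtMin ux\stoup\Gamma}\IMP\symb$ is unaffected by $v:A$ by \cref{obs:lfm-weaken}, because $v$ lies in $\RAtMin xu$.
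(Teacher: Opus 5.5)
Your proposal is correct and follows the same route as the paper. The paper writes the premiss and conclusion as $\Fr[1]{(L\AND A)\IMP B}$ and $\Fr[1]{L\IMP A\IMP B}$ with a common outer context, then pushes the currying tautology through $\Fr[1]{}$ using \cref{lem:GenBoxNec,lem:GenKBOX} and $\mp$; your explicit induction on the path just spells out the bookkeeping that the paper delegates to \cref{lem:fm-ctx-2}.

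One small slip: when $u\neq v$, $L$ must be $\lfm{v}{\RR'\stoup\Gamma}$ over the subtree $\RR'$ hanging off $v$, as you first wrote, not $\lfm{v}{\RR\stoup\Gamma}$. The latter would also contain the $\Diamond$/$\Diamondblack$ conjuncts pointing back toward $u$.
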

\begin{proof}
We can write %
$\ifm{u}{\RR \stoup \Gamma, v:A \seqar v:B} =
\Fr[1]{(A \land \Fl[2]{\TOP} ) \limp B }$ 
while %
$\ifm{u}{\RR \stoup \Gamma \seqar v : A \limp B} =
\Fr[1]{\Fl[2]{\TOP} \limp A \limp B }$
by~\cref{lem:fm-ctx-2}.

By $\IPL\vdash ((A \land C) \to B) \to  C \to A \to B$, we have $\IKtSO \vdash ((\Fl[2]{\TOP} \land A) \to B ) \to \Fl[2]{\TOP}  \to A \to B$, 

And by~\cref{lem:GenBoxNec,lem:GenKBOX} and $\mp$ 

$\IKtSO \vdash \Fr[1]{(\Fl[2]{\TOP} \land A) \to B }\to \Fr[1] {\Fl[2]{\TOP}  \to A \to B}$. 
\end{proof}

\begin{lemma}[Soundness of $\lr \forall$]\label{lem:sound-fal}
If $\IKtSO \vdash \ifm{u}{\RR \stoup \Gamma, v : A [B/X] \seqar w:C}$
then $\IKtSO \vdash \ifm{u}{\RR \stoup \Gamma, v : \forall X A\seqar w:C}$
\end{lemma}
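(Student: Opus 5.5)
We follow the pattern of \cref{lem:sound-weak,lem:sound-cntr}. By \cref{lem:fm-ctx-2}(1) applied to the distinguished LHS formula $v:\forall X A$ (respectively $v:A[B/X]$), we write
\[
\ifm{u}{\RR \stoup \Gamma, v : A[B/X] \seqar w:C} = \Fr[1]{\Fl[2]{A[B/X]} \IMP \Fr[3]{C}}
\]
and
\[
\ifm{u}{\RR \stoup \Gamma, v : \forall X A \seqar w:C} = \Fr[1]{\Fl[2]{\forall X A} \IMP \Fr[3]{C}} .
\]
The contexts $F_1,F_2,F_3$ coincide in the two cases. This is because the construction in \cref{lem:fm-ctx-2} depends only on $\RR$, $\Gamma$, the labels $u,v,w$ and the RHS formula $C$, and not on the particular formula sitting at $v$.

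The key step is to transfer the comprehension instance through the contexts. First, $\CA$ gives $\IKtSO \proves \forall X A \IMP A[B/X]$. By \cref{lem:GenBoxNec}, applied to the implication context $\fimp[F_2]{}$ corresponding to $\fand[F_2]{}$, we obtain $\IKtSO \proves \fimp[F_2]{\forall X A \IMP A[B/X]}$. Then \cref{lem:GenKDIA} and $\mp$ give
\[
\IKtSO \proves \Fl[2]{\forall X A} \IMP \Fl[2]{A[B/X]} .
\]

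Next comes the propositional step. The $\IPL$ theorem $(D \IMP E) \IMP (E \IMP G) \IMP (D \IMP G)$ yields
\[
\IKtSO \proves (\Fl[2]{A[B/X]} \IMP \Fr[3]{C}) \IMP (\Fl[2]{\forall X A} \IMP \Fr[3]{C}) .
\]
We push this under $\fimp[F_1]{}$ using \cref{lem:GenBoxNec} followed by \cref{lem:GenKBOX} and $\mp$. This produces the implication from the premiss interpretation to the conclusion interpretation, and one more $\mp$ with the hypothesis finishes the argument.

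The only point needing care is the claim that the decomposition of \cref{lem:fm-ctx-2} yields literally the same contexts for both sequents. This holds because that lemma places the distinguished formula in the hole of $\fand[F_2]{}$ by recursion on the path from $u$ to $v$, which is independent of the formula. Here we also use that relational atoms are unchanged, so the polytree structure and the connectivity of $v$ are the same in premiss and conclusion.

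Notably, no eigenvariable or Barcan-type reasoning is needed here, unlike the $\rr\forall$ case. The one genuinely second-order ingredient is a single use of $\CA$, and any free variables of $B$ pose no problem since we work with closed formulas.
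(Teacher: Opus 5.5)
Your proposal is correct and is essentially the paper's proof. Both decompose the premiss and conclusion interpretations via \cref{lem:fm-ctx-2} with shared contexts $F_1,F_2,F_3$, lift the $\CA$ instance $\forall X A \limp A[B/X]$ through the conjunction context using \cref{lem:GenBoxNec,lem:GenKDIA}, compose by transitivity of implication, and push under $F_1$ with \cref{lem:GenBoxNec,lem:GenKBOX}; your explicit justification that the contexts do not depend on the formula at $v$ makes precise a point the paper uses silently.
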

\begin{proof}
We can write 
$\ifm{u}{\RR \stoup \Gamma, v : A [B/X] \seqar w:C} =
\Fr[1] { \Fl[2] {A[B/X]} \to \Fr[3]{C}}$ 
and %
$\ifm{u}{\RR \stoup \Gamma, v : \forall X A\seqar w:C} =
\Fr[1] { \Fl[2] {\forall X A} \to \Fr[3]{C}}$
by~\cref{lem:fm-ctx-2}.

$\IKtSO \vdash \forall X A \to A[B/X]$,
hence by~\cref{lem:GenBoxNec,lem:GenKDIA} and $\mp$
$\IKtSO \vdash \Fl[2] {\forall X A} \to \Fl[2] {A[B/X]}$. 

As $\IPL \vdash (A \to B) \to (B \to C) \to (A \to C)$, we have $\IKtSO \vdash (\Fl[2] {\forall X A} \to \Fl[2] {A[B/X]}) \to (\Fl[2] {A[B/X]} \to \Fr[3]{C}) \to (\Fl[2] {\forall X A} \to \Fr[3]{C})$. 

By $\mp$ and~\cref{lem:GenBoxNec,lem:GenKBOX} we have

$\IKtSO \vdash \Fr[1] {\Fl[2] {A[B/X]} \to \Fl[3]{C}} \to \Fr[1] {\Fl[2] {\forall X A} \to \Fr[3]{C}}$. 
\end{proof}

\begin{lemma}[Soundness of $\rr \forall$]\label{lem:sound-far}
If $\IKtSO \vdash \ifm{u}{\RR\stoup \Gamma \seqar v : A[P/X]}$
then $\IKtSO \vdash \ifm{u}{\RR\stoup \Gamma \seqar v: \forall X A }$.
\end{lemma}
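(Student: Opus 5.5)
We plan to follow the pattern of the preceding soundness lemmas, using \cref{lem:fm-ctx} to write the interpretation in context form. Since the succedent label is $v$ in both premiss and conclusion, \cref{lem:fm-ctx}(2) yields a single implication context $\fimp{}$, depending only on $\RR$, $\Gamma$, $u$ and $v$ and not on the succedent formula, such that
\[
\ifm{u}{\RR\stoup \Gamma \seqar v : A[P/X]} = \fimp{A[P/X]}
\quad\text{and}\quad
\ifm{u}{\RR\stoup \Gamma \seqar v : \forall X A} = \fimp{\forall X A}.
\]
It then suffices to derive $\fimp{\forall X A}$ from $\IKtSO \vdash \fimp{A[P/X]}$.

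The key steps are as follows. First, note that $P$ is fresh for the conclusion, so $P$ does not occur in $\Gamma$, in $A$, or (trivially) in $\RR$. Hence $P$ does not occur in $\fimp{}$, since this context is built only from $\Gamma$ and the relational structure via $\lfm{}{}$. It follows that $\fimp{A[P/X]} = (\fimp{A})[P/X]$, after choosing $X$ not free in $\fimp{}$ (renaming the bound variable if necessary). Second, apply the rule $\gen$ to $(\fimp{A})[P/X]$. This gives $\IKtSO \vdash \forall X \fimp{A}$, because $P$ does not occur in the conclusion $\forall X\fimp{A}$. Third, use \cref{lem:GenGen}, which gives $\IKtSO \vdash \forall X \fimp{A} \to \fimp{\forall X A}$ since $X \notin \fv{\fimp{}}$. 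Finally, $\mp$ yields $\fimp{\forall X A}$, as required.

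The main obstacle is the bookkeeping around freshness and capture. We must check that the context $\fimp{}$ from \cref{lem:fm-ctx} really contains no occurrence of $P$ and no free occurrence of $X$. For $P$, this follows because every formula appearing in $\fimp{}$ comes from $\Gamma$ via the left interpretation, together with the second-order encodings of $\AND$, $\TOP$, $\DIA$ and $\DIAB$. Those encodings introduce only bound variables, which we may choose distinct from $X$ and $P$. For $X$, we can assume all formulas in the sequent are closed, as the paper does throughout, so $X$ is free in no formula of $\Gamma$. The substantive modal content, namely the Barcan-style commutation of $\forall$ past $\Box$ and $\BOXB$ from \cref{ex:forall-distributes-over-box}, is already packaged in \cref{lem:GenGen}.
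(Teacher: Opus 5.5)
Your proposal is correct and is essentially the paper's own proof: you rewrite premiss and conclusion as $F^{\to}\{A[P/X]\}$ and $F^{\to}\{\forall X A\}$ via \cref{lem:fm-ctx}, apply $\gen$ using freshness of $P$, and conclude with \cref{lem:GenGen} and $\mp$. Your extra checks make explicit bookkeeping that the paper leaves implicit: that $F^{\to}$ does not depend on the succedent formula, contains no $P$, and has no free $X$, so that $F^{\to}\{A[P/X]\} = (F^{\to}\{A\})[P/X]$.
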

\begin{proof}
We can write
$\ifm{u}{\RR\stoup \Gamma \seqar v : A[P/X]} =
\Fr{A[P/X]}$ 
while %
$\ifm{u}{\RR\stoup \Gamma \seqar v: \forall X A } =
\Fr{\forall X A}$ and $P$ does not occur in $\Fr{\forall X A}$
by~\cref{lem:fm-ctx}.

By $\gen$, if $\IKtSO \vdash \Fr{A[P/X]}$ then $\IKtSO \vdash \forall X\Fr{ A}$

By~\cref{lem:GenGen} we have $\IKtSO \vdash \forall X \Fr {A} \to \Fr { \forall X A}$. 

Hence by $\mp$, if $\IKtSO \vdash \Fr{A[P/X]}$ then $\IKtSO \vdash \Fr{\forall X A}$
\end{proof}

\begin{lemma}[Soundness of $\lr \Box$ and $\lr\Boxblack$]\label{lem:sound-boxl}
If $\IKtSO \vdash \ifm{u}{\RR , v \R w \stoup \Gamma, w : A \seqar x:C}$
then $\IKtSO \vdash \ifm{u}{\RR , v \R w \stoup \Gamma, v : \Box A \seqar x:C }$.
\end{lemma}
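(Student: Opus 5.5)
We follow the pattern of the previous local soundness lemmas: rewrite both interpretations via the context lemmas, reduce to a propositional/modal core, and lift that core through the contexts using \cref{lem:GenBoxNec,lem:GenKBOX,lem:GenKDIA}. The key observation is that the two sequents differ only in the LHS formula $w:A$ versus $v:\Box A$, and these sit at \emph{different} labels. So we cannot use \cref{lem:fm-ctx-2} on a single formula directly. Instead we first localise both formulas around the edge $vRw$.

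We view the polytree $\RR, vRw$ as split by the edge $vRw$ into the $v$-side $\RAtMin vw$ and the $w$-side $\RAtMin wv$. We then case on where the evaluation point $u$ and the succedent label $x$ lie, since this decides whether the edge $vRw$ is read as a $\Box$ (on the path from $u$ to $x$) or as a $\DIA$/$\DIAB$.

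The central step is a strengthening of \cref{lem:fm-ctx-2}. We can write $\ifm{u}{\RR, vRw \stoup \Gamma, w:A \seqar x:C} = \Fr[1]{\Fl[2]{G \AND \DIA (H\AND A)} \IMP \Fr[3]{C}}$ and $\ifm{u}{\RR,vRw\stoup\Gamma, v:\Box A\seqar x:C} = \Fr[1]{\Fl[2]{(G\AND \Box A)\AND \DIA H} \IMP \Fr[3]{C}}$. Here the hole of $F_2$ sits at $v$, $G=\lfm{v}{\RAtMin vw\stoup\Gamma'}$ and $H=\lfm{w}{\RAtMin wv\stoup\Gamma'}$. This form holds when the edge is read as a diamond from $v$'s point of view, i.e.\ when $x$ lies on the $v$-side.

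In that case the core principle is $\Box A \AND \DIA H \IMP \DIA(H\AND A)$, which is \cref{prop:IK-thm}(1). Lifting it with \cref{lem:GenBoxNec,lem:GenKDIA} gives $\Fl[2]{\text{new}}\IMP\Fl[2]{\text{old}}$, and precomposition under $F_1$ (by \cref{lem:GenKBOX}) finishes, exactly as in \cref{lem:sound-fal}.

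If instead the hole is anchored at $w$, so that the edge is read as $\DIAB$ from $w$, the core principle becomes $\DIAB(\Box A \AND G)\AND H \IMP \DIAB G \AND H\AND A$. This uses $\bdwb:\DIAB\Box A\IMP A$ together with monotonicity of $\DIAB$ (\cref{lem:GenKDIA}).

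The remaining cases are those where $x$ lies on the $w$-side, so the edge $vRw$ lies on the path and is read as $\Box$ (from $v$) or $\Boxblack$ (from $w$). If $u$ is on the $v$-side, the interpretations look like $\Fr[1]{(G\AND\Box A)\IMP\Box\Fr{(H\AND A)\IMP\Fr[3]{C}}}$ versus $\Fr[1]{G\IMP\Box\Fr{(H\AND A)\IMP\dots}}$ after reassociating. The implication from premiss to conclusion then needs $\Box A \IMP \Box(\,(H\AND A\IMP D)\IMP (H\IMP D))$-style reasoning, i.e.\ $\necw$ and $\functwb$. If $u$ is on the $w$-side, the reading is $\Boxblack$, and we need the adjunction $\wdbb:\DIA\Boxblack B \IMP B$ in its $\Box$-form $\Box A$ at the parent, yielding $A$ at $w$. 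Concretely, $\DIAB\Box A \IMP A$ turns the left conjunct $\DIAB(\Box A\AND\dots)$ at $w$ into $A$.

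The $\lr\Boxblack$ statement is symmetric, swapping colours throughout. The main obstacle is bookkeeping: producing a uniform context decomposition, proved by induction on the path as in \cref{lem:fm-ctx-2}, that isolates both $v$ and $w$ simultaneously in each of the four location cases. I would try first to prove a general ``edge-context'' lemma giving the two displayed shapes, and then discharge each case with one of the $\IKt$ principles above.
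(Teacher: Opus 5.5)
Your overall strategy is the paper's. You split on which side of the edge $vRw$ the point $u$ and the succedent label $x$ lie, rewrite both interpretations with \cref{lem:fm-ctx,lem:fm-ctx-2}, prove a core $\IKt$ principle, and lift it with \cref{lem:GenBoxNec,lem:GenKBOX,lem:GenKDIA}. Three of your cores agree with the paper's:
\begin{itemize}
\item \cref{prop:IK-thm}(1) when $u$ and $x$ are both on the $v$-side;
\item $\bdwb$ plus monotonicity of $\DIAB$ when both are on the $w$-side;
\item $\necw$/$\functwb$ when $u$ is on the $v$-side and $x$ on the $w$-side. Your display here puts $A$ at $w$ in both formulas; in the conclusion it should be absent.
\end{itemize}

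The gap is the fourth configuration: $u$ on the $w$-side and $x$ on the $v$-side. There $vRw$ lies on the $u$--$x$ path and is read as $\BOXB$ from $w$. Your case list never actually treats it. Your first decomposition is asserted for every $x$ on the $v$-side, but it also needs $u$ on the $v$-side. Otherwise $w:A$ and $v:\BOX A$ sit at two different nodes of the path, and the displayed shape with common $F_1,F_2,F_3$ fails. Your ``$\BOXB$'' case is described as $x$ and $u$ both on the $w$-side, which is just your $\DIAB$ case again. Accordingly, the argument you give there, applying $\DIAB\BOX A\IMP A$ to a left conjunct $\DIAB(\BOX A\AND\cdots)$, is the $\DIAB$-case argument.

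In the missing case the interpretations are, up to currying the rest of the antecedent at $w$ into $F_1$:
\begin{itemize}
\item premiss: $\Fr[1]{A\IMP\BOXB(D\IMP\Fr[3]{C})}$;
\item conclusion: $\Fr[1]{\BOXB((\BOX A\AND D)\IMP\Fr[3]{C})}$.
\end{itemize}
Here no $\DIAB$-conjunct contains $\BOX A$. Instead $\BOX A$ is a hypothesis under the scope of $\BOXB$, while the premiss wants $A$ outside that scope. So the core principle you need is $(A\IMP\BOXB Z)\IMP\BOXB(\BOX A\IMP Z)$.

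The paper obtains it in three steps:
\begin{itemize}
\item $\bdwb$ gives $(A\IMP\BOXB Z)\IMP(\DIAB\BOX A\IMP\BOXB Z)$;
\item compose with the black interaction principle $(\DIAB X\IMP\BOXB Y)\IMP\BOXB(X\IMP Y)$, i.e.\ $\diaimpboxb$ (\cref{prop:IK-thm}(4));
\item use $\necb$ and $\functbb$ to turn $\BOX A\IMP D\IMP\cdots$ into $(\BOX A\AND D)\IMP\cdots$ under $\BOXB$.
\end{itemize}
This Fischer Servi--type interaction step is the ingredient missing from your proposal. The adjunction axiom, applied as you describe, does not reach the conclusion in this case.
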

\begin{proof}

This requires a careful analysis of the polytree structure of $\RR$.

\begin{itemize}
	
	\item If $u\conn{\RR}v$ does not contain $vRw$
	and $x\in\RAtMin wv$: 
	$\ifm{u}{\RR , v \R w \stoup \Gamma, w : A \seqar x:C} =
	\fimp[F_1]{\Box((A \AND D) \to \fimp[F_2]{C})}$

	$\ifm{u}{\RR , v \R w \stoup \Gamma, v : \Box A \seqar x:C } = 
	\fimp[F_1]{\Box A \to \Box(D \IMP \fimp[F_2]{C})}$
	
	for some formula D and contexts $F_1, F_2$ by~\cref{lem:fm-ctx,lem:fm-ctx-2}.

	Using propositional reasoning 
	$\pprove{((A \AND D) \IMP \fimp[F_2]{C}) \IMP A \IMP (D \IMP \fimp[F_2]{C})}$,
	apply $\necw$ and $\kax{1}$ to get
	$
	\pprove{\BOX((A \AND D) \IMP \fimp[F_2]{C}) \IMP \BOX A \IMP \BOX (D \IMP \fimp[F_2]{C})}   $
	
	We can apply~\cref{lem:GenBoxNec,lem:GenKBOX} to get
	$
	\pprove{\fimp[F_1]{\BOX((A \AND D) \IMP \fimp[F_2]{C})} \IMP \fimp[F_1]{\BOX A \IMP \BOX (D \IMP \fimp[F_2]{C})}}   
	$

	\item If $u\conn{\RR}v$ does not contain $vRw$
	and $x\in\RAtMin vw$: 

	$\ifm{u}{\RR , v \R w \stoup \Gamma, w : A \seqar x:C} =
	\Fr[1] {\Fl[2] {\Diamond (A \land D) } \to \Fr[3] {C}}$ 

	$\ifm{u}{\RR , v \R w \stoup \Gamma, v : \Box A \seqar x:C } =
	\Fr[1] {\Fl[2] { \Box A \land  \Diamond D } \to \Fr[3] {C}}$ 
	
	for some formula $D$ and contexts $F_1, F_2, F_3$ by~\cref{lem:fm-ctx,lem:fm-ctx-2}.
	
	From~\cref{prop:IK-thm} we have 
	$\IKtSO \vdash (\Box A \land  \Diamond D) \to \Diamond (A \land D)$ 
	
	By~\cref{lem:GenBoxNec,lem:GenKDIA} and $\mp$
	$\IKtSO \vdash \Fl[2]{ \Box A \land  \Diamond D } \to \Fl[2]{\Diamond (A \land D) }$. 
	
	We have the $\pprove[\IPL]{(A \to B) \to (B \to C) \to (A \to C)}$, thus we also have
	$\IKtSO \vdash (\Fl[2]{ \Box A \land  \Diamond D } \to \Fl[2]{\Diamond (A \land D) }) \to (\Fl[2] {\Diamond (A \land D) } \to \Fr[3]{C}) \to (\Fl[2] { \Box A \land  \Dmnd D } \to \Fr[3] {C})$. 
	
	By~\cref{lem:GenBoxNec,lem:GenKBOX} and $\mp$
	$\IKtSO \vdash \Fr[1] {\Fl[2] {\Diamond (A \land D) } \to \Fr[3] {C}} \to \Fr[1] {\Fl[2] { \Box A \land  \Diamond D } \to \Fr[3] {C}}$.

	\item If $u\conn{\RR}v$ contains $vRw$
	and $x\in\RAtMin wv$: 

	$\ifm{u}{\RR , v \R w \stoup \Gamma, w : A \seqar x:C} =
	\Fr[1] {\Fl[2] {A \AND \DIAB D } \to \Fr[3] {C}}$ 

	$\ifm{u}{\RR , v \R w \stoup \Gamma, v : \Box A \seqar x:C } =
	\Fr[1] {\Fl[2] { \DIAB (\BOX A \AND D)} \to \Fr[3] {C}}$ 
	for some formula $D$ and contexts $F_1, F_2, F_3$ by~\cref{lem:fm-ctx,lem:fm-ctx-2}.

	Propositional reasoning gives $\pprove{(\BOX A \AND D) \IMP D}$.
	Applying $\necb$ and $\kaxb{2}$, we get $\pprove{\DIAB(\BOX A \AND D) \IMP \DIAB D}$.
	
	Similarly propositional reasoning gives $\pprove{(\BOX A \AND D) \IMP \BOX A}$
	and applying $\necb$ and $\kaxb{2}$, we get $\pprove{\DIAB(\BOX A \AND D) \IMP \DIAB \BOX A}$.
	
	Using the $\bdwb$ axiom $\DIAB \BOX A \IMP A$ and $\IMP$-transitivity, we get $\pprove{\DIAB(\BOX A \AND D) \IMP A}$.
	
	Combining the two with propositional reasoning, we get $\pprove{\DIAB(\BOX A \AND D) \IMP (A \AND \DIAB D)}$.
	By~\cref{lem:GenBoxNec,lem:GenKDIA} and $\mp$ we get
	$\pprove{\fand[F_2]{\DIAB(\BOX A \AND D)} \IMP \fand[F_2]{A \AND \DIAB D}}$.
	
	Propositional reasoning gives $\pprove{(\Fl[2] {A \AND \DIAB D } \to \Fr[3] {C}) \IMP (\Fl[2] { \DIAB (\BOX A \AND D)} \to \Fr[3] {C})}$.
	By~\cref{lem:GenBoxNec,lem:GenKBOX} and $\mp$
	$\pprove{\fimp[F_1]{\Fl[2] {A \AND \DIAB D } \to \Fr[3] {C}} \IMP \fimp[F_1]{\Fl[2] { \DIAB (\BOX A \AND D)} \to \Fr[3] {C}}}$.
	
	\item If $u\conn{\RR}v$ contains $vRw$
	and $x\in\RAtMin vw$: 

	$\ifm{u}{\RR , v \R w \stoup \Gamma, w : A \seqar x:C} =
	\Fr[1] {A \IMP \BOXB (D \IMP \Fr[3] {C})}$ and 
	
	$\ifm{u}{\RR , v \R w \stoup \Gamma, v : \BOX A \seqar x:C} =
	\Fr[1] {\BOXB ((\BOX A \AND D) \IMP \Fr[3] {C})}$ 
	
	for some formula $D$ and contexts $F_1, F_3$ by~\cref{lem:fm-ctx,lem:fm-ctx-2}.
	
	First, using the $\bdwb$ axiom $\DIAB \BOX A \IMP A$ and propositional reasoning, $\pprove{(A \IMP \BOXB (D \IMP \Fr[3] {C})) \IMP (\DIAB \BOX A \IMP \BOXB (D \IMP \Fr[3] {C}))}$.
	Using Proposition~\ref{prop:IK-thm}, $\pprove{(\DIAB \BOX A \IMP \BOXB (D \IMP \Fr[3] {C})) \IMP \BOXB (\BOX A \IMP D \IMP \Fr[3] {C})}$.
	Propositional reasoning, $\necb$ and $\kaxb{1}$ gives $\pprove{\BOXB (\BOX A \IMP D \IMP \Fr[3] {C}) \IMP \BOXB ((\BOX A \AND D) \IMP \Fr[3] {C})}$.
	
	Applying $\IMP$-transitivity, we get $\pprove{(A \IMP \BOXB (D \IMP \Fr[3] {C})) \IMP \BOXB ((\BOX A \AND D) \IMP \Fr[3] {C})}$.
	By~\cref{lem:GenBoxNec,lem:GenKBOX} and $\mp$, we get
	$\pprove{\fimp[F_1]{A \IMP \BOXB (D \IMP \Fr[3] {C})} \IMP \fimp[F_1]{\BOXB ((\BOX A \AND D) \IMP \Fr[3] {C})}}$. \qedhere
\end{itemize}
\end{proof}

\begin{lemma}[Soundness of $\rr\Box$ and $\rr\Boxblack$]\label{lem:sound-boxr}
If $\IKtSO \vdash \ifm{u}{\RR , v \R w \stoup \Gamma \seqar w : A }$
then $\IKtSO \vdash \ifm{u}{\RR \stoup \Gamma \seqar v : \Box A}$.
\end{lemma}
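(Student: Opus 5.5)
The plan is to compare the two formula interpretations directly: both should have the shape $\Fr{\cdots}$ for the \emph{same} implication context, differing only at the hole. Let $\RR' := \RR, v\R w$. The premiss is $\RR' \stoup \Gamma \seqar w:A$ with $w$ fresh, so $w$ occurs neither in $\RR$, nor in $\Gamma$, nor elsewhere in the conclusion. Hence in the polytree $\RR'$ the node $w$ is a leaf, attached to $v$ only by the atom $v\R w$. Since $u$ occurs in $\RR$ (or $u=v$), we have $u\neq w$, and the unique path from $u$ to $w$ in $\RR'$ is the path from $u$ to $v$ in $\RR$ followed by the edge $v\R w$.

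The main step is to unfold \cref{def:intuitionistic-interpretation} along the path from $u$ to $v$. I will do this by induction on that path, as in the proofs of \cref{lem:fm-ctx,lem:fm-ctx-2}. At each node on the path the clause chosen is the same in the premiss and in the conclusion, and the left interpretations of the side branches coincide. This holds because those branches do not contain $w$, so $\RAtMin{x}{y}$ computed in $\RR'$ agrees with the one computed in $\RR$ there; by \cref{obs:lfm-weaken} the freshness of $w$ changes nothing. The result is an implication context $F^\to$ with
\[
\ifm{u}{\RR \stoup \Gamma \seqar v:\Box A} = \Fr{\lfm{v}{\RR\stoup\Gamma} \IMP \Box A}.
\]
At $v$ in the premiss, the second clause of \cref{def:intuitionistic-interpretation} applies with the atom $v\R w$. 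Here $\RAtMin vw$ computed in $\RR'$ is exactly $\RR$, and $\RAtMin wv = \emptyset$, so
\[
\ifm{u}{\RR' \stoup \Gamma \seqar w:A} = \Fr{\lfm{v}{\RR\stoup\Gamma} \IMP \Box\, \ifm{w}{\cdot \stoup \Gamma \seqar w:A}}.
\]
Since $w$ is fresh, $\ifm{w}{\cdot\stoup\Gamma\seqar w:A} = \lfm{w}{\cdot\stoup\cdot}\IMP A = \TOP \IMP A$. The degenerate case $\RR=\emptyset$, $u=v$ is the base of this unfolding, with $F^\to = \{\}$.

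It then remains to show $\IKtSO \proves \Fr{L \IMP \Box(\TOP\IMP A)} \IMP \Fr{L \IMP \Box A}$, where $L := \lfm{v}{\RR\stoup\Gamma}$. Since $\TOP$ is provable, $(\TOP\IMP A)\IMP A$ is an $\IPL 2$ theorem. Applying $\necw$ and $\functwb$ gives $\Box(\TOP\IMP A)\IMP\Box A$, and propositional reasoning gives $(L\IMP\Box(\TOP\IMP A))\IMP(L\IMP\Box A)$. Then \cref{lem:GenBoxNec,lem:GenKBOX} together with $\mp$ lift this implication through $F^\to$. The $\rr\Boxblack$ case is symmetric: the atom is $u\R v$ with $u$ fresh, the third clause of \cref{def:intuitionistic-interpretation} produces $\Boxblack$, and we use $\necb$ and $\functbb$ instead.

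I expect the main obstacle to be the bookkeeping in the second paragraph. One must check that every $\RAtMin{x}{y}$ along the path and every side-branch left interpretation is literally unchanged when the fresh leaf $w$ is added, so that the same $F^\to$ really works for both sequents. Once this context identification is secured, the logical content is only the trivial step $\Box(\TOP\IMP A)\IMP\Box A$.
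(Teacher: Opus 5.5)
Your proposal is correct and follows the paper's own argument. Both interpretations share the same implication context along the path from $u$ to $v$ and differ only inside the new $\Box$, which contains nothing from $\Gamma$ because $w$ is fresh. The paper simply asserts that both sides are the same formula $F^{\to}\{\Box A\}$, whereas you also handle the residual $\top \to A$ coming from the empty left interpretation at $w$, discharging it via $\Box(\top \to A) \to \Box A$ and \cref{lem:GenBoxNec,lem:GenKBOX}. This extra step is harmless and makes the argument robust to the paper's slightly inconsistent conventions for empty interpretations.
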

\begin{proof}
Both the premiss and the conclusion will be written as the same formula interpretation $F^\to \{\Box A\}$. Note that there cannot be further formulas inside the $\Box$ as the label in the premiss was fresh. Thus, the rule can be derived trivially.
\end{proof}

\begin{remark}
Note that in almost all rules, we were able to derive a formula which is directly in correspondence to the rule which we were translating. This is for all rules, except for $\forall_r$ which is directly corresponding to the rule of generalisation. In contrast, the cut-rule can also be derived as a single rule, as the next lemma shows.
\end{remark}

\begin{lemma}\label{lem:sound-implad}
If $\IKtSO \vdash \ifm{u}{\RR \stoup \Gamma \seqar v : A}$ 
and $\IKtSO \vdash \ifm{u}{\RR \stoup \Gamma, v :B \seqar w:C}$, 
then $ \IKtSO \vdash \ifm{u}{\RR \stoup \Gamma, v:A \limp B \seqar w:C}$
\end{lemma}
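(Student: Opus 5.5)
The plan is to reduce everything to the context lemmas: rewrite the two premisses and the conclusion in a common shape via \cref{lem:fm-ctx,lem:fm-ctx-2}, then glue them together using \cref{lem:GenBoxNec,lem:GenKBOX,lem:GenKDIA,lem:GenFSDiaBox,lem:GenInterDiaBox}.

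First, apply \cref{lem:fm-ctx-2}(1) to the conclusion $\RR \stoup \Gamma, v:A\limp B \seqar w:C$ at $u$. This gives contexts $F_1,F_2,F_3$ with
\[
\ifm{u}{\RR \stoup \Gamma, v:A\limp B \seqar w:C} = \fimp[F_1]{\fand[F_2]{A\limp B} \IMP \fimp[F_3]{C}}.
\]
The same decomposition with $v:B$ in place of $v:A\limp B$ gives the second premiss as $\fimp[F_1]{\fand[F_2]{B}\IMP\fimp[F_3]{C}}$; here only the hole content changes.

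For the first premiss $\RR\stoup\Gamma\seqar v:A$, whose succedent sits at $v$ rather than $w$, I will use \cref{lem:fm-ctx-2}(2) with $w:\TOP$ added to the left. This added formula is harmless up to provable equivalence, since conjoining $\TOP$ is inert. It yields
\[
\ifm{u}{\RR\stoup\Gamma\seqar v:A} \IFF \fimp[F_1]{\fand[F_3]{\TOP}\IMP\fimp[F_2]{A}},
\]
where $\fimp[F_2]{}$ is the implication context dual to $\fand[F_2]{}$. Justifying this use of \cref{lem:fm-ctx-2}(2) is the part requiring care: I need that the contexts obtained are literally the same $F_1,F_2,F_3$, which follows from how they are constructed in that lemma's proof.

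Next, push everything under $F_1$. By \cref{lem:GenBoxNec,lem:GenKBOX} it suffices to prove the unboxed implication
\[
\bigl(\fand[F_3]{\TOP}\IMP\fimp[F_2]{A}\bigr)\IMP\bigl(\fand[F_2]{B}\IMP\fimp[F_3]{C}\bigr)\IMP\bigl(\fand[F_2]{A\limp B}\IMP\fimp[F_3]{C}\bigr).
\]
This decomposes into two steps.
\begin{enumerate}
\item By \cref{lem:GenFSDiaBox}, $\fand[F_2]{A\limp B}\IMP\fimp[F_2]{A}\IMP\fand[F_2]{B}$.
\item The remaining mismatch is that $\fimp[F_2]{A}$ is only available under the hypothesis $\fand[F_3]{\TOP}$, whereas the conclusion has $\fimp[F_3]{C}$ on the right. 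I will curry this: by \cref{lem:GenInterDiaBox} with $\TOP$, from $\fand[F_3]{\TOP}\IMP X$ we can move to reasoning inside $F_3$. Specifically, I aim to show
\[
\bigl(\fand[F_3]{\TOP}\IMP\fimp[F_2]{A}\bigr)\IMP\bigl(\fimp[F_2]{A}\IMP\fimp[F_3]{C}\bigr)\IMP\fimp[F_3]{C}.
\]
To prove it, write $\fimp[F_3]{C}$ as $\fimp[F_3]{\TOP\IMP C}$, use \cref{lem:GenInterDiaBox}, and then propositional reasoning.
\end{enumerate}

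I expect the main obstacle to be this interaction between $F_2$ and $F_3$, since the two contexts sit on different branches of the polytree. If the direct argument fails, I will fall back on an induction on the path from $u$ to $w$ mirroring \cref{lem:fm-ctx-2}. At each modal edge I would use \cref{prop:IK-thm}(3)--(6), i.e.\ $\diaimpbox$ and its black version, as the text above anticipates. The base case $u=w$ is plain IPL together with \cref{lem:GenFSDiaBox}.
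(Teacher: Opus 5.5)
Your proposal is correct and follows essentially the same route as the paper. The ingredients match one for one: the decomposition via \cref{lem:fm-ctx-2} (with $\fand[F_3]{\TOP}$ in the first premiss), \cref{lem:GenFSDiaBox} for $F_2$, \cref{lem:GenInterDiaBox} at $\TOP$ giving $(\fand[F_3]{\TOP}\IMP\fimp[F_3]{C})\IMP\fimp[F_3]{C}$, a propositional glue step (which the paper packages as the single tautology $(A\to B\to C)\to((D\to E)\to E)\to(D\to B)\to(C\to E)\to(A\to E)$), and finally \cref{lem:GenBoxNec,lem:GenKBOX} to push under $F_1$. Your justification of the first premiss's shape (adding $w:\TOP$, and noting the contexts do not depend on the hole contents) fills in a detail the paper leaves implicit, and the fallback induction on the path is not needed.
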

\begin{proof}
We can write
$\ifm{u}{\RR \stoup \Gamma \seqar v : A} =
\Fr[1] { \Fl[3]{\TOP} \to \Fr[2] { A }}$ 
and 
$\ifm{u}{\RR \stoup \Gamma' , v :B \seqar w:C} = 
\Fr[1] { \Fl[2] {B} \to \Fr[3] { C }}$ 
and %
$\ifm{u}{\RR \stoup \Gamma, \Gamma', v:A \limp B \seqar w:C} = 
\Fr[1] { \Fl[2] {A \to B } \to \Fr[3] { C }}$
by~\cref{lem:fm-ctx-2}.

Note that 
$\IPL \vdash (A \to B \to C) \to (( D \to E) \to E) \to (D \to B) \to (C \to E) \to (A \to E)$
and therefore by substitution also 
$\IKtSO \vdash 
(\Fl[2] {A \to B } \to \Fr[2] {A} \to \Fl[2] {B})) 
\to 
(( \Fl[3]{\TOP} \to \Fr[3] { C }) \to \Fr[3] { C })
\to 
(\Fl[3]{\TOP} \to \Fr[2] {A}) 
\to 
(\Fl[2] {B} \to \Fr[3] { C }) 
\to 
(\Fl[2] {A \to B } \to \Fr[3] { C })$. 

By~\cref{lem:GenFSDiaBox,lem:GenInterDiaBox} and $\mp$ 
$\IKtSO \vdash 
(\Fl[3]{\TOP} \to \Fr[2] {A }) 
\to 
(\Fl[2] {B } \to \Fr[3] { C }) 
\to 
(\Fl[2] {A \to B } \to \Fr[3] { C })$.

By~\cref{lem:GenBoxNec,lem:GenKBOX} and $\mp$, we also get 
$\IKtSO \vdash 
\Fr[1] {\Fl[3]{ \TOP } \to \Fr[2] {A }} 
\to 
\Fr[1] {\Fl[2] {B } \to \Fr[3] { C }} 
\to 
\Fr[1] {\Fl[2] {A \to B } \to \Fr[3] { C }}$.
\end{proof}

\begin{corollary}[Soundness of $\lr \limp$]\label{cor:sound-impl}
If $\IKtSO \vdash \ifm{u}{\RR \stoup \Gamma \seqar v : A}$ 
and $\IKtSO \vdash \ifm{u}{\RR \stoup \Gamma' , v :B \seqar w:C}$, 
then $ \IKtSO \vdash \ifm{u}{\RR \stoup \Gamma, \Gamma', v:A \limp B \seqar w:C}$
\end{corollary}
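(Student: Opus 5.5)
The plan is to reduce the corollary to \cref{lem:sound-implad} using weakening. The rule $\lr\limp$ has premisses $\RR\stoup\Gamma\seqar v:A$ and $\RR\stoup\Gamma',v:B\seqar w:C$ with \emph{different} left contexts, whereas \cref{lem:sound-implad} assumes one shared context. So I will first move both premisses to the common context $\Gamma,\Gamma'$, then apply that lemma.

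Concretely, suppose $\IKtSO\vdash\ifm{u}{\RR\stoup\Gamma\seqar v:A}$. I will add the labelled formulas of $\Gamma'$ one at a time using \cref{lem:sound-weak}, obtaining $\IKtSO\vdash\ifm{u}{\RR\stoup\Gamma,\Gamma'\seqar v:A}$. Symmetrically, from $\IKtSO\vdash\ifm{u}{\RR\stoup\Gamma',v:B\seqar w:C}$ I will add each formula of $\Gamma$ and get $\IKtSO\vdash\ifm{u}{\RR\stoup\Gamma,\Gamma',v:B\seqar w:C}$. Both iterations are finite, since cedents are finite multisets, so each is an induction on the number of added formulas. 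Applying \cref{lem:sound-implad} with shared context $\Gamma,\Gamma'$ then gives exactly $\IKtSO\vdash\ifm{u}{\RR\stoup\Gamma,\Gamma',v:A\limp B\seqar w:C}$.

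The only point to check is that each intermediate sequent is a labelled polytree sequent with $u$ in the support of $\RR$ or $u$ equal to the succedent label. This is what \cref{lem:fm-ctx-2}, and hence \cref{lem:sound-weak}, need. Every label of $\Gamma$, $\Gamma'$, $v$ and $w$ occurs in the conclusion of the $\lr\limp$ step. By \cref{lem:tree-like} that conclusion is a polytree sequent, so all these labels are connected in $\RR$. The relational context $\RR$ is unchanged throughout, so the intermediate sequents are polytree sequents as well. One degenerate case needs separate treatment: $\RR=\emptyset$. Then all labels coincide with $u=v=w$, and the interpretation is the purely propositional formula, so weakening amounts to the $\IPL$ fact $(G\to D)\to(G\land E\to D)$.

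I do not expect a genuine obstacle, since the real content (the interaction of the contexts $F^\land$ and $F^\to$ via \cref{lem:GenFSDiaBox,lem:GenInterDiaBox}) is already in \cref{lem:sound-implad}. The mildly delicate step is justifying that \cref{lem:sound-weak} applies to the premiss $\RR\stoup\Gamma\seqar v:A$ when the added formulas carry labels unrelated to $v$. Here I will rely on \cref{lem:fm-ctx-2}, which places the new formula inside the conjunction context $F_2^\land$ at its own label, whatever its position relative to $v$.
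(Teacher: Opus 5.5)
Your proposal is correct and follows the paper's own proof. The paper likewise iterates \cref{lem:sound-weak} to bring both premisses to the common context $\Gamma,\Gamma'$ and then invokes \cref{lem:sound-implad}; your remarks that the intermediate sequents stay polytree sequents over the unchanged $\RR$, and your treatment of the degenerate case $\RR=\emptyset$, are details the paper leaves implicit.
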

\begin{proof}
We use the fact that we can always derive from the formula interpretation of $\RR \stoup \Gamma \Ra w : C$ the formula interpretation of $\RR \stoup \Gamma , \Gamma' \Ra w : C$ by applying~\cref{lem:sound-weak} multiple times for $\wk$.
\end{proof}

\begin{lemma}\label{lem:sound-cutadd}
If $\IKtSO \vdash \ifm{u}{\RR \stoup \Gamma \seqar v : A}$ 
and $\IKtSO \vdash \ifm{u}{\RR \stoup \Gamma, v : A \seqar w:C}$, 
then $ \IKtSO \vdash \ifm{u}{\RR \stoup \Gamma\seqar w:C}$
\end{lemma}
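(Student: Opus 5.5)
The plan is to mirror the proof of \cref{lem:sound-implad}, replacing the implication $v:A\limp B$ on the left by the cut formula $v:A$ itself. By \cref{lem:fm-ctx-2}, applied with $v:A$ as the distinguished left formula and $w:C$ as the succedent, we write
\[
\ifm{u}{\RR \stoup \Gamma, v:A \seqar w:C} = \Fr[1]{\Fl[2]{A} \IMP \Fr[3]{C}}.
\]
By part (2) of the same lemma, instantiated with $\top$ in place of $B$, we get $\IKtSO$-equivalently
\[
\ifm{u}{\RR \stoup \Gamma \seqar v:A} \IFF \Fr[1]{\Fl[3]{\TOP} \IMP \Fr[2]{A}}.
\]
We also have $\ifm{u}{\RR \stoup \Gamma \seqar w:C} = \Fr[1]{\Fl[2]{\TOP} \IMP \Fr[3]{C}}$, since dropping $v:A$ only replaces the $A$ in the hole by the empty substitution $\TOP$. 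The key is that the same $F_1$, and crossed $F_2,F_3$, appear in all three interpretations, because they all describe the same polytree $\RR$ seen from $u$, with $v$ and $w$ swapping roles.

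It then suffices to prove, inside the context $F_1$, the implication
\[
(\Fl[3]{\TOP} \IMP \Fr[2]{A}) \IMP (\Fl[2]{A} \IMP \Fr[3]{C}) \IMP (\Fl[2]{\TOP} \IMP \Fr[3]{C}).
\]
From this, \cref{lem:GenBoxNec,lem:GenKBOX} (applied twice) with $\mp$ lift it under $\Fr[1]{}$ and conclude.

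To prove the displayed implication, we reason as follows. Assume $\Fl[2]{\TOP}$, and aim to derive $\Fl[2]{A}$ using the first hypothesis. By \cref{lem:GenFSDiaBox} we have $\Fl[2]{\TOP\IMP A}\IMP\Fr[2]{\TOP}\IMP\Fl[2]{A}$, or more directly $\Fl[2]{\TOP}\IMP \Fr[2]{A} \IMP \Fl[2]{A}$, using $\TOP\IMP A\IFF A$ and \cref{lem:GenKDIA}. So it remains to obtain $\Fr[2]{A}$ from $\Fl[3]{\TOP}\IMP\Fr[2]{A}$ together with $\Fl[2]{\TOP}$.

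This is the main obstacle. The first premiss is relative to the $w$-side context $\Fl[3]{\TOP}$, which need not be available when we are reasoning along the $v$-branch. As in \cref{lem:sound-implad}, we will handle it by combining \cref{lem:GenInterDiaBox} with the propositional tautology $((D\IMP E)\IMP E)$-shape used there, namely by proving $(\Fl[3]{\TOP}\IMP\Fr[3]{C})\IMP\Fr[3]{C}$-style discharges. Concretely, we first try to show that $\Fl[3]{\TOP}$ is itself a theorem modulo the modalities carrying it. If that fails, we instead fold the $F_3$ antecedent into the conclusion using \cref{lem:GenInterDiaBox}, and then apply cut propositionally.

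If the uniform context approach breaks, the fallback is induction on the path $u \conn{\RR} w$, as in \cref{lem:fm-ctx-2}, handling the base case $u=w$ purely propositionally. The inductive steps would use $\necw/\necb$, $\functwb/\functbb$, and the interaction principles of \cref{prop:IK-thm}(3)–(6) to push the cut through one modality at a time.
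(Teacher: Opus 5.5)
Your plan is essentially the paper's proof: the same decompositions from \cref{lem:fm-ctx-2}, the same target implication $(\Fl[3]{\TOP} \IMP \Fr[2]{A}) \IMP (\Fl[2]{A} \IMP \Fr[3]{C}) \IMP (\Fl[2]{\TOP} \IMP \Fr[3]{C})$ lifted under $F_1$ by \cref{lem:GenBoxNec,lem:GenKBOX}, and the same assembly via the tautology of \cref{lem:sound-implad}, with \cref{lem:GenFSDiaBox} supplying $\Fl[2]{\TOP}\IMP\Fr[2]{A}\IMP\Fl[2]{A}$ (use its instance at $A \IMP A$, not at $\TOP\IMP A$) and \cref{lem:GenInterDiaBox} supplying $(\Fl[3]{\TOP}\IMP\Fr[3]{C})\IMP\Fr[3]{C}$. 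Drop the detours: $\Fr[2]{A}$ cannot in general be obtained from $\Fl[3]{\TOP}\IMP\Fr[2]{A}$ and $\Fl[2]{\TOP}$, and $\Fl[3]{\TOP}$ is typically not a theorem (it can be $\Diamond\top$), so $\Fl[3]{\TOP}$ must be discharged at the level of the goal $\Fr[3]{C}$ exactly as in your \cref{lem:GenInterDiaBox} step, and no path induction is needed.
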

\begin{proof}
We can write
$\ifm{u}{\RR \stoup \Gamma \seqar v : A} =
\Fr[1] { \Fl[3]{\TOP} \to \Fr[2] { A }}$ 
and 
$\ifm{u}{\RR \stoup \Gamma, v : A \seqar w:C} =
\Fr[1] { \Fl[2] {A} \to \Fr[3] { C }}$ 
and 
$\ifm{u}{\RR \stoup \Gamma\seqar w:C} = 
\Fr[1] { \Fl[2] \to \Fr[3] { C }}$
by~\cref{lem:fm-ctx-2}.

We use the fact that 
$\IPL \vdash (A \to B \to C) \to (( D \to E) \to E) \to (D \to B) \to (C \to E) \to (A \to E)$
and substitute to get
$\IKtSO \vdash 
(\Fl[2]{\TOP} \to \Fr[2] { A } \to \Fl[2] {A }) 
\to 
((\Fl[3]{\TOP}  \to \Fr[3] { C }) \to \Fr[3] { C })
\to 
( \Fl[3]{\TOP}  \to \Fr[2] { A }) 
\to 
(\Fl[2] {A } \to \Fr[3] { C }) 
\to 
(\Fl[2] \to \Fr[3] { C })$.

By~\cref{lem:GenFSDiaBox} we get
$\Fl[2]{\TOP} \to \Fr[2]{A } \to \Fl[2] { A})$.

By~\cref{lem:GenInterDiaBox} we get 
$(\Fl[3] {\TOP} \to \Fr[3]{C}) \to \Fr[3] {C}$ 

and apply $\mp$ to get
$\IKtSO \vdash 
(\Fl[3]{\TOP}  \to \Fr[2] { A }) 
\to 
(\Fl[2] {A } \to \Fr[3] { C }) 
\to 
(\Fl[2] \to \Fr[3] { C })$

Using~\cref{lem:GenBoxNec,lem:GenKBOX} with $\mp$, we can derive
$\IKtSO \vdash 
\Fr[1]{\Fl[3]{\TOP}  \to \Fr[2] { A }} 
\to 
\Fr[1]{\Fl[2] {A } \to \Fr[3] { C }}
\to 
\Fr[1]{\Fl[2] \to \Fr[3] { C }}$
\end{proof}

\begin{corollary}[Soundness of $\cut$]\label{cor:sound-cut}
If $\IKtSO \vdash \ifm{u}{\RR \stoup \Gamma \seqar v : A}$ 
and $\IKtSO \vdash \ifm{u}{\RR \stoup \Gamma' , v : A \seqar w:C}$, 
then $ \IKtSO \vdash \ifm{u}{\RR \stoup \Gamma, \Gamma' \seqar w:C}$
\end{corollary}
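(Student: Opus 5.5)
We derive the corollary from \cref{lem:sound-cutadd}, in the same way \cref{cor:sound-impl} was derived from \cref{lem:sound-implad}. The idea is to first put both premisses over a common LHS $\Gamma,\Gamma'$ using weakening, and then apply the additive version of cut.

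The steps are as follows.

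First, starting from $\IKtSO \vdash \ifm{u}{\RR \stoup \Gamma \seqar v:A}$, apply \cref{lem:sound-weak} once for each labelled formula of $\Gamma'$. This gives $\IKtSO \vdash \ifm{u}{\RR \stoup \Gamma,\Gamma' \seqar v:A}$.

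Second, starting from $\IKtSO \vdash \ifm{u}{\RR \stoup \Gamma', v:A \seqar w:C}$, apply \cref{lem:sound-weak} once for each labelled formula of $\Gamma$. This gives $\IKtSO \vdash \ifm{u}{\RR \stoup \Gamma,\Gamma', v:A \seqar w:C}$.

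Third, apply \cref{lem:sound-cutadd} with context $\Gamma,\Gamma'$. This yields $\IKtSO \vdash \ifm{u}{\RR \stoup \Gamma,\Gamma' \seqar w:C}$, which is the claim.

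The only point needing care is that the weakening steps stay within polytree sequents over the same $\RR$. This holds because the cut conclusion is a polytree sequent, so every label of $\Gamma,\Gamma'$ is connected in $\RR$. Consequently \cref{lem:fm-ctx-2}, on which \cref{lem:sound-weak} relies, applies to each intermediate sequent.

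The same argument works for every admissible choice of $u$, because neither lemma changes the distinguished label $u$.

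I do not expect a genuine obstacle here. The substantive work, namely \cref{lem:GenFSDiaBox,lem:GenInterDiaBox}, which handle the change of RHS label between $v$ and $w$, has already been done inside \cref{lem:sound-cutadd}.
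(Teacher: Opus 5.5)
Your proposal is correct and follows the paper's proof: weaken both premisses to the common context $\Gamma,\Gamma'$ by repeated applications of \cref{lem:sound-weak}, then conclude with the additive cut lemma \cref{lem:sound-cutadd}. Your remark about staying within polytree sequents spells out a point the paper leaves implicit.
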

\begin{proof}
We use the fact that we can always derive from the formula interpretation of $\RR \stoup \Gamma \Ra w : C$ the formula interpretation of $\RR \stoup \Gamma , \Gamma' \Ra w : C$ by applying~\cref{lem:sound-weak} multiple times for $\wk$.
\end{proof}

\begin{proof}[Proof of~\cref{lem:sequent-soundness-intuitionistic}.] 
We proceed by induction on the proof of $\sequent$. 

If $\sequent$ is obtained by an application of $\id$, 
apply~\cref{lem:sound-id} to obtain an axiomatic proof of $\ifm{u}{\sequent}$ with $u$ ranging over the labels occurring in $\sequent$.

Let the last rule in the proof of $\sequent$ be a single-premiss rule with premiss $\sequent_1$.
	By the inductive hypothesis, we obtain axiomatic proofs of $\ifm{u}{\sequent_1}$ with $u$ ranging over the labels occurring in $\sequent_1$. 
	Note that all labels occurring in $\sequent_1$ are also occurring in $\sequent$.
	Applying the lemmas above appropriately, we obtain $\IKtSO$ proofs of $\ifm{u}{\sequent}$ with $u$ ranging over labels occurring in $\sequent$.
	
	Let the last rule in the proof of $\sequent$ be a dual-premiss rule with premisses $\sequent_1$ and $\sequent_2$.
By the inductive hypothesis, %
we have proofs of $\ifm{u} { \sequent_1}$ and $\ifm{u} { \sequent_2 }$ where $u$ ranges over the labels occurring in $\sequent$.
By applying the appropriate lemma, we obtain axiomatic proofs proofs of $\ifm{u} {\sequent}$ ($u$ ranges over labels in $\sequent$). 
\end{proof}

\subsection{Axiomatic soundness for $\labKtSO$}

We recover~\cref{prop:axiomsoundness-classical}\eqref{axiomsoundness-classical} for the classical system as a corollary of the intuitionistic one, again strengthening the statement first:

\begin{lemma}[Soundness of interpretation]
\label{lem:sequent-soundness-classical}
If $\labKtSO \proves \RR \stoup \Gamma \seqar \Delta$ then $\KtSO\proves\cfm{u}{\RR \stoup \Gamma \seqar \Delta}$ 
for any label $u$ occurring in $\RR$, $\Gamma$ or $\Delta$.
\end{lemma}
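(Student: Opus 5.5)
The plan is to mirror the proof of \cref{lem:sequent-soundness-intuitionistic}. We argue by induction on a $\labKtSO$ proof, which by \cref{lem:tree-like} we may take to contain only labelled polytree sequents. The new ingredient is a classical normal form for $\cfm{u}{\cdot}$ that avoids the case analysis on the position of a distinguished succedent formula. The key observation is that over $\KtSO$ every right formula can be moved to the left as its negation. Concretely, I would first prove
\[
\KtSO \proves \cfm{u}{\RR \stoup \Gamma \seqar \Delta} \IFF \lnot\, \lfm{u}{\RR \stoup \Gamma, \lnot\Delta},
\]
where $\lnot\Delta$ denotes $\{v:\lnot B \mid v:B \in \Delta\}$. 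This goes by induction on the polytree $\RR$ rooted at $u$, unfolding \cref{def:left-interpretation,def:rigth-interpretation}. It needs three classical facts:
\begin{itemize}
\item the propositional De Morgan laws for the second-order encodings of $\AND,\OR,\bot$, available in $\KtSO$ via $\lnot\lnot A \limp A$;
\item the dualities $\lnot\BOX A \IFF \DIA\lnot A$ and $\lnot\BOXB A \IFF \DIAB\lnot A$;
\item congruence of $\DIA,\DIAB,\BOX,\BOXB$ under provable equivalence, which follows from $\necw,\necb$, $\functwb,\functbb$ and their diamond analogues $\functwd,\functbd$ (derived in the Redundancy example).
\end{itemize}

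For the dualities, $\DIA \lnot A \limp \lnot \BOX A$ is the instance $B:=\bot$ of \cref{prop:IK-thm}(1) combined with $\diadistbot$. The converse I would get classically from $\wbbd$/$\bbwd$: one shows $\lnot\DIA\lnot A \limp \BOX A$ by deriving $\blacklozenge\lnot\DIA\lnot A \limp A$ with the help of $\bdwb$ and $\lnot\lnot$-elimination, then applying $\necw$, $\functwb$ and $\wbbd$.

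With the normal form in hand, every sequent reduces to a single \emph{conjunction context}. By \cref{lem:fm-ctx}(1), for each principal formula position $v$ we can write $\lfm{u}{\RR\stoup\Gamma,\lnot\Delta,v:X} = \fand{X}$, and the goal for each rule becomes: if $\lnot\fand[F_i]{\cdots}$ is provable for every premiss, then so is the corresponding $\lnot\fand{\cdots}$ for the conclusion. The rules are then handled as follows, and \cref{lem:GenBoxNec,lem:GenKDIA,lem:GenFSDiaBox,lem:GenGen} remain valid in $\KtSO$ since it extends $\IKtSO$.
\begin{itemize}
\item $\id$ becomes $\lnot\fand{A\AND\lnot A}$, obtained by monotonicity of conjunction contexts (\cref{lem:GenKDIA}) together with $\DIA\bot\limp\bot$ and its black version, propagated through the context.
\item Weakening and contraction follow from monotonicity alone.
\item $\lr\limp$ and $\cut$ become $\fand{X}\limp \fand{X\AND A}\OR \fand{X \AND \lnot A}$. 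This needs a classical distribution of $\DIA$ over $\OR$; I would use $\diadistlor$ and its black version, together with the fact that $A\OR\lnot A$ is a theorem.
\item $\rr\limp$ is a propositional rewriting inside the context.
\item $\lr\forall$ is comprehension inside the context.
\item $\rr\forall$ uses $\gen$ followed by \cref{lem:GenGen}, after dualising.
\item $\lr\BOX$ and $\lr\BOXB$ are handled as in \cref{lem:sound-boxl}. Whichever direction the atom $vRw$ is traversed from $u$, we need $\BOX A \AND \DIA D \limp \DIA(A\AND D)$ (\cref{prop:IK-thm}(1)) or $\DIAB(\BOX A\AND D)\limp A \AND \DIAB D$ (via $\bdwb$).
\item $\rr\BOX$ and $\rr\BOXB$ are trivial, because the fresh label contributes exactly $\DIA\lnot A$ (resp.\ $\DIAB\lnot A$), which matches $\lnot\BOX A$ under the duality.
\end{itemize}

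I expect the main obstacle to be the normal-form equivalence itself, and specifically the converse duality $\lnot\DIA\lnot A \limp \BOX A$ with $\DIA$ given by its second-order encoding. One must check that the tense adjunction axioms really suffice and that the $\lnot\lnot$-elimination is applied at the correct label inside the context. The bookkeeping on polytrees should then follow the same pattern as \cref{lem:fm-ctx}.

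As a fallback, if the normal form becomes awkward, I would instead generalise \cref{lem:fm-ctx-2} to disjunctive contexts $F^{\OR}\{\} ::= \{\} \mid A\OR F^{\OR}\{\} \mid \BOX F^{\OR}\{\} \mid \BOXB F^{\OR}\{\}$, and redo each rule as in the intuitionistic case with classical propositional reasoning.
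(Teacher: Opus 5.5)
The gap is the normal form, in the direction you actually need. Classically $\lnot\rfm{u}{\RR\stoup\Delta}\IFF\lfm{u}{\RR\stoup\lnot\Delta}$. So $\cfm{u}{\RR\stoup\Gamma\seqar\Delta}$ amounts to $\lnot(\lfm{u}{\RR\stoup\Gamma}\AND\lfm{u}{\RR\stoup\lnot\Delta})$, in which $\Gamma$ and $\lnot\Delta$ get \emph{separate} witnesses under each $\DIA$ and $\DIAB$. In $\lfm{u}{\RR\stoup\Gamma,\lnot\Delta}$, by contrast, they must share one. Hence only $\cfm{u}{\RR\stoup\Gamma\seqar\Delta}\IMP\lnot\lfm{u}{\RR\stoup\Gamma,\lnot\Delta}$ is provable. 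The converse would need $\DIA C\AND\DIA D\IMP\DIA(C\AND D)$, which none of your three classical facts supplies and which is invalid; your induction on the polytree breaks exactly there.

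Concretely, take the $\id$ instance $uRv\stoup v:P\seqar v:P$, which occurs in the cut-free proof of $\cdot\stoup\cdot\seqar u:\BOX(P\IMP P)$. Here $\lnot\lfm{u}{uRv\stoup v:P,v:\lnot P}$ is provably equivalent to $\lnot\DIA(P\AND\lnot P)$, a theorem. But $\cfm{u}{uRv\stoup v:P\seqar v:P}$ is provably equivalent to $\DIA P\IMP\BOX P$. This fails at $u$ in the relational model with $W=\{u,v_1,v_2\}$, $uRv_1$, $uRv_2$, $\set W=\pow W$ and $P=\{v_1\}$, so by \cref{classical-rel-soundness} it is not a $\KtSO$ theorem. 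The obstruction therefore lies in the statement itself: with $\cfm{u}{\cdot}$ defined as $\lfm{u}{\cdot}\IMP\rfm{u}{\cdot}$, the lemma is false. No argument can establish it, including your fallback with disjunctive contexts.

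What your rule-by-rule induction does correctly prove is $\KtSO\proves\lnot\lfm{u}{\RR\stoup\Gamma,\lnot\Delta}$. This is the nested-sequent-style reading, in which each subtree keeps its antecedent and succedent together. The cases you list ($\id$, structural rules, $\lr\IMP$ and $\cut$ via $\diadistlor$, $\rr\forall$ via dualising and \cref{lem:GenGen}, $\lr\BOX$, $\rr\BOX$) all go through for this invariant. It is also enough for \cref{prop:axiomsoundness-classical}\eqref{axiomsoundness-classical}, since for $\cdot\stoup\cdot\seqar v:A$ it is $\lnot\lnot A$ up to trivial conjuncts, hence $A$ in $\KtSO$.

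The paper reaches the same invariant by a different route. It translates $\labKtSO$ proofs into $\labIKtSO+(\lnot\lnot)$ proofs of $\RR\stoup\Gamma,\lnot\Delta\seqar u:\bot$ (\cref{lem:lIKtSO+negneg}) and reuses the intuitionistic soundness of \cref{lem:sequent-soundness-intuitionistic}. It then makes your faulty step in its final display, rewriting $\lfm{u}{\RR\stoup\Gamma,\lnot\Delta}$ as $\lfm{u}{\RR\stoup\Gamma}\AND\lfm{u}{\RR\stoup\lnot\Delta}$, so it offers no way around the counterexample. Compared with the paper, your direct route avoids the intuitionistic interpretation and its case split on where the succedent formula sits. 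The price is working with classical dualities and $\diadistlor$ inside conjunction contexts.
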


This result is factored through an alternate system for $\KtSO$, namely $\labIKtSO$ with a rule for double negation elimination:
$$\vlinf{(\neg\neg)}{}{\RR \stoup \Gamma \seqar v:A}{\RR \stoup \Gamma \seqar v:\neg\neg A}$$

The soundness of that additional rule $(\neg\neg)$ is directly obtained from the definition of $\KtSO$, which includes the axiom $\lnot \lnot A \limp A$.

\begin{lemma}\label{lem:lIKtSO+negneg}
If $\labKtSO \proves \RR \stoup \Gamma \seqar \Delta$, %
then $\labIKtSO + (\neg\neg) \proves \RR \stoup \Gamma , \neg\Delta\footnote{If $\Delta = v_1:A_1, \ldots, v_n:A_n$ then $\neg\Delta = v_1:\neg A_1, \ldots, v_n:\neg A_n$.}
\seqar x:\bot$
for any label $x$ occurring in $\RR$, $\Gamma$ or $\Delta$. %
\end{lemma}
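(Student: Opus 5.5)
We argue by induction on the $\labKtSO$ proof of $\RR \stoup \Gamma \seqar \Delta$. The translation moves every RHS formula $v:B$ to the LHS as $v:\neg B$ and puts $x:\bot$ on the right. By \cref{lem:tree-like} we may assume all sequents are labelled polytree sequents. In particular, every label of $\Gamma,\Delta$ is connected to $x$ in $\RR$, so \cref{lem:simbotrule} is available: from $v:\bot$ on the left we obtain any $y:C$ on the right for connected $y$. This lets us move the ``$\bot$-goal'' between labels. Concretely, we first prove the auxiliary fact that if $\RR\stoup\Gamma'\seqar y:\bot$ is derivable and $x,y$ are connected, then so is $\RR\stoup\Gamma'\seqar x:\bot$. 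This follows by a cut on $y:\bot$ against \cref{lem:simbotrule}, and cut is in $\labIKtSO$.

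The translated cases go as follows. \textbf{Identity:} from $v:A, v:\neg A$ we get $v:\bot$ by $\lr\limp$ with two $\id$ premisses, then move to $x$ as above. \textbf{Left structural and left logical rules:} these translate verbatim, since the LHS is unchanged up to the extra $\neg\Delta$ context and the RHS stays $x:\bot$. For $\lr\limp$ with premisses $\Gamma\seqar\Delta,v:A$ and $\Gamma',v:B\seqar\Delta'$, the IH gives $\Gamma,\neg\Delta,v:\neg A\seqar x:\bot$. We obtain $\Gamma,\neg\Delta\seqar v:\neg\neg A$ by $\rr\limp$, after relabelling the goal to $v:\bot$, and then $v:A$ by the $(\neg\neg)$ rule. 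We combine this with the second IH via $\lr\limp$ on $v:A\limp B$ and contract. \textbf{Right weakening and contraction:} these become left weakening and contraction on $\neg$-formulas.

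For the \textbf{right logical rules}, e.g.\ $\rr\limp$, the IH gives $\Gamma,v:A,\neg\Delta,v:\neg B\seqar x:\bot$. We need $\Gamma,\neg\Delta, v:\neg(A\limp B)\seqar x:\bot$. Obtaining $v:B$ from $v:\neg\neg B$ by the $(\neg\neg)$ rule, we derive $\Gamma,\neg\Delta\seqar v:A\limp B$ and then apply $\lr\limp$ on $v:\neg(A\limp B)$ with the $v:\bot$ premiss and relabel. The cases $\rr\Box$, $\rr\blacksquare$ and $\rr\forall$ follow the same pattern. From the IH we derive the single-succedent $w:A$ (respectively $u:A$, $v:A[P/X]$) via $\rr\limp$, relabelling and $(\neg\neg)$, apply the right rule, and close with $\lr\limp$ against $v:\neg\Box A$. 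Freshness conditions are preserved because $\neg\Delta$ contains no new labels or symbols. \textbf{Cut:} the IH gives $\Gamma,\neg\Delta,v:\neg A\seqar x:\bot$ and $\Gamma',v:A,\neg\Delta'\seqar x:\bot$. We turn the first into $\seqar v:A$ as before and cut.

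The main obstacle is the relabelling of the $\bot$-goal. It relies on connectivity, which holds by the polytree invariant, and on the fact that $x$, $v$, $w$ remain connected after right modal rules add a fresh edge. One subtlety is that the rule premiss may contain a fresh label not present in the conclusion. Since the lemma holds for any label $x$ of the premiss, we instantiate the IH at a label already in the conclusion, or at $v$, so that the final sequent mentions only appropriate labels.
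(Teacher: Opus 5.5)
Your proposal is correct and follows essentially the same route as the paper: induction on the $\labKtSO$ derivation, recovering each right-hand formula from its negation via $\rr\limp$ and $(\neg\neg)$, and moving the $\bot$-goal between labels with \cref{lem:simbotrule}, using the connectivity guaranteed by the polytree invariant. The differences are cosmetic. The paper avoids your cut-based relabelling by instantiating the induction hypothesis directly at the label it needs (e.g.\ $v$ or $w$), and by using \cref{lem:simbotrule} as the right premiss of the final $\lr\limp$; also, the contraction you add after $\lr\limp$ is unnecessary, since that rule already splits contexts.
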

\begin{proof}
By~\cref{lem:simbotrule}, $\labIKtSO\proves \seq{\rels R}{\Gamma, \fm{v}{\BOT}}{\fm{w}{A}}$ whenever $v$ and $w $ are connected in $\rels R$.

\begin{itemize}
\item $\vlinf{\id}{}{\rels R \stoup  v : A \seqar v : A}{}$
$\vlderivation{
	\vliin{\lr\limp}{}{\rels R \stoup  v : A, v : \neg A \seqar x:\BOT}{
		\vlin{\id}{}{\rels R \stoup  v : A \seqar v:A}{
			\vlhy{}
		}
	}{
		\vlin{}{}{\rels R \stoup  v : \BOT \seqar x:\BOT}{
			\vlhy{\text{\footnotesize{\cref{lem:simbotrule}}}}
		}
	}
}$

\item $\vliinf{\lr \limp}{}{\rels R \stoup \Gamma, \Gamma', v:A \limp B \seqar \Delta, \Delta'}{\rels R \stoup \Gamma \seqar \Delta, v : A}{\rels R \stoup \Gamma' , v :B \seqar \Delta'}$ (and similarly for $\cut$)

By inductive hypothesis, 
$\labIKtSO + (\neg\neg) \proves \RR \stoup \Gamma , \neg\Delta, v:\neg A \seqar v:\bot$
and
$\labIKtSO + (\neg\neg) \proves \RR \stoup \Gamma' , v:B, \neg\Delta' \seqar x:\bot$.
$$\vlderivation{
	\vliin{\lr\limp}{}{\RR\stoup\Gamma,\Gamma',v:A\IMP B,\neg\Delta,\neg\Delta'\seqar x:\BOT}{
		\vlin{(\neg\neg)}{}{\RR\stoup\Gamma,\neg\Delta \seqar v:A}{
			\vlin{\rr\limp}{}{\RR\stoup\Gamma,\neg\Delta \seqar v:\neg\neg A}{
				\vlhy{\RR\stoup\Gamma,\neg\Delta, v:\neg A \seqar v:\BOT}
			}
		}
	}{
		\vlhy{\RR\stoup\Gamma',v:B,\neg\Delta'\seqar x :\BOT}
	}
}
$$

\item $\vlinf{\rr \limp}{}{\rels R \stoup \Gamma \seqar \Delta, v : A \limp B}{\rels R \stoup \Gamma, v:A \seqar \Delta, v:B}$

By inductive hypothesis, $\labIKtSO + (\neg\neg) \proves \RR \stoup \Gamma, v:A , \neg\Delta, v:\neg B \seqar v:\bot$.
$$\vlderivation{
	\vliin{\lr\limp}{}{\RR\stoup\Gamma,\neg\Delta,v:\neg(A\limp B)\seqar x:\BOT}{
		\vlin{\rr\limp}{}{\RR\stoup\Gamma,\neg\Delta\seqar v:A\limp B}{
			\vlin{(\neg\neg)}{}{\RR\stoup\Gamma,v:A,\neg\Delta\seqar v:B}{
				\vlin{\rr\limp}{}{\RR\stoup\Gamma,v:A,\neg\Delta\seqar v:\neg\neg B}{
					\vlhy{\RR\stoup\Gamma,v:A,\neg\Delta, v:\neg B\seqar v:\BOT}
				}
			}
		}
	}{
		\vlin{}{}{\RR\stoup\Gamma,\neg\Delta,v:\bot\seqar x:\bot}{
			\vlhy{\text{\footnotesize{\cref{lem:simbotrule}}}}
		}
	}
}
$$

\item Non-branching left rules: $\vlinf{}{}{\rels R \stoup \Gamma, v : A \seqar \Delta }{\rels R , \stoup \Gamma, v' : A' \seqar \Delta}$

By inductive hypothesis, $\labIKtSO + (\neg\neg) \proves \RR \stoup \Gamma, v':A' , \neg\Delta \seqar x:\bot$.

It is enough to apply the same rule in $\labIKtSO$:
$$\vlinf{}{}{\RR\stoup \Gamma, v:A , \neg\Delta \seqar x:\bot}{\RR \stoup \Gamma, v':A' , \neg\Delta \seqar x:\bot}$$

\item $\vlinf{\rr \Box}{\text{\footnotesize $w$ fresh}}{\rels R \stoup \Gamma \seqar \Delta , v : \Box A}{\rels R , v \R w \stoup \Gamma \seqar \Delta, w : A }$ (and similarly for $\lr\Boxblack$ and $\lr\forall$)

By inductive hypothesis, $\labIKtSO + (\neg\neg) \proves \rels R , v \R w \stoup \Gamma, \neg\Delta, w : \neg A \seqar w:\bot$.
$$\vlderivation{
	\vliin{\lr\limp}{}{\rels R \stoup \Gamma,\neg\Delta , v : \neg\Box A \seqar x:\BOT}{
		\vlin{\rr\Box}{}{\rels R \stoup \Gamma,\neg\Delta  \seqar v : \Box A}{
			\vlin{(\neg\neg)}{}{\rels R, vRw \stoup \Gamma,\neg\Delta  \seqar w : A}{
				\vlin{\rr\limp}{}{\rels R,vRw \stoup \Gamma,\neg\Delta  \seqar w : \neg\neg A}{
					\vlhy{\rels R,vRw \stoup \Gamma,\neg\Delta , w : \neg A \seqar w:\bot}
				}
			}
		}
	}{
		\vlin{}{}{\rels R \stoup \Gamma,\neg\Delta , v : \bot \seqar x:\BOT}{
			\vlhy{\text{\footnotesize{\cref{lem:simbotrule}}}}
		}
	}
}$$
\end{itemize}
\end{proof}

\begin{proof}
[Proof of \cref{lem:sequent-soundness-classical}] %

By~\cref{lem:lIKtSO+negneg}, since $\labKtSO \proves \RR \stoup \Gamma \seqar \Delta$, also
$\labIKtSO + (\neg\neg) \proves \RR \stoup \Gamma , \neg\Delta \seqar x:\bot$.

We already established the local soundness of $\labIKtSO$ rules in the previous section.
It remains to show the local soundness of $\vlinf{(\neg\neg)}{}{\RR \stoup \Gamma \seqar w:A}{\RR \stoup \Gamma \seqar w:\neg\neg A}$

\noindent
Assume $\KtSO \proves \cfm{u}{\RR \stoup \Gamma \seqar w:\neg\neg A} = \lfm{u}{\RR\stoup\Gamma} \limp \rfm{u}{\RR\stoup w:\neg\neg A}$,
we need to show that $\KtSO \proves \cfm{u}{\RR \stoup \Gamma \seqar w: A} = \lfm{u}{\RR\stoup\Gamma} \limp \rfm{u}{\RR\stoup w: A}$

\noindent
By induction on the path $u\conn{\RR}v$, let us show that $\KtSO\proves\rfm{u}{\RR\stoup w:\neg\neg A}\IMP \rfm{u}{\RR\stoup w: A} $
\begin{itemize}
\item if $u=v$: immediate as $\KtSO \proves \lnot \lnot A \limp A$.

\item if there is $v$ such that $uRv\in\RR$ and $v\conn{\RR}w$:
By inductive hypothesis, $\KtSO\proves\rfm{v}{\RR\stoup w:\neg\neg A}\IMP \rfm{v}{\RR\stoup w: A} $.
Hence, by $\necw$, $\functwb$ and $\mp$, $\KtSO\proves\BOX\rfm{v}{\RR\stoup w:\neg\neg A}\IMP \BOX\rfm{v}{\RR\stoup w: A} $

\item if there is $v$ such that $vRu\in\RR$ and $v\conn{\RR}w$:
Similarly replacing $\BOX$ by $\BOXB$.
\end{itemize}

From this, we conclude that 
\begin{flalign*}
\KtSO \proves & \cfm{u}{\RR \stoup \Gamma , \neg\Delta \seqar u:\bot} 
\\
=& \lfm{u}{\RR\stoup\Gamma,\neg\Delta} \limp \rfm{u}{\RR\stoup u:\bot}
\\
=& (\lfm{u}{\RR\stoup\Gamma} \AND \lfm{u}{\RR\stoup\neg\Delta}) \limp \bot
\\
\IFF& \lfm{u}{\RR\stoup\Gamma} \IMP \lfm{u}{\RR\stoup\neg\Delta} \limp \bot
\\
=& \lfm{u}{\RR\stoup\Gamma} \IMP \rfm{u}{\RR\stoup\Delta}
\end{flalign*}

As from the definition of $\lfm{}$ and $\rfm{}$,
$\neg\lfm{u}{\RR\stoup\neg\Delta}
=
\rfm{u}{\RR\stoup\Delta}
$.
\qedhere
\end{proof}

\section{Perspectives}
\label{sec:further}

We have now completed the argument justifying our \cref{mainthm:soundness-completeness-classical,mainthm:soundness-completeness-intuitionistic,mainthm:hauptsatz}: they are obtained by the results we have presented according to the diagrams in \cref{fig:class-tour}, in the classical setting, and \cref{fig:int-tour}, in the intuitionistic setting.
Let us take a moment to reflect on (i) the relationship between the classical and intuitionistic theories we presented; and (ii) some interesting subsystems of second-order (intuitionistic) tense logic.

\subsection{Relating classical and intuitionistic: negative translations}
We have presented both classical and intuitionistic versions of second-order tense logic, so it would be natural to probe their relationship according to known techniques.
In particular, classical logic is interpreted by intuitionistic logic by the \textbf{negative} (or \textbf{double negation}) translations.
Since our language is formulated in the negative fragment, the \emph{G\"odel-Gentzen} translation is particularly easy to define, commuting with all but atomic formulas.\footnote{Note that we could have adapted other negative translations, such as Kolmogorov or Kuroda, but such a development is beyond the scope of this work. As in predicate logic, we suspect that all these translations would be equivalent over $\IKtSO$.}
Let us develop this here.
Recall that we write $\bot := \forall X X$ and $\lnot A := A \limp \bot$.

\begin{definition}
    [(Second-order modal) negative translation]
    For each formula $A$ define its \textbf{negative translation} $\gg A$ by:
    \[
    \begin{array}{r@{\ := \ }l}
         \gg P & \lnot \lnot P \\
         \gg X & \lnot \lnot X \\
         \gg {(A\limp B)} & \gg A \limp \gg B
    \end{array}
    \qquad
    \begin{array}{r@{\ := \ }l}
         \gg {(\Box A)} & \Box \gg A \\
         \gg {\blacksquare A)} & \blacksquare \gg A \\
         \gg {(\forall X A)} & \forall X \gg A
    \end{array}
    \]
\end{definition}

The main point of this subsection is to show the soundness of the $\gg \cdot$ translation, i.e.\ that it indeed embeds classical second-order tense logic into intuitionistic:
\begin{theorem}
\label{neg-trans-soundness}
    $\KtSO \proves A \implies \IKtSO \proves \n A$.
\end{theorem}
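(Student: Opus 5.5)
We argue by induction on the derivation of $A$ in $\KtSO$, showing that the negative translation of every axiom is provable in $\IKtSO$ and that every rule is preserved by $\gg{\cdot}$. First we fix the substitution behaviour of the translation. A routine induction on $A$ gives $\gg{(A[C/X])} \liff \gg A[\gg C/X]$, provably in $\IKtSO$. The two sides differ only at occurrences of $X$: the left has $\gg C$ there, the right has $\lnot\lnot\gg C$. So we also need the \emph{stability} lemma: $\IKtSO \proves \lnot\lnot \gg B \limp \gg B$ for every $B$. For propositional symbols $P$ it is the case $\lnot\lnot\lnot\lnot P \limp \lnot\lnot P$, and similarly for variables $X$. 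Rather than substituting $\gg C$ itself, we will substitute $\lnot\lnot \gg C$, or directly use the stable form, so that free variables inside translated formulas can be treated as stable.

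The stability lemma goes by induction on $B$, and it is where the modal and second-order structure enters. The atomic and implication cases are standard $\IPL$ reasoning. For $\forall X B$ we need $\lnot\lnot\forall X \gg B \limp \forall X \lnot\lnot \gg B$, which follows from $\CA$ and $\gen$. We then apply the induction hypothesis, treating $X$ as a fresh symbol $P$ whose translation is $\lnot\lnot P$, which is stable. For $\Box$ the key principle is $\lnot\lnot\Box C \limp \Box\lnot\lnot C$. I would derive it through the adjunction: from $\lnot C$ and $\wbbd$-style reasoning, show $\blacklozenge\lnot C \limp \lnot \Box C$, or more directly $\Box C \limp \Box(\lnot C \limp \bot)$. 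Then use $\blacksquare$ and $\bbwd$ to move $\bot$ back along $R$. Concretely, the intended derivation is
\[ \lnot\lnot\Box C \limp \Box(\lnot C \limp \blacksquare\bot) \]
combined with $\blacksquare\bot \limp \bot$-transport. Here one must be careful, since $\bot=\forall X X$ is only provably ``connected''. I expect this step to be the main obstacle, and the labelled example \eqref{eq:lab-prf-neg-box}, which proves $\lnot\lnot\Box\bot \limp \Box\bot$, is exactly the pattern to generalise. I would first check the principle semantically in birelational models and then extract an axiomatic proof using $\functbb$, $\necb$, $\bbwd$ and $\CA$ instantiated at $X:=\blacksquare\bot$ or at $C$ itself. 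Alternatively, I would prove the labelled sequent $\cdot\stoup v:\lnot\lnot\Box C\seqar v:\Box\lnot\lnot C$ in $\labIKtSO$ and invoke \cref{prop:axiomsoundness}. The $\blacksquare$ case is symmetric.

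With stability and substitution in hand, the axioms go as follows. $\Kax$, $\Sax$, $\functforall$, $\Vacax$, $\functwb$ and $\functbb$ translate to instances of themselves. $\CA$ translates to $\forall X\gg A \limp \gg A[\gg C/X]$ up to equivalence, and this holds because $\gg C$ is stable, so instantiating at $\gg C$ suffices. The tense axioms in the diamond-free presentation are derived axioms, and we handle them by their derivations. The rules $\mp$, $\gen$, $\necw$ and $\necb$ commute with $\gg{\cdot}$; for $\gen$ we use $\gg{(A[P/X])} = \gg A[P/X]$ up to replacing $\lnot\lnot P$, which is harmless because $P$ is fresh. Finally, $\gg{(\lnot\lnot A\limp A)}$ is $\lnot'\lnot'\gg A \limp \gg A$, where $\lnot'$ uses $\gg\bot = \forall X\lnot\lnot X$. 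Since $\forall X\lnot\lnot X \liff \bot$ in $\IPL2$ (instantiate at $\bot$), this reduces to the stability lemma.
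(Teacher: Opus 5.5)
Your overall plan is the paper's: induction on the $\KtSO$ derivation, using stability $\lnot\lnot\n B\limp\n B$ (the paper's \cref{cor:Neg-Transl-DNE}, obtained from \cref{lem:negativity}) and $\n\bot\liff\bot$ for double negation elimination. Your substitution-plus-stability treatment of $\CA$ is more careful than the paper's claim that the axioms translate to axiom instances.

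The genuine gap is the one modal ingredient, $\lnot\lnot\Box C\limp\Box\lnot\lnot C$. You identify it as the main obstacle but never establish it, and the routes you sketch are not yet an argument. The formula $\blacklozenge\lnot C\limp\lnot\Box C$ is invalid, hence unprovable. Take the classical model with two worlds $u,v$, only $uRv$, all sets, and $C:=P$ false at $u$: at $v$ the antecedent holds, while $\Box P$ holds vacuously. Presumably you meant $\Diamond\lnot C\limp\lnot\Box C$. For $\lnot\lnot\Box C\limp\Box(\lnot C\limp\blacksquare\bot)$ you give no derivation, yet this is essentially the whole difficulty. The later transport from $\blacksquare\bot$ to $\bot$ under the $\Box$ also goes through $\functwd$, $\wdbb$ and $\diaimpbox$.

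The missing idea is to use the $\IK$ principles derived in \cref{sec:underlying-modal-tense-logics}, as the paper does. By $\necw$, $\functwb$ and $\functwd$ you have $\Box C\limp\Diamond\lnot C\limp\Diamond\bot$. So by $\diadistbot$ and $\IPL$ reasoning you get $\lnot\lnot\Box C\limp\Diamond\lnot C\limp\bot$. Weakening $\bot$ to $\Box\bot$ by $\CA$ gives $\lnot\lnot\Box C\limp(\Diamond\lnot C\limp\Box\bot)$. Finally $\diaimpbox$ turns this into $\lnot\lnot\Box C\limp\Box(\lnot C\limp\bot)$.

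Your labelled fallback would also work and is short. After $\rr\limp$, $\rr\Box$, $\rr\limp$ you reach $vRw\stoup v:\lnot\lnot\Box C, w:\lnot C\seqar w:\bot$. Apply $\lr\limp$ to $v:\lnot\lnot\Box C$. Close the right premiss $vRw\stoup v:\bot\seqar w:\bot$ by \cref{lem:simbotrule}. Close the left premiss $vRw\stoup w:\lnot C\seqar v:\lnot\Box C$ by $\rr\limp$, $\lr\Box$ and $\lr\limp$, using $\id$ and \cref{lem:simbotrule} for $vRw\stoup w:\bot\seqar v:\bot$. Then invoke \cref{prop:axiomsoundness}. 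As written, neither route is carried out.

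A smaller slip concerns the tense axioms. $\wbbd$ and $\bbwd$ are \emph{not} derived in $\IKtSO$; only $\wdbb$, $\bdwb$, $\functwd$, $\functbd$ become redundant. So `handling them by their derivations' does not cover them. Nor are their translations instances of themselves, since $\n{(\blacklozenge A)}=\forall X(\blacksquare(\n A\limp\Box\lnot\lnot X)\limp\lnot\lnot X)$ rather than $\blacklozenge\n A$. What you need is $\blacklozenge\n A\limp\n{(\blacklozenge A)}$. This follows from $\CA$ at $X:=\lnot\lnot P$, then $\gen$, $\functforall$ and $\Vacax$. Combined with the instance $\n A\limp\Box\blacklozenge\n A$ of $\wbbd$, and $\necw$, $\functwb$, this gives the translation of $\wbbd$; $\bbwd$ is symmetric. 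The paper's one-line treatment of the axioms passes over this point too.
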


Before we prove this we first need some (expected) auxiliary results:
\begin{lemma} 
    [Negativity]
    \label{lem:negativity}
    $\IKtSO$ proves the following:
    \[
    \begin{array}{r@{\ \limp \  }l}
    \lnot \lnot \lnot A & \lnot A \\
         \lnot \lnot (A\limp B) & \lnot \lnot A \limp \lnot \lnot B \\
         \lnot \lnot \forall X A & \forall X \lnot \lnot A 
    \end{array}
    \qquad
     \begin{array}{r@{\ \limp \  }l}
         \lnot \lnot \Box A & \Box \lnot \lnot A \\
         \lnot \lnot \blacksquare A & \blacksquare \lnot \lnot A 
    \end{array}
    \]
\end{lemma}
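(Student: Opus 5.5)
We prove the five implications one at a time. Each is derived axiomatically in $\IKtSO$, using routine $\IPL2$ reasoning plus, for the modal cases, necessitation, distribution and one tense principle. Throughout, $\bot$ is $\forall X X$, and we only need that $\bot \limp C$ is an instance of $\CA$ for every $C$.

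The first three items are purely intuitionistic second-order facts. For $\lnot\lnot\lnot A \limp \lnot A$, use that $A \limp \lnot\lnot A$ is an $\IPL$ tautology and contrapose. For $\lnot\lnot(A\limp B) \limp \lnot\lnot A \limp \lnot\lnot B$, assume $\lnot\lnot(A\limp B)$, $\lnot\lnot A$ and $\lnot B$, and derive $\bot$. From $\lnot B$ we get $\lnot(A \limp B) \leftarrow$ wait, rather: $\lnot B$ yields $(A\limp B)\limp \lnot A$. Combined with $\lnot\lnot A$ this gives $\lnot(A\limp B)$, which contradicts $\lnot\lnot(A\limp B)$. For $\lnot\lnot\forall X A \limp \forall X\lnot\lnot A$, start from $\CA$, $\forall X A \limp A[P/X]$ with $P$ fresh. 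Apply monotonicity of $\lnot\lnot$ (itself an $\IPL$ consequence of contraposing twice) to get $\lnot\lnot\forall X A \limp \lnot\lnot A[P/X]$. Then use the gadget $\gen,\functforall,\Vacax$ from \cref{ex:box-dist-forall}, noting that $X \notin \fv{\lnot\lnot\forall X A}$.

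The modal cases carry the real content. For $\lnot\lnot\Box A \limp \Box\lnot\lnot A$, the plan is first to prove $\lnot\Box A$ from $\Diamond\lnot A$, that is $\Diamond \lnot A \limp \lnot \Box A$. This is an instance of \cref{prop:IK-thm}(1), $(\Box A \land \Diamond\lnot A) \limp \Diamond(A\land\lnot A)$, combined with $\Diamond\bot \limp \bot$, which was derived in \cref{sec:underlying-modal-tense-logics}. Contraposing gives $\lnot\lnot\Box A \limp \lnot\Diamond\lnot A$. It then remains to show $\lnot\Diamond C \limp \Box\lnot C$ with $C = \lnot A$, which gives $\Box\lnot\lnot A$. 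For this we use $\diaimpbox$, $(\Diamond C \limp \Box\bot) \limp \Box(C\limp\bot)$, also derived in \cref{sec:underlying-modal-tense-logics}. From $\lnot\Diamond C$ we get $\Diamond C \limp \Box\bot$, since $\bot \limp \Box\bot$ is a $\CA$ instance, and $\diaimpbox$ then yields $\Box\lnot C$. An alternative route is to go through the adjunction $\bbwd$ directly: $C \limp \blacksquare\Diamond C$ and $\lnot\Diamond C$ give $C \limp \blacksquare\bot$, after which we apply $\necw$ and $\wbbd$.

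The black case $\lnot\lnot\blacksquare A \limp \blacksquare\lnot\lnot A$ follows by symmetry. It uses \cref{prop:IK-thm}(2), $\blacklozenge\bot\limp\bot$ and the black version of $\diaimpbox$ (\cref{prop:IK-thm}(4)).

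The main obstacle is the step $\lnot\Diamond C \limp \Box\lnot C$, which needs the Fischer Servi-style interaction axiom. The point to check is that these principles were indeed obtained for the encoded $\Diamond$ and $\bot$ in \cref{sec:underlying-modal-tense-logics}, so that no native $\bot$ is assumed. Since $\Box\bot$ is only available through $\bot \limp \Box \bot$, which comes from $\CA$, this check should go through.
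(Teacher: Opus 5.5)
Your main argument is correct and is essentially the paper's. For the $\Box$ case both derive $\lnot\lnot\Box A \limp \lnot\Diamond\lnot A$ from $\diadistbot$ and then conclude with $\bot\limp\Box\bot$ (from $\CA$) and $\diaimpbox$. The only difference is the first step: you go via \cref{prop:IK-thm}(1) together with $\Diamond(A\land\lnot A)\limp\Diamond\bot$ (by $\necw$ and $\functwd$, which you leave implicit), while the paper goes via $\Box A\limp\Box\lnot\lnot A$ and $\functwd$ directly. The first three items are the standard $\IPL 2$ facts, which the paper only cites.

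Your aside about an ``alternative route'' through $\bbwd$ does not work as stated. From the hypothesis $\lnot\Diamond C$ one cannot obtain $C\limp\blacksquare\bot$: that would require necessitating a hypothesis, and it fails at $v$ in the two-world model $uRv$ with $C$ true only at $v$. A correct adjunction argument would instead prove $\blacklozenge\lnot\Diamond C\limp\lnot C$ using $\bbwd$, \cref{prop:IK-thm}(2) and $\blacklozenge\bot\limp\bot$, and then apply $\necw$, $\functwb$ and $\wbbd$.
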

\begin{proof}
    The left three items are well known (see, e.g., \cite[Sections~2.3 \& 11.5.9]{TroelstraSchwichtenberg2000}.
    Proofs of $\Box$, i.e.\ $\lnot \lnot \Box A \limp \Box \lnot \lnot A$ in $\IK$ were given in \cite{dasmarin:blog22} and \cite[Lemma~10]{das2023intuitionistic}, whence we obtain the same here since $\IKtSO$ contains $\IK$, under the impredicative encodings of positive connectives, cf.~\cref{sec:underlying-modal-tense-logics}.
    For self-containment let us repeat that proof here:
    \[
    \begin{array}{ll}
         A \limp \lnot \lnot A & \text{by $\IPL$ reasoning} \\
         \Box A \limp \Box \lnot \lnot A & \text{by $\necw$ and $\functwb$} \\
         \Box A \limp \Diamond \lnot A \limp \Diamond \bot & \text{by definition of $\lnot$ and $\functwd$} \\
         \Box A \limp \Diamond \lnot A \limp \bot & \text{by $\diadistbot$} \\
         \lnot \lnot\Box A \limp \Diamond \lnot A \limp \bot & \text{by $\IPL$ reasoning} \\
         \lnot \lnot \Box A \limp \Diamond \lnot A \limp \Box \bot & \text{by definition of $\bot$ and $\CA$} \\
         \lnot \lnot \Box A \limp \Box (\lnot A \limp \bot) & \text{by $\diaimpbox$} \\
         \lnot \lnot \Box A \limp \Box \lnot \lnot A & \text{by definition of $\lnot$}
    \end{array}
    \]
    References to $\diadistbot$ and $\diaimpbox$ are from \cref{eq:other-IK-axioms}, and were derived previously in \cref{sec:underlying-modal-tense-logics}.
    The black version, $\lnot \lnot \blacksquare A \limp \blacksquare \lnot \lnot A$ now just follows by symmetry.
\end{proof}

\begin{proposition} \label{cor:Neg-Transl-DNE}
    $\IKtSO \vdash \neg \neg A^N \liff A^N$ and $\IKtSO \proves \bot \liff \n \bot$.
\end{proposition}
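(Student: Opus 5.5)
The plan is to prove the first claim by induction on the structure of $A$. The direction $\n A \limp \lnot\lnot \n A$ is pure $\IPL$ reasoning ($B \limp \lnot\lnot B$), so only $\lnot\lnot \n A \limp \n A$ needs work. For each constructor I will combine the inductive hypothesis with the matching item of \cref{lem:negativity}.

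The cases go as follows.
\begin{itemize}
\item \emph{Atoms.} For $A = P$ or $A = X$ we have $\n A = \lnot\lnot A$. The required formula $\lnot\lnot\lnot\lnot A \limp \lnot\lnot A$ is an instance of $\lnot\lnot\lnot B \limp \lnot B$ with $B := \lnot A$.
\item \emph{Implication.} We have $\lnot\lnot(\n A \limp \n B) \limp \lnot\lnot \n A \limp \lnot\lnot \n B$ by \cref{lem:negativity}. Precompose with $\n A \limp \lnot\lnot\n A$ and postcompose with the IH $\lnot\lnot\n B \limp \n B$.
\item \emph{Boxes.} For $\Box$ we have $\lnot\lnot \Box \n A \limp \Box \lnot\lnot \n A$ by \cref{lem:negativity}. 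By the IH with $\necw$ and $\functwb$ we get $\Box\lnot\lnot\n A \limp \Box \n A$. The $\blacksquare$ case is symmetric, using $\necb$ and $\functbb$.
\item \emph{Quantifier.} We have $\lnot\lnot\forall X \n A \limp \forall X \lnot\lnot \n A$ by \cref{lem:negativity}. Applying $\gen$ to the IH (which holds for the variable-free instance $A[P/X]$, or directly since the IH is schematic in the free variables) gives $\forall X(\lnot\lnot\n A \limp \n A)$. Then $\functforall$ yields $\forall X\lnot\lnot \n A \limp \forall X\n A$.
\end{itemize}

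One point needs care. The induction must be on open formulas, and the IH must be applied under $\gen$ with a fresh $P$. For this I need $\n{(A[P/X])} = \n A[P/X]$, which holds because the translation treats $P$ and $X$ the same way, each going to its double negation.

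For the second claim, $\n\bot = \forall X \lnot\lnot X$. The direction $\bot \limp \n\bot$ is an instance of $\CA$ with $C := \n\bot$ (ex falso).

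The converse, $\forall X\lnot\lnot X \limp \forall X X$, is the step I expect to need thought. I would instantiate $X := \bot$ by $\CA$, getting $\lnot\lnot\bot$. Since $\lnot\lnot\bot \limp \bot$ holds by $\IPL$ (apply it to the identity $\bot\limp\bot$), this gives $\bot = \forall X X$.
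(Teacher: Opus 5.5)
Your proposal is correct and follows the paper's own argument. The direction $\n A \limp \lnot\lnot\n A$ is by $\IPL$, the converse $\lnot\lnot\n A \limp \n A$ is by induction on $A$ using the Negativity lemma at each constructor, and the second claim uses the $\CA$ instances $C := \n\bot$ and $C := \bot$, the latter followed by $\lnot\lnot\bot \limp \bot$. Your quantifier case, which applies the inductive hypothesis to $A[P/X]$ under $\gen$ via $\n{(A[P/X])} = \n A[P/X]$ and inducts on size rather than on literal subformulas, spells out a detail the paper's sketch leaves implicit.
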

\begin{proof}
[Proof sketch]
$\n A \limp \lnot \lnot \n A$ is already a consequence of $\IPL$.
   For the converse direction, $ \lnot \lnot \n A \limp \n A$, we proceed by induction on the structure of $A$, using the previous Negativity \cref{lem:negativity} at each step.

$\bot \limp \n \bot$ is an instance of comprehension $\CA$, as $\bot = \forall X X$. 
For the converse direction, $\n \bot \limp \bot$, note that $\n \bot  = \forall X \lnot \lnot X$.
By comprehension axiom $\CA$, we thus have $\n \bot \limp \lnot \lnot \bot$, whence indeed $\n \bot \limp \bot$ by $\IPL$ reasoning.
\end{proof}

Now the soundness of the negative translation is readily established:

\begin{proof}
    [Proof of \cref{neg-trans-soundness}]
    By induction on on a $\KtSO$ proof of $A$:
    \begin{itemize}
        \item The $\n \cdot$-translation of every axiom of $\IKtSO$ is again an axiom instance of $\IKtSO$.
        For the remaining axiom of $\KtSO$, namely $\lnot \lnot A \limp A$, 
        note that $\n{(\lnot \lnot A \limp A)} = ((\n A \limp \n \bot) \limp \n \bot ) \limp \n A$, which is provable in $\IKtSO$ by \cref{cor:Neg-Transl-DNE}.
        \item The $\n\cdot$-translation of both inference rules of $\KtSO$ are again instances of inference rules of $\IKtSO$. \qedhere
    \end{itemize}
\end{proof}

\subsection{Specialising to sublogics}
\label{sec:sublogics}
Second-order (intuitionstic) tense logic has several sublogics of interest.
As one would expect, the results of this paper allow us to \emph{inherit} some analogous results for certain sublogics.
In particular by specialising the grand tours of \cref{fig:class-tour,fig:int-tour} to the modality-free fragment of our syntax we inherit a proof theoretic account, namely cut-admissibility, of:
\begin{itemize}
    \item \emph{Classical second-order propositional logic.} 
    $\labKtSO$ now specialises to the usual sequent calculus for second-order propositional logic (see, e.g., \cite[Section~3.A.1]{Girard1987:pt-log-comp}, \cite[Section~5.1]{sep-proof-theory} or \cite[Definition~15.3]{Takeuti1987:pt-book}). 
    Let us point that this is somewhat a toy result, as it is known that even Boolean comprehension, setting $C = \bot$ or $C=\top $ in $\CA$, suffices in proof search, due to the Boolean valued semantics.
    \item \emph{Intuitionistic second-order propositional logic.} 
    $\labIKtSO$ now specialises to the usual second-order sequent calculus for intuitionistic propositional logic (i.e.\ the classical calculus with singleton RHS constraint).
    Cut-admissibility for the \emph{multi} succedent variant (i.e.\ the modality-free fragment of $\mlIKtSO$) was obtained by Prawitz in \cite{prawitz1970some}, as well as its completeness over Beth models using similar techniques.
    Thanks to our \emph{negative} formula syntax we gain completeness over Kripke models (i.e.\ predicate models where $W = \emptyset$) and cut-admissibility for the usual single-succedent calculus, cf.~\cref{prop:multitosingle}.
\end{itemize}

It is natural to wonder whether the results of this work similarly give rise to a treatment of second-order (intiuitionistic) \emph{modal} logic, without the black modalities.
Logics based on this syntax have been much more significantly explored in the literature~\cite{fine1970propositional,bull1969modal,kaplan1970s5,ten2006expressivity,kaminski1996expressive,belardinelli2018second-order,blackburn2023axiom}.
However, the fact that our $\Diamond$ is defined not only in terms of $\Box$ and $\forall$ but also $\blacksquare$ complicates the situation. 
One could envisage restricting our labelled system to black-free formulas, but
our axiomatic translation in \cref{sec:labelsoundness} introduces $\Diamond $s, and hence $\blacksquare$s. 
In fact adding a native $\Diamond$ to evade this issue, along with whatever modal reasoning is used in \cref{sec:labelsoundness}, still does not necessarily yield cut-admissibility, for a somewhat subtle reason: 
cut-free labelled proofs of $\blacksquare$-free formulas may still require formulas with $\blacksquare$, due to the comprehension steps involved (i.e.\ $\lr \forall$).
This exemplifies the \emph{non-analyticity} of second-order logic.
For example, here is a cut-free $\labIKtSO$ proof of an instance of the negativity of $\Box$ and $\bot$, cf.~\cref{lem:negativity}:
\[
\vlderivation{
\vlin{\rr \Box}{}{\cdot \stoup v:\lnot \lnot \Box \bot \seqar v:\Box \bot}{
\vliin{\lr \limp}{}{vRw \stoup v:\lnot \lnot \Box \bot \seqar w:\bot}{
    \vlin{\rr\limp}{}{vRw \stoup \cdot \seqar v:\lnot \Box \bot}{
    \vlin{\rr\forall}{}{vRw \stoup v: \Box \bot \seqar v:\forall X X }{
    \vlin{\lr \Box}{}{vRw \stoup v:\Box \bot \seqar v: P}{
    \vlin{\lr \forall}{}{vRw \stoup w:\forall X X \seqar v:P}{
    \vlin{\lr\blacksquare}{}{vRw \stoup w:\blacksquare P \seqar v:P}{
    \vlin{\id}{}{vRw \stoup v:P \seqar v:P}{\vlhy{}}
    }
    }
    }
    }
    }
}{
    \vlin{\lr \forall}{}{vRw \stoup v:\forall X X \seqar w:\bot}{
    \vlin{\lr\Box}{}{vRw \stoup v:\Box \bot \seqar w:\bot}{
    \vlin{\id}{}{vRw \stoup w:\bot \seqar w:\bot}{\vlhy{}}
    }
    }
}
}
}
\]
This is a cut-free proof of a $\blacksquare$-free theorem (in particular without $\Diamond$s, native or otherwise), that nonetheless uses $\blacksquare$.
How should one prove this theorem intuitionistically without using $\blacksquare$?

Notice that the same issue does not present classically, as the theorem is an instance of the double negation-elimination axiom.
We suspect that the same proof search argument as in \cref{sec:completeness-classical} should go through in the $\blacksquare$-free fragment, classically.
Inspecting again the axiomatic translation of \cref{sec:labelsoundness}  with a classical sensitivity, notice that the formula translation of a $\blacksquare$-free sequent remains $\blacksquare$-free, as long as we interpret $\Diamond A := \lnot \Box \lnot A$.
Axiomatically, apart from second-order classical propositional logic $\mathsf{CPL}2 := \IPL 2 + \lnot \lnot A \limp A$, the white modal axioms from \cref{modal-axioms-rules-wb}, we also required $\mathsf B : \forall X \Box A \limp \Box \forall X A$ (cf.~\cref{ex:forall-distributes-over-box}).
Let us point out that the resulting axiomatisation, $\mathsf{CPL} 2 + \functwb + \functwd + \necw + \mathsf B$, almost matches the proposal of second-order (classical) modal logic in \cite{belardinelli2018second-order}, but for the fact that they admit only quantifier-free comprehension.
It would be interesting to develop more formally our arguments for the modal-only setting (without black modalities), and compare the resulting logic(s) with that of \cite{belardinelli2018second-order}.

\section{Conclusions}\label{sec:conclusions}
In this work we developed the axiomatics, semantics and proof theory of a second-order extension of tense logic, over both classical and intuitionistic bases.
We ultimately showed that several natural definitions of the intuitionistic or classical theory respectively coincide, showcasing the robustness of each logic.
Along the way we established fundamental metalogical results, namely soundness and completeness of axiomatisations with respect to certain (bi)relational semantics, and proof theoretic results, namely cut-admissibility for associated calculi based on labelled sequents.
We employed a \emph{proof search} based approach to both of these results, establishing both simultaneously by way of our `grand tours' in \cref{fig:class-tour,fig:int-tour}.
We conclude this work by discussing some further interesting directions of research.

Second-order logic has the capacity to define least and greatest \emph{fixed points} of positive formulas, by encoding Knaster-Tarski style definitions (see, e.g., \cite[Chapter~1]{iter-ind-dfns} or, at a higher level, \cite[Section~5.3]{sep-proof-theory}).
In second-order logic with modalities we can do the same if we include a \emph{global} modality, say $\Boxtimes$, where $\Boxtimes A$ should be read as ``everywhere $A$''.
In particular, the least fixed point of the operator $X\mapsto A(X)$, where $X$ appears only positively in $A(X)$, is given by:
\[
\mu X A(X) 
\ := \ 
\forall X (\Boxtimes (A(X) \limp X) \limp X)
\]
This encoding already appears in \cite[Section~5]{Stirling1996Games}, where it is observed that a second-order modal syntax can express all of the \emph{modal $\mu$-calculus} \cite{kozen82:mu-calc}.
It would be interesting to investigate the extension of our logics by a global modality according to the axiomatic, semantic and proof theoretic disciplines herein.
In the presence of tense modalities such an extension would subsume the \emph{two-way} modal $\mu$-calculus \cite{Vardi98:two-way}. 
Note that this logic has recently received a proof theoretic treatment via \emph{cyclic proofs} \cite{AfshEnqvLeiMartVen25:two-way}.

Finally it would be natural to recast second-order intuitionistic tense logic from a \emph{proofs-as-programs} viewpoint, \`a la Curry-Howard (see, e.g., \cite{sorensen2006lectures}). 
The logic $\CK$, a sublogic of $\IK$, and some extensions have already been studied from a proofs-as-programs perspective and translated into \emph{modal lambda calculi} \cite{bellin2001extended,davies2001modal} (see~\cite{kavvos2016many} for a survey).
\emph{Modal type theory} has independently emerged as a way to encapsulate computations with effects~\cite{moggi1989computational}, receiving categorical foundations in \cite{gratzer2022modalities,shulman2023semantics}, and has been implemented in mainstream programming languages~\cite{tang2025modal,lorenzen2024oxidizing}.
Recently adjoint modalities $\Diamondblack$ and $\Box$ have also been crucial in a proposal of Kavvos, linking relational semantics to the type theoretic approach for modal logic~\cite{kavvos2024two-dimensionalI}.
To this end it would be interesting to develop a \emph{natural deduction} formulation of $\IKtSO$, with corresponding term annotation, and prove its \emph{strong normalisation}.
Naturally the method of \emph{reducibility candidates}, due to Girard \cite{Girard1972:thesis}, should be applicable.
At the same time such an endeavour would further bolster the proof theoretic underpinnings of second-order (intuitionistic) tense logic.

\bibliographystyle{alpha}
\bibliography{references}

\end{document}